\pdfoutput=1
\pdfoutput=1
\newif\ifpaperTwoITSiamStyle
\ifdefined\SICOMPSUBMISSION
  \paperTwoITSiamStyletrue
  \documentclass[review]{siamart}
\else
  \ifdefined\SIDMASUBMISSION
    \paperTwoITSiamStyletrue
    \documentclass[review]{siamart}
  \else
    \paperTwoITSiamStylefalse
    \documentclass[11pt]{article}
  \fi
\fi

\ifpaperTwoITSiamStyle
  \usepackage[utf8]{inputenc}
  \usepackage[T1]{fontenc}
  \usepackage{amsmath,amssymb}
  \usepackage{graphicx}
  \usepackage{xcolor}
  \usepackage{fancyvrb}
  \usepackage{url}
  \usepackage{listings}
  \usepackage{algorithm}
  \usepackage{algpseudocode}

  \makeatletter
  \g@addto@macro{\UrlBreaks}{\UrlOrds}
  \makeatother

\usepackage{booktabs}
\usepackage{longtable}
\usepackage{array}
\usepackage{calc}
\usepackage[para,flushmargin]{footmisc}
\usepackage{listings}
\usepackage{ragged2e}
\usepackage{tabularx}
\usepackage{tikz}
\usetikzlibrary{positioning, arrows.meta, shapes}

\IfFileExists{content/lean_stats.tex}{

}{}

\newcolumntype{Y}{>{\RaggedRight\arraybackslash}X}
\newcolumntype{C}{>{\Centering\arraybackslash}X}

\providecommand{\leanmetaheadmark}{}
\makeatletter
\@ifundefined{axiom}{\newsiamthm{axiom}{Axiom}}{}
\@ifundefined{conjecture}{\newsiamthm{conjecture}{Conjecture}}{}
\@ifundefined{example}{\newsiamremark{example}{Example}}{}
\@ifundefined{remark}{\newsiamremark{remark}{Remark}}{}
\@ifundefined{observation}{\newsiamremark{observation}{Observation}}{}
\makeatother

\providecommand{\LH}[1]{}
\renewcommand{\LH}[1]{\leavevmode\nobreak\hyperlink{lh:#1}{\mbox{\ttfamily\nolinkurl{#1}}}}
\providecommand{\LHrng}[3]{}
\renewcommand{\LHrng}[3]{\leavevmode\nobreak\hyperlink{lh:#1#2}{\mbox{\ttfamily\nolinkurl{#1#2-#3}}}}
\newif\ifleanmetaseen
\leanmetaseenfalse
\makeatletter
\providecommand{\leanmetapendinghandles}{}
\global\let\leanmetapendinghandles\@empty
\providecommand{\leanmetapending}[1]{}
\renewcommand{\leanmetapending}[1]{\def\leanmetatest{#1}\ifx\leanmetatest\@empty\global\let\leanmetapendinghandles\@empty\else\gdef\leanmetapendinghandles{#1}\fi}
\providecommand{\leanmetaproofpendinghandles}{}
\global\let\leanmetaproofpendinghandles\@empty
\providecommand{\leanmetaproofpending}[1]{}
\renewcommand{\leanmetaproofpending}[1]{\def\leanmetatest{#1}\ifx\leanmetatest\@empty\global\let\leanmetaproofpendinghandles\@empty\else\gdef\leanmetaproofpendinghandles{#1}\fi}
\providecommand{\leanmetaheadmark}{}
\newif\ifleanmetaheadprinted
\leanmetaheadprintedfalse
\renewcommand{\leanmetaheadmark}{\ifx\leanmetapendinghandles\@empty\else\nobreak\hspace{0.2em}\refstepcounter{footnote}\footnotemark[\value{footnote}]\global\leanmetaheadprintedtrue\fi}
\providecommand{\leanmetaheadtext}{}
\renewcommand{\leanmetaheadtext}{\ifleanmetaheadprinted\footnotetext[\value{footnote}]{\ifleanmetaseen\else Lean: \global\leanmetaseentrue\fi\leanmetapendinghandles}\global\let\leanmetapendinghandles\@empty\global\leanmetaheadprintedfalse\fi}
\providecommand{\leanmetaproofmark}{}
\renewcommand{\leanmetaproofmark}{\ifx\leanmetaproofpendinghandles\@empty\else\ifhmode\unskip\footnote{\ifleanmetaseen\else Lean: \global\leanmetaseentrue\fi\leanmetaproofpendinghandles}\else\refstepcounter{footnote}\footnotemark[\value{footnote}]\footnotetext[\value{footnote}]{\ifleanmetaseen\else Lean: \global\leanmetaseentrue\fi\leanmetaproofpendinghandles}\fi\global\let\leanmetaproofpendinghandles\@empty\fi}
\@ifundefined{newtheoremstyle}{}{\@ifundefined{thmname}{}{%
  \newtheoremstyle{leanplain}
    {6pt}{6pt}{\itshape}{}{\bfseries}{.}{ }
    {\thmname{#1}\thmnumber{ #2}\thmnote{ (#3)}\leanmetaheadmark\leanmetaheadtext}
  \newtheoremstyle{leandefinition}
    {6pt}{6pt}{}{}{\bfseries}{.}{ }
    {\thmname{#1}\thmnumber{ #2}\thmnote{ (#3)}\leanmetaheadmark\leanmetaheadtext}
  \newtheoremstyle{leandef}
    {6pt}{6pt}{}{}{\bfseries}{.}{ }
    {\thmname{#1}\thmnumber{ #2}\thmnote{ (#3)}\leanmetaheadmark\leanmetaheadtext}
  \newtheoremstyle{leanremark}
    {6pt}{6pt}{}{}{\bfseries}{.}{ }
    {\thmname{#1}\thmnumber{ #2}\thmnote{ (#3)}\leanmetaheadmark\leanmetaheadtext}
}}
\@ifundefined{thm@thmcaption}{}{\@ifundefined{@opargbegintheorem}{}{%
  \def\@xthm#1#2#3{%
    \@begintheorem{#3}{\csname the#2\endcsname\leanmetaheadmark}%
    \leanmetaheadtext
    \ifx\thm@starredenv\@undefined
      \thm@thmcaption{#1}{{#3}{\csname the#2\endcsname}{}}\fi
    \ignorespaces}
  \def\@ythm#1#2#3[#4]{%
    \expandafter\global\expandafter\def\csname#1name\endcsname{#4}%
    \@opargbegintheorem{#3}{\csname the#2\endcsname}{#4\leanmetaheadmark}%
    \leanmetaheadtext
    \ifx\thm@starredenv\@undefined
      \thm@thmcaption{#1}{{#3}{\csname the#2\endcsname}{#4}}\fi
    \ignorespaces}
}}
\makeatother
\providecommand{\leanmeta}[1]{}
\renewcommand{\leanmeta}[1]{\ifhmode\unskip\footnote{\ifleanmetaseen\else Lean: \global\leanmetaseentrue\fi#1}\else\refstepcounter{footnote}\footnotemark[\value{footnote}]\footnotetext[\value{footnote}]{\ifleanmetaseen\else Lean: \global\leanmetaseentrue\fi#1}\fi}

\makeatletter
\@ifundefined{refstepcounter@noarg}{}{
  \def\refstepcounter@optarg[#1]#2{\refstepcounter@noarg{#2}}
}
\makeatother

\AtBeginDocument{%
}

\lstnewenvironment{code}{
  \lstset{
    basicstyle=\ttfamily\footnotesize,
    breaklines=true,
    breakatwhitespace=true,
    columns=flexible,
    keepspaces=true,
    xleftmargin=0.5em,
    frame=none,
    linewidth=\linewidth
  }
}{}

  \makeatletter
  \@ifundefined{keywords}{\newenvironment{keywords}{\paragraph{Keywords.}}{}}{}
  \@ifundefined{MSCcodes}{\newenvironment{MSCcodes}{\paragraph{MSC codes}}{}}{}
  \makeatother

  \makeatletter
  \renewcommand{\LH}[1]{\leavevmode\nobreak\mbox{\ttfamily\nolinkurl{#1}}}
  \renewcommand{\LHrng}[3]{\leavevmode\nobreak\mbox{\ttfamily\nolinkurl{#1#2-#3}}}
  \makeatother

  \hypersetup{
    colorlinks=true,
    linkcolor=blue,
    citecolor=blue,
    urlcolor=blue
  }
  \IfFileExists{build_submission_hook.tex}{\input{build_submission_hook.tex}}{}
  \renewcommand{\footercopyright}{}
\else
  \usepackage{paper-preamble}
\fi

\begin{document}

\title{Coordinate-View Confusability Graphs and Matroid Rank Certificates}
\ifpaperTwoITSiamStyle
  \headers{Coordinate-View Confusability Graphs}{T. Simas}
\fi

\ifpaperTwoITSiamStyle
  \author{Tristan Simas\thanks{McGill University, 845 Sherbrooke Street West, Montreal, QC H3A 0G4, Canada (\email{tristan.simas@mail.mcgill.ca}).}}
\else
  \author{Tristan Simas\\
  McGill University, Montreal, Quebec, Canada\\
  \texttt{tristan.simas@mail.mcgill.ca}}
\fi
\date{\today}

\maketitle

\begin{abstract}
A coordinate-view presentation specifies a large confusability graph by coordinates rather than by an edge list. The problem is to certify zero-error recovery and Shannon capacity from the succinct presentation, before the exponentially large graph is materialized.

For affine presentations over \(\mathbb F_q\), Gaussian elimination gives a polynomial-time rank upper certificate from \(O(rd\log q+Ld)\) input bits for a graph with \(q^r\) vertices. Exactness of that certificate is equivalent to positivity of a Grassmannian avoidance count \(N_{t^*}\). Positivity of \(N_k\) is NP-complete for every fixed \(k\), already over \(\mathbb F_2\), while the ambient-rank parameter gives a fixed-parameter algorithm. The rank-one reduction is parsimonious for \(\#\textsc{SAT}\).

For arbitrary \(t\), the finite decoder for \(N_t\) reads the signed rank profile of the represented forbidden-point matroid; the kernel-intersection lattice alone does not determine the count. On the positive side, kernel sections give the exact formula
\[
\Theta(G)=\log\alpha(G)=\min_{S\in\mathcal V}t(S)\log q,
\]
with a projection-equality matrix optimal for Haemers minrank. A finite-field blocking theorem gives exactness when \(q\ge L\), and Reed--Solomon/MDS codes give exact all-\(k\)-view families beyond that regime. Full-tuple coordinate-view graphs also have a polynomial-time cofinal-antichain normal form; transitive confusability is exactly intersection closure.

\end{abstract}

\ifpaperTwoITSiamStyle
  \begin{keywords}
  confusability graphs, coordinate projections, Shannon capacity, strong graph powers, matroids, graph coloring, zero-error information theory
  \end{keywords}

  \begin{MSCcodes}
  05C15, 05C69, 05B35, 94A15, 68Q87
  \end{MSCcodes}
\else
  \noindent\textbf{Keywords:}
  confusability graphs; coordinate projections; Shannon capacity; strong graph powers; matroids; graph coloring; zero-error information theory

  \medskip
  \noindent\textbf{MSC 2020:}
  05C15; 05C69; 05B35; 94A15; 68Q87
\fi

\section{Introduction}\label{introduction}

A succinct coordinate presentation can hide an exponentially large graph. A message has several coordinates, and a receiver may see only the coordinates in one allowed view. Two messages are confusable when at least one allowed view makes them look identical. A short auxiliary label can separate confusable messages; over repeated independent uses, the limiting zero-error rate is the Shannon capacity of the induced confusability graph.

Exact one-shot recovery is graph coloring, block recovery is strong powering, and the limiting block rate is Shannon capacity~\cite{witsenhausen1976zero,korner2002zero,shannon1956zero}. For arbitrary materialized graphs, exact independent-set computation is NP-hard and exact Shannon capacity remains open even for small explicit graphs, including the $7$-cycle and odd cycles beyond $C_5$~\cite{karp1972reducibility,alon2006shannon,deboer2024asymptotic}. Coordinate-view presentations keep algebraic structure before the graph is enumerated.

The affine case turns exactness into a subspace-arrangement problem. Let
\[
A=a_0+V\subseteq\mathbb F_q^d,\qquad \dim V=r,
\]
and let \(\mathcal V\) be a finite list of coordinate views. For a view \(S\), write \(t(S)\) for the rank of the restricted coordinate projection \(V\to\mathbb F_q^S\), and put \(t^*=\min_{S\in\mathcal V}t(S)\). The view kernels
\[
K_S=\ker(\pi_S|_V)
\]
form the forbidden arrangement. A \(t^*\)-dimensional subspace \(R\le V\) avoiding every \(K_S\) turns the rank upper bound into an independent set of size \(q^{t^*}\). Direct sums tensorize the section and remove the Shannon gap.

The boundary count is
\[
N_t(\mathcal A)=\#\{R\le V:\dim R=t,\ R\cap K_S=0\text{ for every }S\in\mathcal V\}.
\]
Four statements drive the boundary. First, \(N_{t^*}>0\) is exactly the linear kernel-section question. Second, positivity of \(N_k\) is NP-complete for every fixed \(k\), and the rank-one reduction is parsimonious for \(\#\textsc{SAT}\). Third, the finite decoder for \(N_t\) reads the signed rank profile of the represented forbidden-point matroid. The kernel-intersection lattice is too coarse; common-core lifts keep the multiway kernel-intersection shell fixed while quotient forbidden-point configurations change \(N_t\). Fourth, the size-indexed fixed-shell theorem transports SAT decision and satisfying-assignment counts through one fixed shell for each Boolean variable count.

Kernel-section positivity gives the capacity formula. On the kernel-section class,
\[
\Theta(G)=\log\alpha(G)=t^*\log q
\quad\text{and}\quad
\operatorname{minrank}_{\mathbb K}(G)=q^{t^*},
\]
with the projection-equality matrix as an optimal Haemers witness~\cite{haemers1978upper}. Logarithmic capacity adds under direct-sum block products inside the class. A greedy finite-field union theorem guarantees the section when \(q\ge L\). Reed--Solomon/MDS codes give exact families beyond that field-size regime: for the full cube with all \(k\)-coordinate views, the section condition is the Singleton boundary.

The central statements give both sides of the same boundary: Gaussian elimination computes the rank certificate before the graph is materialized; \(N_{t^*}>0\) makes the certificate exact; \(q\ge L\) and the Reed--Solomon/MDS all-\(k\)-view families lie on the positive side; positivity of \(N_k\) is NP-complete for every fixed \(k\); and \(N_1\) is parsimoniously \(\#\textsc{P}\)-hard to count.

The binary square supplies the smallest non-clique witness. With state space \(\{0,1\}^2\) and two one-coordinate views, adjacent corners agree in one coordinate and are confusable, while opposite corners are distinguishable. The induced graph is \(C_4\), not a clique. The diagonal line avoids both singleton kernels, so the rank bound is exact. The four-coordinate binary threshold example moves to the other side: \(N_2=0\), and the rank certificate is loose.

Full tuple spaces supply the structural layer beneath the affine boundary. On a full labeled product, a pair is adjacent exactly when its agreement set lies in the upward-closed family generated by the admissible views. Graph equality recovers this generated family, except for the stated edgeless degeneracies, and the cofinal antichain is the canonical presentation. Transitive confusability is exactly intersection closure of the generated family and exactly the meet-witness condition on the finite view list; in that case the graph is a cluster graph and capacity is component counting.

\medskip
\noindent\textbf{Contributions.}
\begin{enumerate}
\item \textbf{Grassmannian avoidance and hardness.} The exact linear-section boundary is positivity of \(N_{t^*}\). Positivity of \(N_k\) is NP-complete for every fixed \(k\), and computing \(N_1\) is \(\#\textsc{P}\)-hard by a parsimonious rank-one reduction. The decision problem is fixed-parameter tractable in the ambient affine rank \(r\) and para-\(\textsc{NP}\)-complete for the target rank/field parameters \(t^*\), \(q\), and \((q,t^*)\). For arbitrary \(t\), the Grassmannian avoidance count is decoded from the signed rank profile of the represented forbidden-point matroid. A common-core lift gives one fixed multiway kernel-intersection profile for all clause lists with the same Boolean variable count; inside that profile, lifted positivity is SAT and the lifted witness count is the satisfying-assignment count.

\item \textbf{Exact affine capacity and minrank.} Restricted coordinate ranks are computed by Gaussian elimination on the explicit affine presentation. In the dense finite-field input model, the presentation has size \(O(rd\log q+Ld)\) bits while the induced graph has \(q^r\) vertices. If a rank-minimizing view has a kernel-avoiding section, then \(\Theta(G)=t^*\log q\), \(\alpha(G)=q^{t^*}\), and \(\operatorname{minrank}_{\mathbb K}(G)=q^{t^*}\) over every coefficient field. Direct-sum block products preserve the class and add logarithmic capacity. The field-size theorem guarantees a kernel-avoiding section when \(q\ge L\), and Reed--Solomon/MDS codes give exact capacity for the full cube with all \(k\)-coordinate views. A finite \(\mathbb F_3^6\) instance has \(\alpha=3^{t^*}\) by a nonlinear transversal while \(N_{t^*}=0\), separating rank-tight transversals from linear sections.

\item \textbf{Coordinate-view graph structure.} Deterministic coordinate-subset views generate the graph class \(\mathcal{CVG}\). On full tuple spaces, graph equality determines the generated upward family \(\mathcal U_{\mathcal V}\), because every proper coordinate subset is realized as an agreement set of an actual state pair. The order-theoretic part is the standard cofinal-antichain representation of an upward-closed family. Full tuple spaces also admit an exact cluster route: over a nontrivial alphabet, intersection closure of the generated agreement family, meet-witnessing of the finite view list, and transitive confusability are equivalent. In the positive case the graph is a cluster graph and capacity is obtained by counting connected components; in the negative case, the construction gives an obstruction triple.
\end{enumerate}

\medskip
\noindent\textbf{Roadmap.} Sections~\ref{sec:foundations}--\ref{sec:graph-characterization} fix the coordinate-product model, the counting preliminary, and the agreement-set graph invariant. Section~\ref{sec:affine} proves the affine rank certificate, kernel-section exactness, signed-profile compression, fixed-rank hardness, fixed-shell transport, and coding-theoretic exact families. Section~\ref{sec:capacity-theory} records the capacity interface, theta comparison, and cluster-collapse criterion. Section~\ref{sec:related} compares coordinate-view presentations with standard zero-error and graph-capacity tools.


\section{Model and Basic Quantities}\label{sec:foundations}

\subsection{Latent states, views, and auxiliary labels}\label{sec:epistemic}

The model starts from a finite latent state family $\mathcal X$. In the coordinate-view specialization, each state is a tuple of definite coordinate values. A partial view is a deterministic observation of selected coordinates, and an auxiliary label is a finite-valued symbol available to the decoder.

The auxiliary label splits the ambiguity left by the view. If a view sends several latent states to the same transcript, then the label must distinguish the states inside that transcript fiber. Section~\ref{sec:converse} makes this counting obstruction explicit before the graph model collects all view fibers into one confusability relation.

\begin{definition}[Latent State Family]\label{def:latent-state-family}
A \emph{latent state family} is a finite set $\mathcal X$ of states. In the coordinate model, $\mathcal X\subseteq A_1\times\cdots\times A_d$ for finite coordinate alphabets $A_i$.
\end{definition}

\begin{definition}[Observation Fiber]\label{def:observation-fiber}
For a deterministic observation map $\mathsf O:\mathcal X\to\mathcal Y$, the \emph{observation fiber} at $y\in\mathcal Y$ is $\mathcal F_y=\{x\in\mathcal X: \mathsf O(x)=y\}$.
\end{definition}

\begin{definition}[Auxiliary Label]\label{def:auxiliary-label}
An \emph{auxiliary label} is a map $\tau:\mathcal X\to\mathcal T$ into a finite label alphabet $\mathcal T$.
\end{definition}

\subsection{Coordinate products}\label{sec:coupling}

The graph and matroid constructions are generated from finite coordinate data: a state family $\mathcal X\subseteq A_1\times\cdots\times A_d$ and a finite view family $\mathcal V\subseteq 2^{[d]}$. The mechanism generates a particular class of finite graphs rather than an arbitrary graph class. The structural theorems below describe the graphs that arise from coordinate views and the extra invariants retained by the presentation. The product-space convention is fixed before those constructions are used.

\begin{definition}[Projection-valued state family]\label{def:projection-valued-state-family}
A state family is \emph{projection-valued} when every state has a single value in each coordinate and every admissible view is the restriction of the tuple to a specified coordinate subset.
\end{definition}

For a projection-valued family $\mathcal X\subseteq A_1\times\cdots\times A_d$, equality of coordinates defines agreement sets, coordinate subsets define views, products of tuples define block composition, and coordinate functionals define the affine matroid specialization. The same finite product data generate the graph, capacity, and matroid objects analyzed below.

\subsection{Standing notation}\label{sec:assumptions}

All state, observation, and label alphabets are finite. Observations and labels are deterministic functions of the latent state. In the coordinate-view model, $\mathcal X$ is a finite subset of $A_1\times\cdots\times A_d$, and every admissible view is a coordinate-subset projection. Logarithms are base $2$ throughout unless a local statement explicitly declares an arbitrary fixed base; bare $\log$ therefore means $\log_2$. The affine rank inequalities of Section~\ref{sec:affine} hold in any fixed base.

\section{Finite Counting Preliminary}\label{sec:converse}

The only information-theoretic preliminary needed before the graph theory is the standard injectivity obstruction for exact deterministic decoding~\cite{witsenhausen1976zero,cover2006elements}. Let $\mathcal T$ be a finite label alphabet, let $\mathsf O:\mathcal X\to\mathcal Y$ be a deterministic observation map, and let $\tau:\mathcal X\to\mathcal T$ be the auxiliary label. Exact recovery is possible exactly when the pair map
\[
x\longmapsto (\mathsf O(x),\tau(x))
\]
is injective, equivalently when $\tau$ is injective on every observation fiber of $\mathsf O$.

\leanmetapending{\LHrng{OBS}{1}{2}}
\begin{theorem}[Pair-injectivity characterization]\label{thm:pair-injective}
Let $\mathcal X$ be a finite latent alphabet, let $\mathsf O:\mathcal X\to\mathcal Y$ be a deterministic observation map, and let $\tau:\mathcal X\to\mathcal T$ be an auxiliary label with finite alphabet $\mathcal T$. Zero-error recovery from $(\mathsf O,\tau)$ is possible if and only if the pair map $x \mapsto (\mathsf O(x),\tau(x))$ is injective on $\mathcal X$, equivalently if $\tau$ is injective on every observation fiber of $\mathsf O$.
\end{theorem}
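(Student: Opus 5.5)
The plan is to first make "zero-error recovery from $(\mathsf O,\tau)$" precise. It means there is a decoder $g:\mathcal Y\times\mathcal T\to\mathcal X$ with $g(\mathsf O(x),\tau(x))=x$ for every $x\in\mathcal X$. With this definition, the theorem reduces to the elementary fact that a map has a left inverse exactly when it is injective. I will prove the two implications separately and then prove the fiberwise reformulation.

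For necessity, suppose such a decoder $g$ exists and that $(\mathsf O(x),\tau(x))=(\mathsf O(x'),\tau(x'))$. Applying $g$ to both sides gives $x=g(\mathsf O(x),\tau(x))=g(\mathsf O(x'),\tau(x'))=x'$. So the pair map $\phi(x)=(\mathsf O(x),\tau(x))$ is injective.

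For sufficiency, suppose $\phi$ is injective. On the image $\phi(\mathcal X)$, define $g(y,s)$ to be the unique preimage. Off the image, set $g$ to an arbitrary fixed element $x_0\in\mathcal X$; if $\mathcal X$ is empty the statement is vacuous. Then $g\circ\phi=\mathrm{id}_{\mathcal X}$, so $g$ is a zero-error decoder. Finiteness of $\mathcal X$ and $\mathcal T$ plays no role beyond matching the standing conventions. The only point needing care is that $g$ must be defined on all of $\mathcal Y\times\mathcal T$ and not only on the image, and the arbitrary default value handles this.

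For the fiberwise equivalence, first assume $\phi$ is injective and take $x\neq x'$ in the same fiber $\mathcal F_y$ of Definition~\ref{def:observation-fiber}. Then $\mathsf O(x)=\mathsf O(x')=y$, so injectivity of $\phi$ forces $\tau(x)\neq\tau(x')$. Conversely, assume $\tau$ is injective on every fiber and that $\phi(x)=\phi(x')$. Then $\mathsf O(x)=\mathsf O(x')$ places $x$ and $x'$ in a common fiber, and $\tau(x)=\tau(x')$ together with fiberwise injectivity gives $x=x'$.

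No step is a genuine obstacle. The main thing is to state the recovery definition explicitly, since the theorem is essentially that definition unpacked.
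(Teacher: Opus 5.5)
Your proposal is correct and follows the paper's argument. A deterministic decoder cannot separate equal observation--label pairs, and conversely the injective pair map is inverted on its image and extended arbitrarily elsewhere. Your explicit fiberwise check fills in a step the paper leaves implicit.

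One small correction: the case $\mathcal X=\emptyset$ is not vacuous for the ``if'' direction, since no map $\mathcal Y\times\mathcal T\to\emptyset$ exists unless $\mathcal Y\times\mathcal T=\emptyset$. The arbitrary extension therefore tacitly needs $\mathcal X\neq\emptyset$, an assumption the paper's proof also makes without stating it.
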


\begin{proof}
Equal observation-label pairs cannot be separated by any deterministic decoder. Conversely, an injective pair map has a well-defined inverse on its image; use that inverse as the decoder and extend arbitrarily off the image.
\end{proof}

\leanmetapending{\LH{OBS5}, \LH{CIA3}}
On an observation fiber where \(K\) latent states have the same transcript, exact recovery requires \(K\) distinct label outcomes. Thus a \(T\)-valued auxiliary label can recover at most \(T\) states in that ambiguity class; if the label is represented by \(B\) bits, then \(B\ge\log_2K\). The graph-theoretic development below uses only this fiber-counting consequence: a confusability clique of size \(k\) forces at least \(k\) exact labels, equivalently at least \(\log_2 k\) bits.

\section{Coordinate-View Confusability Graphs}\label{sec:graph-characterization}

From this section onward, $\mathcal X\subseteq A_1\times\cdots\times A_d$ denotes the finite coordinate state family unless stated otherwise.

\subsection{General Graph Characterization}

\begin{definition}[Coordinate-Subset View]\label{def:coordinate-view}
For a $d$-coordinate latent tuple $x=(x_1,\dots,x_d)$ and a coordinate set $S\subseteq[d]$, the \emph{coordinate-subset view} indexed by $S$ is the projection
\[
\pi_S(x) := x|_S.
\]
\end{definition}

\begin{definition}[Admissible View Family]\label{def:admissible-view-family}
An \emph{admissible view family} is a finite family $\mathcal V=\{V_1,\dots,V_L\}$ of coordinate subsets of $[d]$, where view $V_\ell$ reveals exactly the coordinates indexed by $V_\ell$.
\end{definition}

\begin{definition}[Agreement Set]\label{def:agreement-set}
For latent tuples $x,y\in\mathcal X$, their \emph{agreement set} is
\[
\operatorname{Agr}(x,y) := \{i\in[d] : x_i = y_i\}.
\]
\end{definition}

Agreement sets are the coordinate bookkeeping device behind confusability. A coordinate-subset view $S$ gives the same transcript on $x$ and $y$ exactly when $S\subseteq\operatorname{Agr}(x,y)$. Agreement patterns that create ambiguity are therefore closed upward.

\begin{definition}[Upward-Closed Family]\label{def:upward-closed-family}
A family $\mathcal U\subseteq 2^{[d]}$ is \emph{upward-closed} if whenever $S\in\mathcal U$ and $S\subseteq T\subseteq[d]$, one also has $T\in\mathcal U$.
\end{definition}

For a view family $\mathcal V$, the generated upward-closed family is
\[
\mathcal U_{\mathcal V}:=\{S\subseteq[d]: \exists V\in\mathcal V,\ V\subseteq S\}.
\]

\begin{definition}[Confusability Graph]\label{def:confusability-graph}
Given latent state space $\mathcal X$ and admissible view family $\mathcal V$, the \emph{confusability graph} is the graph
\[
G_{\mathrm{conf}}=(\mathcal X,E_{\mathrm{conf}})
\]
whose vertices are the latent states and whose edges are the unordered pairs $\{x,y\}$ with $x\neq y$ for which there exists an admissible view $V_\ell\in\mathcal V$ such that $\pi_{V_\ell}(x)=\pi_{V_\ell}(y)$. Equivalently, $\{x,y\}\in E_{\mathrm{conf}}$ iff some admissible view is contained in $\operatorname{Agr}(x,y)$.
\end{definition}

The definition applies to any projection-valued state family $\mathcal X\subseteq A_1\times\cdots\times A_d$. The complete agreement-family invariant below is the full-tuple specialization $\mathcal X=A_1\times\cdots\times A_d$: in that case every proper coordinate subset is realized as the agreement set of an actual pair of distinct states. For restricted state sets, the same adjacency definition is used, but some agreement patterns can be absent; Section~\ref{sec:equality} gives the corresponding local-composition criterion. For $d$ coordinates over alphabet size $q$, explicit graph materialization has $q^d$ vertices and $q^{2d}$ ordered state pairs, while an adjacency query takes $O(d+\sum_\ell |V_\ell|)$ time by computing $\operatorname{Agr}(x,y)$ and scanning the view list.

In the exact coordinate-view model, confusability depends only on agreement sets. Any admissible view family $\mathcal V$ defines the upward-closed family $\mathcal U_{\mathcal V}$ above, and two distinct tuples are adjacent if and only if their agreement set lies in $\mathcal U_{\mathcal V}$. Conversely, every upward-closed family arises from some coordinate-view family. The notation $\mathcal{CVG}$ denotes the resulting class of full-tuple coordinate-view confusability graphs, with labeled coordinates and fixed coordinate alphabets understood from context.

\leanmetapending{\LH{MFT202}}
\begin{theorem}[Canonical presentation]\label{prop:agreement-family-complete-invariant}
Assume every coordinate alphabet is nontrivial, i.e. $|A_i|\ge 2$ for each coordinate $i$, and assume both finite view lists are nonempty. For two view families $\mathcal V$ and $\mathcal W$ on the same labeled coordinate set, the labeled confusability graphs on the full tuple space are equal if and only if
\[
\mathcal U_{\mathcal V}=\mathcal U_{\mathcal W}.
\]
Equivalently, $\mathcal U_{\mathcal V}$ is a complete labeled invariant inside the nonempty full-tuple class $\mathcal{CVG}$. Moreover, each generated upward family has a unique minimal presentation by the cofinal antichain of its inclusion-minimal members, obtained from any view list by deleting duplicates and redundant supersets.
For an explicit bit-vector view list, the cofinal antichain is computed in \(O(L^2d)\) time, and labeled equality of two full-tuple presentations is checked in polynomial time without materializing either graph.
\end{theorem}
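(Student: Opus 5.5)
The plan is to reduce graph equality to equality of the two upward families restricted to the agreement sets that actually occur on the full tuple space. The first step is the \emph{realization lemma}. Because every $|A_i|\ge 2$, for each proper subset $T\subsetneq[d]$ there are distinct $x,y\in A_1\times\cdots\times A_d$ with $\operatorname{Agr}(x,y)=T$. To build them, choose two distinct symbols $a_i\ne b_i$ in each $A_i$, set $x_i=a_i$ for all $i$, and set $y_i=a_i$ for $i\in T$ and $y_i=b_i$ for $i\notin T$. Conversely, distinct tuples have a proper agreement set. So the realized agreement sets of distinct pairs are exactly the proper subsets of $[d]$.

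By Definition~\ref{def:confusability-graph}, $\{x,y\}$ is an edge iff $\operatorname{Agr}(x,y)\in\mathcal U_{\mathcal V}$. Hence the graphs coincide iff
\[
\mathcal U_{\mathcal V}\cap\{T:T\subsetneq[d]\}=\mathcal U_{\mathcal W}\cap\{T:T\subsetneq[d]\}.
\]
The "if" direction is immediate from this. For "only if", it remains to handle the single non-realized set $[d]$. Here the nonemptiness hypothesis enters: any $V\in\mathcal V$ satisfies $V\subseteq[d]$, so $[d]\in\mathcal U_{\mathcal V}$, and likewise $[d]\in\mathcal U_{\mathcal W}$. So the two families also agree at $[d]$ and are equal. (Without nonemptiness, $\mathcal V=\emptyset$ and $\mathcal V=\{[d]\}$ both give the edgeless graph; this is the stated degeneracy.) This also shows $\mathcal U_{\mathcal V}$ is a complete labeled invariant on the nonempty full-tuple class.

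For the canonical presentation, let $\min\mathcal U$ be the set of inclusion-minimal members of $\mathcal U=\mathcal U_{\mathcal V}$.
\begin{enumerate}
\item Since $\mathcal U$ is finite, every member contains a minimal member, so $\min\mathcal U$ generates $\mathcal U$.
\item Any generating family $\mathcal V'$ with $\mathcal U_{\mathcal V'}=\mathcal U$ must contain each $M\in\min\mathcal U$. Indeed, $M\in\mathcal U$ forces some $V'\subseteq M$ with $V'\in\mathcal V'\subseteq\mathcal U$, and minimality gives $V'=M$.
\item Hence $\min\mathcal U$ is the unique minimal presentation, and it is an antichain.
\item Every minimal member of $\mathcal U$ is a subset of some view, hence equals that view, which is itself minimal. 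So $\min\mathcal U$ is exactly the set of views in $\mathcal V$ containing no other view as a proper subset. Equivalently, it is obtained by deleting duplicates and redundant supersets.
\end{enumerate}

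The complexity claims follow directly. For each ordered pair $(V_\ell,V_m)$ of the $L$ bit vectors, test $V_m\subseteq V_\ell$ and equality in $O(d)$. Delete $V_\ell$ if it strictly contains some $V_m$, or equals some $V_m$ with $m<\ell$. This takes $O(L^2d)$ time. Labeled equality of two presentations then reduces, by the first part, to $\mathcal U_{\mathcal V}=\mathcal U_{\mathcal W}$. This holds iff every $V\in\mathcal V$ contains some $W\in\mathcal W$ and vice versa, which is $O(|\mathcal V||\mathcal W|d)$ and never touches the $q^d$-vertex graphs. Alternatively, one can compare the two antichains as sets after sorting.

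The only delicate step is the treatment of $[d]$: it is the one subset not realized by any distinct pair. I expect this to be the main obstacle, and it is exactly what the nonemptiness hypothesis repairs.
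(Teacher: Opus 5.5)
Your proposal is correct and is essentially the paper's proof: you realize every proper subset as an agreement set using two symbols per coordinate, and you use nonemptiness of both view lists to put $[d]$ into both upward families; your antichain and complexity arguments spell out what the paper treats as standard. One wording slip in item 4: a minimal member $M$ \emph{contains} some view $V\subseteq M$ rather than being a subset of one, and minimality then forces $V=M$, which is just your item 2 applied with $\mathcal V'=\mathcal V$.
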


\begin{proof}
If the upward families are equal, the agreement-set characterization gives the same edge relation for every state pair. Conversely, suppose the two labeled graphs have the same edge relation. Fix a proper subset $S\subsetneq[d]$. For each coordinate $i$, choose two distinct symbols $0_i,1_i\in A_i$. Define two tuples by setting $x_i=y_i=0_i$ for $i\in S$ and $x_i=0_i$, $y_i=1_i$ for $i\notin S$. Then $x\ne y$ and $\operatorname{Agr}(x,y)=S$. Every proper coordinate subset is therefore realized as the agreement set of some distinct pair of tuples, so equality of edge relations forces membership agreement on all proper subsets. Since the view lists are nonempty, the full coordinate set belongs to both generated upward families. Hence the two upward families are equal.
\end{proof}

\paragraph{Degenerate edgeless case}
The nonempty-view-list hypothesis removes the only full-tuple degeneracy in the labeled invariant. An empty view list has $\mathcal U_{\mathcal V}=\emptyset$ and induces the edgeless graph, while the one-view list $\{[d]\}$ has $\mathcal U_{\mathcal V}=\{[d]\}$ and also induces the edgeless graph because no distinct pair agrees on all coordinates. No other recovery failure occurs: actual state pairs realize every proper agreement set, and once the view list is nonempty the full set $[d]$ is forced into the generated upward family.

The \(O(L^2d)\) normalization deletes duplicate views and every view that contains another listed view.

Coordinate relabeling transports cofinal antichains and yields graph isomorphisms even when the original view lists differ in length. Conversely, any graph isomorphism that preserves agreement sets up to a coordinate permutation forces the generated agreement family to be the corresponding coordinate image. Coordinatewise alphabet permutations preserve confusability and induce graph automorphisms.

\subsection{The Binary Square Witness}\label{sec:binary-square}

\begin{figure}[!b]
\centering
  \resizebox{0.82\textwidth}{!}{%
  \begin{tikzpicture}
    \begin{scope}[xshift=0cm, yshift=-1.2cm]
      \node at (1.5, 3.2) {\textbf{(a) Clique ambiguity}};
      \node at (1.5, 2.7) {\small Total ambiguity};

      \node[circle, draw, fill=blue!20, minimum size=9mm, inner sep=1pt] (a1) at (0, 0) {$(0,0)$};
      \node[circle, draw, fill=blue!20, minimum size=9mm, inner sep=1pt] (a2) at (3, 0) {$(0,1)$};
      \node[circle, draw, fill=blue!20, minimum size=9mm, inner sep=1pt] (a3) at (3, 2) {$(1,1)$};
      \node[circle, draw, fill=blue!20, minimum size=9mm, inner sep=1pt] (a4) at (0, 2) {$(1,0)$};

      \draw[thick] (a1) -- (a2) -- (a3) -- (a4) -- (a1);
      \draw[thick] (a1) -- (a3);
      \draw[thick] (a2) -- (a4);

      \node[text width=4.3cm, align=center] at (1.5, -1.3) {\small Every state confusable\\with every other};
    \end{scope}

    \begin{scope}[xshift=6.5cm, yshift=-1.2cm]
      \node at (1.5, 3.2) {\textbf{(b) Cycle ambiguity}};
      \node at (1.5, 2.7) {\small Partial ambiguity};

      \node[circle, draw, fill=green!20, minimum size=9mm, inner sep=1pt] (b1) at (0, 0) {$(0,0)$};
      \node[circle, draw, fill=green!20, minimum size=9mm, inner sep=1pt] (b2) at (3, 0) {$(0,1)$};
      \node[circle, draw, fill=green!20, minimum size=9mm, inner sep=1pt] (b3) at (3, 2) {$(1,1)$};
      \node[circle, draw, fill=green!20, minimum size=9mm, inner sep=1pt] (b4) at (0, 2) {$(1,0)$};

      \draw[thick] (b1) -- (b2);
      \draw[thick] (b2) -- (b3);
      \draw[thick] (b3) -- (b4);
      \draw[thick] (b4) -- (b1);

      \draw[dashed, gray, thick] (b1) -- (b3);
      \draw[dashed, gray, thick] (b2) -- (b4);

      \node[text width=4.3cm, align=center] at (1.5, -1.3) {\small Some pairs distinguishable;\\2 labels suffice};
    \end{scope}
  \end{tikzpicture}%
  }%
\caption{Two ambiguity topologies. In (a), the view family reveals no coordinate, so the confusability graph is a clique: every state is confusable with every other, and exact recovery requires one label per state. In (b), single-coordinate views produce the 4-cycle: solid segments are confusability edges, while dashed diagonals are non-edges between distinguishable opposite corners. The 4-cycle is 2-colorable, so exact recovery is possible with two labels.}
\label{fig:ambiguity-topology}
\end{figure}

For the binary square \(\{0,1\}^2\) with views \(\mathcal V=\{\{1\},\{2\}\}\), two distinct states are confusable exactly when they agree in one coordinate. The generated upward family is \(\{\{1\},\{2\},\{1,2\}\}\), and the confusability graph is the 4-cycle shown in Figure~\ref{fig:ambiguity-topology}(b):
\[
(0,0) \sim (0,1) \sim (1,1) \sim (1,0) \sim (0,0),
\]
with opposite corners nonadjacent. The parity map \(c(x_1,x_2)=x_1\oplus x_2\) is a proper \(2\)-coloring, so exact recovery needs exactly one auxiliary bit. With one label value, a decoder can be exact on at most two of the four states under the uniform source.

\subsection{Exact Recovery and Block Composition}\label{sec:graph-recovery}

The zero-error coloring and strong-product reductions below are classical~\cite{witsenhausen1976zero,korner2002zero}. In the coordinate-view model they fix the graph invariants used by the affine certificate theory.

\begin{definition}[Strong Product]\label{def:strong-product}
For graphs $G$ and $H$, the \emph{strong product} $G\boxtimes H$ has vertex set $V(G)\times V(H)$. Two distinct vertices $(u,v)$ and $(u',v')$ are adjacent exactly when either $u=u'$ and $v\sim v'$, or $u\sim u'$ and $v=v'$, or $u\sim u'$ and $v\sim v'$.
\end{definition}

The strong product is the standard graph product~\cite{imrich2008product}. Block composition of coordinate-view systems realizes exactly this graph operation.

\leanmetapending{\LHrng{MFT}{3}{4}, \LH{MFT9}, \LHrng{MFT}{14}{16}, \LH{MFT20}}
Exact zero-error recovery with a \(T\)-ary auxiliary label is equivalent to \(T\)-colorability of the induced confusability graph~\cite{witsenhausen1976zero,korner2002zero,west2001introduction}. Exact decoding on a designated success set \(S\) is the same condition on the induced subgraph \(G[S]\). Hence, for \(T>0\),
\[
N_T := \max\{|S| : S\subseteq \mathcal X \text{ induces a $T$-colorable subgraph}\}.
\]
The optimal exact success probability under the uniform source is \(N_T/|\mathcal X|\). For a nonuniform finite source, the same reduction maximizes source mass over \(T\)-colorable induced subgraphs. Product colorings multiply under block composition: if the component graphs are \(T_1\)- and \(T_2\)-colorable, the block-composed graph is \(T_1T_2\)-colorable.

\paragraph{Block-composed product system}
A block-composed coordinate product has product latent space, componentwise coordinates in disjoint blocks, and admissible product views obtained by applying one admissible view from each component to the corresponding block.

\leanmetapending{\LHrng{MFT}{25}{26}}
\begin{proposition}[Block composition gives strong product]\label{prop:block-composition-strong-product}
The confusability graph of the block-composed product of two coordinate-view systems is the strong product of the two component confusability graphs.

\end{proposition}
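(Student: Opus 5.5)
The plan is to unfold both adjacency relations at the level of agreement sets and compare them pair by pair.

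\textbf{Setup.} Let the first system have state family $\mathcal X\subseteq A_1\times\cdots\times A_d$ and views $\mathcal V$. Let the second have $\mathcal Y\subseteq B_1\times\cdots\times B_e$ and views $\mathcal W$. The block-composed system has state family $\mathcal X\times\mathcal Y$, viewed as tuples on the disjoint union $[d]\sqcup[e]$. Its views are the sets $V\sqcup W$ with $V\in\mathcal V$ and $W\in\mathcal W$.

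\textbf{Key identity.} For $(x,y),(x',y')$ the agreement set splits along the blocks:
\[
\operatorname{Agr}((x,y),(x',y'))=\operatorname{Agr}(x,x')\sqcup\operatorname{Agr}(y,y').
\]
By Definition~\ref{def:confusability-graph}, two distinct product states are adjacent iff some $V\sqcup W$ lies in this set. Since the two blocks are disjoint, this holds iff $V\subseteq\operatorname{Agr}(x,x')$ and $W\subseteq\operatorname{Agr}(y,y')$ for some $V,W$. Writing $C_G(u,u')$ for the relation ``$u=u'$ or $u\sim_G u'$'', and recalling that $x=x'$ gives full agreement, which contains every view, the condition reads: $C_G(x,x')$ and $C_H(y,y')$, with $(x,y)\neq(x',y')$.

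\textbf{Matching the strong product.} It remains to unfold this against Definition~\ref{def:strong-product} by three cases: $x=x'$ and $y\ne y'$; $x\ne x'$ and $y=y'$; and both coordinates different. The pair $x=x',y=y'$ is excluded by distinctness. These are exactly the three adjacency clauses of $G\boxtimes H$, and the vertex sets agree by construction.

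\textbf{Main obstacle.} The only delicate point is the equality case of the closed relation. Showing that $x=x'$ makes every view of the first system agree requires each admissible view to be a subset of $[d]$, which holds by Definition~\ref{def:admissible-view-family}. It also requires $\mathcal V$ and $\mathcal W$ to be nonempty. If a component view list is empty, the product view list is empty and the product graph is edgeless, while the strong product need not be. So I will state the nonempty-view-list convention, matching the hypothesis of Theorem~\ref{prop:agreement-family-complete-invariant}, and note that this degenerate case is excluded.
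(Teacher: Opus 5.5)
Your proposal is correct and follows the paper's route: the paper's proof asserts that product-view confusability splits into the three strong-product clauses, and your splitting $\operatorname{Agr}((x,y),(x',y'))=\operatorname{Agr}(x,x')\sqcup\operatorname{Agr}(y,y')$ makes that step explicit. Your nonemptiness caveat is also right: the paper's proof uses it silently in the equality clauses, and the proposition as stated fails when exactly one component view list is empty and the other component graph has an edge.
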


\begin{proof}
Let the component graphs be $G_1$ and $G_2$. A product state is a pair $(x_1,x_2)$. Two distinct product states are confusable exactly when an allowed product view fails to separate them. Product-view confusability has precisely the three strong-product cases: $x_1=x_1'$ and $x_2\sim_{G_2}x_2'$, or $x_1\sim_{G_1}x_1'$ and $x_2=x_2'$, or both component pairs are confusable. These are exactly the adjacency clauses in Definition~\ref{def:strong-product}.
\end{proof}

Clique lower bounds multiply under this product: cliques of sizes $c_1$ and $c_2$ in the two components give a clique of size $c_1c_2$ in the product.

The affine specialization now supplies polynomial-time rank certificates for the graph invariants induced by coordinate-view presentations.

\section{Affine Coordinate Matroids and Polynomial-Time Bounds}\label{sec:affine}

Affine structure makes rank certificates for Shannon-capacity upper bounds computable before graph materialization. The coordinate functionals form a representable matroid; its rank function supplies the certificate. The same state family generates both this coordinate matroid and the confusability graph, so restricted coordinate ranks give polynomial-time upper bounds on the independence number and Shannon capacity $\Theta(G)$. For affine presentations in $\mathcal{CVG}$, capacity certification is performed on explicit linear data rather than on the exponentially large graph. The input model is explicit: the affine family $A=a_0+V$ is given by a linear presentation of $V$, such as a basis or generator matrix over $\mathbb F_q$, together with the coordinate-subset view family. Under the explicit representation, each relevant quantity is the rank of a restricted coordinate map, computed from the coordinate presentation rather than from the vertex set of the induced graph.

Two hypotheses are separate. The representation theorem and intersection-closure dichotomy are full-tuple results; their proofs realize every proper agreement set by actual state pairs or triples. The affine rank bounds below apply to arbitrary affine state families, including proper affine subspaces. For those restrictions, exact cluster collapse is governed by the local-composition criterion of Section~\ref{sec:equality}. The capacity framework in Sections~\ref{sec:capacity-theory}--\ref{sec:theta-theory} applies to the finite graph induced by the chosen presentation.

Logarithms follow the global base-$2$ convention. The rank inequalities are uniform under any fixed logarithm base.

If $V$ is presented by an $r\times d$ generator matrix $M$, then for a coordinate set $S\subseteq[d]$ the quantity $t(S)$ is the rank of the submatrix $M_S$ formed by the columns indexed by $S$. Arithmetic-operation counts in this section are field-operation counts over $\mathbb F_q$. A single rank computation costs $O(r|S|\min\{r,|S|\})$ field operations by standard Gaussian elimination, and all ranks for a view family $\mathcal V$ of size $L$ are computable in $O(Lrd\min\{r,d\})$ field operations by independent eliminations. For fixed $q$, these are the usual finite-field arithmetic counts.

\paragraph{Bit complexity}
When $q$ is part of the input, assume the field is supplied by an explicit dense representation, so each field element uses $O(\log q)$ bits and each field operation has bit complexity polynomial in $\log q$. Under that convention the rank certificate has bit complexity
\[
O\!\left(Lrd\min\{r,d\}\operatorname{poly}(\log q)\right)
\]
after reading the generator matrix and explicit bit-vector view list. Multiplying the field-operation count above by the polynomial bit cost of each arithmetic operation in the chosen dense representation gives the displayed bit bound. No step enumerates the $q^r$ affine states or the $q^{2r}$ candidate state pairs of the materialized graph. General Shannon-capacity analysis starts from the materialized graph; Lov\'asz-$\vartheta$ bounds require semidefinite optimization~\cite{alon2006shannon,deboer2024asymptotic,grotschel1981ellipsoid}. The rank certificate is polynomial in the explicit affine presentation.

\paragraph{Rank-computation example}
Let
\[
M=\begin{bmatrix}1&0&1\\0&1&1\end{bmatrix}.
\]
The matrix above is a generator matrix, and the presented subspace $V=\operatorname{rowspan}(M)$ lies in the three-dimensional vector space over $\mathbb F_2$.
The field size in this example is $q=2$. For the view family $\mathcal V=\{\{1,2\},\{3\}\}$, Gaussian elimination gives $t(\{1,2\})=2$ and $t(\{3\})=1$. Theorem~\ref{thm:matroid-capacity-bounds}, stated below, therefore gives
\[
\Theta(G)\le \min\{t(\{1,2\}),t(\{3\})\}\log q=\min\{2,1\}\log 2=\log 2,
\]
in the fixed logarithm base. The certificate uses only the displayed generator matrix and the view family.

\paragraph{A dependent-coordinate example}
For a four-coordinate affine presentation over $\mathbb F_2$, let
\[
M=\begin{bmatrix}1&0&0&1\\0&1&0&0\\0&0&1&0\end{bmatrix},
\]
so $V=\operatorname{rowspan}(M)$ has dimension $3$. For the view family $\mathcal V=\{\{1,2\},\{2,3\},\{1,4\}\}$, the restricted ranks are $2$, $2$, and $1$, respectively; the last rank is $1$ because columns $1$ and $4$ coincide on $V$. Theorem~\ref{thm:matroid-capacity-bounds} gives
\[
\Theta(G)\le \min\{2,2,1\}\log 2=\log 2,
\]
strictly below the trivial state-count bound $\log_2 |A|=3$. The gain comes from a matroid dependence among observed coordinates rather than from the binary-square four-cycle geometry.

The same instance has a Haemers-type interpretation~\cite{haemers1978upper}. The rank-one view $\{1,4\}$ has only two realized projection values, so its fiber-equality matrix gives a rank-$2$ Haemers-style certificate for the full confusability graph. The certificate is a feasible Haemers point constructed from the presentation in polynomial time, before the eight-vertex graph is materialized.

\leanmetapending{\LHrng{AFM}{1}{5}, \LHrng{AFM}{20}{22}}
\begin{theorem}[Matroidal specialization of affine coordinate views]\label{thm:affine-fact-matroid}
Assume the state family is affine over a field, so that the valid latent tuples form an affine translate of a linear subspace of the ambient coordinate space. For any coordinate set $S$ and coordinate index $i$, say that $S$ \emph{semantically determines} $i$ when any two states that agree on every coordinate in $S$ also agree on coordinate $i$. Then semantic determination of $i$ by $S$ is equivalent to membership of the coordinate functional for $i$ in the linear span of the coordinate functionals indexed by $S$. Consequently the coordinate indices carry a representable matroid: a coordinate set is independent exactly when its coordinate functionals are linearly independent, the minimal determining coordinate sets are exactly the matroid bases, and every minimal determining set has cardinality equal to both the finite-dimensional span rank of all coordinate functionals and the full coordinate-rank value $t([d])$.
\end{theorem}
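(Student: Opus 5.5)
The plan is to reduce everything to linear algebra on the direction space \(V\) of \(A=a_0+V\subseteq\mathbb F^d\). For each coordinate \(i\), let \(e_i^*|_V\in V^*\) be the restricted coordinate functional. For \(x,y\in A\), the difference \(x-y\) ranges over all of \(V\). Moreover, \(x\) and \(y\) agree on coordinate \(j\) exactly when \(e_j^*(x-y)=0\). Semantic determination of \(i\) by \(S\) therefore becomes the statement
\[
\bigcap_{j\in S}\ker(e_j^*|_V)\subseteq \ker(e_i^*|_V).
\]
Here the translate \(a_0\) plays no role, since only differences of states matter.

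The first step is the equivalence with span membership. The direction span\(\Rightarrow\)determination is immediate: if \(e_i^*|_V=\sum_{j\in S}c_je_j^*|_V\), then any \(v\in V\) killed by every \(e_j^*\), \(j\in S\), is killed by \(e_i^*\). For the converse, I will use the standard duality fact in finite dimension. Consider the map \(\Phi:V\to\mathbb F^{S}\), \(v\mapsto(e_j^*(v))_{j\in S}\). The kernel inclusion says \(e_i^*|_V\) vanishes on \(\ker\Phi\), so it factors as \(\lambda\circ\Phi\) for some linear \(\lambda\) on \(\operatorname{im}\Phi\). Extend \(\lambda\) to all of \(\mathbb F^S\) and write it in coordinates; this exhibits \(e_i^*|_V\) as a linear combination of the \(e_j^*|_V\). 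This factorization step is the only genuinely nontrivial point. It is where finite-dimensionality, or at least the extension of functionals from a subspace, is used, so I will state it explicitly rather than treat it as obvious.

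The second step is the matroid conclusions, which are formal consequences once determination is identified with span closure. The vectors \(\{e_i^*|_V\}_{i\in[d]}\subseteq V^*\) define a linear matroid on \([d]\): independent sets are linearly independent subfamilies, and the closure of \(S\) is \(\{i: e_i^*|_V\in\operatorname{span}\}\). By the first step, \(S\) is determining (it determines every \(i\)) iff \(S\) is spanning in this matroid. Minimal spanning sets are bases, and all bases have the common cardinality \(\dim\operatorname{span}\{e_i^*|_V\}\).

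The last step identifies this number with \(t([d])\). The map \(\pi_{[d]}|_V\) is the inclusion \(V\hookrightarrow\mathbb F^d\), whose rank is \(\dim V\); equivalently, \(t([d])\) is the rank of the generator matrix \(M\). The row rank of its dual, i.e. the rank of the column vectors of \(M\), which represent exactly the \(e_i^*|_V\) in coordinates, equals the column rank. So the span rank of the functionals is \(t([d])\). Indeed, since the coordinate functionals jointly separate points of \(\mathbb F^d\), their restrictions span all of \(V^*\), giving span rank \(\dim V=t([d])\) directly.
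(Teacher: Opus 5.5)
Your proposal is correct and follows the paper's proof. Both reduce semantic determination to the kernel containment $\bigcap_{j\in S}\ker(e_j^*|_V)\subseteq\ker(e_i^*|_V)$ on the direction space, convert it to span membership by finite-dimensional duality, and read off the matroid conclusions from standard linear-matroid facts; the differences are minor. You prove the annihilator identity directly, by factoring $e_i^*|_V$ through $\Phi$ and extending the resulting functional, where the paper cites it, and you make explicit that the span rank equals $\dim V=t([d])$ because the restricted coordinate functionals span $V^*$.
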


\begin{proof}
Let the affine state family be $A=a_0+V$, where $a_0$ is a fixed origin and $V$ is a linear subspace of the ambient coordinate space. Fix a coordinate set $S$ and a coordinate index $i$.

By definition, coordinate $i$ is semantically determined by $S$ exactly when for all $x,x'\in A$, agreement on every coordinate in $S$ forces agreement on coordinate $i$. Writing $\delta=x-x'$, the condition is equivalent to the statement that every direction vector $\delta\in V$ whose coordinates in $S$ vanish also satisfies $\delta_i=0$.

Let $\varepsilon_j:=e_j|_V$ denote the $j$th coordinate functional restricted to the direction space $V$. The preceding condition says precisely that
\[
\bigcap_{j\in S}\ker(\varepsilon_j)\subseteq \ker(\varepsilon_i).
\]
The finite-dimensional duality step is the elementary annihilator identity~\cite[Chap.~2]{roman2008advanced}
\[
\begin{aligned}
\bigcap_{j\in S}\ker(\varepsilon_j)\subseteq \ker(\varepsilon_i)
&\Longleftrightarrow
\varepsilon_i\in \left(\bigcap_{j\in S}\ker(\varepsilon_j)\right)^\perp\\
&\Longleftrightarrow
\varepsilon_i\in \operatorname{span}\{\varepsilon_j:j\in S\}.
\end{aligned}
\]
Indeed, the first equivalence says exactly that $\varepsilon_i$ vanishes on the common kernel, and the second uses $\left(\bigcap_j\ker(\varepsilon_j)\right)^\perp=\operatorname{span}\{\varepsilon_j\}$ in finite dimension. Thus semantic determination by $S$ is exactly span membership of the corresponding coordinate functional.

The coordinate functionals therefore realize a representable matroid on coordinate indices. In that matroid, independence is linear independence of the coordinate functionals, bases are the maximal independent spanning sets, and every basis has the common rank cardinality. Since span corresponds exactly to semantic determination, a minimal determining coordinate set is a minimal spanning set of the matroid, and minimal spanning sets are bases. The common cardinality is the finite rank of the span of all coordinate functionals, equivalently the rank of the full coordinate set.
\end{proof}

The matroid and the confusability graph are associated objects induced by the same affine state family. Under the common specialization to finite affine families with coordinate-projection views, the matroid rank provides \emph{computationally tractable} upper bounds on graph invariants before the induced graph is enumerated. For each coordinate set $S$ the quantity $t(S)$ is the rank of the restricted coordinate map $\pi_S|_V$. The basic size bounds are immediate from matrix rank: $t(S)\le |S|$ and $t(S)\le \dim V$; moreover $t(S)=|S|$ on linearly independent coordinate families, while $t(S)$ reaches the full ambient determining rank exactly on determining sets. Hence each $t(S)$ is computable by Gaussian elimination on an explicit presentation of $V$, and the bounds of Theorem~\ref{thm:matroid-capacity-bounds} are computable in time polynomial in the bit-size of the generator matrix and view list, even when the state space itself is exponentially large in the presentation dimension. Exact Shannon capacity has open computational complexity for general graphs, while these affine coordinate-view certificates reduce to linear matroid rank.

\subsection{View-fiber dimensions and capacity}
\label{sec:affine-to-graph}

Let $A=a_0+V$ with $V$ an $r$-dimensional linear subspace over a finite field $\mathbb{F}_q$, and let admissible views be coordinate-subset projections. For a coordinate set $S$ write $t(S)$ for the rank of the corresponding coordinate functionals on $V$.

\leanmetapending{\LHrng{AFM}{6}{12}, \LH{MFT86}}
\begin{proposition}[View-fiber clique sizes]\label{prop:view-fiber-clique-sizes}
Let $A$ and $t(S)$ be as defined for the affine presentation. Each projection-fiber for view $S$ is an affine subspace of dimension $r-t(S)$ and size $q^{r-t(S)}$. The single-view confusability graph $G_S$ is therefore a disjoint union of $q^{t(S)}$ cliques each of size $q^{r-t(S)}$. In particular
\[
\alpha(G_S)=q^{t(S)},\qquad \Theta(G_S)=t(S)\log q,
\]
where $\alpha$ is the independence number and $\Theta$ is Shannon capacity in the global base-$2$ convention.
\end{proposition}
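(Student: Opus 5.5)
The plan is to describe the fibers of $\pi_S$ on $A$ by linear algebra, read off the clique structure of $G_S$, and then compute $\alpha$ and $\Theta$ for a disjoint union of cliques.

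\textbf{Fiber structure.} Write $A=a_0+V$ and consider the linear map $\pi_S|_V:V\to\mathbb F_q^S$. Its image has dimension $t(S)$ by definition, so by rank--nullity its kernel $K_S$ has dimension $r-t(S)$. For $x=a_0+v\in A$, the fiber of $\pi_S$ through $x$ is $\{a_0+v': \pi_S(v')=\pi_S(v)\}=x+K_S$. This is an affine subspace of dimension $r-t(S)$ with $q^{r-t(S)}$ points. The image $\pi_S(A)=\pi_S(a_0)+\pi_S(V)$ is a translate of a $t(S)$-dimensional subspace, so there are exactly $q^{t(S)}$ nonempty fibers. Equivalently, the fibers are the $q^{r-t(S)}$-element cosets of $K_S$ in $A$, and counting gives $q^r/q^{r-t(S)}=q^{t(S)}$.

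\textbf{Graph structure.} In $G_S$, two distinct states are adjacent exactly when $\pi_S(x)=\pi_S(y)$, by Definition~\ref{def:confusability-graph} with the single view $S$. So adjacency is the equivalence relation ``same fiber'' with the diagonal removed. Hence $G_S$ is the disjoint union of complete graphs on the fibers: $q^{t(S)}$ cliques, each of size $q^{r-t(S)}$.

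\textbf{Invariants.} An independent set meets each clique at most once, and one representative per fiber is independent, so $\alpha(G_S)=q^{t(S)}$.

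For capacity, let $G=\bigsqcup_{j=1}^m K_{n_j}$. Then $G^{\boxtimes n}$ is again a disjoint union of cliques: adjacency or equality in the strong product is the product of the componentwise equivalence relations, which is itself an equivalence relation. It has $m^n$ cliques, so $\alpha(G^{\boxtimes n})=m^n$. Alternatively, $\alpha(G)\le\Theta(G)\le\chi(\overline G)$ (clique-cover bound), and both ends equal $m$. Either way $\Theta(G_S)=\lim_n \tfrac1n\log\alpha(G_S^{\boxtimes n})=t(S)\log q$ in the base-$2$ logarithmic normalization used here.

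The only step needing care is the strong-power claim. I would do it by observing that the product of equivalence relations is an equivalence relation whose classes are products of classes. Everything else is rank--nullity.
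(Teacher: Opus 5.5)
Your proof is correct and follows the paper's argument essentially step for step: rank--nullity gives the fiber sizes, counting $|A|/q^{r-t(S)}$ gives the number of fibers, one representative per clique gives $\alpha$, and $G_S^{\boxtimes n}$ is identified as a disjoint union of $q^{nt(S)}$ cliques indexed by fiber words, which is exactly your product-of-equivalence-relations observation. The only blemish is notational: with the paper's logarithmic $\Theta$, your alternative sandwich should read $\log\alpha(G)\le\Theta(G)\le\log\chi(\overline G)$.
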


\begin{proof}
The projection onto coordinates $S$ is an affine-linear map whose linear part restricted to $V$ has rank $t(S)$. Its kernel is therefore a linear subspace of $V$ of dimension $r-t(S)$, so every fiber is an affine translate of that kernel and has cardinality $q^{r-t(S)}$. The fibers partition $A$, and within each fiber every pair of states is confusable under $S$, so $G_S$ is a disjoint union of equal-size cliques. The number of fibers equals $|A|/q^{r-t(S)}=q^{t(S)}$, so one can select one representative per fiber to obtain an independent set of size $q^{t(S)}$; conversely, any independent set contains at most one vertex from each clique. Hence $\alpha(G_S)=q^{t(S)}$.

For block powers, a vertex of $G_S^{\boxtimes n}$ records an $n$-tuple of projection fibers. Two vertices with the same fiber word lie in a strong product of cliques, hence in one clique; if their fiber words differ in some coordinate, then that coordinate is neither equal nor adjacent in $G_S$, so the two vertices are not adjacent in the strong power. Thus $G_S^{\boxtimes n}$ is a disjoint union of $q^{nt(S)}$ cliques, one for each fiber word, and $\alpha(G_S^{\boxtimes n})=q^{nt(S)}$. Normalization yields $\Theta(G_S)=t(S)\log q$.
\end{proof}

\leanmetapending{\LHrng{AFM}{6}{12}}
\begin{theorem}[Polynomial-time matroid capacity certificates]\label{thm:matroid-capacity-bounds}
Let $G$ be the full confusability graph generated by a family of coordinate-subset views $\mathcal V$. Then
\[
\alpha(G)\le \min_{S\in\mathcal V} q^{t(S)}
\qquad\text{and}\qquad
\Theta(G) \le \min_{S\in\mathcal V} t(S)\log q.
\]
The matroid rank function $t(\cdot)$ on coordinate sets supplies explicit upper bounds on the independence number and Shannon capacity $\Theta(G)$ of the induced confusability graph. The certificate is computed from the coordinate presentation and view list; the proof only compares it to the graph after the bound has already been obtained. The same inequality is uniform under any fixed logarithm base.
\end{theorem}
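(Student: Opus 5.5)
The plan is to compare $G$ with each single-view graph $G_S$ and then invoke Proposition~\ref{prop:view-fiber-clique-sizes}. First I will record the spanning-subgraph relation. Fix $S\in\mathcal V$. By Definition~\ref{def:confusability-graph}, two distinct states are adjacent in $G$ whenever some admissible view fails to separate them, and in particular whenever $\pi_S(x)=\pi_S(y)$. Hence $E(G_S)\subseteq E(G)$ on the common vertex set $A$, so $G_S$ is a spanning subgraph of $G$.

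The independence bound then follows from monotonicity. Adding edges can only destroy independent sets, so every independent set of $G$ is independent in $G_S$. Therefore $\alpha(G)\le\alpha(G_S)=q^{t(S)}$ by Proposition~\ref{prop:view-fiber-clique-sizes}. Minimizing over $S\in\mathcal V$ gives the first inequality.

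For the capacity bound I will lift this containment to strong powers. The adjacency clauses of Definition~\ref{def:strong-product} are each implied by coordinatewise equal-or-adjacent relations, so $E(G_S)\subseteq E(G)$ gives $E(G_S^{\boxtimes n})\subseteq E(G^{\boxtimes n})$ for every $n$. Concretely, two tuples that are coordinatewise equal-or-adjacent in $G_S$ are also coordinatewise equal-or-adjacent in $G$. It follows that $\alpha(G^{\boxtimes n})\le\alpha(G_S^{\boxtimes n})=q^{nt(S)}$, where the equality is the block-power computation in the proof of Proposition~\ref{prop:view-fiber-clique-sizes}. Taking $\frac1n\log$ and passing to the limit (or supremum) yields $\Theta(G)\le t(S)\log q$ for each $S$, and minimizing over $S$ finishes the argument.

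Nothing here depends on the logarithm base beyond a common scaling of both sides, which gives the base-uniformity claim. The computability remark needs no new argument: each $t(S)$ is a matrix rank of the column-restricted generator matrix, as already discussed before the statement. The only step requiring any care is the strong-power monotonicity, and that follows directly from the definition. I expect no genuine obstacle.
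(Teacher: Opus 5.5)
Your proof is correct and follows essentially the same route as the paper. In both, $G_S$ is an edge subgraph of $G$, so $\alpha(G)\le\alpha(G_S)=q^{t(S)}$; strong-power monotonicity then gives $\alpha(G^{\boxtimes n})\le\alpha(G_S^{\boxtimes n})$, and normalizing and minimizing over $S\in\mathcal V$ finishes, the paper writing the last step as $\Theta(G)\le\Theta(G_S)=t(S)\log q$ where you substitute the block-power count $q^{nt(S)}$ directly.
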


The certificate is strictly pre-materialization in the input parameters of the affine presentation. The explicit input has size $O(rd\log q+Ld)$ bits, up to the fixed overhead for the dense field representation. After reading the $r\times d$ generator matrix and the $L$ bit-vector views, the direct computation uses $O(Lrd\min\{r,d\})$ field operations, or $O(Lrd\min\{r,d\}\operatorname{poly}(\log q))$ bit operations in the dense finite-field model. No step enumerates the $q^r$ affine states or the $q^{2r}$ candidate pairs.

\begin{proof}
An independent set for $G$ must be independent in each single-view graph $G_S$, hence its size is at most $\alpha(G_S)=q^{t(S)}$ for every $S\in\mathcal V$. Since $G$ contains $G_S$ as an edge subgraph for each $S$, monotonicity under strong powers gives $\alpha(G^{\boxtimes n})\le \alpha(G_S^{\boxtimes n})$ for every $n$. Taking the per-letter limit yields $\Theta(G)\le \Theta(G_S)=t(S)\log q$, and the bound holds after minimizing over $S\in\mathcal V$.
\end{proof}

\subsection{Kernel-avoiding sections and exact capacity}
\label{sec:kernel-section-exactness}

The upper certificate becomes an exact capacity formula when it is met by a transversal of a rank-minimizing projection. Let
\[
t^*:=\min_{S\in\mathcal V} t(S),
\qquad
K_{\mathcal V}:=\bigcup_{W\in\mathcal V}\ker(\pi_W|_V).
\]
For a rank-minimizing view $S$, the fibers of $\pi_S:A\to\pi_S(A)$ form $q^{t^*}$ ambiguity classes. An independent set can contain at most one point from each $S$-fiber. Hence $\alpha(G)=q^{t^*}$ holds if and only if $\pi_S$ has a section
\[
\sigma:\pi_S(A)\to A,\qquad \pi_S(\sigma(y))=y,
\]
whose image is pairwise nonconfusable, equivalently
\[
\sigma(y)-\sigma(y')\notin K_{\mathcal V}
\quad\text{for all }y\ne y'.
\]
The condition is set-theoretic; it does not require the section to be linear. Once such a one-shot independent set exists, capacity exactness follows by repetition: $I^n$ is independent in $G^{\boxtimes n}$ for every $n$, so the lower rate matches the matroid upper bound.

\leanmetapending{\LHrng{AKS}{1}{5}}
\begin{theorem}[Kernel-section exactness]\label{thm:kernel-section-exactness}
Let $A=a_0+V\subseteq\mathbb F_q^d$ be an affine coordinate-view system with nonempty view family $\mathcal V$. Let $t^*=\min_{S\in\mathcal V}t(S)$. Suppose there is a linear subspace $R\subseteq V$ such that
\[
\dim R=t^*,
\qquad
R\cap\ker(\pi_W|_V)=\{0\}\quad\text{for every }W\in\mathcal V.
\]
Then the matroid rank certificate is exact:
\[
\alpha(G)=q^{t^*},
\qquad
\Theta(G)=t^*\log q.
\]
\end{theorem}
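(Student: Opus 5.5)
The plan is to prove matching lower bounds, since the upper bounds $\alpha(G)\le q^{t^*}$ and $\Theta(G)\le t^*\log q$ are already supplied by Theorem~\ref{thm:matroid-capacity-bounds}. The witness will be the affine translate $I:=a_0+R\subseteq A$, which has exactly $q^{\dim R}=q^{t^*}$ points. The first step is to show that $I$ is independent in $G$. Take distinct $x,y\in I$ and set $\delta=x-y$, so $\delta\in R\setminus\{0\}$. By Definition~\ref{def:confusability-graph}, $x\sim y$ would require some $W\in\mathcal V$ with $\pi_W(x)=\pi_W(y)$. Since $\pi_W$ is the restriction of a linear coordinate projection, this says $\pi_W(\delta)=0$, that is, $\delta\in\ker(\pi_W|_V)$. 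This would contradict $R\cap\ker(\pi_W|_V)=\{0\}$. Hence $\alpha(G)\ge q^{t^*}$, and together with the upper certificate, $\alpha(G)=q^{t^*}$.

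The second step lifts this to strong powers by showing that $I^n$ is independent in $G^{\boxtimes n}$. Take two distinct tuples in $I^n$. They differ in some position $j$, and their $j$th entries are distinct points of $I$, hence nonadjacent in $G$ by the first step. By Definition~\ref{def:strong-product}, adjacency in an iterated strong product requires every coordinate pair to be equal or adjacent. A position that is distinct and nonadjacent therefore rules out adjacency. So $\alpha(G^{\boxtimes n})\ge q^{nt^*}$.

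For the last step, normalizing gives $\frac1n\log\alpha(G^{\boxtimes n})\ge t^*\log q$ for every $n$, so $\Theta(G)\ge t^*\log q$. Combined with Theorem~\ref{thm:matroid-capacity-bounds}, this gives equality. This uses the paper's convention that $\Theta$ is the logarithmic per-letter limit, as in Proposition~\ref{prop:view-fiber-clique-sizes}.

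The main obstacle is slight: one must make sure that equality of projections for affine points reduces to the kernel condition on the direction space $V$, which works because the offset $a_0$ cancels in differences. One must also make sure that nonadjacency in a single coordinate suffices for nonadjacency in the strong power. The hypothesis $\mathcal V\neq\emptyset$ is needed only so that $t^*$ is defined.
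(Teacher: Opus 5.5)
Your proposal is correct and follows essentially the same route as the paper's proof. In both, the affine translate $I=a_0+R$ is independent because differences of distinct points are nonzero vectors of $R$ avoiding every view kernel, Theorem~\ref{thm:matroid-capacity-bounds} supplies the matching upper bounds, and independence of $I^n$ in $G^{\boxtimes n}$ gives the lower rate $t^*\log q$.
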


\begin{proof}
The set $I:=a_0+R$ has cardinality $q^{t^*}$. If $x=a_0+r$ and $x'=a_0+r'$ are two distinct elements of $I$, then $r-r'$ is a nonzero vector of $R$. The kernel-avoidance assumption gives $r-r'\notin \ker(\pi_W|_V)$ for every admissible view $W$, so no admissible view identifies $x$ and $x'$. Thus $I$ is independent in $G$.

Theorem~\ref{thm:matroid-capacity-bounds} gives $\alpha(G)\le q^{t^*}$ and $\Theta(G)\le t^*\log q$. Since $I$ is independent and has size $q^{t^*}$, equality holds for $\alpha(G)$. Moreover $I^n$ is independent in the strong power $G^{\boxtimes n}$, so
\[
\alpha(G^{\boxtimes n})\ge |I|^n=q^{nt^*}.
\]
The normalized rates therefore have lower bound $t^*\log q$, which matches the upper bound.
\end{proof}

\leanmetapending{\LH{AKS32}}
\begin{corollary}[No-Shannon-gap kernel-section instances]\label{cor:kernel-section-no-gap}
Under the hypotheses of Theorem~\ref{thm:kernel-section-exactness},
\[
\Theta(G)=\log\alpha(G)=t^*\log q.
\]
Consequently, a graph with $\Theta(G)>\log\alpha(G)$ cannot be certified exactly by a kernel-avoiding linear section.
\end{corollary}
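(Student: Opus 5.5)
The plan is to read the corollary off Theorem~\ref{thm:kernel-section-exactness} and the definition of Shannon capacity. The hypotheses of that theorem give two equalities: $\alpha(G)=q^{t^*}$ and $\Theta(G)=t^*\log q$. Taking logarithms in the first gives $\log\alpha(G)=t^*\log q$. Combining this with the second yields the displayed chain $\Theta(G)=\log\alpha(G)=t^*\log q$. I will only check that the base convention is consistent: both sides use the global base-$2$ logarithm, and Theorem~\ref{thm:matroid-capacity-bounds} already notes that the inequalities hold uniformly in any fixed base.

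For the second sentence I argue by contraposition. Suppose some graph $G$ arising from an affine coordinate-view presentation admits a kernel-avoiding linear section, that is, a subspace $R\le V$ with $\dim R=t^*$ and $R\cap\ker(\pi_W|_V)=0$ for all $W\in\mathcal V$. Then the first part applies, so $\Theta(G)=\log\alpha(G)$. Hence a graph with $\Theta(G)>\log\alpha(G)$ has no presentation in which such a section certifies the rank bound exactly.

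I will also recall the general fact $\log\alpha(G)\le\Theta(G)$, which holds because $I^n$ is independent in $G^{\boxtimes n}$ whenever $I$ is independent in $G$. This shows that the only possible gap is a strict inequality $\Theta(G)>\log\alpha(G)$, which is the case the corollary excludes.

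I expect no real obstacle. The one point needing care is the wording of the second sentence. It should be read as: no presentation satisfying the hypotheses of Theorem~\ref{thm:kernel-section-exactness} induces such a graph. Since the theorem's conclusion depends only on the graph invariants once the hypotheses hold, this reading follows immediately.
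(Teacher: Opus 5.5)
Your proposal is correct and is the paper's proof: take logarithms in $\alpha(G)=q^{t^*}$ from Theorem~\ref{thm:kernel-section-exactness}, combine with $\Theta(G)=t^*\log q$, and get the second sentence as the contrapositive. Your side remark that $\log\alpha(G)\le\Theta(G)$ always holds is correct but not needed.
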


\begin{proof}
Theorem~\ref{thm:kernel-section-exactness} gives $\alpha(G)=q^{t^*}$ and $\Theta(G)=t^*\log q$, so the displayed equality follows. The final sentence is the contrapositive.
\end{proof}

The theorem identifies the algebraic object behind the one-shot sandwich: a $t^*$-dimensional subspace whose nonzero vectors avoid the union of admissible view kernels. Set-theoretic sections tensorize by the same no-gap mechanism. If $\alpha(G)=q^{t^*}$, then the independent set $I$ witnessing equality has $I^n$ independent in $G^{\boxtimes n}$, while Theorem~\ref{thm:matroid-capacity-bounds} gives $\alpha(G^{\boxtimes n})\le q^{nt^*}$; hence $\Theta(G)=\log\alpha(G)=t^*\log q$. Conversely, if $\Theta(G)=\log\alpha(G)=t^*\log q$, then $\alpha(G)=q^{t^*}$, and any maximum independent set is a kernel-avoiding set section of every rank-minimizing projection. Linearity supplies the algebraic certificate used by Theorem~\ref{thm:kernel-section-exactness}.

\leanmetapending{\LHrng{AKS}{20}{21}, \LHrng{AKS}{36}{38}}
\begin{proposition}[Support-hitting characterization]\label{prop:support-hitting-sections}
Let $R\subseteq V$ be a linear subspace with $\dim R=t^*$. Suppose that every nonzero $r\in R$ meets every admissible view:
\[
\operatorname{supp}(r)\cap W\ne\varnothing
\qquad\text{for every }0\ne r\in R\text{ and every }W\in\mathcal V.
\]
Then $R$ is kernel-avoiding, and therefore $\Theta(G)=t^*\log q$. Conversely, every kernel-avoiding subspace satisfies the displayed support-hitting condition.

In coding-theoretic language, it is enough that $R$ be a $[d,t^*,\delta]_q$ subcode of $V$ with
\[
\delta>d-\min_{W\in\mathcal V}|W|.
\]
In particular, an MDS subcode of dimension $t^*$ gives exactness whenever every admissible view has size at least $t^*$~\cite{macwilliams1977theory}.
\end{proposition}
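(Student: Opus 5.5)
The plan is to reduce everything to one identity: for $r\in V$, $r\in\ker(\pi_W|_V)$ if and only if $r_i=0$ for every $i\in W$, that is, $\operatorname{supp}(r)\cap W=\varnothing$. This holds because $\pi_W$ restricted to the direction space $V$ is the linear coordinate restriction $r\mapsto r|_W$. Granting the identity, the condition $R\cap\ker(\pi_W|_V)=\{0\}$ says exactly that every nonzero $r\in R$ has $\operatorname{supp}(r)\cap W\neq\varnothing$. Quantifying over all $W\in\mathcal V$ gives both directions of the first equivalence at once. When the support-hitting condition holds, $R$ is a $t^*$-dimensional kernel-avoiding subspace, so Theorem~\ref{thm:kernel-section-exactness} applies directly and gives $\Theta(G)=t^*\log q$ (and $\alpha(G)=q^{t^*}$). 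The converse direction is the same identity read backwards, with no capacity input needed.

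For the coding-theoretic sufficient condition, I would argue by counting zeros. If $R$ has minimum distance $\delta$, then every nonzero $r\in R$ has $|\operatorname{supp}(r)|\ge\delta$, so its zero set $Z(r)=[d]\setminus\operatorname{supp}(r)$ has $|Z(r)|\le d-\delta$. The hypothesis $\delta>d-\min_W|W|$ gives $|Z(r)|\le d-\delta<\min_W|W|\le|W|$. Hence no view $W$ can fit inside $Z(r)$, which means $\operatorname{supp}(r)\cap W\neq\varnothing$, and the first part applies. I would also note that $R\subseteq V$ is needed for the kernel statement, which is why the hypothesis asks for a subcode of $V$.

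For the MDS clause, an MDS code of dimension $t^*$ and length $d$ has $\delta=d-t^*+1$. The inequality $\delta>d-\min|W|$ then becomes $\min|W|>t^*-1$, i.e.\ $|W|\ge t^*$ for every $W\in\mathcal V$. Again I would read ``MDS subcode'' as one contained in $V$ of dimension $t^*$.

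I expect no step to be a real obstacle. The only point that needs care is the kernel/support identity in the affine setting: $\pi_W$ is affine on $A$, but the kernel is defined on the direction space $V$, where the projection is linear. A second point to state explicitly is that minimum distance bounds the weight of \emph{every} nonzero codeword, not only the minimum-weight ones, since that is what the zero-counting step uses.
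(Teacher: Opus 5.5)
Your proposal is correct and follows the paper's proof essentially step for step. The kernel/support identity gives both directions of the equivalence, and Theorem~\ref{thm:kernel-section-exactness} supplies the capacity formula. Your zero-counting bound $|Z(r)|\le d-\delta<\min_{W}|W|$ is the same contrapositive the paper uses for the distance clause, and your MDS arithmetic reducing $\delta=d-t^*+1$ to $|W|\ge t^*$ matches the paper's.
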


\begin{proof}
If $0\ne r\in R$ lies in $\ker(\pi_W|_V)$, then all coordinates of $r$ indexed by $W$ vanish, so $\operatorname{supp}(r)\cap W=\varnothing$. The support-hitting hypothesis excludes this for every admissible $W$, hence $R$ is kernel-avoiding. Theorem~\ref{thm:kernel-section-exactness} gives the capacity formula.

Conversely, if $R$ is kernel-avoiding and some nonzero $r\in R$ misses an admissible view $W$, then $\pi_W(r)=0$, so $r\in R\cap\ker(\pi_W|_V)$, contradicting kernel avoidance.

For the distance condition, let $s=\min_{W\in\mathcal V}|W|$. If a nonzero codeword of $R$ missed some admissible view $W$, its support would lie in the complement of $W$ and would have size at most $d-|W|\le d-s$, contradicting $\delta>d-s$. An MDS $[d,t^*]_q$ subcode has distance $d-t^*+1$, so the displayed inequality follows from $|W|\ge t^*$ for all admissible views.
\end{proof}

\leanmetapending{\LHrng{AKS}{36}{38}}
\begin{corollary}[Arrangement-relative support profile]\label{cor:arrangement-support-profile}
For an arbitrary view hypergraph $\mathcal V$, define the arrangement-relative support profile
\[
\Delta_j^{\mathcal V}(V):=
\max_{\substack{R\le V\\ \dim R=j}}
\min_{0\ne r\in R}
\min_{W\in\mathcal V} |\operatorname{supp}(r)\cap W|.
\]
Then a $j$-dimensional linear subspace avoiding every admissible view kernel exists exactly when
\[
\Delta_j^{\mathcal V}(V)\ge1.
\]
In particular, the linear kernel-section certificate at rank $t^*$ exists exactly when $\Delta_{t^*}^{\mathcal V}(V)\ge1$.
\end{corollary}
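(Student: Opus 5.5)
The plan is to reduce the corollary to the support-hitting characterization of Proposition~\ref{prop:support-hitting-sections}, applied at an arbitrary dimension $j$ rather than only at $t^*$. The first step is to note that the per-subspace quantity
\[
\mu(R):=\min_{0\ne r\in R}\min_{W\in\mathcal V}|\operatorname{supp}(r)\cap W|
\]
is a nonnegative integer. It is well defined because $\mathcal V$ is finite and nonempty, $R$ is finite over $\mathbb F_q$, and $R$ has nonzero vectors when $j\ge1$. The maximum defining $\Delta_j^{\mathcal V}(V)$ ranges over the finite Grassmannian of $j$-dimensional subspaces of $V$, so it is attained whenever $j\le r$. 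For $j>r$ I will adopt the convention that the maximum over the empty set is $-\infty$ (or $0$), so that both sides of the equivalence are false. The degenerate case $j=0$ also needs a separate convention: the zero subspace avoids every kernel trivially, and the empty inner minimum is read as $+\infty$.

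The second step proves the key pointwise equivalence: for a fixed $R$, one has $\mu(R)\ge1$ exactly when $R\cap\ker(\pi_W|_V)=\{0\}$ for every $W\in\mathcal V$. This is the same argument as in Proposition~\ref{prop:support-hitting-sections}, and it does not use $\dim R=t^*$. A nonzero $r$ lies in $\ker(\pi_W|_V)$ iff $r_i=0$ for all $i\in W$, which holds iff $\operatorname{supp}(r)\cap W=\varnothing$, that is, iff $|\operatorname{supp}(r)\cap W|=0$. So $\mu(R)=0$ exactly when some nonzero vector of $R$ lies in some view kernel.

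The third step passes to the maximum. If a kernel-avoiding $j$-dimensional $R$ exists, then $\Delta_j^{\mathcal V}(V)\ge\mu(R)\ge1$. Conversely, if $\Delta_j^{\mathcal V}(V)\ge1$, the maximizer $R$ (which exists by finiteness) satisfies $\mu(R)\ge1$, and by the pointwise equivalence it is kernel-avoiding. Specializing to $j=t^*$ and invoking Theorem~\ref{thm:kernel-section-exactness} gives the final sentence. There is no real obstacle here; the only care needed is in the boundary conventions ($j=0$, $j>r$, an empty view list), which I would state explicitly before running the argument.
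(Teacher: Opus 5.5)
Your proposal is correct and follows the paper's proof: for a fixed $R$, the inner minimum is at least $1$ exactly when every nonzero vector of $R$ has support meeting every view, which is kernel avoidance; then you maximize over $j$-dimensional subspaces. Your remark that the support-hitting argument of Proposition~\ref{prop:support-hitting-sections} does not use $\dim R=t^*$ is exactly what the paper tacitly relies on when it applies that proposition at arbitrary $j$, and the final sentence needs only the specialization $j=t^*$, not Theorem~\ref{thm:kernel-section-exactness}.
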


\begin{proof}
For a fixed $j$-dimensional subspace $R$, the inner minimum is at least $1$ exactly when every nonzero vector of $R$ has support meeting every admissible view. Proposition~\ref{prop:support-hitting-sections} identifies that condition with kernel avoidance. Maximizing over $j$-dimensional subspaces gives the equivalence.
\end{proof}

\leanmetapending{\LH{AKS34}}
\begin{corollary}[Subcode-distance boundary for all-cardinality views]\label{cor:subcode-distance-boundary}
Let $V\le\mathbb F_q^d$ and let the admissible views be all coordinate subsets of fixed size $s$. For a $j$-dimensional subcode $R\le V$, the support-hitting condition is equivalent to
\[
\min_{0\ne r\in R}\operatorname{wt}(r)>d-s.
\]
Consequently, if
\[
\Delta_j(V):=\max_{\substack{R\le V\\ \dim R=j}}\ \min_{0\ne r\in R}\operatorname{wt}(r)
\]
denotes the best minimum distance among $j$-dimensional subcodes of $V$, then a $j$-dimensional support-hitting subspace exists exactly when $\Delta_j(V)>d-s$. In particular, the linear kernel-section certificate at rank $t^*$ exists for the all-$s$-view family exactly when $\Delta_{t^*}(V)>d-s$.
\end{corollary}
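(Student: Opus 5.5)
The plan is to reduce everything to a pointwise statement about a single nonzero codeword and then quantify over $R$. Fix a $j$-dimensional subcode $R\le V$ and a nonzero $r\in R$. The all-$s$-view family consists of every $W\subseteq[d]$ with $|W|=s$. We implicitly assume $1\le s\le d$; for $s=0$ the only view is $\varnothing$, every kernel is all of $V$, and both sides fail for $j\ge1$.

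\emph{Pointwise claim.} We show that $\operatorname{supp}(r)\cap W\ne\varnothing$ for every $s$-subset $W$ exactly when $\operatorname{wt}(r)>d-s$.

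If $\operatorname{wt}(r)\le d-s$, the complement $[d]\setminus\operatorname{supp}(r)$ has at least $s$ elements. Choose any $s$-subset $W$ inside it. Then $\operatorname{supp}(r)\cap W=\varnothing$, so support-hitting fails.

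Conversely, suppose some $s$-subset $W$ misses $\operatorname{supp}(r)$. Then $\operatorname{supp}(r)\subseteq[d]\setminus W$, so $\operatorname{wt}(r)\le d-s$. This is exactly the counting step already used in the distance part of Proposition~\ref{prop:support-hitting-sections}. The only new ingredient is the existence direction: because every $s$-subset is admissible, any large enough zero set of $r$ actually contains an admissible view.

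\emph{Passing to $R$ and to $\Delta_j(V)$.} Quantifying the pointwise claim over all nonzero $r\in R$ shows that $R$ is support-hitting exactly when $\min_{0\ne r\in R}\operatorname{wt}(r)>d-s$. This is the first assertion.

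For the existence statement, a $j$-dimensional support-hitting $R$ exists exactly when some $j$-dimensional $R$ has minimum weight exceeding $d-s$. The maximum defining $\Delta_j(V)$ runs over the finite set of $j$-dimensional subspaces of $V$ and is therefore attained. So this condition is equivalent to $\Delta_j(V)>d-s$.

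One could also note that for this view family the inner quantity $\min_W|\operatorname{supp}(r)\cap W|$ in Corollary~\ref{cor:arrangement-support-profile} equals $\max\{0,\operatorname{wt}(r)-(d-s)\}$. This gives the identity $\Delta_j^{\mathcal V}(V)=\max\{0,\Delta_j(V)-(d-s)\}$ and recovers the same criterion from that corollary.

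\emph{Last sentence.} Take $j=t^*$. By Proposition~\ref{prop:support-hitting-sections}, a support-hitting subspace is the same as a kernel-avoiding one, so the $t^*$-dimensional kernel-avoiding subspaces required by Theorem~\ref{thm:kernel-section-exactness} exist exactly when $\Delta_{t^*}(V)>d-s$.

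No step is a real obstacle. The only care needed is the degenerate edge case $s=0$ and the convention that $\Delta_j(V)$ is taken over a nonempty set, which requires $j\le\dim V$. That holds at $j=t^*$, since $t^*\le r$.
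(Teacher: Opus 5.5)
Your proposal is correct and follows the paper's proof. Both prove the pointwise equivalence that a nonzero $r$ misses some admissible $s$-view exactly when $d-\operatorname{wt}(r)\ge s$, then quantify over nonzero $r\in R$ and maximize over $j$-dimensional subcodes, and both read the $j=t^*$ case through Proposition~\ref{prop:support-hitting-sections}. Your additional remarks (attainment of the finite maximum, the identity $\Delta_j^{\mathcal V}(V)=\max\{0,\Delta_j(V)-(d-s)\}$, and the $s=0$ case) are accurate but not needed; in particular the pointwise argument already covers $s=0$ without a separate treatment.
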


\begin{proof}
Fix $R\le V$ and $0\ne r\in R$. The support of $r$ hits every $s$-coordinate view if and only if the complement of $\operatorname{supp}(r)$ has size smaller than $s$: otherwise an $s$-subset contained in the complement is a missed view, and conversely a missed view is contained in the complement. Since $|\operatorname{supp}(r)^c|=d-\operatorname{wt}(r)$, the condition is equivalent to $\operatorname{wt}(r)>d-s$. Quantifying over every nonzero $r\in R$ gives the displayed minimum-distance condition, and maximizing over $j$-dimensional subcodes gives the profile criterion.
\end{proof}

The arrangement-relative support profile is the curve between the general view-kernel arrangement and the scalar Hamming-threshold slices. For all $s$-views, the hypergraph hitting condition collapses to the inequality $\operatorname{wt}(r)>d-s$ for every nonzero selected codeword. The resulting profile is adjacent to, but distinct from, ordinary generalized Hamming weights~\cite{wei1991generalized}. Generalized Hamming weights minimize the union support of a $j$-dimensional subcode; kernel avoidance here asks for a $j$-dimensional subcode whose \emph{minimum nonzero word support} is large relative to the admissible-view hypergraph. Already for \(V=\mathbb F_q^3\), \(j=1\), and all two-coordinate views, the first generalized Hamming weight is \(1\), realized by a coordinate line, while \(\Delta^{\mathcal V}_1(V)=2\), realized by the all-ones line.

\leanmetapending{\LHrng{ABD}{1}{12}}
\paragraph{Arrangement diagnostics}
At rank one, the linear boundary has a classical arrangement count. A kernel-avoiding line exists exactly when the central subspace arrangement $\{K_W\}_{W\in\mathcal V}$ has a vector outside its union; equivalently, the complement count is positive. By inclusion--exclusion over the intersection lattice, the complement count is the characteristic polynomial of the arrangement evaluated at $q$~\cite{orlik1992arrangements}. The same finite inclusion--exclusion identity gives the forbidden-point sieve used below. The supplementary checker \texttt{char\_poly\_small\_arrangement} compares the direct count, Boolean inclusion--exclusion, and M\"obius-lattice count on small arrangements, including the binary square and the four-coordinate threshold kernels; for the latter it also verifies that five avoiding vectors exist but no avoiding plane exists.

\paragraph{Grassmannian avoidance formula}
The higher-dimensional analogue counts avoiding $t$-subspaces in the Grassmannian. Let $P_{\mathcal A}$ be the finite set of projective points contained in $\bigcup_{W\in\mathcal V}K_W$. For $B\subseteq P_{\mathcal A}$, write $\rho(B)$ for the dimension of the span of any representatives of the points in $B$, and put $\binom{a}{b}_q=0$ when $b<0$ or $b>a$. The number of $t$-dimensional linear subspaces avoiding every kernel is
\[
N_t(\mathcal A)
=
\sum_{B\subseteq P_{\mathcal A}}
(-1)^{|B|}
\binom{r-\rho(B)}{t-\rho(B)}_q .
\]
The proof is the ordinary sieve on forbidden projective points. A $t$-subspace avoids the kernel arrangement exactly when it contains no point of $P_{\mathcal A}$. For a fixed $B\subseteq P_{\mathcal A}$, containing all points of $B$ is equivalent to containing their span. If that span has dimension $\rho(B)$, quotienting by it leaves $\binom{r-\rho(B)}{t-\rho(B)}_q$ possible $t$-subspaces. The finite forbidden-point sieve gives the displayed alternating sum. For $t=1$, the formula collapses to the projective complement count, and multiplying by $q-1$ recovers the usual vector-count characteristic-polynomial evaluation.

\leanmetapending{\LHrng{ABD}{21}{23}, \LHrng{ABD}{79}{91}, \LHrng{SPB}{240}{245}, \LHrng{SPB}{288}{289}, \LHrng{SPB}{379}{383}}
\begin{theorem}[Signed-profile compression for Grassmannian avoidance]\label{thm:signed-profile-compression}
Let \(P\) be a finite projective forbidden set, let \(\rho(B)\) be the represented rank of \(B\subseteq P\), and let
\[
w_t(n):=\binom{r-n}{t-n}_q .
\]
Define the signed rank profile
\[
s_n(P,\rho):=
\sum_{\substack{B\subseteq P\\ \rho(B)=n}}(-1)^{|B|}.
\]
Then the Grassmannian avoidance count is computed by the finite decoder
\[
N_t(P,\rho)=\sum_{n=0}^{|P|} s_n(P,\rho)\,w_t(n).
\]
Consequently, for any semantic shell \(\Sigma\) on a family of physical arrangements and any fixed shell value \(\sigma\), either all arrangements in the fiber \(\Sigma^{-1}(\sigma)\) have the same \(N_t\) value, or the signed rank profile \(n\mapsto s_n(P,\rho)\) varies inside that same fiber. In particular, any shell that determines the signed rank profile determines \(N_t\) and the positivity predicate \(N_t>0\).
\end{theorem}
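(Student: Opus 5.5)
The proof starts from the Grassmannian avoidance formula established just before the statement,
\[
N_t(\mathcal A)=\sum_{B\subseteq P}(-1)^{|B|}\binom{r-\rho(B)}{t-\rho(B)}_q .
\]
I take this as given, since it comes from the forbidden-point sieve: a $t$-subspace avoids the arrangement iff it contains no point of $P$; containing all of $B$ means containing its span; and the quotient by that span counts the admissible $t$-subspaces. The first step is to regroup this alternating sum by the value $n=\rho(B)$. The summand depends on $B$ only through $|B|$ and $\rho(B)$, and the $q$-binomial factor depends on $\rho(B)$ alone. Hence
\[
N_t=\sum_n\Bigl(\sum_{B:\,\rho(B)=n}(-1)^{|B|}\Bigr)w_t(n)=\sum_n s_n(P,\rho)\,w_t(n).
\]
To justify the summation range $0\le n\le|P|$, I note that $\rho(B)\le|B|\le|P|$, since the span of $|B|$ representatives has dimension at most $|B|$. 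Thus every $n$ with $s_n\neq0$ lies in that range. Terms with $n>t$ or $n>r$ vanish anyway by the convention $\binom{a}{b}_q=0$.

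The second step is the dichotomy about semantic shells. The weights $w_t(n)$ depend only on the fixed parameters $(r,t,q)$, not on the arrangement. So $N_t$ is a function of the profile vector $(s_n)_n$, and the displayed decoder is that function. Now fix a shell value $\sigma$. Suppose $N_t$ is not constant on the fiber $\Sigma^{-1}(\sigma)$. Then two arrangements in that fiber have different $N_t$. Since $N_t$ is a function of the profile, their profiles must differ. Hence the profile varies inside the fiber, which gives the dichotomy. If the shell determines the profile, then the profile is constant on each fiber, so $N_t$ is constant on each fiber. The predicate $N_t>0$ is a function of $N_t$, so it is determined as well.

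The one point that needs care is the parameter convention. For the ``any semantic shell'' clause, the arrangements compared within a fiber must share $r$ and $q$, or else $r$ and $q$ must be recorded as part of the profile or shell data. Otherwise two arrangements could share a profile while $w_t$ differs. I will state this explicitly as a standing convention: $r$, $q$, and $t$ are fixed across the physical family. I will also confirm that $\rho(B)$ is well defined independently of the choice of projective representatives, since rescaling representatives does not change their span. Everything else is regrouping of a finite sum, so this convention check is the only real obstacle.
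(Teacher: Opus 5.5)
Your proof is correct and follows the paper's route. You regroup the forbidden-point sieve by $\rho(B)$, bounding the range via $\rho(B)\le|B|\le|P|$, and get the fiber dichotomy as the contrapositive of the fact that $N_t$ is a function of the signed profile, with your standing convention that $r$, $q$, $t$ are fixed across the family making explicit what the paper leaves implicit in the notation $N_t(P,\rho)$.
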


\begin{proof}
Group the forbidden-point sieve by the value of \(\rho(B)\). Since \(\rho(B)\le |P|\) for every represented forbidden subset, the grouped sum is finite over \(0,\dots,|P|\), and its coefficient at rank \(n\) is exactly \(s_n(P,\rho)\). The fixed-fiber alternative is the abstract refinement fact applied to the decoder: a target computed from an intermediate invariant is constant on any semantic fiber where the intermediate invariant is constant; the contrapositive gives a signed-profile collision whenever \(N_t\) varies in that fiber.
\end{proof}

The signed-profile theorem compresses higher-rank avoidance. The kernel-intersection lattice does not determine \(N_t\). The rank-sieve evaluation needs the signed rank profile of the represented forbidden-point matroid. Computing \(N_t\) reduces to computing that profile and evaluating the displayed \(O(|P|)\)-term decoder; residual hardness belongs to profile computation, not to the semantic lattice once the profile is supplied.

The supplementary checker \texttt{grassmannian\_avoidance\_polynomial} compares the formula with direct Grassmannian enumeration on small arrangements, including the four-coordinate threshold case where the line count is positive and the plane count is zero. The same check compares two arrangements of four kernel lines in $\mathbb F_3^3$ with the same full kernel-intersection profile: four collinear projective points yield no avoiding plane, while four points in general position yield three. The projectivized forbidden-point matroid carries information not present in the kernel-intersection data alone; Theorem~\ref{thm:signed-profile-compression} identifies the signed profile as the exact finite refinement read by the rank-sieve count.

For higher-rank sections, vector counting alone does not supply the boundary. The natural independence system whose finite sets are linearly independent and whose spans avoid every forbidden kernel is not a matroid: an explicit two-line arrangement in $\mathbb F_2^3$ violates exchange, as mirrored by the checker \texttt{avoidance\_system\_nonmatroid}. Schubert-count union bounds give another sufficient test by counting $t$-subspaces that meet a fixed forbidden subspace, but the checker \texttt{schubert\_union\_bound} shows that the bound certifies the singleton-view endpoint only for $d<q+1$, whereas the diagonal section works for every $d$. The rank-one characteristic polynomial is the first member of a projective forbidden-set counting theory; for \(t^*\ge2\), the exact count is the signed-profile decoder above.

\leanmetapending{\LHrng{AKS}{10}{16}}
\paragraph{Remark: nonlinear rank-tightness without a linear section}
A finite \(\mathbb F_3^6\) arrangement has \(t^*=2\), an independent transversal \(I\) of size \(9=3^{t^*}\), and no linear two-dimensional kernel-avoiding section. The construction uses a four-dimensional kernel \(H=\{v:v_1=v_2=0\}\) and a nine-point transversal whose difference set represents \(36\) projective directions disjoint from \(H\); all remaining projective directions become line kernels. The checker \texttt{nonlinear\_section\_witness} verifies \(|I|=9\), \(|I-I\setminus\{0\}|=72\), \(H\cap(I-I)=\{0\}\), the absence of a complete projective line in the allowed difference directions, and \(\omega=\chi=81\). Thus \(\alpha=3^{t^*}\) and \(\Theta=\log9\) hold by a nonlinear transversal while \(N_{t^*}=0\).

\leanmetapending{\LHrng{AKS}{6}{7}, \LHrng{AKS}{17}{18}}
\begin{lemma}[Finite-union subspace avoidance]\label{lem:finite-union-subspace-avoidance}
Let $E$ be a vector space over $\mathbb F_q$, and let $U_1,\dots,U_L$ be proper linear subspaces of $E$. If $L\le q$, then $\bigcup_{i=1}^L U_i$ is a proper subset of $E$.
\end{lemma}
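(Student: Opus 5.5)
The plan is the classical counting argument for unions of proper subspaces over a finite field, plus a separate remark for the infinite-dimensional case. First I reduce to finite dimension. If $E$ is infinite-dimensional, I choose a finite-dimensional subspace $E_0\le E$ on which every $U_i\cap E_0$ is still proper. This works by picking, for each $i$, a vector $v_i\in E\setminus U_i$ and letting $E_0=\operatorname{span}\{v_1,\dots,v_L\}$. A vector of $E_0$ outside $\bigcup_i (U_i\cap E_0)$ is then outside $\bigcup_i U_i$. So it suffices to treat $\dim E=n<\infty$. The case $n=0$ is vacuous, because then $E$ has no proper subspace and any union of proper subspaces is empty only when $L=0$; the empty union is proper in $\{0\}$ trivially.

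Next, assuming $1\le n<\infty$, I count. Each proper subspace has $|U_i|\le q^{n-1}$, and all the $U_i$ share the zero vector. Hence
\[
\Bigl|\bigcup_{i=1}^L U_i\Bigr|\le 1+L\,(q^{n-1}-1)\le 1+q\,(q^{n-1}-1)=q^n-q+1<q^n=|E|,
\]
where the last strict inequality uses $q\ge2$. This already proves the lemma.

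An alternative I would mention is the hyperplane-pencil argument, which also explains why $L\le q$ is the right threshold. Enlarge each $U_i$ to a hyperplane $H_i\supseteq U_i$; this only enlarges the union. Suppose the union of at most $q$ hyperplanes were all of $E$. Pick a codimension-two subspace $C$. There are exactly $q+1$ hyperplanes containing $C$, and their union is $E$. Choose one such hyperplane $H$ not among the $H_i$, and a vector $x\in H\setminus C$. I would then count how the $H_i$ meet the affine lines through $x$.

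The counting route is simpler, so I would present that one. The only step needing care is the shared zero vector, which gives the bound $1+L(q^{n-1}-1)$ instead of $Lq^{n-1}$. That correction matters in the boundary case $L=q$: without it one gets only $q\cdot q^{n-1}=q^n$, which is not strict. I expect no genuine obstacle beyond this bookkeeping and the finite-dimensional reduction.
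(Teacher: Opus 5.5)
Your proof is correct, but it takes a different route from the paper. The paper does not count at all. Assuming $U_1\cup\cdots\cup U_L=E$, it fixes $x\in U_1$ and $y\notin U_1$, and observes that the affine line $\{ax+y:a\in\mathbb F_q\}$ misses $U_1$. By pigeonhole ($q$ points, at most $q-1$ remaining subspaces), two points $ax+y$ and $bx+y$ with $a\ne b$ lie in a common $U_j$, and subtracting them gives $x\in U_j$. Thus $U_1$ is absorbed by the other subspaces, and induction on $L$ closes the argument.

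That proof runs unchanged in any dimension, so it needs no reduction step. It also uses only that a line has at least $L$ points, so it proves the statement over any field $F$ with $|F|\ge L$, infinite fields included; a cardinality count cannot reach that case.

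Your route needs both the field and the dimension to be finite. You supply the finite dimension correctly: since $v_i\in E_0\setminus U_i$, each $U_i\cap E_0$ is proper in $E_0=\operatorname{span}\{v_1,\dots,v_L\}$. In exchange your argument is quantitative. The bound $|\bigcup_i U_i|\le 1+L(q^{n-1}-1)$, with the shared-zero correction, leaves at least $q-1$ nonzero vectors of $E_0$ outside the union whenever $L\le q$. That is more than the bare existence the field-size theorem uses, and there $V$ is finite-dimensional, so your reduction is not even needed.

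Your pencil remark correctly explains why the threshold is sharp: $q+1$ hyperplanes through a codimension-two subspace cover $E$ once $\dim E\ge2$. The proof you sketch along those lines is unfinished, but since you present the counting argument, nothing depends on it.
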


\begin{proof}
The standard finite-field subspace-union bound is equivalent by duality to the affine blocking-set lower bound of Jamison and Brouwer--Schrijver~\cite{jamison1977covering,brouwer1978blocking}. Equivalently, at most $q$ proper subspaces cannot cover a vector space over $\mathbb F_q$. The proof is the usual induction on $L$. Suppose $U_1\cup\cdots\cup U_L=E$. Fix any $x\in U_1$ and choose $y\notin U_1$. If $x=0$, then $x$ lies in every remaining subspace. If $x\ne0$, the affine line $\{ax+y:a\in\mathbb F_q\}$ is disjoint from $U_1$, so its $q$ points must lie in the remaining $L-1\le q-1$ subspaces. Two points lie in the same remaining subspace; subtracting them forces $x$ into that subspace. Thus every $x\in U_1$ is covered by the remaining subspaces, reducing the cover and closing the induction.
\end{proof}

\leanmetapending{\LH{AFM24}, \LHrng{AKS}{6}{9}, \LHrng{AKS}{17}{18}}
\begin{theorem}[Field-size exactness]\label{thm:field-size-exactness}
Let $A=a_0+V\subseteq\mathbb F_q^d$ be an affine coordinate-view system with nonempty view family $\mathcal V$ of size $L$, and let $t^*=\min_{S\in\mathcal V}t(S)$. If $q\ge L$, then there exists a linear kernel-avoiding subspace $R\subseteq V$ of dimension $t^*$. Consequently
\[
\Theta(G)=t^*\log q.
\]
For fixed explicit affine input over fields satisfying $q\ge L$, Shannon capacity is computed exactly by the coordinate-matroid rank minimum.
\end{theorem}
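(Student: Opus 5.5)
The plan is a greedy construction of the section. Build $R$ one dimension at a time, using Lemma~\ref{lem:finite-union-subspace-avoidance} at each step, and then invoke Theorem~\ref{thm:kernel-section-exactness}. Write $K_W=\ker(\pi_W|_V)$ for $W\in\mathcal V$. Since $\dim K_W=r-t(W)$, we have $K_W$ of codimension $t(W)\ge t^*$ in $V$.

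Suppose $R_j\le V$ has dimension $j<t^*$ and $R_j\cap K_W=0$ for all $W$. We want a vector $v\in V$ such that $R_{j+1}=R_j+\langle v\rangle$ still meets every $K_W$ trivially. For a single $W$ this holds exactly when $v\notin R_j+K_W$. Indeed, $av+u\in K_W$ with $u\in R_j$ and $a\neq0$ forces $v\in R_j+K_W$. When $a=0$ the vector lies in $R_j\cap K_W=0$.

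Because $R_j\cap K_W=0$, we get $\dim(R_j+K_W)=j+r-t(W)\le j+r-t^*<r$. So each $U_W:=R_j+K_W$ is a proper subspace of $V$. There are at most $L\le q$ of them, so Lemma~\ref{lem:finite-union-subspace-avoidance} with $E=V$ gives $v\in V\setminus\bigcup_W U_W$. In particular $v\notin R_j$, so the dimension really goes up by one.

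The base case is $R_0=0$, and we iterate $t^*$ times. This yields a $t^*$-dimensional kernel-avoiding $R$. Theorem~\ref{thm:kernel-section-exactness} then gives $\alpha(G)=q^{t^*}$ and $\Theta(G)=t^*\log q$. The algorithmic sentence follows because $t^*$ is computed by Gaussian elimination as in Theorem~\ref{thm:matroid-capacity-bounds}.

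The main obstacle is the properness of every $U_W$ at each stage. This needs the dimension count $\dim(R_j+K_W)=j+\dim K_W$, which uses the inductive avoidance hypothesis, together with $j<t^*\le t(W)$. Two edge cases should be noted separately. If $t^*=0$, take $R=0$ and the claim is trivial. Duplicated views only reduce the number of distinct subspaces, so the hypothesis $L\le q$ still applies.
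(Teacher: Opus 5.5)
Your proposal is correct and matches the paper's proof essentially step for step. The shared structure is: greedy extension $R_{j+1}=R_j+\langle v\rangle$, with $v$ chosen outside the at most $L\le q$ proper subspaces $K_W+R_j$ via Lemma~\ref{lem:finite-union-subspace-avoidance}; the same $a\neq0$ versus $a=0$ check that avoidance is preserved; and a final appeal to Theorem~\ref{thm:kernel-section-exactness}. One small remark: properness of $K_W+R_j$ does not need the inductive avoidance hypothesis. The paper's subadditivity bound $\dim(K_W+R_j)\le (r-t^*)+j<r$ already suffices, so the step you flag as the main obstacle is easier than you suggest.
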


\begin{proof}
Write $r=\dim V$ and $K_W=\ker(\pi_W|_V)$. Since $t(W)\ge t^*$ for every admissible $W$, each $K_W$ has dimension at most $r-t^*$.

Construct subspaces
\[
0=R_0\subseteq R_1\subseteq\cdots\subseteq R_{t^*}\subseteq V
\]
with $\dim R_j=j$ and $R_j\cap K_W=\{0\}$ for every $W\in\mathcal V$. The case $j=0$ is immediate. Suppose $R_j$ has been constructed with $j<t^*$. For each $W$ the sum $K_W+R_j$ is a proper subspace of $V$, because
\[
\dim(K_W+R_j)\le \dim K_W+\dim R_j\le (r-t^*)+j<r.
\]
By Lemma~\ref{lem:finite-union-subspace-avoidance}, the union of the $L$ forbidden subspaces $K_W+R_j$ is a proper subset of $V$. Choose $v\in V$ outside this union and set $R_{j+1}:=R_j+\langle v\rangle$.

The choice of $v$ makes $R_{j+1}$ one dimension larger than $R_j$. It also preserves kernel avoidance. If $u+av\in K_W$ with $u\in R_j$, then $a\ne0$ would imply
\[
v=a^{-1}\bigl((u+av)-u\bigr)\in K_W+R_j,
\]
contrary to the choice of $v$. Hence $a=0$, and then $u\in R_j\cap K_W=\{0\}$. Thus $R_{j+1}\cap K_W=\{0\}$ for every $W$. After $t^*$ steps, Theorem~\ref{thm:kernel-section-exactness} applies.
\end{proof}

\paragraph{Typical regimes under bounded view count}
The field-size theorem is distribution-free. In any random affine-presentation model whose sampled view list always satisfies \(L\le q\), the rank certificate is exact with probability one:
\[
\Theta(G)=\min_{S\in\mathcal V}t(S)\log q.
\]
Average-case tightness questions therefore begin in the dense-view regime \(L>q\), where the obstruction is the represented view-kernel arrangement rather than random graph materialization.

\leanmetapending{\LHrng{KSH}{1}{5}, \LHrng{KSH}{9}{11}}
\begin{theorem}[Rank-one linear kernel-section existence is NP-complete]\label{thm:rank-one-kernel-section-np-hard}
The explicit-input decision problem asking whether an affine coordinate-view presentation admits a linear kernel-avoiding section is NP-complete. NP-hardness already holds over $\mathbb F_2$ with rank minimum $t^*=1$.
\end{theorem}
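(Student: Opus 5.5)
Membership in NP is immediate: a certificate is a basis of a candidate subspace \(R\le V\) of dimension \(t^*\). The verifier checks three things by Gaussian elimination over \(\mathbb F_q\), all polynomial in the \(O(rd\log q+Ld)\)-bit input. First, it computes \(t^*=\min_{S\in\mathcal V}t(S)\). Second, it checks that the supplied vectors lie in \(V\) and are linearly independent. Third, it checks that for every \(W\in\mathcal V\) the restricted matrix \(\pi_W\) applied to the basis of \(R\) has rank \(t^*\). This last condition is equivalent to \(R\cap\ker(\pi_W|_V)=0\), so the certificate has polynomial size and polynomial-time verification.

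For hardness I would reduce from 3-SAT (or CNF-SAT) over \(\mathbb F_2\). The target is \(t^*=1\), where a kernel-avoiding section is a single nonzero vector \(v\in V\) avoiding every \(K_W\). By Proposition~\ref{prop:support-hitting-sections}, this is the same as requiring \(\operatorname{supp}(v)\cap W\neq\varnothing\) for every view \(W\). Given a formula on variables \(x_1,\dots,x_n\) with clauses \(C_1,\dots,C_m\), take \(V=\operatorname{rowspan}(M)\) for a generator matrix \(M\) whose rows are indexed by variables. For each variable \(x_i\) I create two coordinates: a positive coordinate equal to \(x_i\) and a negative coordinate equal to \(x_i+1\). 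The constant \(1\) cannot be written in a linear code, so I add a homogenizing variable \(x_0\) and set the negative coordinate to \(x_i+x_0\). A view \(\{c_0\}\) on the coordinate reading \(x_0\) then forces \(x_0=1\) for every avoiding vector. Each clause \(C_j\) becomes the view \(W_j\) consisting of the coordinates of its literals: positive coordinates for positive literals and negative coordinates for negated ones. With \(x_0=1\), the support of \(v\) hits \(W_j\) exactly when some literal of \(C_j\) is true. Hence nonzero avoiding vectors correspond bijectively to satisfying assignments, and this also gives the parsimony advertised in the introduction.

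The remaining step is to make the rank minimum equal to \(1\), and I expect this bookkeeping to be the main obstacle. The view \(\{c_0\}\) has \(t(\{c_0\})=1\), since the functional \(x_0\) is nonzero on \(V\), so \(t^*\le1\). Every view is nonempty and every coordinate functional is nonzero on \(V\), so \(t(W)\ge1\) and therefore \(t^*=1\) exactly. I also need the clause views to cause no degeneracy. A clause containing both \(x_i\) and \(\neg x_i\) is always satisfied and can be discarded in preprocessing. The coordinate functionals \(x_i\) and \(x_i+x_0\) are distinct, so no view is accidentally forced empty. Finally, the affine offset is \(a_0=0\), and the instance has size \(O(n(n+m))\), which is polynomial.
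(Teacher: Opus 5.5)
Your proposal is correct and follows the paper's proof. NP membership uses the same basis certificate. Hardness uses the same homogenized 3-SAT reduction over $\mathbb F_2$: an anchor coordinate forced to $1$ by a singleton view, literal coordinates $x_i$ and $x_0+x_i$, and one view per clause, which gives $t^*=1$.

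The only difference is cosmetic. You use two coordinates per variable, shared across clauses. That makes the parametrization injective and gives the bijection with satisfying assignments directly, provided clauses are nonempty as in 3-SAT. The paper instead adds one coordinate per literal occurrence and takes any homogeneous preimage $(1,a)$.
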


\begin{proof}
Membership in \textsc{NP} is by a basis certificate. Given candidate basis vectors for $R$, Gaussian elimination verifies their rank, and for every admissible view $W$ it verifies that the restricted projection $\pi_W|_R$ has trivial kernel, equivalently full column rank on the displayed basis. The bit cost is polynomial in the generator matrix, field representation, and view list.

For hardness, use a polynomial reduction from $3$-\textsc{SAT}; NP-hardness of $3$-\textsc{SAT} is standard~\cite{karp1972reducibility}. Let $\varphi$ have Boolean variables $x_1,\dots,x_n$ and clauses $C_1,\dots,C_m$, each with three literal occurrences. Work over $\mathbb F_2$ with homogeneous parameter space
\[
E=\mathbb F_2^{\{z\}\cup[n]}.
\]
The anchor coordinate is the linear form $z$. For every literal occurrence, add one ambient coordinate: a positive occurrence $x_i$ contributes the linear form $x_i$, and a negative occurrence $\neg x_i$ contributes the linear form $z+x_i$. Let $V$ be the image of the resulting linear map from $E$ into the ambient coordinate space. The view family consists of the singleton anchor view and, for each clause, the three occurrence coordinates belonging to that clause. The anchor view has rank $1$, and every clause view has rank at least $1$, so $t^*=1$.

If $a:[n]\to\{0,1\}$ satisfies $\varphi$, take the homogeneous vector $(z,x)=(1,a)$. The anchor coordinate is nonzero. In each clause, at least one literal is true; by construction the corresponding homogeneous literal coordinate is nonzero. Thus the image vector spans a one-dimensional subspace of $V$ meeting none of the admissible view kernels nontrivially.

Conversely, suppose a one-dimensional kernel-avoiding subspace exists. Over $\mathbb F_2$ it has a unique nonzero vector. Kernel-avoidance for the anchor view forces the anchor coordinate of that vector to be $1$. Choose any homogeneous preimage $(1,a)$. Kernel-avoidance for the clause view says that at least one of the three corresponding literal coordinates is nonzero; on the chart $z=1$, the coordinate $x_i$ is the truth value of the positive literal and the coordinate $z+x_i$ is the truth value of the negative literal. Hence every clause of $\varphi$ is satisfied by $a$.

The construction uses one anchor coordinate, three literal-occurrence coordinates per clause, and one view per clause plus the anchor view. Its size is linear in the $3$-\textsc{SAT} instance. The homogeneous-coordinate equivalence between clause satisfaction and nonzero clause views is the standard Karp reduction wrapper.
\end{proof}

\leanmetapending{\LHrng{KSH}{6}{8}, \LHrng{KSH}{12}{13}}
\begin{corollary}[Rank-one nonlinear section existence is NP-complete]\label{cor:rank-one-nonlinear-section-np-hard}
The explicit-input decision problem asking whether an affine coordinate-view presentation over $\mathbb F_2$ with $t^*=1$ admits a nonlinear rank-one section meeting the rank bound is NP-complete.
\end{corollary}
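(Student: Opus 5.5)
The plan is to show that, when \(t^*=1\) over \(\mathbb F_2\), a nonlinear rank-one section meeting the rank bound is exactly the same object as a linear kernel-avoiding line. Hardness then follows from the reduction in Theorem~\ref{thm:rank-one-kernel-section-np-hard}, with no new construction. Here a rank-one section meeting the bound means an independent set \(I\) of size \(q^{t^*}=2\).

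\emph{Membership in NP.} The certificate is a pair of states \(a_0+u\), \(a_0+u'\), given by coefficient vectors relative to the generator matrix. The verifier checks three things in polynomial time:
\begin{itemize}
\item that \(t^*=1\), by Gaussian elimination as in Theorem~\ref{thm:matroid-capacity-bounds};
\item that \(\delta:=u-u'\ne0\);
\item that \(\pi_W(\delta)\ne0\) for every admissible view \(W\), i.e.\ that the two states are nonconfusable.
\end{itemize}
The certificate has size \(O(r)\) bits, so this is polynomial in the presentation.

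\emph{Equivalence with linear sections.} Over \(\mathbb F_2\), fix a rank-minimizing view \(S\). It has exactly \(q^{t^*}=2\) fibers. By the section discussion preceding Theorem~\ref{thm:kernel-section-exactness}, a set section meeting the bound is a two-point set \(\{x,x'\}\) with \(x-x'\notin K_{\mathcal V}\). Put \(\delta=x-x'\in V\setminus\{0\}\). Then \(R=\langle\delta\rangle=\{0,\delta\}\) is one-dimensional and meets every \(K_W\) only in \(0\), so \(R\) is a linear kernel-avoiding section.

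Conversely, let \(R=\langle\delta\rangle\) be a linear kernel-avoiding line. Then \(a_0+R\) is a two-point independent set. The two points lie in different \(S\)-fibers, because \(\delta\notin K_S\). So \(a_0+R\) is a section of \(\pi_S\) meeting the rank bound.

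The two problems therefore have identical yes-instances on every \(\mathbb F_2\), \(t^*=1\) presentation. The 3-SAT instances produced in Theorem~\ref{thm:rank-one-kernel-section-np-hard} are exactly of this form, so the same reduction transfers NP-hardness unchanged.

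The main point needing care is the identification of a nonlinear section with a difference vector. This uses \(q=2\) in an essential way. The only nonzero scalar is \(1\), so a single nonzero difference spans a line whose only nonzero vector is \(\delta\) itself. That line avoids every kernel precisely when \(\delta\) does. For larger \(q\) this collapse fails, since the line also contains the multiples \(c\delta\) for other nonzero \(c\). This is why the corollary is stated over \(\mathbb F_2\), and the proof should state the restriction explicitly.
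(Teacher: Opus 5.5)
Your proof is correct and is essentially the paper's argument: the two representatives form the NP certificate, and over $\mathbb F_2$ their difference is the unique nonzero vector of a kernel-avoiding line (and conversely), so the 3-\textsc{SAT} reduction of Theorem~\ref{thm:rank-one-kernel-section-np-hard} transfers unchanged. Your closing remark is wrong, however: the rank-one equivalence does not use $q=2$, because each $K_W$ is a linear subspace and so $c\delta\in K_W$ iff $\delta\in K_W$ for every $c\ne0$; over any $\mathbb F_q$ a single nonconfusable pair already yields a kernel-avoiding line, and the restriction to $\mathbb F_2$ only reflects the field on which the hardness reduction is built.
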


\begin{proof}
Membership in \textsc{NP} uses the two representatives as a certificate. One checks that they lie in the two fibers of a rank-one projection and then scans the view list to verify that no admissible view identifies the pair. In the rank-one binary slice of the preceding reduction, a nonlinear exact section is a pair of representatives, one in each anchor fiber. The difference of those two representatives is the unique nonzero vector of a one-dimensional subspace. Pairwise nonconfusability of the two representatives says exactly that this difference avoids every admissible view kernel. Thus nonlinear rank-one section existence is equivalent to the linear kernel-section point formalized in the reduction, and the hardness part of Theorem~\ref{thm:rank-one-kernel-section-np-hard} applies.
\end{proof}

\leanmetapending{\LH{AKS35}, \LHrng{KSH}{1}{5}, \LHrng{KSH}{9}{11}}
\begin{corollary}[Fixed-rank linear kernel-section existence is NP-complete]\label{cor:fixed-rank-kernel-section-np-complete}
For every fixed integer $k\ge1$, the explicit-input decision problem asking whether an affine coordinate-view presentation over $\mathbb F_2$ with rank minimum $t^*=k$ admits a linear kernel-avoiding section is NP-complete.
\end{corollary}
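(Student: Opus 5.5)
Membership in \textsc{NP} is handled as in Theorem~\ref{thm:rank-one-kernel-section-np-hard}. The certificate is a list of $k$ candidate basis vectors in $V$. Gaussian elimination checks that they have rank $k$, and for each view $W$ it checks that $\pi_W$ restricted to their span has full column rank $k$. The value $t^*$ is part of the promise and is itself computable in polynomial time by Theorem~\ref{thm:matroid-capacity-bounds}, so it adds nothing to the verification.

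For hardness I will reduce the rank-one problem to rank $k$ by a direct-sum padding. Start from the instance $(V_1,\mathcal V_1)$ produced by the $3$-\textsc{SAT} reduction of Theorem~\ref{thm:rank-one-kernel-section-np-hard}, which has $t^*=1$. Adjoin a second block $V_2=\mathbb F_2^{k-1}$ on $k-1$ fresh coordinates, so that $V=V_1\oplus V_2$ sits in the disjoint union of the two coordinate sets. Each new view has the form $W\cup[\text{all padding coordinates}]$ for some $W\in\mathcal V_1$. The kernel of such a view is $K_W\oplus 0$, and its rank is $t_1(W)+(k-1)$. Hence the rank minimum is exactly $1+(k-1)=k$, attained at the anchor view.

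I then argue the equivalence in both directions.
\begin{itemize}
\item If $R_1$ is a kernel-avoiding line in $V_1$, then $R_1\oplus V_2$ has dimension $k$. A vector $(u,w)$ in this space lies in $K_W\oplus 0$ only if $w=0$ and $u\in R_1\cap K_W=0$, so $R_1\oplus V_2$ avoids every view kernel.
\item Conversely, suppose $R\le V$ has dimension $k$ and avoids every $K_W\oplus 0$. Then $R\cap(V_1\oplus 0)$ has dimension at least $k-(k-1)=1$, so it contains a nonzero vector $(u,0)$. This vector is not in $K_W\oplus 0$, so $u\notin K_W$ for every $W$, and $\langle u\rangle$ is a kernel-avoiding line in $V_1$. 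Satisfiability of $\varphi$ follows.
\end{itemize}
The construction is polynomial in size for fixed $k$, and in fact linear in $k$ plus the $3$-\textsc{SAT} instance.

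The main point to verify is that padding does not change which view attains the minimum and that it keeps $t^*=k$ exactly. Putting all padding coordinates inside every view makes both automatic. The other step that needs care is the dimension-counting step in the converse, which relies on $\dim V_2=k-1$.
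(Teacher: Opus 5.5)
Your proposal is correct and matches the paper's proof: the same basis-certificate argument for membership, the same padding by $\mathbb F_2^{k-1}$ with every padding coordinate added to every view (so kernels become $K_W\oplus 0$ and ranks rise by $k-1$), and the same dimension count that forces a nonzero vector $(e,0)\in R$ in the converse. The only difference is cosmetic: you write $\dim\bigl(R\cap(V_1\oplus 0)\bigr)\ge 1$, while the paper uses the nonzero kernel of the projection $R\to P$, which is the same fact.
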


\begin{proof}
Membership in \textsc{NP} is the same basis-certificate verification used in Theorem~\ref{thm:rank-one-kernel-section-np-hard}.

For hardness, start from the rank-one instance produced in that theorem. Let $E$ be its direction space and $K_W\le E$ the view kernels. Add a padding space $P=\mathbb F_2^{k-1}$, add padding coordinates realizing all coordinates of $P$, and replace each old admissible view $W$ by the union of $W$ with all padding coordinates. The new direction space is $E\oplus P$, and the kernel of the padded view is $K_W\oplus0$. Every view rank increases by $k-1$, so the rank minimum becomes $k$.

If the original instance has a kernel-avoiding line $R_0\le E$, then $R_0\oplus P$ is a $k$-dimensional subspace of $E\oplus P$ meeting every $K_W\oplus0$ only at $0$. Conversely, suppose a $k$-dimensional padded section $R\le E\oplus P$ avoids every $K_W\oplus0$. Since $\dim P=k-1<\dim R$, the projection of $R$ to $P$ has a nonzero kernel; equivalently, $R$ contains a nonzero vector $(e,0)$. Kernel avoidance of $R$ implies $e\notin K_W$ for every old view $W$, so $\langle e\rangle$ is a kernel-avoiding line for the original rank-one instance. The padding construction is linear in the size of the input for fixed $k$.
\end{proof}

\leanmetapending{\LH{ABD10}, \LH{AKS35}, \LHrng{KSH}{1}{5}, \LHrng{KSH}{9}{11}}
\begin{corollary}[Grassmannian avoidance-count positivity is NP-complete]\label{cor:grassmannian-avoidance-positivity-np-complete}
For every fixed integer $k\ge1$, the explicit-input decision problem asking whether the Grassmannian avoidance count
\[
N_k(\mathcal A)
\]
of the induced view-kernel arrangement over $\mathbb F_2$ is positive is NP-complete.
\end{corollary}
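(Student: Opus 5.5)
The plan is to reduce the statement to Corollary~\ref{cor:fixed-rank-kernel-section-np-complete}. By definition, \(N_k(\mathcal A)\) counts the \(k\)-dimensional subspaces \(R\le V\) with \(R\cap K_W=0\) for every \(W\in\mathcal V\). So \(N_k(\mathcal A)>0\) holds exactly when a \(k\)-dimensional kernel-avoiding linear subspace exists. Positivity of \(N_k\) is therefore the same predicate as linear section existence at dimension \(k\). The only care needed is that the problem statement here does not impose \(t^*=k\); it asks for dimension-\(k\) avoidance in an arbitrary explicit presentation.

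Membership in \textsc{NP} follows the basis-certificate argument of Theorem~\ref{thm:rank-one-kernel-section-np-hard}. The witness is \(k\) vectors of \(V\), given as coefficient vectors with respect to the generator matrix. Gaussian elimination checks that they are linearly independent. For each admissible view \(W\), one checks that the \(|W|\times k\) matrix of their restricted coordinates has full column rank \(k\), which is equivalent to \(R\cap K_W=0\). All of this costs polynomial time in the \(O(rd+Ld)\) input bits over \(\mathbb F_2\). A yes-instance must have \(k\le r\), so the certificate has polynomial size; for \(k>r\) the answer is trivially no.

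For hardness, I would reuse the padded instances from Corollary~\ref{cor:fixed-rank-kernel-section-np-complete}. Each is built in polynomial time from a \(3\)-\textsc{SAT} formula, has \(t^*=k\), and admits a \(k\)-dimensional kernel-avoiding section iff the formula is satisfiable. Since \(N_k>0\) is precisely the existence of such a section, the same map is a Karp reduction from \(3\)-\textsc{SAT} to positivity of \(N_k\). For \(k=1\) this is directly the anchor-clause construction of Theorem~\ref{thm:rank-one-kernel-section-np-hard}. As a consistency check there, \(N_1\) equals the number of satisfying assignments, since each nonzero vector in the \(z=1\) chart corresponds to exactly one assignment \(a\). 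This parsimony is not needed for the decision claim.

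I expect no substantive obstacle; the delicate point is bookkeeping. One has to make sure the "induced view-kernel arrangement" of the padded instance is exactly \(\{K_W\oplus 0\}\) inside \(E\oplus P\), so that \(N_k\) is computed in the same ambient space used by the padding corollary. One also has to confirm that the padding coordinates keep the presentation explicit and of linear size for fixed \(k\). Both facts are already recorded in the proof of Corollary~\ref{cor:fixed-rank-kernel-section-np-complete}, so I would cite them and conclude.
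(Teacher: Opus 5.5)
Your proposal is correct and follows the paper's own proof. It identifies positivity of $N_k(\mathcal A)$ with existence of a $k$-dimensional kernel-avoiding section, proves membership with the same basis certificate, and inherits hardness directly from the padded instances of Corollary~\ref{cor:fixed-rank-kernel-section-np-complete}. Your side remark that $N_1$ equals the satisfying-assignment count tacitly assumes that every variable occurs in some clause, so that $E\to V$ is injective; as you note, that remark plays no role in the decision statement.
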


\begin{proof}
For the kernel arrangement $\mathcal A=\{K_W:W\in\mathcal V\}$, the finite forbidden-point sieve defining $N_k(\mathcal A)$ counts exactly the $k$-dimensional linear subspaces meeting every $K_W$ only in $0$. Hence $N_k(\mathcal A)>0$ exactly when a $k$-dimensional kernel-avoiding linear section exists. Membership in \textsc{NP} uses the same basis certificate as Corollary~\ref{cor:fixed-rank-kernel-section-np-complete}. NP-hardness is that corollary with the decision predicate rewritten as positivity of $N_k(\mathcal A)$.
\end{proof}

\leanmetapending{\LHrng{KSH}{14}{15}}
\begin{corollary}[Rank-one avoidance counting is parsimoniously \(\#\textsc{P}\)-hard]\label{cor:rank-one-avoidance-counting-hard}
Under the rank-one binary reduction above, accepted linear kernel-section certificates are in bijection with satisfying assignments of the source \(3\)-\textsc{SAT} instance. Consequently, computing the number of rank-one avoiding lines, equivalently \(N_1(\mathcal A)\), is \(\#\textsc{P}\)-hard under parsimonious reductions.
\end{corollary}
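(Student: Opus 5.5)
The plan is to reuse the rank-one binary reduction from Theorem~\ref{thm:rank-one-kernel-section-np-hard} unchanged and show that it is parsimonious: the map sending a satisfying assignment $a$ of $\varphi$ to the image of $(1,a)$ in $V$ should be a bijection onto the set of kernel-avoiding lines. Since $\#3$-\textsc{SAT} is $\#\textsc{P}$-complete under parsimonious reductions, and the reduction is polynomial, this gives $\#\textsc{P}$-hardness of computing $N_1(\mathcal A)$. The identification of the number of accepted line certificates with $N_1(\mathcal A)$ is then immediate from the definition of $N_1$, or equivalently from the $t=1$ specialization of the forbidden-point sieve.

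The steps run in this order. First, over $\mathbb F_2$ every line $\langle v\rangle$ has a unique nonzero vector, so counting avoiding lines is the same as counting nonzero vectors $v\in V$ outside $\bigcup_W K_W$. Second, the anchor view forces $z(v)=1$ for every such $v$. Third, the forward direction of the earlier proof shows that a satisfying $a$ produces an avoiding vector $\phi(1,a)$, where $\phi:E\to V$ is the defining linear map. The converse direction of the same proof shows that every avoiding vector with a preimage $(1,a)$ has $a$ satisfying. So $a\mapsto\langle\phi(1,a)\rangle$ maps onto the set of avoiding lines.

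The only genuinely new point, and the main obstacle, is injectivity, which amounts to well-definedness of the inverse. The forward proof only needs to choose \emph{some} preimage, but a bijection requires $\phi$ to be injective on the chart $z=1$. If some variable $x_i$ occurs in no literal, then $\ker\phi\ne0$, and distinct assignments differing in $x_i$ collapse to the same vector. I would handle this by normalizing the source: either assume without loss of generality that every variable occurs in some clause (padding costs a known power of $2$, which is not parsimonious), or better, add one extra coordinate realizing the form $x_i$ for every variable. Such an added coordinate is not placed in any view, so the view kernels and hence the accept condition are unchanged.

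With these coordinates $\phi$ is injective on $E$, because the coordinate forms $z,x_1,\dots,x_n$ all appear. Hence distinct assignments give distinct vectors, and therefore distinct lines over $\mathbb F_2$. This also preserves the NP-hardness argument verbatim, since $t^*$ is still $1$ via the anchor view. The bijection then yields $N_1(\mathcal A)=\#\{a: a\models\varphi\}$, which completes the parsimonious reduction.
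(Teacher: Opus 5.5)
Your argument is the paper's: the map $a\mapsto(1,a)$, the anchor view forcing $z=1$, and uniqueness of the nonzero vector on an $\mathbb F_2$-line give the parsimonious bijection. Your injectivity step is a genuine addition rather than a detour. The paper's proof passes directly from accepted homogeneous certificates in $E$ to avoiding lines in $V=\operatorname{im}\phi$, so it undercounts by a factor of $2^u$ when $u$ variables occur in no clause. Your unviewed coordinates repair this. One small correction: making every variable occur can be done parsimoniously, by adding the tautological clause $(x_i\vee\neg x_i\vee x_i)$ for each unused $x_i$, and this keeps the paper's reduction literally unchanged.
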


\begin{proof}
The map sends a satisfying assignment \(a\) to the accepted homogeneous certificate \((z,x)=(1,a)\). Conversely, any accepted certificate has anchor coordinate \(z=1\), so its remaining coordinates read back a satisfying assignment. Over \(\mathbb F_2\), a rank-one subspace has a unique nonzero vector, so accepted homogeneous certificates and avoiding lines are counted by the same number.
\end{proof}

\leanmetapending{\LHrng{ABD}{21}{91}, \LHrng{SPB}{1}{26}, \LHrng{SPB}{160}{205}, \LHrng{SPB}{379}{383}, \LHrng{GCB}{3}{42}, \LHrng{GCR}{1}{230}, \LHrng{SCR}{67}{87}}
The signed-profile compression theorem identifies the finite invariant read by \(N_k\): not the kernel-intersection lattice, but the signed rank profile of the projectivized forbidden-point matroid. Let \(P\) be the set of projective points lying in the forbidden union, and let \(\rho(B)\) be the vector-space dimension of the span of representatives of \(B\subseteq P\). Inclusion--exclusion gives
\[
N_t
=\sum_{B\subseteq P}(-1)^{|B|}
\binom{r-\rho(B)}{t-\rho(B)}_q .
\]
The full subset-rank function \(\rho\) determines the signed profile, and the signed profile computes \(N_t\) by Theorem~\ref{thm:signed-profile-compression}. Equal represented forbidden-point matroids therefore give equal rank-sieve values, but the full matroid is more information than a fixed \(t\)-level rank-sieve evaluation needs. The fixed-lattice discriminators below separate the kernel-intersection lattice from this signed-profile/refined-matroid data: the kernel-intersection lattice can identify arrangements whose signed profiles, projective matroids, and \(N_t\) values differ.

The matroid-versus-lattice distinction gives the boundary of the hardness theory. The fixed-rank NP-completeness theorem above is a decision theorem for section existence. Fixed-lattice hardness requires a stronger input family: the kernel-intersection lattice stays fixed while the represented forbidden-point matroid varies inside the fiber. Common-core lifts supply the transport mechanism.

\leanmetapending{\LHrng{ABD}{28}{35}, \LH{ABD42}, \LHrng{ABD}{48}{56}}
\begin{lemma}[Common-core lift]\label{lem:common-core-lift}
Let \(C\) be a common core and let \(K_i=C\times L_i\) be product lifts of quotient kernels \(L_i\). If \(0\subseteq L_i\) for every \(i\), and distinct quotient kernels meet exactly in \(0\), then every intersection of at least two lifted kernels is \(C\times 0\). Hence two quotient families satisfying these hypotheses have the same lifted multiway kernel-intersection shell. Moreover, for a graph section
\[
R_f=\{(f(u),u):u\in U\}
\]
with \(f(U)\subseteq C\), the lifted avoidance condition
\[
R_f\cap (C\times L_i)=R_f\cap(C\times0)\quad\text{for all }i
\]
is equivalent to the quotient avoidance condition \(U\cap L_i=0\) for all \(i\). Consequently the common-core lift preserves quotient-avoidance positivity, and in the finite counted form the lifted witness count is obtained by summing the allowed graph-section maps over the quotient-avoiding candidates.
\end{lemma}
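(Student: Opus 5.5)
The plan is to treat the three claims in order: the intersection shell, the graph-section equivalence, and the count transport. All three reduce to componentwise computations in the product \(C\times U\), where \(U\) is the quotient ambient space containing every \(L_i\).

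\textbf{Intersection shell.} Intersections of product subspaces are computed coordinatewise:
\[
\bigcap_{i\in I}(C\times L_i)=C\times\bigcap_{i\in I}L_i .
\]
Take any index set \(I\) with at least two distinct indices. The hypothesis that distinct quotient kernels meet exactly in \(0\) gives \(\bigcap_{i\in I}L_i=0\), so the intersection is \(C\times0\). The hypothesis \(0\subseteq L_i\) guarantees that \(C\times 0\) lies in every lifted kernel. Hence \(C\times0\) is the common value, not merely an upper bound.

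If two quotient families have the same index set and both satisfy the hypotheses, every multiway intersection of size at least two equals \(C\times 0\) in both lifts. The only thing that can differ is the singleton members \(C\times L_i\). I would state the shell as the intersection profile on index sets of size at least two, together with the dimensions of the singletons if the paper's shell records them. The fixed-shell claim then follows directly.

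\textbf{Graph-section equivalence.} Suppose \((f(u),u)\in R_f\cap(C\times L_i)\). This membership holds exactly when \(u\in L_i\), because \(f(u)\in C\) automatically. Likewise \((f(u),u)\in C\times 0\) holds exactly when \(u=0\). Therefore the equality
\[
R_f\cap(C\times L_i)=R_f\cap(C\times0)
\]
is equivalent to the statement that every \(u\in U\cap L_i\) equals \(0\), that is, \(U\cap L_i=0\). This argument uses only that \(f\) is a function into \(C\). Since \(u\mapsto(f(u),u)\) is injective, membership can be read off from the second coordinate.

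If one wants \(R_f\) to be a linear subspace, take \(f\) linear. The equivalence itself does not need linearity.

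\textbf{Positivity and counting.} The lifted condition holds for some admissible pair \((U,f)\) exactly when some quotient candidate \(U\) avoids every \(L_i\). This needs at least one admissible \(f\) for each \(U\), and the zero map always works. So lifted positivity equals quotient positivity.

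For the count, note that distinct pairs \((U,f)\) give distinct graph sections. The second projection of \(R_f\) recovers \(U\), and then \(f\) is recovered pointwise. The lifted witnesses are therefore in bijection with the pairs \((U,f)\) such that \(U\) is quotient-avoiding and \(f\) is an allowed map. Grouping by \(U\) gives the lifted count as the sum, over quotient-avoiding \(U\), of the number of allowed \(f\).

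\textbf{Main obstacle.} The delicate point is the shell claim. "Distinct quotient kernels" must be read as distinct indices, and the lift must be checked to create no extra coincidences. Under that reading the remaining steps are routine. Everything else is bookkeeping about injectivity of the graph map.
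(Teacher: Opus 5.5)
Your proposal is correct and follows the paper's proof. The lifted intersections are computed coordinatewise and pinned to \(C\times 0\) by the pairwise hypothesis together with \(0\subseteq L_i\); the graph-section equivalence is the identity \(R_f\cap(C\times L_i)=\{(f(u),u):u\in U\cap L_i\}\); and the count comes from grouping witnesses by their quotient candidate. You also make explicit two points the paper leaves implicit (positivity transport needs at least one allowed map per candidate, and the graph map is injective), but drop the hedge about recording singleton dimensions in the shell, since the hypotheses do not fix \(\dim L_i\) and the paper's shell consists only of intersections of at least two kernels.
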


\begin{proof}
The fixed-shell part is the common-core intersection calculation: once every pair of lifted kernels meets in \(C\times0\), any intersection containing at least two lifted kernels is already forced to be \(C\times0\). The graph-section part is the identity
\[
R_f\cap(C\times L_i)=\{(f(u),u):u\in U\cap L_i\},
\]
valid because \(f(U)\subseteq C\). Therefore equality with \(R_f\cap(C\times0)\) is exactly equality \(U\cap L_i=0\). Applying the equivalence across the finite view family gives positivity preservation; counting witnesses partitions them by the quotient candidate and then counts the admissible graph-section maps over that candidate.
\end{proof}

Arrangements of the form \(K_i=C\oplus\langle p_i\rangle\) therefore freeze the kernel-intersection lattice while leaving the quotient projective configuration \(\{p_i\}\) to carry higher dependencies.

The local four-line example realizes the split. The collinear and general-position quotient configurations have \(N_2=0\) and \(N_2=3\), respectively, while the common-core lift gives the same lifted kernel-intersection profile. The checker \texttt{quotient\_lift\_rank\_sieve\_bridge} verifies the lift multiplier \(3^2\) and the lifted cone rank-sieve values \(0\) and \(27\). Fixed lattice fibers that realize rich projective-matroid variation can carry hardness through that variation. Fibers that collapse to polynomially structured projective matroids belong on the dichotomy side.

The natural quotient-side hard source is the Crapo--Rota critical problem for representable matroids: it asks for a projective subspace of prescribed codimension disjoint from a represented point set, exactly the quotient-side form of the \(N_t>0\) question~\cite{crapo1970combinatorial,kung1996critical}. Graphic matroids make the source concrete. If \(M(G)\) is represented over \(\mathbb F_q\) by oriented edge-incidence columns and \(r=\operatorname{rank}M(G)\), then an avoiding \((r-1)\)-subspace is equivalent to a proper \(q\)-coloring of \(G\) modulo componentwise additive shifts and scalar rescaling; for a graph with \(c\) connected components and at least one edge,
\[
N_{r-1}(M(G))=\frac{P_G(q)}{q^c(q-1)} ,
\]
where \(P_G(q)\) is the chromatic polynomial value~\cite{tutte1954contribution}. The quotient-side positivity problem therefore contains ordinary \(q\)-colorability, including the NP-complete \(q=3\) case~\cite{garey1976simplified}. The checker \texttt{graphic\_matroid\_critical\_bridge} verifies the identity by direct coloring counts and by the rank-sieve formula on small graphs.

\leanmetapending{\LHrng{GCB}{3}{42}, \LHrng{GCR}{121}{125}}
\begin{corollary}[Graphic-matroid rank-sieve counting hardness]\label{cor:graphic-matroid-rank-sieve-counting-hard}
For every fixed integer \(q\ge3\), computing \(N_{r-1}(M(G))\) from an oriented incidence representation of a loopless graphic matroid over \(\mathbb F_q\) is \(\#\textsc{P}\)-hard.
\end{corollary}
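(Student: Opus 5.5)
The approach is a polynomial-time Turing reduction from counting proper \(q\)-colorings, which is \(\#\textsc{P}\)-hard for every fixed \(q\ge3\) by the Jaeger--Vertigan--Welsh classification of Tutte polynomial evaluations. The chromatic polynomial at integer \(q\ge3\) lies off the finitely many easy points and special curves. The bridge is the identity stated just before the corollary,
\[
N_{r-1}(M(G))=\frac{P_G(q)}{q^c(q-1)},
\]
valid for a graph with \(c\) components and at least one edge. Given a graph \(G\), the reduction handles the trivial cases directly. If \(G\) has a loop, then \(P_G(q)=0\). If \(G\) has no edges, then \(P_G(q)=q^{|V|}\). Parallel edges can be deleted, since they do not change \(P_G\) or the simple matroid. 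Otherwise the reduction builds the oriented incidence matrix of \(G\) over \(\mathbb F_q\) in polynomial time, computes \(c\) by breadth-first search, queries the oracle for \(N_{r-1}\), and multiplies by \(q^c(q-1)\). This is a single oracle call followed by polynomial-time arithmetic.

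The identity itself has to be justified, since the text only asserts it and points to a checker. The plan works in \(V=\mathbb F_q^{\mathrm{rank}}\), identified with the dual of the column space, so that points of \(M(G)\) are projective points. A hyperplane \(H\) of \(V\) is the kernel of a nonzero functional \(\lambda\) on the column span, determined up to a scalar. The columns are \(e_u-e_v\) on the span of the incidence vectors, which is the cut space. Each functional lifts to a vertex potential \(\phi:V(G)\to\mathbb F_q\), and the lift is unique modulo functions that are constant on each component. The hyperplane avoids every edge point exactly when \(\phi(u)\ne\phi(v)\) on every edge, that is, when \(\phi\) is a proper coloring. The count then follows in three steps.

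\begin{itemize}
\item Potentials \(\phi\) correspond to functionals in a \(q^c\)-to-one way, because of the componentwise shifts.
\item A nonzero functional comes from a nonconstant potential; every proper coloring is nonconstant once \(G\) has an edge.
\item Hyperplanes correspond to functionals in a \((q-1)\)-to-one way, because of the scalars.
\end{itemize}

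These give \(N_{r-1}=P_G(q)/(q^c(q-1))\). Codimension-one subspaces in rank \(r\) are exactly the \((r-1)\)-subspaces counted by \(N_{r-1}\). Here \(r=|V(G)|-c\) and the forbidden set \(P\) is the set of edge points, as in Theorem~\ref{thm:signed-profile-compression}. Loopless input guarantees that no column is zero, which matches the hypothesis in the corollary.

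The main obstacle is making this identification precise over a finite field. The matroid lives on the column span, while the Grassmannian is taken inside an ambient \(r\)-space, so one must check that "avoiding every kernel point" is the same as "no column lies in \(H\)". The step I would do first is to fix coordinates by choosing a spanning forest basis. Then \(V\cong\mathbb F_q^{r}\), functionals are potentials normalized to \(0\) at one root per component, and the \(q^c\) factor appears explicitly. The hardness source also has to be stated carefully: \(\#\textsc{P}\)-hardness of \(P_G(q)\) for fixed \(q\ge3\), and by a parsimonious-style reduction for \(q=3\) from \(\#3\)-coloring.
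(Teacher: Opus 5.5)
Your proposal is correct and follows the paper's proof. One oracle call for \(N_{r-1}(M(G))\), multiplied by the polynomial-time computable factor \(q^c(q-1)\), recovers \(P_G(q)\), and evaluating \(P_G(q)\) is \(\#\textsc{P}\)-hard for fixed \(q\ge3\) by Jaeger--Vertigan--Welsh. The one addition is that you also derive the identity \(N_{r-1}(M(G))=P_G(q)/(q^c(q-1))\) through vertex potentials modulo locally constant functions and scalars, which the paper only asserts before the corollary, and that derivation is sound (a small terminology slip: the column span of the incidence matrix is not the cut space, which is rather its dual).
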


\begin{proof}
For a graph with \(c\) connected components and at least one edge, the identity above gives
\[
P_G(q)=q^c(q-1)N_{r-1}(M(G)).
\]
The number \(c\) is computable in polynomial time from the graph. Thus an oracle for \(N_{r-1}(M(G))\) computes the fixed-\(q\) chromatic-polynomial value \(P_G(q)\). For every fixed \(q\ge3\), computing \(P_G(q)\), equivalently the number of proper \(q\)-colorings, is \(\#\textsc{P}\)-hard~\cite{jaeger1990complexity}. Hence the graphic-matroid rank-sieve count is \(\#\textsc{P}\)-hard.
\end{proof}

The critical problem supplies quotient-side hardness. The common-core lift is the transport step: it preserves quotient positivity and counts while fixing the complete multiway kernel-intersection profile across all same-shape sources. Size-indexed versions keep the semantic shell fixed inside each source-shape fiber and preserve satisfying-witness counts for clause-list sources.

\leanmetapending{\LHrng{SCB}{979}{1013}}
\begin{theorem}[Size-indexed fixed-lattice SAT and counting spine]\label{thm:fixed-lattice-sat-spine}
Fix a nonempty common core. There is a clause-list-to-arrangement construction with the following properties. For each fixed number of Boolean variables, the fixed-active common-core lift assigns every clause list on those variables to the same complete multiway kernel-intersection profile. Inside that fixed profile, lifted positivity is equivalent to satisfiability of the clause list, and the lifted witness count is exactly the number of satisfying assignments. The paired target consisting of positivity together with the witness count is transported parsimoniously. No invariant obtained by decoding that fixed multiway profile determines the positivity target, the count target, or the paired target.
\end{theorem}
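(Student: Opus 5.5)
The plan is to build the construction as a common-core lift (Lemma~\ref{lem:common-core-lift}) of the rank-one binary SAT reduction from Theorem~\ref{thm:rank-one-kernel-section-np-hard}, arranged so that the kernel-intersection shell depends only on the number \(n\) of Boolean variables.

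\textbf{Step 1: quotient arrangement.} For a clause list on \(x_1,\dots,x_n\), work in the homogeneous space \(U=\mathbb F_2^{\{z\}\cup[n]}\). Each clause \(C\) and the anchor give a quotient kernel \(L_C\le U\), namely the common zero set of the literal forms of \(C\), and \(L_z=\{z=0\}\). By Corollary~\ref{cor:rank-one-avoidance-counting-hard}, avoiding lines of \(\{L_z\}\cup\{L_C\}\) are in bijection with satisfying assignments.

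The obstacle is that distinct \(L_C\) need not meet exactly in \(0\), which Lemma~\ref{lem:common-core-lift} requires. I would first try replacing the literal-form kernels by the forbidden projective points themselves. Over \(\mathbb F_2\) with \(z=1\), a line \(\langle(1,a)\rangle\) avoids \(L_C\) exactly when \((1,a)\notin L_C\). So the avoidance condition says that the single point \((1,a)\) misses the finite set \(F\) of chart points falsifying some clause, together with the points of \(\{z=0\}\).

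\textbf{Step 2: freezing the shell.} To fix the shell independently of the clause list, I would make the active quotient kernels a fixed-size family of lines \(L_p=\langle p\rangle\), one for each forbidden point \(p\). The active set is padded by a fixed set of inert points (the ``fixed-active'' convention) so that its cardinality depends only on \(n\). Distinct lines over \(\mathbb F_2\) meet exactly in \(0\), so the lemma gives \(K_p=C\times\langle p\rangle\) with every intersection of two or more lifted kernels equal to \(C\times0\). The full multiway profile is then determined by the number of kernels and the common dimension \(\dim C+1\), both functions of \(n\) alone. I expect this padding, together with the check that inert points never create or destroy witnesses, to be the main obstacle. My first attempt is to realize the inert kernels inside \(\{z=0\}\), which is forbidden for every clause list anyway.

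\textbf{Step 3: transport and non-determination.} Lemma~\ref{lem:common-core-lift} says lifted graph-section avoidance is equivalent to quotient avoidance. It also says the lifted witness count is the quotient count times the number \(|C|^{\dim U'}\) of allowed graph maps, which is a constant depending only on \(n\). Dividing by this constant gives the parsimonious transport of the count. Positivity then follows as the predicate count \(>0\), so the paired target is transported as well.

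For the final clause, argue as in the fixed-fiber alternative of Theorem~\ref{thm:signed-profile-compression}. Any invariant decoded from the shell is constant on the fiber for fixed \(n\). The fiber contains a satisfiable clause list and an unsatisfiable one, and also lists with different satisfying counts (for example, the empty list versus \((x_1)\wedge(\neg x_1)\) once \(n\ge1\), after padding). Therefore no such invariant can determine positivity, the count, or the pair.
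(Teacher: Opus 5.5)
Your proposal has a genuine gap in Step~2, and it is structural: no choice of padding points can fix it.

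\textbf{Why point kernels cannot give a fixed shell.} Take any fixed quotient space $W$ over $\mathbb F_2$ whose listed kernels are pairwise distinct lines. The avoiding lines are then exactly the remaining projective points, so there are $(|W|-1)-M$ of them, where $M$ is the number of kernels. A kernel count depending only on $n$ therefore forces the satisfying count to be constant.
\begin{itemize}
\item There are no inert points to pad with. Every point of your $U$ is either a satisfying chart point, and listing it kills a witness, or it is already forbidden.
\item A line inside $\{z=0\}$ meets an already listed kernel in itself. This holds whether you list $\{z=0\}$ as the hyperplane $L_z$ or as its $2^n-1$ points. Some pairwise lifted intersection then becomes $C\times\langle p\rangle\neq C\times 0$, which violates the hypothesis of Lemma~\ref{lem:common-core-lift}.
\item The shell you describe (the number of kernels plus the common dimension $\dim C+1$) determines the lifted count $|C|\,(|W|-1-M)$. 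So your Step~3 non-determination claim would actually be false for that shell.
\end{itemize}
This is the point-kernel degeneracy the paper itself flags: when every forbidden kernel is a distinct one-dimensional subspace, the intersection data is fixed by the number of listed kernels.

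\textbf{The only repair, and why it still falls short.} Within your framework you could pad with zero quotient kernels (slots $C\times 0$) and use a shell that records only intersections of at least two kernels. Even then you need one slot per projective point, i.e.\ $2^{n+1}-1$ slots. Deciding which slots are active means evaluating the clause list on every assignment. The arrangement is exponentially large and simply encodes the precomputed answer, so it is not a parsimonious reduction in any meaningful sense.

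\textbf{How the paper avoids this.} The paper's spine keeps the clause constraints local. The fixed-active lift gates the active kernels so that all clause lists with the same number of variables share one multiway profile. The witness relation is conjunctive: a single shared lifted witness meets every clause's local obligation exactly when the underlying assignment satisfies every clause. Positivity is then SAT, and the lifted witness count equals the satisfying-assignment count exactly, as the statement requires, rather than up to the factor $|C|^{t}$ you divide out.

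Your Step~3 argument for non-determination (same-shape satisfiable versus contradictory lists, plus postcomposition with any decoding) matches the paper's once a genuinely fixed shell is available.
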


\begin{proof}
The construction uses the SAT-list source with shape equal to the number of variables. The fixed-active common-core lift introduces a common core and gates the active kernels so that all same-shape clause lists have the same multiway intersection profile. The physical witness relation is conjunctive: one lifted witness satisfies all local clause obligations exactly when the original assignment satisfies every clause. Consequently the lifted positivity target is SAT, and the lifted witness count is the satisfying-assignment count.

The same-shape reductions are parsimonious for the count target and preserve the Boolean positivity target. The paired reduction records both targets together. Since there are same-shape satisfiable and unsatisfiable clause lists with the same fixed profile, and also same-shape lists with different satisfying-assignment counts, equality of the fixed profile cannot force equality of any of the three targets. Postcomposition with an arbitrary decoding of the multiway profile only discards information, so every decoded lattice/profile invariant has the same non-determination property.
\end{proof}

Finite graph-plane discriminators locate the obstruction. In $\mathbb F_2^4=\mathbb F_2^2\oplus\mathbb F_2^2$, candidate rank-two sections are graph planes $R_X=\{(u,Xu):u\in\mathbb F_2^2\}$. There are codimension-two kernel families with identical labeled intersection profile but different avoiding-graph-plane counts. There are also two-kernel pairs with identical meet and join profiles whose avoiding-graph-plane counts are \(8\) and \(0\). The checker \texttt{determinant\_slot\_search} verifies these pairs and searches nearby small graph-plane slots. Consequently, global SAT composition needs a conjunctive fixed-shell scaffold: a source instance of fixed shape must map into one semantic shell, and one shared witness \(X\) must satisfy all encoded local constraints exactly when the original clauses are simultaneously satisfiable.

The point-kernel specialization marks the limit of the fixed-lattice route. If every forbidden kernel is a distinct one-dimensional subspace, then the intersection lattice is fixed by the number of listed kernels. For $t=2$, the decision problem is whether a projective line avoids a listed point set $P$. The point-line case has a direct polynomial-time procedure from incidence counting. In $\operatorname{PG}(r-1,q)$, the number of projective lines is $\binom{r}{2}_q$, and each point lies on $\binom{r-1}{1}_q$ lines. If
\[
|P|\binom{r-1}{1}_q<\binom{r}{2}_q,
\]
then some line avoids $P$. Otherwise $|P|\ge(q^r-1)/(q^2-1)>q^{r-2}$, so for $r\ge3$ one has $q^r\le |P|^3$; direct enumeration of projective lines is polynomial in the explicit point-list size. The checker \texttt{fixed\_lattice\_line\_avoidance} records this diagnostic and rechecks the $\mathbb F_3^3$ four-point toggle. The checker \texttt{point\_kernel\_duality} verifies the equivalent dual form: a \(t\)-subspace avoiding listed projective points corresponds to an \((r-t)\)-dimensional dual rowspace whose forms do not all vanish on any listed point. A fixed-lattice hardness theorem therefore needs higher-dimensional kernels or another fixed-lattice representation in which the assignment constraints do not collapse to point-line blocking.

\paragraph{Parameterized enumeration}
The same decision problem is fixed-parameter tractable in the ambient affine rank and field bit-size. In an $r$-dimensional direction space over $\mathbb F_q$, the number of $t^*$-dimensional subspaces is the Gaussian binomial coefficient $\binom{r}{t^*}_q$, bounded above by $q^{r^2}$. Enumerating those subspaces and checking the $L$ restricted projection ranks decides linear section existence in
\[
q^{O(r^2)}\operatorname{poly}(L,r,d,\log q)
=2^{O(r^2\log q)}\operatorname{poly}(L,r,d,\log q)
\]
bit time in the dense finite-field model. The problem is FPT for parameter $(r,\log q)$, and for parameter $r$ over a fixed finite field. The NP-completeness results above locate the hard regime in growing presentation dimension and view-kernel arrangement.

The complementary parameterized boundary is sharp. The explicit-input linear problem is para-\textsc{NP}-complete when parameterized by \(t^*\), by \(q\), or by the pair \((q,t^*)\): hardness already occurs at \(q=2\) and \(t^*=1\), while membership in \textsc{NP} is the basis-certificate verification used above. The tractable parameter is the ambient affine rank \(r\), not the target rank or the field size alone.

\begin{example}[Small-field boundary]
The binary square lies exactly on the field-size boundary. Here $q=2$, $L=2$, $t^*=1$, and the two view kernels are the coordinate axes in $\mathbb F_2^2$. Theorem~\ref{thm:field-size-exactness} gives a kernel-avoiding line; concretely, the diagonal line $R=\operatorname{span}\{(1,1)\}$ avoids both kernels, so Theorem~\ref{thm:kernel-section-exactness} recovers $\Theta(C_4)=\log 2$.

The binary four-coordinate threshold example in Subsection~\ref{sec:theta-theory} is a genuine small-field loose case. It has $q=2$ and $L=6$ views, so it is beyond the $L\le q$ guarantee. With all two-coordinate views on $\mathbb F_2^4$, one has $t^*=2$, and a linear kernel-avoiding section would be a binary $[4,2,3]$ linear code: every nonzero codeword would have Hamming weight at least $3$. Such a code does not exist. Indeed, a two-dimensional binary subspace has three nonzero vectors; if one has weight $4$ and another has weight at least $3$, their sum has weight at most $1$, and if two distinct nonzero vectors both have weight $3$, their sum has weight $2$. Thus the rank bound $\log4$ is not tight in that instance.
\end{example}

\leanmetapending{\LH{AKS19}}
\begin{proposition}[Hamming-endpoint family]\label{prop:hamming-endpoint-family}
Let $A=\mathbb F_q^d$ and let the admissible views be the $d$ singleton coordinates. The induced graph has two words adjacent exactly when they agree in at least one coordinate, equivalently when their Hamming distance is at most $d-1$. The rank minimum is $t^*=1$. The repetition-code diagonal
\[
R=\{(a,a,\ldots,a):a\in\mathbb F_q\}
\]
avoids every singleton-coordinate kernel, since a nonzero diagonal vector has no zero coordinate. Theorem~\ref{thm:kernel-section-exactness} gives
\[
\Theta(G)=\log q.
\]
\end{proposition}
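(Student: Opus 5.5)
The plan is to verify the four claims of Proposition~\ref{prop:hamming-endpoint-family} in order and then apply Theorem~\ref{thm:kernel-section-exactness}. Here the affine family is $A=\mathbb F_q^d$, so $a_0=0$ and $V=\mathbb F_q^d$ with $r=d$.

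\textbf{Adjacency.} I will read it directly from Definition~\ref{def:confusability-graph}. Two distinct words $x,y$ are adjacent iff some singleton $\{i\}$ is contained in $\operatorname{Agr}(x,y)$, that is, iff they agree in at least one coordinate. Since the Hamming distance is $d-|\operatorname{Agr}(x,y)|$, this is the same as Hamming distance at most $d-1$.

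\textbf{Rank minimum.} For each singleton view $\{i\}$, the restricted projection $\pi_{\{i\}}|_V$ is the $i$th coordinate functional on $\mathbb F_q^d$. This functional is nonzero, so $t(\{i\})=1$ for every $i$, and therefore $t^*=1$. The kernels are the coordinate hyperplanes $K_{\{i\}}=\{v:v_i=0\}$.

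\textbf{Avoidance.} The diagonal $R=\operatorname{span}\{(1,\dots,1)\}$ has dimension $1=t^*$. Any nonzero element is $(a,\dots,a)$ with $a\ne0$, which has full support and so lies in no $K_{\{i\}}$. Hence $R\cap K_{\{i\}}=\{0\}$ for all $i$. Equivalently, in the language of Proposition~\ref{prop:support-hitting-sections}, every nonzero diagonal word hits every singleton view.

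\textbf{Conclusion.} Theorem~\ref{thm:kernel-section-exactness} now gives $\alpha(G)=q$ and $\Theta(G)=\log q$.

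None of these steps is a real obstacle. The only point needing care is the degenerate convention when $d\ge1$ but $q$ is small: the argument is uniform in $q\ge2$, and for $d=1$ the graph is a clique, which still agrees with $\Theta=\log q$ via the single-view Proposition~\ref{prop:view-fiber-clique-sizes}.
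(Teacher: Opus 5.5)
Your proposal is correct and follows the paper's proof essentially step for step: singleton views make adjacency equal to one-coordinate agreement (equivalently Hamming distance at most $d-1$), each coordinate functional is nonzero so $t^*=1$, the full-support diagonal meets every coordinate hyperplane only in $0$, and Theorem~\ref{thm:kernel-section-exactness} finishes. One slip in your closing aside, which the main argument never uses: for $d=1$ the graph is edgeless, not a clique, because distinct elements of $\mathbb F_q$ never agree in the single revealed coordinate (Proposition~\ref{prop:view-fiber-clique-sizes} gives $q$ cliques of size $q^{0}=1$), so $\Theta=\log q$ holds trivially there.
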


\begin{proof}
For singleton views, two states are confusable exactly when one coordinate agrees. That condition is the Hamming-distance condition stated above. Each singleton projection has rank $1$, so $t^*=1$. The diagonal $R$ is one-dimensional. If a diagonal vector lies in the kernel of the $i$th singleton projection, then its $i$th coordinate is zero, and hence the scalar defining the diagonal vector is zero. Thus $R$ meets every singleton kernel only at $0$, and Theorem~\ref{thm:kernel-section-exactness} gives the displayed capacity.
\end{proof}

All singleton views give the no-gap endpoint of the Hamming-threshold examples. The repetition-code section works for every $d$ and $q$, including regimes with $d=L>q$ where the field-size theorem does not apply.

\leanmetapending{\LHrng{MFT}{102}{106}}
\begin{proposition}[All-\(k\)-view Hamming-threshold specialization]\label{prop:all-k-view-hamming-threshold}
Let the full state space be $\mathbb F_q^d$, fix $0\le k\le d$, and let $\mathcal V_k$ consist of all $k$-coordinate views. Then two distinct words are adjacent exactly when they agree in at least $k$ coordinates, equivalently when their Hamming distance is at most $d-k$. Hence independent sets in $G_{\mathcal V_k}$ are precisely $q$-ary codes of length $d$ and minimum distance at least $d-k+1$.

Every such independent set has size at most $q^k$, and therefore $\alpha(G_{\mathcal V_k})\le q^k$. The one-shot bound is tight exactly when there is a code of size $q^k$ with this distance condition.
\end{proposition}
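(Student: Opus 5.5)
The proof unwinds the adjacency rule and then applies a Singleton-type pigeonhole bound; no capacity machinery is needed. First I would apply Definition~\ref{def:confusability-graph} with $\mathcal V=\mathcal V_k$. Two distinct words $x,y$ are adjacent iff some $k$-subset $W$ satisfies $W\subseteq\operatorname{Agr}(x,y)$. Such a $W$ exists iff $|\operatorname{Agr}(x,y)|\ge k$. Since $|\operatorname{Agr}(x,y)|=d-d_H(x,y)$, this is the condition $d_H(x,y)\le d-k$. Negating it, a set $I\subseteq\mathbb F_q^d$ is independent iff every pair of distinct elements has $d_H>d-k$, i.e.\ $d_H\ge d-k+1$. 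So independent sets are exactly the codes of minimum distance at least $d-k+1$. A one-element set qualifies vacuously.

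For the size bound I would use the pigeonhole form of the Singleton argument. Fix any $k$-subset $W$, say the first $k$ coordinates. If $|I|>q^k$, two distinct words of $I$ share the same projection $\pi_W$. They then agree on $W$ and are adjacent, a contradiction. Hence $|I|\le q^k$ and $\alpha(G_{\mathcal V_k})\le q^k$.

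Alternatively, one can cite Theorem~\ref{thm:matroid-capacity-bounds} with $V=\mathbb F_q^d$, where $t(W)=|W|=k$ gives the same bound. The direct pigeonhole argument is cleaner. It also covers the boundary cases $k=0$, where the single view $\varnothing$ makes the graph complete and $\alpha=1=q^0$, and $k=d$, where the graph is edgeless and $\alpha=q^d$.

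The tightness clause follows from the two facts above. Since $\alpha\le q^k$ always holds, $\alpha=q^k$ iff some independent set has size $q^k$. By the first step, that is exactly a code of size $q^k$ with distance at least $d-k+1$, which is a Singleton-meeting (MDS) code. I expect no step to be a real obstacle. The only care needed is the edge cases $k=0$ and $k=d$, and keeping the adjacency condition for distinct words only.
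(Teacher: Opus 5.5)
Your proposal is correct and follows essentially the same route as the paper: you unwind adjacency to $|\operatorname{Agr}(x,y)|\ge k$, i.e.\ $d_H(x,y)\le d-k$, and negate it to get the code condition. You then bound $|I|$ by injectivity, equivalently pigeonhole, of the projection onto any fixed $k$ coordinates, and read off tightness because the maximum independent-set size is attained.
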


\begin{proof}
An admissible $k$-view fails to separate two words $x,y$ exactly when the view is contained in $\operatorname{Agr}(x,y)$. Such a view exists exactly when $|\operatorname{Agr}(x,y)|\ge k$, which is equivalent to $d_H(x,y)\le d-k$. Taking the negation for every distinct pair in a vertex set gives the stated code condition for independence.

For the size bound, fix any $k$ coordinates. The restriction map from an independent code to those coordinates is injective: two codewords with the same restriction would agree in at least $k$ coordinates, hence would be adjacent. The image has at most $q^k$ words.

The final equivalence follows because the maximum independent-set size is attained: a size-$q^k$ independent code gives $\alpha(G_{\mathcal V_k})=q^k$ by the upper bound, and equality of $\alpha$ supplies such a code.
\end{proof}

On the full cube, every $k$-coordinate projection has rank $k$, so Theorem~\ref{thm:matroid-capacity-bounds} gives
\[
\Theta(G_{\mathcal V_k})\le k\log q.
\]
In graph-capacity language, the Singleton bound reads: a $q$-ary code of length $d$ and distance $d-k+1$ has $|C|\le q^{d-(d-k+1)+1}=q^k$~\cite{macwilliams1977theory}. In standard coding terminology, the one-shot equality case is an MDS code of size $q^k$ with distance $d-k+1$; repetition then gives $\Theta(G_{\mathcal V_k})=k\log q$. The coordinate-view statement recovers this code-distance criterion and uses it to certify graph capacity from the presentation.

\paragraph{Reed--Solomon exact all-\(k\) family}
\leanmetapending{\LH{AKS33}}
The all-$k$ specialization contains a classical infinite exact family~\cite{reed1960polynomial}. Let $1\le k\le d\le q$, and let $\mathcal V_k$ be the family of all $k$-coordinate views on the full state space $\mathbb F_q^d$. Then
\[
\alpha(G_{\mathcal V_k})=q^k,
\qquad
\Theta(G_{\mathcal V_k})=k\log q.
\]

Choose distinct evaluation points $a_1,\dots,a_d\in\mathbb F_q$. For every polynomial $f\in\mathbb F_q[X]$ of degree $<k$, take the word
\[
(f(a_1),\dots,f(a_d))\in\mathbb F_q^d.
\]
These are the length-$d$ Reed--Solomon codewords~\cite{reed1960polynomial}. There are $q^k$ such words, one for each coefficient vector of a degree-$<k$ polynomial. If $f\ne g$, then $f-g$ is a nonzero polynomial of degree $<k$, so it has at most $k-1$ roots. Hence the two codewords agree in at most $k-1$ coordinates and are independent in $G_{\mathcal V_k}$ by Proposition~\ref{prop:all-k-view-hamming-threshold}. The same proposition gives $\alpha(G_{\mathcal V_k})\le q^k$, so equality holds. Repetition of the one-shot code gives the lower bound $\Theta(G_{\mathcal V_k})\ge k\log q$, and the rank upper bound above gives the reverse inequality.

Here the number of admissible views is the binomial coefficient ``$d$ choose $k$.'' The Reed--Solomon family gives exact instances beyond the field-size criterion whenever that coefficient is larger than $q$ while $d\le q$.

\paragraph{Haemers-style certificate relationship}
Each admissible view also gives a feasible Haemers-style minrank certificate~\cite{haemers1978upper}. For a graph $G$ and field $\mathbb K$, $\operatorname{minrank}_{\mathbb K}(G)$ is the minimum rank over matrices indexed by $V(G)$ with nonzero diagonal and zero entries on every nonedge. Fix an admissible view $S$ and index the materialized graph by affine states $x\in A$. Define the projection-equality matrix
\[
H^{(S)}_{x,y}=
\begin{cases}
1,&\pi_S(x)=\pi_S(y),\\
0,&\pi_S(x)\ne\pi_S(y).
\end{cases}
\]
The matrix has nonzero diagonal and vanishes on every nonedge of the full confusability graph: if $x$ and $y$ are nonadjacent, no admissible view identifies them, so in particular $\pi_S(x)\ne\pi_S(y)$. After ordering states by their realized $S$-projection, $H^{(S)}$ is block diagonal with one all-ones block for each realized projection value. Therefore
\[
\operatorname{rank} H^{(S)}=|\pi_S(A)|=q^{t(S)}.
\]
Consequently, for the materialized graph and any field $\mathbb K$ over which the displayed $0$--$1$ matrix is interpreted,
\[
\operatorname{minrank}_{\mathbb K}(G)\le \min_{S\in\mathcal V} q^{t(S)}.
\]

\leanmetapending{\LH{MFT95}, \LHrng{AKS}{1}{5}}
\begin{corollary}[Haemers optimality on the kernel-section class]\label{cor:haemers-optimal-kernel-section}
Assume the hypotheses of Theorem~\ref{thm:kernel-section-exactness}, and let $S$ be a rank-minimizing view. Then the projection-equality matrix $H^{(S)}$ is an optimal Haemers minrank witness:
\[
\operatorname{minrank}_{\mathbb K}(G)=q^{t^*}
\]
over every coefficient field $\mathbb K$ used for Haemers minrank.
\end{corollary}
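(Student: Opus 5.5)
The argument is a sandwich between the upper bound already displayed in the Haemers-style paragraph and the standard lower bound $\operatorname{minrank}_{\mathbb K}(G)\ge\alpha(G)$. I would proceed in three steps.

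\textbf{Upper bound.} For a rank-minimizing view $S$, the projection-equality matrix $H^{(S)}$ is a feasible Haemers matrix. Its diagonal is nonzero, and it vanishes on every nonedge. It is block diagonal with $q^{t(S)}=q^{t^*}$ all-ones blocks, so over any field $\mathbb K$ its rank is exactly $q^{t^*}$, since each all-ones block has rank one in every characteristic. This gives $\operatorname{minrank}_{\mathbb K}(G)\le q^{t^*}$.

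\textbf{Lower bound.} I would prove the classical inequality $\operatorname{minrank}_{\mathbb K}(G)\ge\alpha(G)$ for every field. Let $M$ be any feasible matrix and let $I$ be an independent set. The principal submatrix $M[I,I]$ has nonzero diagonal entries, and every off-diagonal pair in $I$ is a nonedge, so those entries vanish. Thus $M[I,I]$ is an invertible diagonal matrix. Hence $\operatorname{rank}M\ge\operatorname{rank}M[I,I]=|I|$.

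\textbf{Combination.} Under the hypotheses of Theorem~\ref{thm:kernel-section-exactness}, we have $\alpha(G)=q^{t^*}$, witnessed by $a_0+R$. So $q^{t^*}\le\operatorname{minrank}_{\mathbb K}(G)\le\operatorname{rank}H^{(S)}=q^{t^*}$, and equality holds. In particular $H^{(S)}$ attains the minimum and is therefore an optimal witness.

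There is no real obstacle. The only care needed is to state the rank computation field-independently and to note that the Haemers convention in use, nonzero diagonal and zeros on nonedges, is exactly what makes the independent-set principal submatrix diagonal.
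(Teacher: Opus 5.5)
Your proof is correct, and it reaches the lower bound by a different route from the paper. The paper obtains $\operatorname{minrank}_{\mathbb K}(G)\ge q^{t^*}$ from Haemers' capacity bound $\Theta(G)\le\log\operatorname{minrank}_{\mathbb K}(G)$, combined with the capacity identity $\Theta(G)=t^*\log q$ of Corollary~\ref{cor:kernel-section-no-gap}. You instead prove the one-shot inequality $\alpha(G)\le\operatorname{minrank}_{\mathbb K}(G)$ directly, via the invertible diagonal principal submatrix on an independent set. You then combine it with $\alpha(G)=q^{t^*}$ from Theorem~\ref{thm:kernel-section-exactness}.

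Your route is more elementary and self-contained. It uses no strong powers and no submultiplicativity of rank under tensor products, which is what Haemers' capacity bound rests on. It also shows that optimality of $H^{(S)}$ needs only a single independent set of size $q^{t^*}$, linear or not.

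The paper's route asks only for $\Theta(G)=t^*\log q$. So in principle it would also cover instances where the rank certificate is met only asymptotically rather than by $\alpha(G)$ itself, and it ties the minrank statement to the no-Shannon-gap corollary. Under the stated hypotheses both lower bounds apply. Your upper-bound step, the field-independent rank $q^{t(S)}=q^{t^*}$ of the block-diagonal all-ones matrix $H^{(S)}$, is the same as the paper's.
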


\begin{proof}
Haemers' bound gives $\Theta(G)\le\log\operatorname{minrank}_{\mathbb K}(G)$. Corollary~\ref{cor:kernel-section-no-gap} gives $\Theta(G)=t^*\log q=\log q^{t^*}$, so $\operatorname{minrank}_{\mathbb K}(G)\ge q^{t^*}$. The feasible matrix $H^{(S)}$ has rank $q^{t(S)}=q^{t^*}$, giving the reverse inequality.
\end{proof}

The matroid certificate is an explicit feasible point for the Haemers minrank program, and it is optimal on the kernel-section exactness class. Outside that class, optimal minrank can improve the numerical bound after materialization; the coordinate-derived certificate supplies a polynomial-time feasible point before the vertex set and nonedge constraints have been materialized.

\paragraph{Refinements and exactness}
The base certificate compares the full edge union with each single-view fiber graph and takes the best rank. Theorems~\ref{thm:kernel-section-exactness} and~\ref{thm:field-size-exactness} identify a rank-shaped equality regime where the same number is the exact capacity.

\leanmetapending{\LHrng{MFT}{92}{94}}
\begin{proposition}[Subfamily certificate hierarchy]\label{prop:subfamily-certificate-hierarchy}
For a subfamily $\mathcal W\subseteq\mathcal V$, let $G_{\mathcal W}$ be the graph generated by the views in $\mathcal W$. If
\[
\mathcal W\subseteq\mathcal W'\subseteq\mathcal V,
\]
then
\[
\Theta(G_{\mathcal V})\le \Theta(G_{\mathcal W'})\le \Theta(G_{\mathcal W}).
\]
Consequently, any upper bound on $\Theta(G_{\mathcal W})$ is an upper certificate for the full graph.

More generally, let $B(\mathcal W)$ be any valid computable upper bound on $\Theta(G_{\mathcal W})$, and define the level-$k$ subfamily certificate
\[
U_k^B(\mathcal V):=
\min_{\substack{\emptyset\ne\mathcal W\subseteq\mathcal V\\|\mathcal W|\le k}} B(\mathcal W)
\qquad(1\le k\le L).
\]
Then $\Theta(G_{\mathcal V})\le U_k^B(\mathcal V)$ for every $k$, and $U_{k+1}^B(\mathcal V)\le U_k^B(\mathcal V)$ whenever both sides are defined. Taking $B(\mathcal W)=\Theta(G_{\mathcal W})$ gives the exact subfamily hierarchy, which satisfies $U_L^\Theta(\mathcal V)=\Theta(G_{\mathcal V})$. For fixed $k$, scanning all subfamilies of size at most $k$ costs $O(L^k)$ selected subgraphs before the cost of evaluating $B$ on each selected subgraph.
\end{proposition}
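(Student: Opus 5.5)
The plan is to derive every claim in Proposition~\ref{prop:subfamily-certificate-hierarchy} from a single monotonicity fact. If $\mathcal W\subseteq\mathcal W'$, then every edge of $G_{\mathcal W}$ is an edge of $G_{\mathcal W'}$. The reason is that a pair identified by some view in $\mathcal W$ is identified by that same view, which also lies in $\mathcal W'$. All three graphs share the vertex set $A$, so $G_{\mathcal W}$ is a spanning subgraph of $G_{\mathcal W'}$, and likewise $G_{\mathcal W'}$ is a spanning subgraph of $G_{\mathcal V}$.

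\textbf{Step 1: the chain of capacities.} Adding edges to the factors only adds adjacencies in the strong power. So $G^{\boxtimes n}_{\mathcal W}$ is a spanning subgraph of $G^{\boxtimes n}_{\mathcal W'}$, and every independent set of the latter is independent in the former. This gives $\alpha(G_{\mathcal W'}^{\boxtimes n})\le\alpha(G_{\mathcal W}^{\boxtimes n})$ for every $n$. Taking $\frac1n\log$ and passing to the limit gives $\Theta(G_{\mathcal W'})\le\Theta(G_{\mathcal W})$. This is the same argument as in the proof of Theorem~\ref{thm:matroid-capacity-bounds}, where a single view was used. Applying it twice gives the displayed chain. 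The consequence that any upper bound on $\Theta(G_{\mathcal W})$ also bounds $\Theta(G_{\mathcal V})$ is then transitivity.

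\textbf{Step 2: the level-$k$ certificates.} For each nonempty $\mathcal W$ with $|\mathcal W|\le k$, Step 1 and the validity of $B$ give
\[
\Theta(G_{\mathcal V})\le\Theta(G_{\mathcal W})\le B(\mathcal W).
\]
Minimizing over such $\mathcal W$ gives $\Theta(G_{\mathcal V})\le U_k^B(\mathcal V)$. The index set at level $k+1$ contains the index set at level $k$, and a minimum over a larger set can only be smaller, so $U_{k+1}^B\le U_k^B$. Both index sets are nonempty because $\mathcal V$ is nonempty and $k\ge1$.

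\textbf{Step 3: the exact case and the cost.} Take $B=\Theta$. At level $L$ the admissible choices include $\mathcal W=\mathcal V$, so $U_L^\Theta\le\Theta(G_{\mathcal V})$. Step 2 gives the reverse inequality, so equality holds. For the cost, the number of subfamilies of size at most $k$ is $\sum_{j\le k}\binom{L}{j}=O(L^k)$ for fixed $k$, and each one is evaluated once.

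The only step needing care is the first one: monotonicity of $\alpha$ of strong powers under edge addition. It is needed in order to justify passing to the limit that defines $\Theta$. Since all graphs here live on the same vertex set, I expect this to be routine.
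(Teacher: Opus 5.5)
Your proposal is correct and follows essentially the same route as the paper: edge-subgraph monotonicity of independence numbers of strong powers gives the capacity chain, and minimizing valid upper bounds over nested subfamily search spaces gives the level-$k$ bounds and their monotonicity. Including $\mathcal V$ itself at level $L$ gives exactness, and the $\sum_{j\le k}\binom{L}{j}$ subfamily count gives the cost bound.
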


\begin{proof}
Adding admissible views only adds confusability edges. Hence $G_{\mathcal W}$ is an edge subgraph of $G_{\mathcal W'}$, and $G_{\mathcal W'}$ is an edge subgraph of $G_{\mathcal V}$. Independence numbers, and hence normalized strong-power rates, are monotone decreasing under edge addition. The displayed inequalities follow. The fixed-$k$ enumeration statement is the direct count of $\sum_{j\le k}\binom Lj$ selected subfamilies.

The level-$k$ certificate statements follow by minimizing valid upper bounds over nested finite search spaces. The search space for level $k+1$ contains the search space for level $k$, so the minimum cannot increase. At level $L$, the full view family itself is an admissible selected subfamily; the exact hierarchy therefore contains both $\Theta(G_{\mathcal V})$ as an upper candidate and only quantities at least $\Theta(G_{\mathcal V})$ by the first paragraph.
\end{proof}

\paragraph{Concrete evaluators for \(B(\mathcal W)\)}
The hierarchy becomes algorithmic once the evaluator $B$ is instantiated. The single-view rank evaluator is
\[
B_{\mathrm{rank}}(\mathcal W):=\min_{S\in\mathcal W} t(S)\log q,
\]
the first-level matroid certificate applied to the selected subfamily. A cluster/local-composition evaluator uses the equality criteria of Section~\ref{sec:equality}: when the selected full-tuple view family is meet-witnessed, or when an affine selected family satisfies the kernel-sum criterion, $B(\mathcal W)$ is the exact component-counting capacity. A quotient evaluator applies when the selected views depend on a combined projection of small realized rank $m$. Fibers of the combined projection are cliques, and adjacency between fibers is constant, so $G_{\mathcal W}$ is a complete blow-up of a graph on at most $q^m$ quotient states; exact graph algorithms or semidefinite bounds can be run on that quotient rather than on the original $q^r$ states. In symmetric Hamming cases, coding bounds and association-scheme theta calculations supply closed-form evaluators; Proposition~\ref{prop:all-k-view-hamming-threshold} and Example~\ref{ex:subfamily-hierarchy-separation} are the two calibrating instances used below.

\leanmetapending{\LHrng{MFT}{92}{95}, \LHrng{MFT}{102}{106}}
\begin{example}[Strict subfamily-hierarchy separation]\label{ex:subfamily-hierarchy-separation}
Let the state space be $\mathbb F_2^4$ and let $\mathcal V$ be the six two-coordinate views. The first-level rank certificate inspects one two-coordinate view at a time and gives $\log4$. Every proper subfamily $\mathcal W\subsetneq\mathcal V$ has $\alpha(G_{\mathcal W})=4$, while the single-view rank bound gives $\Theta(G_{\mathcal W})\le\log4$; hence $\Theta(G_{\mathcal W})=\log4$ for every proper selected subfamily. The full six-view graph has adjacency determined by Hamming distance at most $2$, has $\alpha(G_{\mathcal V})=2$, and has the Walsh-character theta certificate $\Theta(G_{\mathcal V})\le\log(8/3)<\log4$. Therefore the exact subfamily hierarchy satisfies
\[
U_5^\Theta(\mathcal V)=\log4,
\qquad
U_6^\Theta(\mathcal V)=\Theta(G_{\mathcal V})\le\log(8/3)<\log4.
\]
The first strict improvement occurs only when the full symmetric six-view subfamily is selected. The supplementary checker \texttt{subfamily\_threshold\_levels} verifies the one-shot independence-number assertions for all subfamilies.
\end{example}

Affine structure gives a polynomial-time upper certificate in general and a polynomial-time exact-capacity formula on the kernel-section classes above. Subfamily analysis adds graph-level upper certificates when the selected views have enough quotient structure or symmetry to be evaluated directly; Example~\ref{ex:subfamily-hierarchy-separation} shows that the resulting hierarchy can be strictly stronger than the first-level single-view certificate.

\leanmetapending{\LHrng{AFM}{18}{19}}
\begin{proposition}[Matroid direct sum under block composition]\label{prop:matroid-direct-sum-block}
For affine families $A_1=a_1+V_1$ over $\mathbb F_q^{d_1}$ and $A_2=a_2+V_2$ over $\mathbb F_q^{d_2}$, the representable matroid on $[d_1+d_2]$ induced by the product family $A_1\times A_2$ is the direct sum of the component matroids. A coordinate set $I\subseteq [d_1+d_2]$ is independent in the product matroid if and only if $I\cap [d_1]$ is independent in the first component matroid and $\{j\in[d_2]: d_1+j\in I\}$ is independent in the second component matroid.
\end{proposition}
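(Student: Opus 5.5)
The plan is to work entirely with the direction space of the product family. First I will identify that direction space, and then use the semantic-determination description of the coordinate matroid from Theorem~\ref{thm:affine-fact-matroid} to read off the matroid.

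\textbf{Direction space.} The product family is
\[
A_1\times A_2=(a_1,a_2)+(V_1\oplus V_2)\subseteq\mathbb F_q^{d_1+d_2},
\]
so its direction space is $V=V_1\oplus V_2$. By Theorem~\ref{thm:affine-fact-matroid}, the product matroid is represented by the coordinate functionals $\varepsilon_j=e_j|_V$.

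\textbf{Splitting the functionals.} For $j\le d_1$, the functional $\varepsilon_j$ vanishes on $0\oplus V_2$. It is the pullback, along the projection $V\to V_1$, of the component functional $\varepsilon^{(1)}_j=e_j|_{V_1}$. Symmetrically, for $j=d_1+j'$ the functional $\varepsilon_j$ is the pullback of $\varepsilon^{(2)}_{j'}$ along $V\to V_2$.

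Dualizing the direct sum gives $V^*=V_1^*\oplus V_2^*$, with both pullback maps injective. Under this identification, $\varepsilon_j$ equals $(\varepsilon^{(1)}_j,0)$ or $(0,\varepsilon^{(2)}_{j'})$ according to the block containing $j$.

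\textbf{Independence.} Take $I=I_1\sqcup(d_1+I_2)$. A linear relation
\[
\sum_{j\in I}c_j\varepsilon_j=0
\]
splits into two component relations, one inside $V_1^*$ and one inside $V_2^*$, because the two summands are complementary. So the family $\{\varepsilon_j\}_{j\in I}$ is linearly independent iff $\{\varepsilon^{(1)}_j\}_{j\in I_1}$ and $\{\varepsilon^{(2)}_{j'}\}_{j'\in I_2}$ are each independent. This is exactly the stated criterion, and it is the definition of the matroid direct sum.

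As a check, the rank function is additive: $t(S)=t_1(S_1)+t_2(S_2)$. This also follows directly because $\ker(\pi_S|_V)=\ker(\pi_{S_1}|_{V_1})\oplus\ker(\pi_{S_2}|_{V_2})$.

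\textbf{Main obstacle.} The only point needing care is that the identification $V^*\cong V_1^*\oplus V_2^*$ is compatible with restricting coordinate functionals. Concretely, $e_j|_{V}$ for $j\le d_1$ must agree with $e_j|_{V_1}\circ\mathrm{pr}_1$. I will verify this directly on vectors $(v_1,v_2)$: the $j$th coordinate of $(v_1,v_2)$ is the $j$th coordinate of $v_1$ when $j\le d_1$.
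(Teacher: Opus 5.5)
Your proposal is correct and follows the same route as the paper. Both arguments identify the direction space as $V_1\oplus V_2$, note that each coordinate functional is supported on exactly one block, and conclude that a linear relation among the functionals indexed by $I$ splits into separate relations on $I\cap[d_1]$ and on the shifted second block; the paper phrases this through the factorization $\pi_{S_1\sqcup S_2}|_{V_1\times V_2}=\pi_{S_1}|_{V_1}\times\pi_{S_2}|_{V_2}$, and your explicit dual splitting $V^*=V_1^*\oplus V_2^*$ is a more detailed version of that same step.
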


\begin{proof}
The direction space of the product family is $V_1\times V_2$. For coordinate subsets $S_1\subseteq[d_1]$ and $S_2\subseteq[d_2]$, write $S_1\sqcup S_2:=S_1\cup\{d_1+j:j\in S_2\}\subseteq[d_1+d_2]$. The restricted coordinate projection $\pi_{S_1\sqcup S_2}|_{V_1\times V_2}$ decomposes as $\pi_{S_1}|_{V_1}\times \pi_{S_2}|_{V_2}$ because $V_1$ and $V_2$ occupy disjoint coordinate blocks. Linear independence of coordinate functionals on $V_1\times V_2$ therefore decomposes into independent conditions on each block.
\end{proof}

\leanmetapending{\LH{AFM17}}
\begin{corollary}[Rank additivity]\label{cor:rank-additivity-block}
For coordinate sets $S_1\subseteq [d_1]$ and $S_2\subseteq [d_2]$, write $S_1\sqcup S_2:=S_1\cup\{d_1+j:j\in S_2\}\subseteq[d_1+d_2]$. If the component view families are $\mathcal V_1$ and $\mathcal V_2$, write
\[
\mathcal V_{\mathrm{product}}:=\{S_1\sqcup S_2:S_1\in\mathcal V_1,\ S_2\in\mathcal V_2\}.
\]
Then
\[
t_{\mathrm{product}}(S_1\sqcup S_2) = t_1(S_1) + t_2(S_2).
\]
Consequently, the matroid capacity bound for $\mathcal V_{\mathrm{product}}$ decomposes additively under block composition:
\[
\Theta(G_{\mathrm{product}}) \le \min_{S_1\in\mathcal V_1,\;S_2\in\mathcal V_2} \bigl(t_1(S_1) + t_2(S_2)\bigr) \log q.
\]
\end{corollary}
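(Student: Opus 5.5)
The plan has three steps: a rank computation via Proposition~\ref{prop:matroid-direct-sum-block}, an identification of the product confusability graph, and an appeal to Theorem~\ref{thm:matroid-capacity-bounds}.

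\textbf{Rank additivity.} The product direction space is \(V_1\times V_2\). The coordinates indexed by \(S_1\sqcup S_2\) occupy disjoint blocks, so the restricted projection factors as
\[
\pi_{S_1\sqcup S_2}|_{V_1\times V_2}=\pi_{S_1}|_{V_1}\times\pi_{S_2}|_{V_2}.
\]
The image of this map is \(\pi_{S_1}(V_1)\times\pi_{S_2}(V_2)\). Its dimension is the sum of the two image dimensions, which gives
\[
t_{\mathrm{product}}(S_1\sqcup S_2)=t_1(S_1)+t_2(S_2).
\]
The same conclusion can be read off the direct-sum matroid: the rank of a set in a direct sum is the sum of the ranks of its restrictions to the two blocks.

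\textbf{The product graph.} Let \(G_{\mathrm{product}}\) be the graph generated on \(A_1\times A_2\) by \(\mathcal V_{\mathrm{product}}\). A product view \(S_1\sqcup S_2\) identifies the states \((x_1,x_2)\) and \((x_1',x_2')\) exactly when \(\pi_{S_1}(x_1)=\pi_{S_1}(x_1')\) and \(\pi_{S_2}(x_2)=\pi_{S_2}(x_2')\). This is precisely the block-composed product system of Proposition~\ref{prop:block-composition-strong-product}, so \(G_{\mathrm{product}}\) is the strong product of the two component graphs. For the capacity inequality, however, it suffices to treat \(G_{\mathrm{product}}\) as an affine coordinate-view system in its own right.

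\textbf{The capacity bound.} The product family \(A_1\times A_2=(a_1,a_2)+V_1\times V_2\) is affine over \(\mathbb F_q\), so Theorem~\ref{thm:matroid-capacity-bounds} applies directly. It gives
\[
\Theta(G_{\mathrm{product}})\le\min_{S\in\mathcal V_{\mathrm{product}}}t_{\mathrm{product}}(S)\log q.
\]
Every element of \(\mathcal V_{\mathrm{product}}\) has the form \(S_1\sqcup S_2\). Substituting the additivity identity therefore yields the displayed bound, with the minimum taken over pairs \((S_1,S_2)\in\mathcal V_1\times\mathcal V_2\).

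No step is a real obstacle. The only point needing care is the image-dimension claim in the first step, namely that the image of a product of linear maps on \(V_1\times V_2\) is the product of the two images. This holds because the domain is itself a direct product, so each image factor is attained independently of the other.
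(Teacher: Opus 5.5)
Your proposal is correct and matches the paper's proof. Like the paper, you split the restricted product projection into the two component restricted projections to get $t_{\mathrm{product}}(S_1\sqcup S_2)=t_1(S_1)+t_2(S_2)$, and then apply Theorem~\ref{thm:matroid-capacity-bounds} to the affine product family. The strong-product identification you include is a correct side remark; the capacity inequality does not need it.
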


\begin{proof}
The restricted product projection decomposes as the direct sum of the two component restricted projections, so
\[
\operatorname{rank}(\pi_{S_1}|_{V_1}\times \pi_{S_2}|_{V_2})
=
\operatorname{rank}(\pi_{S_1}|_{V_1})
+
\operatorname{rank}(\pi_{S_2}|_{V_2}).
\]
Apply Theorem~\ref{thm:matroid-capacity-bounds} to the product family.
\end{proof}

\leanmetapending{\LHrng{AFM}{17}{19}, \LHrng{AKS}{1}{5}, \LHrng{MFT}{25}{26}}
\begin{corollary}[Direct-sum closure of kernel-section exactness]\label{cor:kernel-section-direct-sum}
Let two affine coordinate-view systems over the same field satisfy the kernel-section hypothesis with rank minima $t_1^*$ and $t_2^*$. For the block product view family
\[
\mathcal V_{\mathrm{product}}=\{S_1\sqcup S_2:S_1\in\mathcal V_1,\ S_2\in\mathcal V_2\},
\]
the product system also satisfies the kernel-section hypothesis, and
\[
\Theta(G_1\boxtimes G_2)=\Theta(G_1)+\Theta(G_2)
=(t_1^*+t_2^*)\log q .
\]
Equivalently, the classical non-logarithmic capacity multiplies on the kernel-section class.
\end{corollary}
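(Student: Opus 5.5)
The plan is to build the product section directly as a direct sum, check it against the product kernels, and then apply Theorem~\ref{thm:kernel-section-exactness} to the product system together with Proposition~\ref{prop:block-composition-strong-product}.

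\emph{Product rank minimum.} Write $A_1=a_1+V_1$ and $A_2=a_2+V_2$. The product family is $A_1\times A_2=(a_1,a_2)+(V_1\times V_2)$. By Corollary~\ref{cor:rank-additivity-block}, every product view has rank
\[
t_{\mathrm{product}}(S_1\sqcup S_2)=t_1(S_1)+t_2(S_2).
\]
Both summands are minimized independently, so the product rank minimum is $t_1^*+t_2^*$.

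\emph{Candidate section.} Let $R_1\le V_1$ and $R_2\le V_2$ be the given kernel-avoiding sections, of dimensions $t_1^*$ and $t_2^*$. Put $R:=R_1\times R_2\le V_1\times V_2$, which has dimension $t_1^*+t_2^*$.

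\emph{Kernel computation.} The restricted projection splits as $\pi_{S_1}|_{V_1}\times\pi_{S_2}|_{V_2}$, as in the proof of Proposition~\ref{prop:matroid-direct-sum-block}. Hence
\[
\ker(\pi_{S_1\sqcup S_2}|_{V_1\times V_2})=K^{(1)}_{S_1}\times K^{(2)}_{S_2}.
\]
Now take $(u,v)\in R\cap(K^{(1)}_{S_1}\times K^{(2)}_{S_2})$. Then $u\in R_1\cap K^{(1)}_{S_1}=0$ and $v\in R_2\cap K^{(2)}_{S_2}=0$. So $R$ is kernel-avoiding for every product view, and the product system satisfies the kernel-section hypothesis.

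\emph{Capacity conclusion.} Theorem~\ref{thm:kernel-section-exactness} applied to the product system gives $\Theta(G_{\mathrm{product}})=(t_1^*+t_2^*)\log q$. Proposition~\ref{prop:block-composition-strong-product} identifies $G_{\mathrm{product}}$ with $G_1\boxtimes G_2$. Applying Theorem~\ref{thm:kernel-section-exactness} to each component separately gives $\Theta(G_i)=t_i^*\log q$. Together these give the additive identity. Exponentiating turns it into multiplicativity of the non-logarithmic capacity.

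\emph{Main obstacle.} The delicate point is checking that the product view family $\mathcal V_{\mathrm{product}}$ is exactly the block-composed view family used in Proposition~\ref{prop:block-composition-strong-product}. Each product view applies one admissible component view to each block, so the product confusability relation is the three-case strong-product adjacency. I would check this explicitly, including the equal-component cases. Those cases work because each component view list is nonempty, so equality in one block is never separated by that block's view.
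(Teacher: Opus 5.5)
Your proof is correct and follows essentially the same route as the paper. You build $R_1\oplus R_2$, observe that each product kernel is the direct sum of the component kernels, get the product rank minimum $t_1^*+t_2^*$ from rank additivity, identify the product graph with $G_1\boxtimes G_2$, and apply kernel-section exactness. Your explicit check that nonempty component view lists make $\mathcal V_{\mathrm{product}}$ realize the strong-product adjacency, including the equal-component cases, is a small detail the paper leaves implicit.
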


\begin{proof}
Let $R_i$ be kernel-avoiding subspaces of dimensions $t_i^*$ in the two components. The product section $R_1\oplus R_2$ has dimension $t_1^*+t_2^*$. For a product view $S_1\sqcup S_2$, its kernel is the direct sum of the two component kernels. If $(r_1,r_2)\in R_1\oplus R_2$ lies in that kernel, then $r_i$ lies in the $S_i$-kernel in the $i$th component; kernel avoidance gives $r_1=r_2=0$. Hence the product section is kernel-avoiding. Corollary~\ref{cor:rank-additivity-block} identifies the product rank minimum as $t_1^*+t_2^*$, and Proposition~\ref{prop:block-composition-strong-product} identifies the product confusability graph with $G_1\boxtimes G_2$. Theorem~\ref{thm:kernel-section-exactness} gives the displayed equality.
\end{proof}

The matroid rank $t(S)$ is the informational dimension captured by coordinates in $S$, directly analogous to the rank function in representable matroids studied in combinatorial optimization~\cite{oxley2011matroid,welsh1976matroid,recski1989matroid} and coding theory~\cite{cover2006elements}. The equality $t(S)=r$ holds exactly when $S$ determines the entire state, and any admissible view containing a basis removes confusability entirely. The affine specialization makes the rank certificate polynomial-time computable. Generalizations to non-affine state families are open.

\section{Strong Powers and Asymptotic Capacity}\label{sec:capacity-theory}

The rank certificate of Section~\ref{sec:affine} bounds the asymptotic zero-error invariant obtained from block composition. Proposition~\ref{prop:block-composition-strong-product} identifies block composition with strong graph product, so the standard Shannon-capacity framework applies to the induced graph.

\begin{definition}[Shannon Capacity]\label{def:shannon-capacity}
For a finite graph $G$, its \emph{Shannon capacity} is
\[
\Theta(G) := \sup_{n\ge 1} \frac{1}{n}\log \alpha\bigl(G^{\boxtimes n}\bigr),
\]
where $\alpha$ denotes independence number and $G^{\boxtimes n}$ is the $n$-fold strong power.
\end{definition}

\leanmetapending{\LHrng{MFT}{43}{49}, \LH{MFT56}}
Here \(\Theta\) is logarithmic: it is the logarithm of the usual multiplicative capacity \(\sup_n \alpha(G^{\boxtimes n})^{1/n}\). For an induced confusability graph, \(\Theta(G_{\mathrm{conf}})\) is the asymptotic zero-error rate of the block-composed system. If \(\alpha_n=\alpha(G^{\boxtimes n})\), then
\[
\alpha_{m+n}\ge\alpha_m\alpha_n,
\]
because independent sets multiply under strong product. Fekete's lemma~\cite{fekete1923uber} gives convergence of the normalized block rates
\[
\frac{\log \alpha_n}{n}
\]
to the displayed supremum.

\subsection{Rank Certificates and Lov\'asz-\(\vartheta\)}\label{sec:theta-theory}

Lov\'asz-$\vartheta$ is the polynomial-time graph-level comparator for the affine rank certificate~\cite{lovasz1979shannon}. The algorithmic distinction is the input. Gaussian elimination computes the rank certificate from the \(r\times d\) affine presentation and view list before the \(q^r\)-vertex graph is enumerated. Lov\'asz-$\vartheta$ solves an SDP on the materialized graph, or on a symmetry-reduced quotient when enough graph regularity is available.

\begin{center}
\begin{tabular}{p{0.26\linewidth}p{0.31\linewidth}p{0.32\linewidth}}
\toprule
Regime & Rank certificate & Lov\'asz-\(\vartheta\) \\
\midrule
Kernel-section or field-size exactness & Exact capacity from Gaussian elimination and a section witness & Also upper-bounds capacity; equality requires theta exactness \\
Large affine presentations without graph symmetry & Polynomial in \(r,d,L,\log q\), before graph materialization & Requires the \(q^r\)-vertex graph or a quotient not supplied by the presentation \\
Small or highly symmetric materialized graphs & May miss global regularity across several view fibers & SDP or spectral formulas can be strictly tighter \\
Cluster/intersection-closed systems & Component counting gives the exact value & \(\vartheta\) is exact because \(\alpha=\chi(\overline G)\) \\
\bottomrule
\end{tabular}
\end{center}

Two dominance statements are immediate. On the kernel-section class, the rank certificate equals \(\Theta(G)\), so no valid upper bound, including \(\vartheta_\infty\), can be strictly smaller. On materialized graphs with exploitable symmetry, \(\vartheta\) can use global adjacency relations that a single-view rank certificate ignores; the four-coordinate threshold graph below is the smallest example used here. A presentation-level criterion predicting when theta improves the rank certificate would have to recognize those global view-fiber interactions before the graph or a symmetry quotient is built.

For a finite graph $G$, let $\vartheta(G)$ denote the classical theta number and set
\[
\vartheta_\infty(G):=\inf_{n\ge 1}\frac{1}{n}\log\vartheta(G^{\boxtimes n}).
\]
Multiplicativity of $\vartheta$ under strong product makes this $\log\vartheta(G)$, and the standard theta upper law gives
\[
\Theta(G)\le \vartheta_\infty(G).
\]
Likewise, complement colorings give the elementary bound $\Theta(G)\le\log\chi(\overline G)$.

On the kernel-section exactness class, the rank certificate reaches capacity:
\[
\Theta(G)=\min_{S\in\mathcal V}t(S)\log q.
\]
Since $\vartheta_\infty$ is an upper bound on $\Theta$, the rank certificate is at least as tight as $\vartheta_\infty$ on that class; equality holds when $\vartheta$ is also exact. Outside the kernel-section class there is no dominance relation forced by the presentation. The rank certificate remains polynomial in the succinct affine input, while $\vartheta$ can exploit graph-level regularity after materialization or symmetry reduction.

For the binary square, both bounds give the exact value $\log 2$. The bounds separate on a small Hamming-threshold example. Let the state space be $\mathbb F_2^4$ and let the admissible views be all six two-coordinate projections. Each view has coordinate rank $2$, so Theorem~\ref{thm:matroid-capacity-bounds} gives
\[
\Theta(G)\le \log 4.
\]
The materialized graph has the $16$ binary words as vertices, with two words adjacent exactly when their Hamming distance is at most $2$. It is $10$-regular. Its adjacency matrix is diagonalized by Walsh characters, with eigenvalues $10,2,-2,-2,2$ by Hamming weight, so the least eigenvalue is $-2$. Lov\'asz's spectral theta certificate for this regular graph~\cite{lovasz1979shannon} gives
\[
\vartheta_\infty(G)\le \log\!\left(\frac{16\cdot 2}{10+2}\right)=\log(8/3),
\]
strictly below the matroid rank bound. The $\vartheta$ certificate detects global regularity in the union of view-fiber graphs. Single-view rank sees only the individual projection fibers and remains the pre-materialization certificate.

For scale, an exact bitset branch-and-bound computation gives
\[
\alpha(G)=2,\qquad \alpha(G^{\boxtimes 2})=5,
\]
where independence in the strong square is computed as maximum clique in the complement graph on $16^2$ vertices. The supplementary checker \texttt{theta\_two\_coordinate\_example} constructs the complement bitsets, runs an exact branch-and-bound clique search, and verifies the Walsh-character spectrum used above. Hence
\[
\log\sqrt{5}\le \Theta(G)\le \log(8/3)<\log 4.
\]
In base $2$, the spectral gap to this finite-power lower estimate is about $0.173$, while the rank-certificate gap is about $0.839$.

\leanmetapending{\LHrng{MFT}{96}{101}, \LH{MFT108}}
For any logarithmic upper certificate \(U\ge\Theta(G)\), define
\[
\Gamma_U(G):=U-\Theta(G).
\]
The gap is nonnegative and vanishes exactly when the certificate is exact. Fekete convergence gives
\[
\Gamma_U(G)=\lim_{n\to\infty}
\left(
U-\frac{1}{n}\log\alpha(G^{\boxtimes n})
\right).
\]
Equivalently,
\[
\Gamma_U(G)=
\lim_{n\to\infty}
\frac{nU-\log\alpha(G^{\boxtimes n})}{n}.
\]
If a product instance \(P\) has additive upper value \(U_P=U_1+U_2\) and capacity at least the sum of the factor capacities, then
\[
\Gamma_{U_P}(P)\le\Gamma_{U_1}(G_1)+\Gamma_{U_2}(G_2).
\]
When \(\Theta(P)=\Theta(G_1)+\Theta(G_2)\), the inequality is equality:
\[
\Gamma_{U_P}(P)=\Gamma_{U_1}(G_1)+\Gamma_{U_2}(G_2).
\]

\paragraph{Approximation and additive gaps}
The exactness and hardness results leave a separate approximation problem. Direct Grassmannian enumeration is fixed-parameter in the ambient affine rank \(r\), and the field-size and kernel-section theorems give zero additive gap on their positive regimes. Outside those regimes, the rank certificate supplies a polynomial-time upper value \(U\), the greedy packing bound gives a coarse lower value, and the subfamily hierarchy can lower the upper value when selected subfamilies have small quotients, cluster structure, or symmetry. No FPTAS for \(N_{t^*}\) or additive-\(\varepsilon\) approximation theorem for \(\Theta\) follows from these ingredients. The normalized gap \(\Gamma_U\) records the resulting approximation loss against the strong-power rates.

On Shannon-gap geometry, the rank certificate is an upper-bound algorithm rather than an exactness mechanism. Corollary~\ref{cor:kernel-section-no-gap} identifies the exact kernel-section regime as $\Theta(G)=\log\alpha(G)$. Thus \(C_5\) lies outside the exact regime, since \(\Theta(C_5)=\frac12\log5>\log2=\log\alpha(C_5)\). The same obstruction appears for \(C_7\) at the finite-power level: \(\alpha(C_7^{\boxtimes2})=10>3^2=\alpha(C_7)^2\). The supplementary checker \texttt{odd\_cycle\_boundary} verifies these strong-square values.

\leanmetapending{\LH{MFT107}}
In the full-tuple model, the prime-order cycles $C_5$ and $C_7$ are not coordinate-view graphs at all. A full product with a prime number of states has only one nonunary coordinate. With one effective coordinate, an empty admissible view makes all distinct states confusable, while every nonempty view separates all distinct states; the induced graph is complete or edgeless. Restricted affine representations are separate instances. Kernel-section equality forces the no-gap condition above.

The Hamming-threshold example above is the corresponding coordinate-view witness on the loose side: finite powers already beat \(\log\alpha(G)\), while the matroid rank certificate remains a pre-materialization upper bound.

Sharpness of either upper bound for general coordinate-view classes is open. The transitive intersection-closed systems of Theorem~\ref{thm:intersection-closure} are a subclass where exact capacity is the component-counting formula. The one-shot sandwich also gives exact values: if $\alpha(G)=n$ and $\overline G$ is $n$-colorable, then $\Theta(G)=\log n$. In rank-shaped instances with matching witnesses of size $q^t$, the sandwich reads $\Theta(G)=t\log q$ and proves exactness for the binary square and nested-view cluster examples.

\subsection{Cluster Collapse and Restricted-State Transitivity}\label{sec:equality}

The full-tuple cluster route is separate from affine kernel-section exactness. Kernel sections match a matroid-rank upper bound with a large independent set. Cluster collapse instead asks when the whole confusability relation is transitive. In the full coordinate-view model over nontrivial alphabets, transitivity is exactly intersection closure of the generated agreement family.

For a view family \(\mathcal V\), write
\[
\mathcal U_{\mathcal V}=\{R\subseteq [d]: \exists S\in\mathcal V,\ S\subseteq R\}.
\]
Meet-witnessing is the finite-list form of intersection closure: every pair of admissible views has an admissible subview contained in its intersection. A two-step path \(x\sim y\sim z\) records two agreement sets, and the coordinates forced to agree between \(x\) and \(z\) are their intersection. The missing admissible subview in that intersection is exactly the obstruction to transitivity.

\leanmetapending{\LHrng{MFT}{139}{150}}
\begin{theorem}[Cluster/non-transitive dichotomy]\label{thm:intersection-closure}
For deterministic coordinate-subset views on the full tuple space over a nontrivial finite alphabet, meaning \( |A_i|\ge 2 \) for each coordinate,
\[
\begin{aligned}
G_{\mathrm{conf}}\text{ is transitive}
&\Longleftrightarrow
\mathcal U_{\mathcal V}\text{ is intersection-closed}\\
&\Longleftrightarrow
\mathcal V\text{ is meet-witnessed}.
\end{aligned}
\]
If these equivalent conditions hold, \(G_{\mathcal V}\) is a cluster graph and
\[
\Theta(G_{\mathcal V})=\vartheta_\infty(G_{\mathcal V})=\log |\Pi_{\mathrm{cc}}|.
\]
If they fail, there exist distinct states \(x,y,z\) with \(x\sim y\), \(y\sim z\), and \(x\not\sim z\).
\end{theorem}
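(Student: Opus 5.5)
The plan is to prove the three equivalences as a cycle, using the agreement-set realization from the proof of Theorem~\ref{prop:agreement-family-complete-invariant}, and then read off the capacity formula. Throughout, ``transitive'' means the relation \(x\sim y\) or \(x=y\) is an equivalence relation.

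\textbf{Step 1: intersection closure and meet-witnessing coincide.} Suppose \(\mathcal U_{\mathcal V}\) is intersection-closed and \(S,T\in\mathcal V\). Then \(S\cap T\in\mathcal U_{\mathcal V}\), so some admissible view lies inside \(S\cap T\); this is meet-witnessing. Conversely, take \(R_1,R_2\in\mathcal U_{\mathcal V}\) with views \(S\subseteq R_1\) and \(T\subseteq R_2\). Meet-witnessing gives an admissible \(W\subseteq S\cap T\subseteq R_1\cap R_2\), hence \(R_1\cap R_2\in\mathcal U_{\mathcal V}\).

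\textbf{Step 2: intersection closure implies transitivity.} Let \(x\sim y\) and \(y\sim z\) with \(x\neq z\). Every coordinate in \(\operatorname{Agr}(x,y)\cap\operatorname{Agr}(y,z)\) has \(x_i=y_i=z_i\), so this intersection is contained in \(\operatorname{Agr}(x,z)\). The intersection lies in \(\mathcal U_{\mathcal V}\) by closure, and upward closure then gives \(\operatorname{Agr}(x,z)\in\mathcal U_{\mathcal V}\). Hence \(x\sim z\).

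\textbf{Step 3: failure of meet-witnessing gives a non-transitive triple.} This step also proves the final clause of the theorem. Take \(S,T\in\mathcal V\) such that no admissible view lies in \(S\cap T\). Choose distinct symbols \(0_i,1_i\in A_i\). The first idea is to set \(y\equiv 0\), put \(x=1\) off \(S\), and put \(z=1\) off \(T\); but then \(x\) and \(z\) agree on \(S\cup T\) as well as off it, which is too much. Instead, put \(x=0\) on \(S\) and \(x=1\) off \(S\), and put \(z=0\) on \(T\). Off \(T\), set \(z=0\) on \([d]\setminus(S\cup T)\) and \(z=1\) on \(S\setminus T\).

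With these choices \(\operatorname{Agr}(x,y)=S\) and \(\operatorname{Agr}(z,y)=T\), and \(\operatorname{Agr}(x,z)=S\cap T\). To check the last one: on \(S\setminus T\) we have \(x=0\neq 1=z\); on \(T\setminus S\) we have \(x=1\neq 0=z\); off \(S\cup T\) we have \(x=1\neq 0=z\). Since \(S\cap T\notin\mathcal U_{\mathcal V}\), \(x\not\sim z\), while \(x\sim y\) and \(y\sim z\).

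The three states are distinct. Indeed, \(x\neq z\) because they are nonadjacent. Also \(S\neq[d]\), since otherwise \(S\cap T=T\) would itself witness the meet; similarly \(T\neq[d]\), so \(x\neq y\) and \(z\neq y\). This construction is the step needing the most care, since one must make \(\operatorname{Agr}(x,z)\) exactly \(S\cap T\) using only two symbols per coordinate. The asymmetric choice of \(z\) handles it.

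\textbf{Step 4: the capacity formula.} Under the equivalent conditions, \(G\) is a disjoint union of cliques, one for each class in \(\Pi_{\mathrm{cc}}\). As in the proof of Proposition~\ref{prop:view-fiber-clique-sizes}, \(G^{\boxtimes n}\) is a disjoint union of cliques indexed by class words. Hence \(\alpha(G^{\boxtimes n})=|\Pi_{\mathrm{cc}}|^n\) and \(\Theta(G)=\log|\Pi_{\mathrm{cc}}|\). For theta, \(\overline G\) is complete multipartite, so \(\chi(\overline G)=|\Pi_{\mathrm{cc}}|\). The sandwich \(\alpha\le\vartheta\le\chi(\overline G)\) gives \(\vartheta(G)=|\Pi_{\mathrm{cc}}|\). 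Multiplicativity then gives \(\vartheta_\infty(G)=\log|\Pi_{\mathrm{cc}}|\).
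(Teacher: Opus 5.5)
Your proposal follows the paper's proof: the same cycle of implications (meet-witnessing $\Leftrightarrow$ intersection closure, closure $\Rightarrow$ transitivity via $\operatorname{Agr}(x,y)\cap\operatorname{Agr}(y,z)\subseteq\operatorname{Agr}(x,z)$, and a two-symbol obstruction triple for the converse), followed by the same clique/sandwich argument for $\Theta=\vartheta_\infty=\log|\Pi_{\mathrm{cc}}|$. Your triple is the paper's triple with the disagreements off $S\cup T$ placed on $x$ instead of on $z$.

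One slip in Step 3 needs fixing. Since $z=y$ on $[d]\setminus(S\cup T)$, you actually have $\operatorname{Agr}(z,y)=[d]\setminus(S\setminus T)\supseteq T$, not $\operatorname{Agr}(z,y)=T$. This still gives $y\sim z$. However, it means $z\neq y$ is controlled by $S\setminus T\neq\varnothing$, not by $T\neq[d]$ as you wrote. The correct condition does hold: if $S\subseteq T$, then $S$ itself would be an admissible view inside $S\cap T$, contradicting your choice of $S,T$. This is the paper's observation that ``$B$ and $C$ are nonempty.''
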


\begin{proof}
Meet-witnessing and intersection closure are equivalent for the upward family. If \(R,R'\in\mathcal U_{\mathcal V}\), choose admissible \(S\subseteq R\) and \(S'\subseteq R'\). A meet witness \(W\subseteq S\cap S'\) gives \(R\cap R'\in\mathcal U_{\mathcal V}\). Conversely, applying intersection closure to two admissible views \(S,S'\) shows that \(S\cap S'\in\mathcal U_{\mathcal V}\), hence some admissible \(W\subseteq S\cap S'\) exists.

Intersection closure implies transitivity because \(x\sim y\) and \(y\sim z\) give \(\operatorname{Agr}(x,y),\operatorname{Agr}(y,z)\in\mathcal U_{\mathcal V}\); their intersection is contained in \(\operatorname{Agr}(x,z)\), and upward closure gives \(x\sim z\).

For the converse, choose \(R,R'\in\mathcal U_{\mathcal V}\) with \(R\cap R'\notin\mathcal U_{\mathcal V}\). Partition
\[
A=R\cap R',\quad B=R\setminus R',\quad C=R'\setminus R,\quad D=[d]\setminus(R\cup R').
\]
Using two symbols \(0,1\), define states by
\[
\begin{array}{c|cccc}
 & A & B & C & D\\ \hline
x & 0 & 0 & 1 & 0\\
y & 0 & 0 & 0 & 0\\
z & 0 & 1 & 0 & 1 .
\end{array}
\]
Then \(\operatorname{Agr}(x,y)\supseteq R\), \(\operatorname{Agr}(y,z)=R'\), and \(\operatorname{Agr}(x,z)=R\cap R'\notin\mathcal U_{\mathcal V}\). Thus \(x\sim y\), \(y\sim z\), and \(x\not\sim z\). The sets \(B\) and \(C\) are nonempty, since otherwise \(R\cap R'\) would equal one of \(R,R'\), so the three states are distinct.

In the transitive case, a finite graph with transitive adjacency is a disjoint union of cliques. One representative from each connected component is independent, and the component partition colors the complement. The standard sandwich \(\alpha(G)\le \vartheta(G)\le\chi(\overline G)\), together with multiplicativity of \(\vartheta\), gives the displayed capacity formula.
\end{proof}

\leanmetapending{\LHrng{CMP}{1}{5}}
The nontrivial-alphabet hypothesis is needed only for the obstruction triple. Unary coordinates are constant and can be deleted before applying the theorem. For an explicit bit-vector view list, meet-witnessing is checked by enumerating the \(L^2\) ordered view pairs and scanning the \(L\) candidate witness views for containment in each intersection. The direct implementation runs in \(O(L^3d)\) time, after an optional \(O(L^2d)\) cofinal-antichain normalization.

\leanmetapending{\LHrng{MFT}{89}{91}}
The fiber-coherent subclass gives the same cluster formula in transcript language: if equality in one admissible view forces equality of the full observation transcript, then the connected components are realized transcript fibers and \(\Theta(G_{\mathcal V})=\vartheta_\infty(G_{\mathcal V})=\log |\mathcal Y_{\mathrm{real}}|\).

\paragraph{Restricted coordinate-subset variants}
Restricted state sets require a local criterion on realized triples. For \(X\subseteq A_1\times\cdots\times A_d\), transitivity is equivalent to
\[
\begin{aligned}
&\forall x,y,z\in X,\ \forall S,S'\in\mathcal V,\\
&\qquad
\pi_S(x)=\pi_S(y)\ \wedge\ \pi_{S'}(y)=\pi_{S'}(z)\\
&\qquad\Longrightarrow
\left(x=z\ \vee\ \exists W\in\mathcal V,\ \pi_W(x)=\pi_W(z)\right).
\end{aligned}
\]
For explicit \(X\), this condition is checked by enumerating realized triples and view pairs. For succinct restricted families such as affine subspaces, direct enumeration is polynomial in \(|X|\) and can be exponential in the affine rank. The \(O(L^3d)\) meet-witnessing search above is therefore the polynomial-time full-tuple criterion, not a restricted-affine classifier.

\leanmetapending{\LHrng{AKS}{22}{31}}
\begin{proposition}[Affine kernel-sum transitivity criterion]\label{prop:affine-kernel-sum-transitivity}
Let \(X=A=a_0+V\) be an affine restricted state family, and write \(K_S=\ker(\pi_S|_V)\) for each admissible view \(S\). The affine confusability relation is transitive if and only if
\[
\forall S,S'\in\mathcal V,\qquad
K_S+K_{S'}\subseteq \bigcup_{W\in\mathcal V}K_W.
\]
If the view list is nonempty and \(q\ge L=|\mathcal V|\), this condition is equivalent to
\[
\forall S,S'\in\mathcal V,\quad
\exists W\in\mathcal V\text{ such that }K_S+K_{S'}\subseteq K_W.
\]
In the regime \(q\ge L\), affine transitivity is decidable by linear-algebraic containment tests on kernel sums.
\end{proposition}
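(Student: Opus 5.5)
The plan is to reduce transitivity to a statement about difference vectors, and then use Lemma~\ref{lem:finite-union-subspace-avoidance} to turn a union containment into a single containment.

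\emph{Step 1: confusability as a difference condition.} For \(x,y\in A\), write \(x-y\in V\). Then \(\pi_S(x)=\pi_S(y)\) holds exactly when \(x-y\in K_S\). So distinct \(x,y\) are confusable iff \(x-y\in K:=\bigcup_{W\in\mathcal V}K_W\). Because the relation depends only on differences, transitivity of the relation on \(A\) is equivalent to the following statement: whenever \(u,v\in K\) are nonzero and \(u+v\ne0\), we have \(u+v\in K\). Here \(u=x-y\), \(v=y-z\), and \(x\) ranges over \(A\), so every pair \(u,v\) is realized.

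\emph{Step 2: the kernel-sum criterion.} Suppose \(K_S+K_{S'}\subseteq K\) for all \(S,S'\in\mathcal V\). Any \(u,v\in K\) lie in some \(K_S\) and \(K_{S'}\), so \(u+v\in K_S+K_{S'}\subseteq K\). This gives transitivity.

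For the converse, assume transitivity and take \(w=u+v\) with \(u\in K_S\), \(v\in K_{S'}\).
\begin{itemize}
\item If \(u=0\) or \(v=0\), then \(w\) already lies in a single kernel.
\item If \(w=0\), then \(w\in K_S\).
\item Otherwise, set \(x=a_0+u+v\), \(y=a_0+v\), \(z=a_0\). Then \(x\sim y\) and \(y\sim z\) with \(x\ne z\), and transitivity gives \(w\in K\).
\end{itemize}
Since the view list is nonempty, \(0\in K\), so the degenerate cases cause no trouble.

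\emph{Step 3: the \(q\ge L\) refinement.} One direction is trivial: if \(K_S+K_{S'}\subseteq K_W\) for some \(W\), then \(K_S+K_{S'}\subseteq K\). For the other direction, fix \(S,S'\) and put \(E=K_S+K_{S'}\). The containment \(E\subseteq\bigcup_W K_W\) says that \(E=\bigcup_W (E\cap K_W)\), a union of at most \(L\le q\) subspaces of \(E\). Lemma~\ref{lem:finite-union-subspace-avoidance} applied in \(E\) says such a union cannot cover \(E\) if all the pieces are proper. So some \(E\cap K_W\) equals \(E\), which means \(E\subseteq K_W\).

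\emph{Step 4: decidability.} For each of the \(L^2\) pairs \((S,S')\) and each of the \(L\) candidates \(W\), compute bases of \(K_S\), \(K_{S'}\) and \(K_W\) by Gaussian elimination, and test whether the span of \(K_S\cup K_{S'}\) lies in \(K_W\). Each test checks that \(\pi_W\) vanishes on the basis vectors. This is polynomial in the explicit presentation.

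The main obstacle is Step 3: making sure the covering lemma is applied inside the subspace \(E\) rather than inside \(V\). Restricting to \(E\) is harmless because the lemma holds for any vector space over \(\mathbb F_q\), and the number of pieces is still at most \(L\le q\).
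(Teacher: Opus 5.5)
Your proof is correct and follows the paper's argument. Both reduce transitivity to the difference condition $x-z=u+v$ with $u\in K_S$ and $v\in K_{S'}$, realize every such sum by choosing base points in $A$, and then apply Lemma~\ref{lem:finite-union-subspace-avoidance} inside $E=K_S+K_{S'}$ to the pieces $E\cap K_W$. Your explicit treatment of the cases $u=0$, $v=0$ and $u+v=0$ is a more careful version of the paper's escape clause $x=z$. The nonemptiness remark in Step~2 is not needed for the first equivalence, since an empty view list makes both sides trivially true.
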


\begin{proof}
The local-composition condition has a direction-space form. Given \(x,y,z\in A\), the equalities \(\pi_S(x)=\pi_S(y)\) and \(\pi_{S'}(y)=\pi_{S'}(z)\) mean
\[
u:=x-y\in K_S,\qquad v:=y-z\in K_{S'}.
\]
The endpoint difference is \(x-z=u+v\). Thus affine transitivity is equivalent to every vector in \(K_S+K_{S'}\) lying in some admissible endpoint kernel \(K_W\). Conversely, every vector in \(K_S+K_{S'}\) arises as such an endpoint difference by choosing any \(y\in A\) and setting \(x=y+u\), \(z=y-v\).

The witness condition implies the union containment immediately. Conversely, assume \(q\ge L\) and the union containment. If \(K_S+K_{S'}\) is covered by the \(L\) subspaces \((K_S+K_{S'})\cap K_W\), Lemma~\ref{lem:finite-union-subspace-avoidance} forces one intersection to equal \(K_S+K_{S'}\). Hence some \(W\) satisfies \(K_S+K_{S'}\subseteq K_W\).
\end{proof}

Product subcubes reduce to full tuple spaces after deleting frozen coordinates, so meet-witnessing applies to the induced view family. Affine subspaces and product codes are subtler because missing triples are governed by realized codeword differences. The family \(C_q=\{(a,b,a+b):a,b\in\mathbb F_q\}\) with singleton views separates the abstract matroid from realized transitivity: over \(\mathbb F_2\) the graph is complete, while over \(\mathbb F_3\) the states \((0,0,0),(0,1,1),(1,1,2)\) form a non-transitive triple. The same view antichain and abstract coordinate matroid therefore do not determine restricted-state transitivity.

\section{Related Work}\label{sec:related}

\subsection{Algorithmic Tractability, Matroids, and Graph Capacity}\label{sec:related-algorithmic}

Representable matroids provide polynomial-time rank oracles through linear algebra and form a standard tractable class in combinatorial optimization~\cite{oxley2011matroid,welsh1976matroid,recski1989matroid,schrijver2003combinatorial}. The affine specialization uses exactly that tractable structure: an explicit generator matrix for the state family turns each coordinate-view certificate into a restricted matrix-rank computation. The resulting rank bound is a linear-algebraic upper certificate for graph quantities that are hard in general, and it becomes an exact Shannon-capacity formula on the kernel-section and field-size subclasses of Theorems~\ref{thm:kernel-section-exactness}--\ref{thm:field-size-exactness}. The certificate is computed on the coordinate presentation before enumerating the potentially exponential state space or the induced confusability graph. In the dense finite-field model, the rank computations are polynomial in the bit-size of the generator matrix and the explicit view list; for fixed field size, the operation count is the Gaussian-elimination bound stated in Section~\ref{sec:affine}.

Subspace arrangements enter in two distinct ways. Rank-one kernel avoidance is the classical complement-count problem for a central arrangement, with the characteristic polynomial evaluated at the field size~\cite{orlik1992arrangements}. Higher-rank avoidance replaces vector complements by Grassmannian complements. The count \(N_t\) is determined by the signed rank profile of the projectivized forbidden-point matroid, not by the kernel-intersection lattice alone. Jamison and Brouwer--Schrijver supply the finite-field blocking bound used for the \(q\ge L\) exactness theorem~\cite{jamison1977covering,brouwer1978blocking}; the capacity and minrank consequences are obtained after that blocking bound is applied to coordinate-view kernels. The Crapo--Rota critical problem supplies the quotient complement question~\cite{crapo1970combinatorial,kung1996critical}; the fixed-shell result concerns physical kernel-intersection profiles that fail to determine that quotient target. The signed-profile theorem gives the finer counted invariant needed below the blocking threshold.

The support-profile criterion is also distinct from ordinary generalized Hamming weights~\cite{wei1991generalized}. Generalized Hamming weights minimize the union support of a subcode. The arrangement-relative profile maximizes a subcode's minimum hit against a prescribed view hypergraph. For \(V=\mathbb F_q^3\) with all two-coordinate views, the first generalized Hamming weight is \(1\), while the arrangement-relative one-dimensional profile is \(2\).

Exact independence-number computation is NP-hard~\cite{karp1972reducibility}, and the computational complexity of exact Shannon capacity is open~\cite{alon2006shannon,deboer2024asymptotic}. Lov\'asz's $\vartheta$ number gives a classical upper bound on Shannon capacity~\cite{lovasz1979shannon}, and the Gr\"otschel--Lov\'asz--Schrijver ellipsoid framework places semidefinite and convex relaxations at the center of algorithmic graph theory~\cite{grotschel1981ellipsoid}. For affine coordinate-view confusability graphs with an explicit linear presentation, the coordinate-matroid certificate is computed by Gaussian elimination rather than semidefinite optimization or graph-level search. The binary square has coincident rank and theta bounds. The binary four-coordinate threshold graph has a graph-level theta certificate that is strictly tighter than the rank bound after materialization.

Recent graph-capacity work reinforces that the surrounding problem remains active. Asymptotic-spectrum methods give new dual and limit viewpoints~\cite{zuiddam2019asymptotic,deboer2024asymptotic}; recent exact-capacity and bound computations include $q$-Kneser and tadpole graph families~\cite{lavi2026advances}; and Mycielski-type constructions continue to illuminate how Shannon capacity and Lov\'asz-type upper bounds behave under graph operations~\cite{csonka2024mycielski}. Quantum and noncommutative variants give a parallel motivation for keeping the confusability-graph interface explicit, from entanglement-assisted zero-error models~\cite{cubitt2014bounds} to covariant quantum confusability graphs~\cite{verdon2024covariant}.

Haemers-type bounds provide additional upper-bound technology for graph capacity~\cite{haemers1978upper}. Minrank theory optimizes matrices for a fixed graph. The affine coordinate-matroid bound constructs a specific feasible matrix from the generator matrix and one coordinate view before graph materialization. Section~\ref{sec:affine} proves the certificate relationship: every admissible affine view $S$ yields a projection-equality matrix of rank $q^{t(S)}$ that is feasible for Haemers minrank of the full materialized graph. Hence
\[
\operatorname{minrank}_{\mathbb K}(G)\le \min_{S\in\mathcal V} q^{t(S)}
\]
over any field $\mathbb K$ in which the displayed $0$--$1$ projection-equality matrix is interpreted. For a fixed graph, optimal Haemers minrank is bounded above by this coordinate-derived feasible matrix. On the kernel-section exactness class, Corollary~\ref{cor:haemers-optimal-kernel-section} identifies the same projection-equality matrix as the optimal Haemers minrank witness.

The four-coordinate dependent example from Section~\ref{sec:affine} makes the distinction concrete. The state family has $2^3$ states, and the view $\{1,4\}$ has coordinate rank $1$ because the first and fourth coordinate functionals coincide on the affine direction space. The matroid computation therefore gives a rank-$2$ Haemers-style certificate and a $\log 2$ capacity upper bound without searching over matrices on the eight graph vertices. In the binary-square and nested-view examples, the affine rank certificate, the complement-chromatic bound, and the exact capacity all give $\log 2$. When optimal Haemers minrank is lower than the coordinate certificate, that improvement is a graph-level refinement after materialization. The presentation-level result is the polynomial-time construction of a useful graph certificate from coordinate data, together with its matroid-rank interpretation.

\subsection{Zero-Error and Side-Information Lineage}\label{sec:related-source-coding}

Shannon's zero-error framework and its graph-theoretic refinements by K\"orner and Lov\'asz provide the zero-error graph lineage~\cite{shannon1956zero,korner1973graphs,lovasz1979shannon}. K\"orner--Orlitsky's zero-error synthesis is also a direct reference point for that lineage~\cite{korner2002zero}. Witsenhausen's zero-error side-information problem studies exact recovery under side information through graph coloring and label budgets~\cite{witsenhausen1976zero}. The one-shot coloring statements belong to the same lineage.

Slepian--Wolf coding and classical entropy converses form a second reference line~\cite{slepian1973noiseless,fano1961transmission,witsenhausen1975conditional,cover2006elements}. Coding-for-computing, functional compression, and graph-entropy or characteristic-graph viewpoints study deterministic decoder targets and coding questions once an underlying graph or function structure has been specified~\cite{orlitsky2001coding,alon2002source,doshi2010functional,korner1973graphs,simonyi2001graph}. In those formalisms, including Orlitsky--Roche characteristic graphs, the graph records which source symbols a decoder target fails to identify; after the graph is formed, the coordinate presentation need not survive. Here the admissible observations themselves are coordinate projections of a finite product state. The generated graph retains the view list's agreement-set invariant from Theorem~\ref{prop:agreement-family-complete-invariant}. Recent zero-error function-compression work continues the broader line of research~\cite{guang2023zero,liu2021leakage}.

Theorem~\ref{prop:agreement-family-complete-invariant} has a different target from the standard characteristic-graph theorems. Witsenhausen's coloring theorem and the K\"orner--Orlitsky zero-error reductions identify the coding problem once the confusability graph is fixed; Orlitsky--Roche characteristic graphs describe which source pairs a decoder target can merge. The coordinate-view theorem classifies when two coordinate-projection presentations induce the same labeled graph. The upward-closed agreement-family statement uses the full tuple space and nontrivial coordinate alphabets to prove the converse direction: every proper coordinate subset is realized as an agreement set of a state pair, so graph equality recovers the entire generated upward family. Recovery of the view-generated antichain from the labeled edge relation is the new presentation-level content.

The deterministic coordinate-view model starts from finite product spaces, with observations obtained by admissible coordinate-subset projections. In the exact full-tuple-space model, the generated confusability relations are precisely those determined by upward-closed families of coordinate-agreement sets. The labeled invariant is the upward-closed agreement family, the minimal presentation is its cofinal antichain, and the affine specialization attaches a representable coordinate matroid before graph materialization. Characteristic-graph formalisms handle arbitrary deterministic functions of the source; the coordinate-projection restriction additionally yields coordinatewise alphabet-permutation automorphisms, a monotone agreement-set graph class, and closure under block composition by strong product. The unlabeled-isomorphism classification and restricted-state analogues give natural next classification problems.

\begin{center}
\small
\begin{tabular}{p{0.24\linewidth}p{0.32\linewidth}p{0.34\linewidth}}
\toprule
Tool or formalism & Preserved structure & Question answered \\
\midrule
Zero-error coloring reductions & A fixed confusability graph & Exact recovery and label budgets after the graph is fixed \\
Orlitsky--Roche characteristic graphs & Merge constraints for a decoder target & Which source pairs a target function can identify \\
Lov\'asz-$\vartheta$ and Haemers minrank & Graph-level feasible relaxations & Upper bounds after graph materialization or symmetry reduction \\
Characteristic polynomials of arrangements & Intersection-lattice complement counts & Rank-one avoidance counts for vectors \\
Generalized Hamming weights & Union supports of linear subcodes & Minimum support size achievable by a subcode \\
Coordinate-view presentations & View list, agreement family, and affine coordinate matroid & Pre-materialization canonical presentation and rank certificates \\
\bottomrule
\end{tabular}
\end{center}

For a fixed confusability graph, the colorability, strong-power, Shannon-capacity, and Lov\'asz-$\vartheta$ machinery is classical. The discrete-structure additions are the upward-closed agreement-set characterization of the generated graph class, the affine coordinate-matroid upper certificates, and the intersection-closure route to cluster-graph equality.

\subsection{Coordinate-Generated Graph Classes}\label{sec:related-meta}

Coordinate-view confusability is a graph-generation mechanism. The product-space state family and view family determine a monotone agreement-set graph class, and the affine specialization equips the same coordinates with a representable matroid. The construction connects three standard discrete objects: finite graphs, strong graph products, and representable matroids.

On the full $q$-ary cube, the symmetric view families sit inside the Hamming association scheme. If all $(d-1)$-coordinate views are admissible, two states are adjacent exactly when they differ in one coordinate, giving the Hamming graph $H(d,q)$. Proposition~\ref{prop:all-k-view-hamming-threshold} identifies the general all-$k$ case with a Hamming-threshold graph and turns independence into the classical code-distance condition. At the other end, a nested view chain is governed by its smallest view and yields a cluster graph whose components are projection fibers, with complete multipartite complements. The class $\mathcal{CVG}$ contains Hamming-type threshold graphs and cluster examples, but general presentations are controlled by arbitrary upward-closed agreement families rather than by distance alone.


\section{Conclusion}\label{sec:conclusion}

Affine coordinate-view exactness is certified by a Grassmannian avoidance count. The coordinate presentation supplies a representable matroid and a rank upper certificate. A \(t^*\)-dimensional linear section avoiding the admissible view kernels makes that rank certificate exact, and \(N_{t^*}\) records exactly those linear sections. Positivity of \(N_k\) is NP-complete for every fixed \(k\), and the rank-one reduction is parsimonious for counting.

The compression theorem identifies the finite data read by \(N_t\). Inclusion--exclusion over forbidden projective points factors through the signed rank profile of the represented forbidden-point matroid. The kernel-intersection lattice does not determine that profile. Common-core lifts freeze the multiway kernel-intersection shell while the quotient forbidden-point configuration changes the signed profile and the \(N_t\) value. The size-indexed fixed-shell SAT spine strengthens the separation: for each fixed Boolean variable count, one fixed multiway shell carries SAT decision, satisfying-assignment counts, and the paired decision/count target.

The same boundary gives the capacity results. On the kernel-section class,
\[
\Theta(G)=\log\alpha(G)=\min_{S\in\mathcal V}t(S)\log q,
\qquad
\operatorname{minrank}_{\mathbb K}(G)=q^{t^*}.
\]
The projection-equality matrix is an optimal Haemers witness, and logarithmic capacity adds under direct-sum block products inside the class. The greedy field-size theorem gives exactness when \(q\ge |\mathcal V|\). The Reed--Solomon/MDS and Hamming-endpoint families give exact instances beyond that field-size guarantee.

The full-tuple structural theorem supplies the coordinate-view normal form. Confusability relations on full products are exactly upward-closed agreement-set families, and nonempty view lists over nontrivial alphabets recover their generated families from the labeled graph. Transitive confusability is exactly intersection closure of that agreement family, equivalently meet-witnessing of the view list; the transitive case collapses to cluster counting.

Three problems remain.
\begin{enumerate}
\item \emph{Unlabeled coordinate-view isomorphism.} Given two full-tuple presentations by cofinal antichains, characterize when their unlabeled confusability graphs are isomorphic. The labeled theorem reduces equality to the generated upward family. The open question is whether every nondegenerate isomorphism is induced by a coordinate permutation, coordinatewise alphabet relabelings, and agreement-set preservation, or whether additional state-space symmetries occur.
\item \emph{Restricted-state transitivity.} For affine subspaces or product codes \(X\subsetneq A_1\times\cdots\times A_d\), find a presentation-level criterion equivalent to local composition and computable in \(\operatorname{poly}(r,d,L,\log q)\) time. Such a criterion must distinguish examples with the same view antichain and abstract coordinate matroid, such as the family \(C_q=\{(a,b,a+b):a,b\in\mathbb F_q\}\).
\item \emph{Affine rank-sieve compression.} The signed rank profile of the projectivized forbidden-point matroid computes \(N_t\). Identify structural parameters of represented forbidden-point matroids that make the signed profile, and hence \(N_{t^*}\), computable faster than direct Grassmannian enumeration. Fixed-shell families where the kernel-intersection profile is constant but \(N_t\) varies give the obstruction side of this problem.
\end{enumerate}

\paragraph{Open problem: sharp kernel-section boundary}
Let \(A=a_0+V\subseteq\mathbb F_q^d\) be an affine presentation and let \(\mathcal V\) be the admissible view family. Choose a rank-minimizing view
\[
S\in\mathcal V,\qquad t(S)=\min_{W\in\mathcal V} t(W),
\]
and write
\[
K_{\mathcal V}:=\bigcup_{W\in\mathcal V}\left(\ker(\pi_W|_V)\setminus\{0\}\right).
\]
Which affine presentations make \(\pi_S:A\to\pi_S(A)\) admit a section \(\sigma:\pi_S(A)\to A\) whose difference set
\[
\Delta(\sigma):=\{\sigma(u)-\sigma(v):u\ne v\}
\]
satisfies
\[
\Delta(\sigma)\cap K_{\mathcal V}=\emptyset?
\]
Such a section is a transversal of the \(S\)-fibers whose selected representatives are pairwise nonconfusable. It gives an independent set of size \(q^{t(S)}\), and the rank certificate gives the matching upper bound. Repeating the independent set in strong powers yields \(\Theta=t(S)\log q\).

The linear-section version asks for a linear right inverse \(R:\pi_S(V)\to V\) with
\[
\operatorname{im}R\cap K_{\mathcal V}=\emptyset.
\]
Equivalently, \(V\) contains a complement to \(\ker(\pi_S|_V)\) whose nonzero vectors avoid every admissible view kernel. Corollary~\ref{cor:arrangement-support-profile} identifies the linear version with the arrangement-relative support condition \(\Delta_{t^*}^{\mathcal V}(V)\ge1\). Corollary~\ref{cor:subcode-distance-boundary} specializes the condition to \(\Delta_{t^*}(V)>d-s\) for all \(s\)-coordinate views. Theorem~\ref{thm:field-size-exactness} proves the support condition when \(q\ge|\mathcal V|\). The Hamming-endpoint family gives exactness below that field-size threshold. The finite \(\mathbb F_3^6\) nonlinear-transversal witness separates rank-tight transversals from linear sections. The binary four-coordinate threshold example shows that small fields can fail to meet the rank bound once the view-kernel arrangement passes the field-size boundary.

\section*{Artifact Availability}

The Lean~4 formalization, supplementary proof ledger, build scripts, compiled PDF, and manuscript source artifacts are archived on Zenodo at \url{https://doi.org/10.5281/zenodo.20561409}.

\section*{Declaration of generative AI and AI-assisted technologies in the writing process}

Generative AI tools, including Codex, Claude, GLM, and Kimi, were used throughout this manuscript, across the abstract, introduction, theoretical development, proof sketches, related work, conclusion, supplementary material, and revision process. The tools were used for boilerplate generation, prose and notation refinement, \LaTeX{} and structure cleanup, Lean development, Lean/\LaTeX{} translation of informal proof ideas, and repeated reviewer-style critique passes to identify blind spots and clarity gaps.

Problem selection, theorem statements, assumptions, novelty framing, acceptance criteria, and final inclusion or exclusion decisions were made by the author. No technical claim was accepted solely from AI output. Formal claims reported as machine-verified were included only after Lean verification and direct author review. The author assumes responsibility for all content.

\section*{Declaration of competing interest}

The author declares no competing interests.

\bibliographystyle{siamplain}
\bibliography{references}

\end{document}


\maketitle

\section{Cited Lean Handle Ledger}

The ledger below lists the Lean handles cited by the manuscript, together with their declaration names and source modules.
To verify a cited handle such as \texttt{AFM1}, unpack the supplementary Lean artifact.
In the proof-project directory, run \texttt{lake build}, then inspect the declaration and source file listed in the ledger below.

\IfFileExists{content/lean_handle_ids_auto.tex}{%
\begingroup
\scriptsize
\setlength{\tabcolsep}{4pt}
\renewcommand{\arraystretch}{1.12}
\setlength{\LTpre}{2pt}
\setlength{\LTpost}{2pt}
\setlength{\emergencystretch}{3em}
\sloppy
\urlstyle{tt}
\makeatletter
\if@twocolumn
\begin{list}{}{\leftmargin=0pt\itemindent=0pt\itemsep=4pt\parsep=0pt\topsep=4pt}
\item \textbf{\nolinkurl{ABD1}}\hypertarget{lh:ABD1}{}\enspace{\ttfamily\nolinkurl{ArrangementBoundary.binarySquare_avoidingVectors_card}} {\tiny\ttfamily Ssot/\allowbreak ArrangementBoundary.lean}
\item \textbf{\nolinkurl{ABD2}}\hypertarget{lh:ABD2}{}\enspace{\ttfamily\nolinkurl{ArrangementBoundary.binarySquare_avoidingVectors_card_eq_inclusionExclusion}} {\tiny\ttfamily Ssot/\allowbreak ArrangementBoundary.lean}
\item \textbf{\nolinkurl{ABD3}}\hypertarget{lh:ABD3}{}\enspace{\ttfamily\nolinkurl{ArrangementBoundary.avoidingSpanSystem_exchange_failure}} {\tiny\ttfamily Ssot/\allowbreak ArrangementBoundary.lean}
\item \textbf{\nolinkurl{ABD4}}\hypertarget{lh:ABD4}{}\enspace{\ttfamily\nolinkurl{ArrangementBoundary.allTwoViewF2Four_avoidingLines_card}} {\tiny\ttfamily Ssot/\allowbreak ArrangementBoundary.lean}
\item \textbf{\nolinkurl{ABD5}}\hypertarget{lh:ABD5}{}\enspace{\ttfamily\nolinkurl{ArrangementBoundary.allTwoViewF2Four_avoidingPlanes_card}} {\tiny\ttfamily Ssot/\allowbreak ArrangementBoundary.lean}
\item \textbf{\nolinkurl{ABD6}}\hypertarget{lh:ABD6}{}\enspace{\ttfamily\nolinkurl{ArrangementBoundary.f3Three_collinearFourLines_avoidingPlanes_card}} {\tiny\ttfamily Ssot/\allowbreak ArrangementBoundary.lean}
\item \textbf{\nolinkurl{ABD7}}\hypertarget{lh:ABD7}{}\enspace{\ttfamily\nolinkurl{ArrangementBoundary.f3Three_generalFourLines_avoidingPlanes_card}} {\tiny\ttfamily Ssot/\allowbreak ArrangementBoundary.lean}
\item \textbf{\nolinkurl{ABD8}}\hypertarget{lh:ABD8}{}\enspace{\ttfamily\nolinkurl{ArrangementBoundary.f3Three_fourLineArrangements_samePairwiseProfile}} {\tiny\ttfamily Ssot/\allowbreak ArrangementBoundary.lean}
\item \textbf{\nolinkurl{ABD9}}\hypertarget{lh:ABD9}{}\enspace{\ttfamily\nolinkurl{ArrangementBoundary.alternating_powerset_sum}} {\tiny\ttfamily Ssot/\allowbreak ArrangementBoundary.lean}
\item \textbf{\nolinkurl{ABD10}}\hypertarget{lh:ABD10}{}\enspace{\ttfamily\nolinkurl{ArrangementBoundary.finite_forbidden_point_sieve}} {\tiny\ttfamily Ssot/\allowbreak ArrangementBoundary.lean}
\item \textbf{\nolinkurl{ABD11}}\hypertarget{lh:ABD11}{}\enspace{\ttfamily\nolinkurl{ArrangementBoundary.f3Three_fourLineArrangements_sameFullIntersectionProfile}} {\tiny\ttfamily Ssot/\allowbreak ArrangementBoundary.lean}
\item \textbf{\nolinkurl{ABD12}}\hypertarget{lh:ABD12}{}\enspace{\ttfamily\nolinkurl{ArrangementBoundary.f3Three_fourLineArrangements_sameIntersectionProfile_differentPlaneCounts}} {\tiny\ttfamily Ssot/\allowbreak ArrangementBoundary.lean}
\item \textbf{\nolinkurl{ABD21}}\hypertarget{lh:ABD21}{}\enspace{\ttfamily\nolinkurl{ArrangementBoundary.rankSieve}} {\tiny\ttfamily Ssot/\allowbreak ArrangementBoundary.lean}
\item \textbf{\nolinkurl{ABD22}}\hypertarget{lh:ABD22}{}\enspace{\ttfamily\nolinkurl{ArrangementBoundary.finite_forbidden_point_sieve_by_rank}} {\tiny\ttfamily Ssot/\allowbreak ArrangementBoundary.lean}
\item \textbf{\nolinkurl{ABD23}}\hypertarget{lh:ABD23}{}\enspace{\ttfamily\nolinkurl{ArrangementBoundary.rankSieve_eq_of_same_forbiddenPointMatroid}} {\tiny\ttfamily Ssot/\allowbreak ArrangementBoundary.lean}
\item \textbf{\nolinkurl{ABD24}}\hypertarget{lh:ABD24}{}\enspace{\ttfamily\nolinkurl{ArrangementBoundary.f3Three_fourLineArrangements_differentForbiddenPointMatroids}} {\tiny\ttfamily Ssot/\allowbreak ArrangementBoundary.lean}
\item \textbf{\nolinkurl{ABD25}}\hypertarget{lh:ABD25}{}\enspace{\ttfamily\nolinkurl{ArrangementBoundary.f3Three_sameIntersectionProfile_differentForbiddenPointMatroids}} {\tiny\ttfamily Ssot/\allowbreak ArrangementBoundary.lean}
\item \textbf{\nolinkurl{ABD26}}\hypertarget{lh:ABD26}{}\enspace{\ttfamily\nolinkurl{ArrangementBoundary.f3CommonCore_sameIntersectionProfile}} {\tiny\ttfamily Ssot/\allowbreak ArrangementBoundary.lean}
\item \textbf{\nolinkurl{ABD27}}\hypertarget{lh:ABD27}{}\enspace{\ttfamily\nolinkurl{ArrangementBoundary.f3CommonCore_sameIntersectionProfile_differentQuotientPointMatroids}} {\tiny\ttfamily Ssot/\allowbreak ArrangementBoundary.lean}
\item \textbf{\nolinkurl{ABD28}}\hypertarget{lh:ABD28}{}\enspace{\ttfamily\nolinkurl{ArrangementBoundary.commonCore_multiIntersection_eq_core}} {\tiny\ttfamily Ssot/\allowbreak ArrangementBoundary.lean}
\item \textbf{\nolinkurl{ABD29}}\hypertarget{lh:ABD29}{}\enspace{\ttfamily\nolinkurl{ArrangementBoundary.commonCore_multiIntersection_eq_between_families}} {\tiny\ttfamily Ssot/\allowbreak ArrangementBoundary.lean}
\item \textbf{\nolinkurl{ABD30}}\hypertarget{lh:ABD30}{}\enspace{\ttfamily\nolinkurl{ArrangementBoundary.productCore}} {\tiny\ttfamily Ssot/\allowbreak ArrangementBoundary.lean}
\item \textbf{\nolinkurl{ABD31}}\hypertarget{lh:ABD31}{}\enspace{\ttfamily\nolinkurl{ArrangementBoundary.productLift}} {\tiny\ttfamily Ssot/\allowbreak ArrangementBoundary.lean}
\item \textbf{\nolinkurl{ABD32}}\hypertarget{lh:ABD32}{}\enspace{\ttfamily\nolinkurl{ArrangementBoundary.productLift_multiIntersection_eq_productCore}} {\tiny\ttfamily Ssot/\allowbreak ArrangementBoundary.lean}
\item \textbf{\nolinkurl{ABD33}}\hypertarget{lh:ABD33}{}\enspace{\ttfamily\nolinkurl{ArrangementBoundary.productLift_multiIntersection_eq_between_quotient_families}} {\tiny\ttfamily Ssot/\allowbreak ArrangementBoundary.lean}
\item \textbf{\nolinkurl{ABD34}}\hypertarget{lh:ABD34}{}\enspace{\ttfamily\nolinkurl{ArrangementBoundary.graphOver_inter_productLift_eq}} {\tiny\ttfamily Ssot/\allowbreak ArrangementBoundary.lean}
\item \textbf{\nolinkurl{ABD35}}\hypertarget{lh:ABD35}{}\enspace{\ttfamily\nolinkurl{ArrangementBoundary.graphOver_inter_productLift_eq_graphOver_iff}} {\tiny\ttfamily Ssot/\allowbreak ArrangementBoundary.lean}
\item \textbf{\nolinkurl{ABD36}}\hypertarget{lh:ABD36}{}\enspace{\ttfamily\nolinkurl{ArrangementBoundary.quotientAvoidsFamily}} {\tiny\ttfamily Ssot/\allowbreak ArrangementBoundary.lean}
\item \textbf{\nolinkurl{ABD37}}\hypertarget{lh:ABD37}{}\enspace{\ttfamily\nolinkurl{ArrangementBoundary.graphOverAvoidsLiftedFamily}} {\tiny\ttfamily Ssot/\allowbreak ArrangementBoundary.lean}
\item \textbf{\nolinkurl{ABD38}}\hypertarget{lh:ABD38}{}\enspace{\ttfamily\nolinkurl{ArrangementBoundary.graphOverAvoidsLiftedFamily_iff_quotientAvoidsFamily}} {\tiny\ttfamily Ssot/\allowbreak ArrangementBoundary.lean}
\item \textbf{\nolinkurl{ABD39}}\hypertarget{lh:ABD39}{}\enspace{\ttfamily\nolinkurl{ArrangementBoundary.coreValuedOn}} {\tiny\ttfamily Ssot/\allowbreak ArrangementBoundary.lean}
\item \textbf{\nolinkurl{ABD40}}\hypertarget{lh:ABD40}{}\enspace{\ttfamily\nolinkurl{ArrangementBoundary.quotientAvoidancePositive}} {\tiny\ttfamily Ssot/\allowbreak ArrangementBoundary.lean}
\item \textbf{\nolinkurl{ABD41}}\hypertarget{lh:ABD41}{}\enspace{\ttfamily\nolinkurl{ArrangementBoundary.liftedGraphSectionPositive}} {\tiny\ttfamily Ssot/\allowbreak ArrangementBoundary.lean}
\item \textbf{\nolinkurl{ABD42}}\hypertarget{lh:ABD42}{}\enspace{\ttfamily\nolinkurl{ArrangementBoundary.liftedGraphSectionPositive_iff_quotientAvoidancePositive}} {\tiny\ttfamily Ssot/\allowbreak ArrangementBoundary.lean}
\item \textbf{\nolinkurl{ABD43}}\hypertarget{lh:ABD43}{}\enspace{\ttfamily\nolinkurl{ArrangementBoundary.quotientAvoidsZeroAndFamily}} {\tiny\ttfamily Ssot/\allowbreak ArrangementBoundary.lean}
\item \textbf{\nolinkurl{ABD44}}\hypertarget{lh:ABD44}{}\enspace{\ttfamily\nolinkurl{ArrangementBoundary.graphOverAvoidsVerticalAndLiftedFamily}} {\tiny\ttfamily Ssot/\allowbreak ArrangementBoundary.lean}
\item \textbf{\nolinkurl{ABD45}}\hypertarget{lh:ABD45}{}\enspace{\ttfamily\nolinkurl{ArrangementBoundary.graphOver_inter_productCore_eq_graphOver_iff}} {\tiny\ttfamily Ssot/\allowbreak ArrangementBoundary.lean}
\item \textbf{\nolinkurl{ABD46}}\hypertarget{lh:ABD46}{}\enspace{\ttfamily\nolinkurl{ArrangementBoundary.graphOverAvoidsVerticalAndLiftedFamily_iff_quotientAvoidsZeroAndFamily}} {\tiny\ttfamily Ssot/\allowbreak ArrangementBoundary.lean}
\item \textbf{\nolinkurl{ABD47}}\hypertarget{lh:ABD47}{}\enspace{\ttfamily\nolinkurl{ArrangementBoundary.quotientZeroAndFamilyPositive}} {\tiny\ttfamily Ssot/\allowbreak ArrangementBoundary.lean}
\item \textbf{\nolinkurl{ABD48}}\hypertarget{lh:ABD48}{}\enspace{\ttfamily\nolinkurl{ArrangementBoundary.liftedGraphSectionVerticalAndFamilyPositive}} {\tiny\ttfamily Ssot/\allowbreak ArrangementBoundary.lean}
\item \textbf{\nolinkurl{ABD49}}\hypertarget{lh:ABD49}{}\enspace{\ttfamily\nolinkurl{ArrangementBoundary.liftedGraphSectionVerticalAndFamilyPositive_iff_quotientZeroAndFamilyPositive}} {\tiny\ttfamily Ssot/\allowbreak ArrangementBoundary.lean}
\item \textbf{\nolinkurl{ABD50}}\hypertarget{lh:ABD50}{}\enspace{\ttfamily\nolinkurl{ArrangementBoundary.QuotientSearch}} {\tiny\ttfamily Ssot/\allowbreak ArrangementBoundary.lean}
\item \textbf{\nolinkurl{ABD51}}\hypertarget{lh:ABD51}{}\enspace{\ttfamily\nolinkurl{ArrangementBoundary.LiftedSearch}} {\tiny\ttfamily Ssot/\allowbreak ArrangementBoundary.lean}
\item \textbf{\nolinkurl{ABD52}}\hypertarget{lh:ABD52}{}\enspace{\ttfamily\nolinkurl{ArrangementBoundary.QuotientSearch.commonCoreLift}} {\tiny\ttfamily Ssot/\allowbreak ArrangementBoundary.lean}
\item \textbf{\nolinkurl{ABD53}}\hypertarget{lh:ABD53}{}\enspace{\ttfamily\nolinkurl{ArrangementBoundary.commonCoreLift_positive_iff_quotientPositive}} {\tiny\ttfamily Ssot/\allowbreak ArrangementBoundary.lean}
\item \textbf{\nolinkurl{ABD54}}\hypertarget{lh:ABD54}{}\enspace{\ttfamily\nolinkurl{ArrangementBoundary.avoiding_count_eq_of_same_forbiddenPointMatroid}} {\tiny\ttfamily Ssot/\allowbreak ArrangementBoundary.lean}
\item \textbf{\nolinkurl{ABD55}}\hypertarget{lh:ABD55}{}\enspace{\ttfamily\nolinkurl{ArrangementBoundary.avoiding_count_positive_iff_of_same_forbiddenPointMatroid}} {\tiny\ttfamily Ssot/\allowbreak ArrangementBoundary.lean}
\item \textbf{\nolinkurl{ABD56}}\hypertarget{lh:ABD56}{}\enspace{\ttfamily\nolinkurl{ArrangementBoundary.finite_forbidden_point_sieve_by_rank_positive_iff}} {\tiny\ttfamily Ssot/\allowbreak ArrangementBoundary.lean}
\item \textbf{\nolinkurl{ABD57}}\hypertarget{lh:ABD57}{}\enspace{\ttfamily\nolinkurl{ArrangementBoundary.rankSieve_positive_iff_of_same_forbiddenPointMatroid}} {\tiny\ttfamily Ssot/\allowbreak ArrangementBoundary.lean}
\item \textbf{\nolinkurl{ABD58}}\hypertarget{lh:ABD58}{}\enspace{\ttfamily\nolinkurl{ArrangementBoundary.productLift_multiIntersectionShell_eq_between_quotient_families}} {\tiny\ttfamily Ssot/\allowbreak ArrangementBoundary.lean}
\item \textbf{\nolinkurl{ABD59}}\hypertarget{lh:ABD59}{}\enspace{\ttfamily\nolinkurl{ArrangementBoundary.exists_avoidsAllForbidden_of_incidence_card_bound}} {\tiny\ttfamily Ssot/\allowbreak ArrangementBoundary.lean}
\item \textbf{\nolinkurl{ABD60}}\hypertarget{lh:ABD60}{}\enspace{\ttfamily\nolinkurl{ArrangementBoundary.objects_card_le_points_mul_of_no_avoidsAllForbidden}} {\tiny\ttfamily Ssot/\allowbreak ArrangementBoundary.lean}
\item \textbf{\nolinkurl{ABD61}}\hypertarget{lh:ABD61}{}\enspace{\ttfamily\nolinkurl{ArrangementBoundary.f3ProjectivePoints_fourPointSubsets_card}} {\tiny\ttfamily Ssot/\allowbreak ArrangementBoundary.lean}
\item \textbf{\nolinkurl{ABD62}}\hypertarget{lh:ABD62}{}\enspace{\ttfamily\nolinkurl{ArrangementBoundary.f3FourPointSubsets_commonCore_pairwiseCore}} {\tiny\ttfamily Ssot/\allowbreak ArrangementBoundary.lean}
\item \textbf{\nolinkurl{ABD63}}\hypertarget{lh:ABD63}{}\enspace{\ttfamily\nolinkurl{ArrangementBoundary.f3FourPointSubsets_avoidingPlaneCount_distribution}} {\tiny\ttfamily Ssot/\allowbreak ArrangementBoundary.lean}
\item \textbf{\nolinkurl{ABD64}}\hypertarget{lh:ABD64}{}\enspace{\ttfamily\nolinkurl{ArrangementBoundary.f3FourPointWitnesses_fixedFullShell_realize_threeAvoidingPlaneCounts}} {\tiny\ttfamily Ssot/\allowbreak ArrangementBoundary.lean}
\item \textbf{\nolinkurl{ABD65}}\hypertarget{lh:ABD65}{}\enspace{\ttfamily\nolinkurl{ArrangementBoundary.f3FourPointWitnesses_fixedFullShell_threeForbiddenPointMatroidProfiles}} {\tiny\ttfamily Ssot/\allowbreak ArrangementBoundary.lean}
\item \textbf{\nolinkurl{ABD66}}\hypertarget{lh:ABD66}{}\enspace{\ttfamily\nolinkurl{ArrangementBoundary.f3ThreePointSubsets_avoidingPlaneCount_distribution}} {\tiny\ttfamily Ssot/\allowbreak ArrangementBoundary.lean}
\item \textbf{\nolinkurl{ABD67}}\hypertarget{lh:ABD67}{}\enspace{\ttfamily\nolinkurl{ArrangementBoundary.f3FivePointSubsets_avoidingPlaneCount_distribution}} {\tiny\ttfamily Ssot/\allowbreak ArrangementBoundary.lean}
\item \textbf{\nolinkurl{ABD68}}\hypertarget{lh:ABD68}{}\enspace{\ttfamily\nolinkurl{ArrangementBoundary.f3SixPointSubsets_avoidingPlaneCount_distribution}} {\tiny\ttfamily Ssot/\allowbreak ArrangementBoundary.lean}
\item \textbf{\nolinkurl{ABD69}}\hypertarget{lh:ABD69}{}\enspace{\ttfamily\nolinkurl{ArrangementBoundary.quotientZeroAndFamilyCandidates}} {\tiny\ttfamily Ssot/\allowbreak ArrangementBoundary.lean}
\item \textbf{\nolinkurl{ABD70}}\hypertarget{lh:ABD70}{}\enspace{\ttfamily\nolinkurl{ArrangementBoundary.liftedGraphSectionVerticalAndFamilyWitnesses}} {\tiny\ttfamily Ssot/\allowbreak ArrangementBoundary.lean}
\item \textbf{\nolinkurl{ABD71}}\hypertarget{lh:ABD71}{}\enspace{\ttfamily\nolinkurl{ArrangementBoundary.mem_liftedGraphSectionVerticalAndFamilyWitnesses}} {\tiny\ttfamily Ssot/\allowbreak ArrangementBoundary.lean}
\item \textbf{\nolinkurl{ABD72}}\hypertarget{lh:ABD72}{}\enspace{\ttfamily\nolinkurl{ArrangementBoundary.liftedGraphSectionVerticalAndFamilyWitnesses_card_eq_sum_maps_over_quotientCandidates}} {\tiny\ttfamily Ssot/\allowbreak ArrangementBoundary.lean}
\item \textbf{\nolinkurl{ABD73}}\hypertarget{lh:ABD73}{}\enspace{\ttfamily\nolinkurl{ArrangementBoundary.liftedGraphSectionVerticalAndFamilyWitnesses_card_eq_quotientCandidates_card_mul}} {\tiny\ttfamily Ssot/\allowbreak ArrangementBoundary.lean}
\item \textbf{\nolinkurl{ABD74}}\hypertarget{lh:ABD74}{}\enspace{\ttfamily\nolinkurl{ArrangementBoundary.QuotientSearch.positiveCandidates}} {\tiny\ttfamily Ssot/\allowbreak ArrangementBoundary.lean}
\item \textbf{\nolinkurl{ABD75}}\hypertarget{lh:ABD75}{}\enspace{\ttfamily\nolinkurl{ArrangementBoundary.LiftedSearch.witnesses}} {\tiny\ttfamily Ssot/\allowbreak ArrangementBoundary.lean}
\item \textbf{\nolinkurl{ABD76}}\hypertarget{lh:ABD76}{}\enspace{\ttfamily\nolinkurl{ArrangementBoundary.commonCoreLift_witnesses_card_eq_sum_maps_over_positiveCandidates}} {\tiny\ttfamily Ssot/\allowbreak ArrangementBoundary.lean}
\item \textbf{\nolinkurl{ABD77}}\hypertarget{lh:ABD77}{}\enspace{\ttfamily\nolinkurl{ArrangementBoundary.commonCoreLift_witnesses_card_eq_positiveCandidates_card_mul}} {\tiny\ttfamily Ssot/\allowbreak ArrangementBoundary.lean}
\item \textbf{\nolinkurl{ABD78}}\hypertarget{lh:ABD78}{}\enspace{\ttfamily\nolinkurl{ArrangementBoundary.f3FourPointWitnesses_fixedFullShell_threeCounts_and_threeForbiddenPointMatroidProfiles}} {\tiny\ttfamily Ssot/\allowbreak ArrangementBoundary.lean}
\item \textbf{\nolinkurl{ABD79}}\hypertarget{lh:ABD79}{}\enspace{\ttfamily\nolinkurl{ArrangementBoundary.signedRankProfile}} {\tiny\ttfamily Ssot/\allowbreak ArrangementBoundary.lean}
\item \textbf{\nolinkurl{ABD80}}\hypertarget{lh:ABD80}{}\enspace{\ttfamily\nolinkurl{ArrangementBoundary.rankSieve_eq_sum_signedRankProfile}} {\tiny\ttfamily Ssot/\allowbreak ArrangementBoundary.lean}
\item \textbf{\nolinkurl{ABD81}}\hypertarget{lh:ABD81}{}\enspace{\ttfamily\nolinkurl{ArrangementBoundary.signedRankProfile_eq_zero_of_not_mem_image}} {\tiny\ttfamily Ssot/\allowbreak ArrangementBoundary.lean}
\item \textbf{\nolinkurl{ABD82}}\hypertarget{lh:ABD82}{}\enspace{\ttfamily\nolinkurl{ArrangementBoundary.rankSieve_eq_of_same_signedRankProfile}} {\tiny\ttfamily Ssot/\allowbreak ArrangementBoundary.lean}
\item \textbf{\nolinkurl{ABD83}}\hypertarget{lh:ABD83}{}\enspace{\ttfamily\nolinkurl{ArrangementBoundary.rankSieve_positive_iff_of_same_signedRankProfile}} {\tiny\ttfamily Ssot/\allowbreak ArrangementBoundary.lean}
\item \textbf{\nolinkurl{ABD84}}\hypertarget{lh:ABD84}{}\enspace{\ttfamily\nolinkurl{ArrangementBoundary.exists_forbiddenPointMatroid_rank_ne_of_rankSieve_ne}} {\tiny\ttfamily Ssot/\allowbreak ArrangementBoundary.lean}
\item \textbf{\nolinkurl{ABD85}}\hypertarget{lh:ABD85}{}\enspace{\ttfamily\nolinkurl{ArrangementBoundary.exists_forbiddenPointMatroid_rank_ne_of_rankSieve_positive_ne}} {\tiny\ttfamily Ssot/\allowbreak ArrangementBoundary.lean}
\item \textbf{\nolinkurl{ABD86}}\hypertarget{lh:ABD86}{}\enspace{\ttfamily\nolinkurl{ArrangementBoundary.f3FourPoint_fixedFullShell_exists_differentAvoidingPlaneCount}} {\tiny\ttfamily Ssot/\allowbreak ArrangementBoundary.lean}
\item \textbf{\nolinkurl{ABD87}}\hypertarget{lh:ABD87}{}\enspace{\ttfamily\nolinkurl{ArrangementBoundary.f3FourPoint_fixedFullShell_not_avoidingPlaneCountInvariant}} {\tiny\ttfamily Ssot/\allowbreak ArrangementBoundary.lean}
\item \textbf{\nolinkurl{ABD88}}\hypertarget{lh:ABD88}{}\enspace{\ttfamily\nolinkurl{ArrangementBoundary.f3FourPoint_fixedFullShell_exists_differentForbiddenPointMatroidProfile}} {\tiny\ttfamily Ssot/\allowbreak ArrangementBoundary.lean}
\item \textbf{\nolinkurl{ABD89}}\hypertarget{lh:ABD89}{}\enspace{\ttfamily\nolinkurl{ArrangementBoundary.f3FourPoint_fixedFullShell_not_forbiddenPointMatroidProfileInvariant}} {\tiny\ttfamily Ssot/\allowbreak ArrangementBoundary.lean}
\item \textbf{\nolinkurl{ABD90}}\hypertarget{lh:ABD90}{}\enspace{\ttfamily\nolinkurl{ArrangementBoundary.signedRankProfileDecode}} {\tiny\ttfamily Ssot/\allowbreak ArrangementBoundary.lean}
\item \textbf{\nolinkurl{ABD91}}\hypertarget{lh:ABD91}{}\enspace{\ttfamily\nolinkurl{ArrangementBoundary.rankSieve_eq_signedRankProfileDecode_of_rank_le_card}} {\tiny\ttfamily Ssot/\allowbreak ArrangementBoundary.lean}
\item \textbf{\nolinkurl{AFM1}}\hypertarget{lh:AFM1}{}\enspace{\ttfamily\nolinkurl{FactMatroid.determinesFact_iff_mem_factSpan}} {\tiny\ttfamily Ssot/\allowbreak FactMatroid.lean}
\item \textbf{\nolinkurl{AFM2}}\hypertarget{lh:AFM2}{}\enspace{\ttfamily\nolinkurl{FactMatroid.factMatroid_indep_iff}} {\tiny\ttfamily Ssot/\allowbreak FactMatroid.lean}
\item \textbf{\nolinkurl{AFM3}}\hypertarget{lh:AFM3}{}\enspace{\ttfamily\nolinkurl{FactMatroid.basisFacts_card_eq_finrank}} {\tiny\ttfamily Ssot/\allowbreak FactMatroid.lean}
\item \textbf{\nolinkurl{AFM4}}\hypertarget{lh:AFM4}{}\enspace{\ttfamily\nolinkurl{FactMatroid.basisFacts_minimal}} {\tiny\ttfamily Ssot/\allowbreak FactMatroid.lean}
\item \textbf{\nolinkurl{AFM5}}\hypertarget{lh:AFM5}{}\enspace{\ttfamily\nolinkurl{FactMatroid.minimalDetermining_basis}} {\tiny\ttfamily Ssot/\allowbreak FactMatroid.lean}
\item \textbf{\nolinkurl{AFM6}}\hypertarget{lh:AFM6}{}\enspace{\ttfamily\nolinkurl{FactMatroid.factRankFinset_le_card}} {\tiny\ttfamily Ssot/\allowbreak FactMatroid.lean}
\item \textbf{\nolinkurl{AFM7}}\hypertarget{lh:AFM7}{}\enspace{\ttfamily\nolinkurl{FactMatroid.coordProjection_range_natCard_eq_pow_factRankFinset}} {\tiny\ttfamily Ssot/\allowbreak AffineCardinality.lean}
\item \textbf{\nolinkurl{AFM8}}\hypertarget{lh:AFM8}{}\enspace{\ttfamily\nolinkurl{FactMatroid.factRankFinset_eq_card_of_indepFacts}} {\tiny\ttfamily Ssot/\allowbreak FactMatroid.lean}
\item \textbf{\nolinkurl{AFM9}}\hypertarget{lh:AFM9}{}\enspace{\ttfamily\nolinkurl{FactMatroid.factRankFinset_eq_total_of_determinesAllFinset}} {\tiny\ttfamily Ssot/\allowbreak FactMatroid.lean}
\item \textbf{\nolinkurl{AFM10}}\hypertarget{lh:AFM10}{}\enspace{\ttfamily\nolinkurl{FactMatroid.factSpanFinset_eq_range_coordProjection_dualMap}} {\tiny\ttfamily Ssot/\allowbreak FactMatroid.lean}
\item \textbf{\nolinkurl{AFM11}}\hypertarget{lh:AFM11}{}\enspace{\ttfamily\nolinkurl{FactMatroid.factRankFinset_eq_finrank_range_coordProjection}} {\tiny\ttfamily Ssot/\allowbreak FactMatroid.lean}
\item \textbf{\nolinkurl{AFM12}}\hypertarget{lh:AFM12}{}\enspace{\ttfamily\nolinkurl{FactMatroid.factRankFinset_le_finrank_directionSpace}} {\tiny\ttfamily Ssot/\allowbreak FactMatroid.lean}
\item \textbf{\nolinkurl{AFM17}}\hypertarget{lh:AFM17}{}\enspace{\ttfamily\nolinkurl{FactMatroid.factRankFinset_product_disjSum}} {\tiny\ttfamily Ssot/\allowbreak FactMatroid.lean}
\item \textbf{\nolinkurl{AFM18}}\hypertarget{lh:AFM18}{}\enspace{\ttfamily\nolinkurl{FactMatroid.indepFacts_productFamily_iff}} {\tiny\ttfamily Ssot/\allowbreak FactMatroid.lean}
\item \textbf{\nolinkurl{AFM19}}\hypertarget{lh:AFM19}{}\enspace{\ttfamily\nolinkurl{FactMatroid.factMatroid_product_indep_iff}} {\tiny\ttfamily Ssot/\allowbreak FactMatroid.lean}
\item \textbf{\nolinkurl{AFM20}}\hypertarget{lh:AFM20}{}\enspace{\ttfamily\nolinkurl{FactMatroid.basisFacts_iff_minimalDetermining}} {\tiny\ttfamily Ssot/\allowbreak FactMatroid.lean}
\item \textbf{\nolinkurl{AFM21}}\hypertarget{lh:AFM21}{}\enspace{\ttfamily\nolinkurl{FactMatroid.minimalDetermining_card_eq_finrank}} {\tiny\ttfamily Ssot/\allowbreak FactMatroid.lean}
\item \textbf{\nolinkurl{AFM22}}\hypertarget{lh:AFM22}{}\enspace{\ttfamily\nolinkurl{FactMatroid.minimalDetermining_card_eq_factRank_univ}} {\tiny\ttfamily Ssot/\allowbreak FactMatroid.lean}
\item \textbf{\nolinkurl{AFM24}}\hypertarget{lh:AFM24}{}\enspace{\ttfamily\nolinkurl{FactMatroid.coordProjection_ker_finrank_eq_finrank_sub_factRankFinset}} {\tiny\ttfamily Ssot/\allowbreak AffineCardinality.lean}
\item \textbf{\nolinkurl{AKS1}}\hypertarget{lh:AKS1}{}\enspace{\ttfamily\nolinkurl{FactMatroid.KernelAvoidingSubspace}} {\tiny\ttfamily Ssot/\allowbreak AffineKernelSection.lean}
\item \textbf{\nolinkurl{AKS2}}\hypertarget{lh:AKS2}{}\enspace{\ttfamily\nolinkurl{FactMatroid.affineLinearSection_independent}} {\tiny\ttfamily Ssot/\allowbreak AffineKernelSection.lean}
\item \textbf{\nolinkurl{AKS3}}\hypertarget{lh:AKS3}{}\enspace{\ttfamily\nolinkurl{FactMatroid.affineLinearSectionFinset_card_eq_pow_finrank}} {\tiny\ttfamily Ssot/\allowbreak AffineKernelSection.lean}
\item \textbf{\nolinkurl{AKS4}}\hypertarget{lh:AKS4}{}\enspace{\ttfamily\nolinkurl{MultiFact.graphShannonCapacityReal_eq_log_of_graphIndependent_upper}} {\tiny\ttfamily Ssot/\allowbreak CapacityLower.lean}
\item \textbf{\nolinkurl{AKS5}}\hypertarget{lh:AKS5}{}\enspace{\ttfamily\nolinkurl{FactMatroid.affineLinearSection_capacity_exact_of_upper}} {\tiny\ttfamily Ssot/\allowbreak AffineKernelSection.lean}
\item \textbf{\nolinkurl{AKS6}}\hypertarget{lh:AKS6}{}\enspace{\ttfamily\nolinkurl{FactMatroid.exists_vector_not_mem_finite_union_of_proper_subspaces}} {\tiny\ttfamily Ssot/\allowbreak AffineKernelSection.lean}
\item \textbf{\nolinkurl{AKS7}}\hypertarget{lh:AKS7}{}\enspace{\ttfamily\nolinkurl{FactMatroid.exists_vector_not_mem_finset_union_of_proper_subspaces}} {\tiny\ttfamily Ssot/\allowbreak AffineKernelSection.lean}
\item \textbf{\nolinkurl{AKS8}}\hypertarget{lh:AKS8}{}\enspace{\ttfamily\nolinkurl{FactMatroid.exists_kernelAvoidingSubspace_strictExtension}} {\tiny\ttfamily Ssot/\allowbreak AffineKernelSection.lean}
\item \textbf{\nolinkurl{AKS9}}\hypertarget{lh:AKS9}{}\enspace{\ttfamily\nolinkurl{FactMatroid.exists_kernelAvoidingSubspace_of_large_field}} {\tiny\ttfamily Ssot/\allowbreak AffineKernelSection.lean}
\item \textbf{\nolinkurl{AKS10}}\hypertarget{lh:AKS10}{}\enspace{\ttfamily\nolinkurl{NonlinearSectionWitness.witnessPoints_card}} {\tiny\ttfamily Ssot/\allowbreak NonlinearSectionWitness.lean}
\item \textbf{\nolinkurl{AKS11}}\hypertarget{lh:AKS11}{}\enspace{\ttfamily\nolinkurl{NonlinearSectionWitness.diffSet_card}} {\tiny\ttfamily Ssot/\allowbreak NonlinearSectionWitness.lean}
\item \textbf{\nolinkurl{AKS12}}\hypertarget{lh:AKS12}{}\enspace{\ttfamily\nolinkurl{NonlinearSectionWitness.rankMinKernel_card}} {\tiny\ttfamily Ssot/\allowbreak NonlinearSectionWitness.lean}
\item \textbf{\nolinkurl{AKS13}}\hypertarget{lh:AKS13}{}\enspace{\ttfamily\nolinkurl{NonlinearSectionWitness.rankMinKernel_disjoint_diffSet}} {\tiny\ttfamily Ssot/\allowbreak NonlinearSectionWitness.lean}
\item \textbf{\nolinkurl{AKS14}}\hypertarget{lh:AKS14}{}\enspace{\ttfamily\nolinkurl{NonlinearSectionWitness.projectiveReps_card}} {\tiny\ttfamily Ssot/\allowbreak NonlinearSectionWitness.lean}
\item \textbf{\nolinkurl{AKS15}}\hypertarget{lh:AKS15}{}\enspace{\ttfamily\nolinkurl{NonlinearSectionWitness.diffDirections_card}} {\tiny\ttfamily Ssot/\allowbreak NonlinearSectionWitness.lean}
\item \textbf{\nolinkurl{AKS16}}\hypertarget{lh:AKS16}{}\enspace{\ttfamily\nolinkurl{NonlinearSectionWitness.no_projective_line_in_diffDirections}} {\tiny\ttfamily Ssot/\allowbreak NonlinearSectionWitness.lean}
\item \textbf{\nolinkurl{AKS17}}\hypertarget{lh:AKS17}{}\enspace{\ttfamily\nolinkurl{FactMatroid.exists_vector_not_mem_finite_union_of_proper_subspaces_of_card_le}} {\tiny\ttfamily Ssot/\allowbreak AffineKernelSection.lean}
\item \textbf{\nolinkurl{AKS18}}\hypertarget{lh:AKS18}{}\enspace{\ttfamily\nolinkurl{FactMatroid.exists_vector_not_mem_finset_union_of_proper_subspaces_of_card_le}} {\tiny\ttfamily Ssot/\allowbreak AffineKernelSection.lean}
\item \textbf{\nolinkurl{AKS19}}\hypertarget{lh:AKS19}{}\enspace{\ttfamily\nolinkurl{FactMatroid.diagonalSubspaceInTop_avoids_singleton}} {\tiny\ttfamily Ssot/\allowbreak AffineKernelSection.lean}
\item \textbf{\nolinkurl{AKS20}}\hypertarget{lh:AKS20}{}\enspace{\ttfamily\nolinkurl{FactMatroid.ViewSupportHittingSubspace}} {\tiny\ttfamily Ssot/\allowbreak AffineKernelSection.lean}
\item \textbf{\nolinkurl{AKS21}}\hypertarget{lh:AKS21}{}\enspace{\ttfamily\nolinkurl{FactMatroid.kernelAvoidingSubspace_of_viewSupportHitting}} {\tiny\ttfamily Ssot/\allowbreak AffineKernelSection.lean}
\item \textbf{\nolinkurl{AKS22}}\hypertarget{lh:AKS22}{}\enspace{\ttfamily\nolinkurl{FactMatroid.AffineViewConfusableTransitive}} {\tiny\ttfamily Ssot/\allowbreak AffineKernelSection.lean}
\item \textbf{\nolinkurl{AKS23}}\hypertarget{lh:AKS23}{}\enspace{\ttfamily\nolinkurl{FactMatroid.AffineKernelSumCovered}} {\tiny\ttfamily Ssot/\allowbreak AffineKernelSection.lean}
\item \textbf{\nolinkurl{AKS24}}\hypertarget{lh:AKS24}{}\enspace{\ttfamily\nolinkurl{FactMatroid.AffineKernelSumWitnessed}} {\tiny\ttfamily Ssot/\allowbreak AffineKernelSection.lean}
\item \textbf{\nolinkurl{AKS25}}\hypertarget{lh:AKS25}{}\enspace{\ttfamily\nolinkurl{FactMatroid.affineCoordProjection_eq_iff_sub_mem_ker}} {\tiny\ttfamily Ssot/\allowbreak AffineKernelSection.lean}
\item \textbf{\nolinkurl{AKS26}}\hypertarget{lh:AKS26}{}\enspace{\ttfamily\nolinkurl{FactMatroid.affineKernelSumWitnessed_kernelSumCovered}} {\tiny\ttfamily Ssot/\allowbreak AffineKernelSection.lean}
\item \textbf{\nolinkurl{AKS27}}\hypertarget{lh:AKS27}{}\enspace{\ttfamily\nolinkurl{FactMatroid.affineViewConfusableTransitive_of_kernelSumCovered}} {\tiny\ttfamily Ssot/\allowbreak AffineKernelSection.lean}
\item \textbf{\nolinkurl{AKS28}}\hypertarget{lh:AKS28}{}\enspace{\ttfamily\nolinkurl{FactMatroid.affineKernelSumCovered_of_viewConfusableTransitive}} {\tiny\ttfamily Ssot/\allowbreak AffineKernelSection.lean}
\item \textbf{\nolinkurl{AKS29}}\hypertarget{lh:AKS29}{}\enspace{\ttfamily\nolinkurl{FactMatroid.affineViewConfusableTransitive_iff_kernelSumCovered}} {\tiny\ttfamily Ssot/\allowbreak AffineKernelSection.lean}
\item \textbf{\nolinkurl{AKS30}}\hypertarget{lh:AKS30}{}\enspace{\ttfamily\nolinkurl{FactMatroid.affineKernelSumWitnessed_of_kernelSumCovered_of_card_le}} {\tiny\ttfamily Ssot/\allowbreak AffineKernelSection.lean}
\item \textbf{\nolinkurl{AKS31}}\hypertarget{lh:AKS31}{}\enspace{\ttfamily\nolinkurl{FactMatroid.affineViewConfusableTransitive_iff_kernelSumWitnessed_of_card_le}} {\tiny\ttfamily Ssot/\allowbreak AffineKernelSection.lean}
\item \textbf{\nolinkurl{AKS32}}\hypertarget{lh:AKS32}{}\enspace{\ttfamily\nolinkurl{FactMatroid.affineLinearSection_noShannonGap_of_upper}} {\tiny\ttfamily Ssot/\allowbreak AffineKernelSection.lean}
\item \textbf{\nolinkurl{AKS33}}\hypertarget{lh:AKS33}{}\enspace{\ttfamily\nolinkurl{MultiFact.graphShannonCapacityReal_eq_log_of_maxIndependentCard_upper}} {\tiny\ttfamily Ssot/\allowbreak CapacityLower.lean}
\item \textbf{\nolinkurl{AKS34}}\hypertarget{lh:AKS34}{}\enspace{\ttfamily\nolinkurl{FactMatroid.supportHitsAllCardViews_iff_compl_card_lt}} {\tiny\ttfamily Ssot/\allowbreak AffineKernelSection.lean}
\item \textbf{\nolinkurl{AKS35}}\hypertarget{lh:AKS35}{}\enspace{\ttfamily\nolinkurl{FactMatroid.exists_nonzero_left_component_of_finrank_right_lt}} {\tiny\ttfamily Ssot/\allowbreak AffineKernelSection.lean}
\item \textbf{\nolinkurl{AKS36}}\hypertarget{lh:AKS36}{}\enspace{\ttfamily\nolinkurl{FactMatroid.SupportHitsViews}} {\tiny\ttfamily Ssot/\allowbreak AffineKernelSection.lean}
\item \textbf{\nolinkurl{AKS37}}\hypertarget{lh:AKS37}{}\enspace{\ttfamily\nolinkurl{FactMatroid.viewSupportHittingSubspace_of_kernelAvoiding}} {\tiny\ttfamily Ssot/\allowbreak AffineKernelSection.lean}
\item \textbf{\nolinkurl{AKS38}}\hypertarget{lh:AKS38}{}\enspace{\ttfamily\nolinkurl{FactMatroid.kernelAvoidingSubspace_iff_viewSupportHittingSubspace}} {\tiny\ttfamily Ssot/\allowbreak AffineKernelSection.lean}
\item \textbf{\nolinkurl{CIA3}}\hypertarget{lh:CIA3}{}\enspace{\ttfamily\nolinkurl{FiniteConverse.finite_counting_converse}} {\tiny\ttfamily Ssot/\allowbreak FiniteConverse.lean}
\item \textbf{\nolinkurl{CMP1}}\hypertarget{lh:CMP1}{}\enspace{\ttfamily\nolinkurl{MultiFact.countedMeetWitnessedSearch_spec}} {\tiny\ttfamily Ssot/\allowbreak ComplexityBridge.lean}
\item \textbf{\nolinkurl{CMP2}}\hypertarget{lh:CMP2}{}\enspace{\ttfamily\nolinkurl{MultiFact.countedMeetWitnessedSearch_steps}} {\tiny\ttfamily Ssot/\allowbreak ComplexityBridge.lean}
\item \textbf{\nolinkurl{CMP3}}\hypertarget{lh:CMP3}{}\enspace{\ttfamily\nolinkurl{MultiFact.meetWitnessed_inP_explicit}} {\tiny\ttfamily Ssot/\allowbreak ComplexityBridge.lean}
\item \textbf{\nolinkurl{CMP4}}\hypertarget{lh:CMP4}{}\enspace{\ttfamily\nolinkurl{MultiFact.confusableTransitive_inP_explicit}} {\tiny\ttfamily Ssot/\allowbreak ComplexityBridge.lean}
\item \textbf{\nolinkurl{CMP5}}\hypertarget{lh:CMP5}{}\enspace{\ttfamily\nolinkurl{MultiFact.agreementIntersectionClosed_inP_explicit}} {\tiny\ttfamily Ssot/\allowbreak ComplexityBridge.lean}
\item \textbf{\nolinkurl{GCB3}}\hypertarget{lh:GCB3}{}\enspace{\ttfamily\nolinkurl{GraphicCritical.incidenceValue_edgeDirection}} {\tiny\ttfamily Ssot/\allowbreak GraphicCritical.lean}
\item \textbf{\nolinkurl{GCB4}}\hypertarget{lh:GCB4}{}\enspace{\ttfamily\nolinkurl{GraphicCritical.properFieldColoring_iff_avoidsGraphicEdgeDirections}} {\tiny\ttfamily Ssot/\allowbreak GraphicCritical.lean}
\item \textbf{\nolinkurl{GCB5}}\hypertarget{lh:GCB5}{}\enspace{\ttfamily\nolinkurl{GraphicCritical.exists_properFieldColoring_iff_exists_avoidingGraphicFunctional}} {\tiny\ttfamily Ssot/\allowbreak GraphicCritical.lean}
\item \textbf{\nolinkurl{GCB6}}\hypertarget{lh:GCB6}{}\enspace{\ttfamily\nolinkurl{GraphicCritical.incidenceLinearMap}} {\tiny\ttfamily Ssot/\allowbreak GraphicCritical.lean}
\item \textbf{\nolinkurl{GCB7}}\hypertarget{lh:GCB7}{}\enspace{\ttfamily\nolinkurl{GraphicCritical.incidenceKernel}} {\tiny\ttfamily Ssot/\allowbreak GraphicCritical.lean}
\item \textbf{\nolinkurl{GCB8}}\hypertarget{lh:GCB8}{}\enspace{\ttfamily\nolinkurl{GraphicCritical.edgeDirection_mem_incidenceKernel_iff}} {\tiny\ttfamily Ssot/\allowbreak GraphicCritical.lean}
\item \textbf{\nolinkurl{GCB9}}\hypertarget{lh:GCB9}{}\enspace{\ttfamily\nolinkurl{GraphicCritical.IncidenceKernelAvoidsEdges}} {\tiny\ttfamily Ssot/\allowbreak GraphicCritical.lean}
\item \textbf{\nolinkurl{GCB10}}\hypertarget{lh:GCB10}{}\enspace{\ttfamily\nolinkurl{GraphicCritical.properFieldColoring_iff_incidenceKernelAvoidsEdges}} {\tiny\ttfamily Ssot/\allowbreak GraphicCritical.lean}
\item \textbf{\nolinkurl{GCB11}}\hypertarget{lh:GCB11}{}\enspace{\ttfamily\nolinkurl{GraphicCritical.exists_properFieldColoring_iff_exists_incidenceKernelAvoidsEdges}} {\tiny\ttfamily Ssot/\allowbreak GraphicCritical.lean}
\item \textbf{\nolinkurl{GCB12}}\hypertarget{lh:GCB12}{}\enspace{\ttfamily\nolinkurl{GraphicCritical.graphicQuotientSearch}} {\tiny\ttfamily Ssot/\allowbreak GraphicCritical.lean}
\item \textbf{\nolinkurl{GCB13}}\hypertarget{lh:GCB13}{}\enspace{\ttfamily\nolinkurl{GraphicCritical.graphicQuotientSearch_candidate_iff_incidenceKernelAvoidsEdges}} {\tiny\ttfamily Ssot/\allowbreak GraphicCritical.lean}
\item \textbf{\nolinkurl{GCB14}}\hypertarget{lh:GCB14}{}\enspace{\ttfamily\nolinkurl{GraphicCritical.graphicQuotientSearch_positive_iff_exists_properFieldColoring}} {\tiny\ttfamily Ssot/\allowbreak GraphicCritical.lean}
\item \textbf{\nolinkurl{GCB15}}\hypertarget{lh:GCB15}{}\enspace{\ttfamily\nolinkurl{GraphicCritical.edgeDirection_ne_zero_of_ne}} {\tiny\ttfamily Ssot/\allowbreak GraphicCritical.lean}
\item \textbf{\nolinkurl{GCB16}}\hypertarget{lh:GCB16}{}\enspace{\ttfamily\nolinkurl{GraphicCritical.graphicQuotientSearch_candidate_iff_incidenceKernelAvoidsEdges_of_loopless}} {\tiny\ttfamily Ssot/\allowbreak GraphicCritical.lean}
\item \textbf{\nolinkurl{GCB17}}\hypertarget{lh:GCB17}{}\enspace{\ttfamily\nolinkurl{GraphicCritical.graphicQuotientSearch_positive_iff_exists_properFieldColoring_of_loopless}} {\tiny\ttfamily Ssot/\allowbreak GraphicCritical.lean}
\item \textbf{\nolinkurl{GCB18}}\hypertarget{lh:GCB18}{}\enspace{\ttfamily\nolinkurl{GraphicCritical.edgeLine_inter_eq_bot_of_ne}} {\tiny\ttfamily Ssot/\allowbreak GraphicCritical.lean}
\item \textbf{\nolinkurl{GCB19}}\hypertarget{lh:GCB19}{}\enspace{\ttfamily\nolinkurl{GraphicCritical.SameUndirectedEdge}} {\tiny\ttfamily Ssot/\allowbreak GraphicCritical.lean}
\item \textbf{\nolinkurl{GCB20}}\hypertarget{lh:GCB20}{}\enspace{\ttfamily\nolinkurl{GraphicCritical.sameUndirectedEdge_of_edgeLine_eq}} {\tiny\ttfamily Ssot/\allowbreak GraphicCritical.lean}
\item \textbf{\nolinkurl{GCB21}}\hypertarget{lh:GCB21}{}\enspace{\ttfamily\nolinkurl{GraphicCritical.edgeLine_ne_of_not_sameUndirected}} {\tiny\ttfamily Ssot/\allowbreak GraphicCritical.lean}
\item \textbf{\nolinkurl{GCB22}}\hypertarget{lh:GCB22}{}\enspace{\ttfamily\nolinkurl{GraphicCritical.ProperIndexedFieldColoring}} {\tiny\ttfamily Ssot/\allowbreak GraphicCritical.lean}
\item \textbf{\nolinkurl{GCB23}}\hypertarget{lh:GCB23}{}\enspace{\ttfamily\nolinkurl{GraphicCritical.graphicIndexedQuotientSearch}} {\tiny\ttfamily Ssot/\allowbreak GraphicCritical.lean}
\item \textbf{\nolinkurl{GCB24}}\hypertarget{lh:GCB24}{}\enspace{\ttfamily\nolinkurl{GraphicCritical.graphicIndexedQuotientSearch_candidate_iff}} {\tiny\ttfamily Ssot/\allowbreak GraphicCritical.lean}
\item \textbf{\nolinkurl{GCB25}}\hypertarget{lh:GCB25}{}\enspace{\ttfamily\nolinkurl{GraphicCritical.graphicIndexedQuotientSearch_positive_iff_exists_properIndexedFieldColoring}} {\tiny\ttfamily Ssot/\allowbreak GraphicCritical.lean}
\item \textbf{\nolinkurl{GCB26}}\hypertarget{lh:GCB26}{}\enspace{\ttfamily\nolinkurl{GraphicCritical.properFieldColorings}} {\tiny\ttfamily Ssot/\allowbreak GraphicCritical.lean}
\item \textbf{\nolinkurl{GCB27}}\hypertarget{lh:GCB27}{}\enspace{\ttfamily\nolinkurl{GraphicCritical.mem_properFieldColorings}} {\tiny\ttfamily Ssot/\allowbreak GraphicCritical.lean}
\item \textbf{\nolinkurl{GCB28}}\hypertarget{lh:GCB28}{}\enspace{\ttfamily\nolinkurl{GraphicCritical.properIndexedFieldColorings}} {\tiny\ttfamily Ssot/\allowbreak GraphicCritical.lean}
\item \textbf{\nolinkurl{GCB29}}\hypertarget{lh:GCB29}{}\enspace{\ttfamily\nolinkurl{GraphicCritical.mem_properIndexedFieldColorings}} {\tiny\ttfamily Ssot/\allowbreak GraphicCritical.lean}
\item \textbf{\nolinkurl{GCB30}}\hypertarget{lh:GCB30}{}\enspace{\ttfamily\nolinkurl{GraphicCritical.properFieldColorings_nonempty_iff_exists}} {\tiny\ttfamily Ssot/\allowbreak GraphicCritical.lean}
\item \textbf{\nolinkurl{GCB31}}\hypertarget{lh:GCB31}{}\enspace{\ttfamily\nolinkurl{GraphicCritical.properIndexedFieldColorings_nonempty_iff_exists}} {\tiny\ttfamily Ssot/\allowbreak GraphicCritical.lean}
\item \textbf{\nolinkurl{GCB32}}\hypertarget{lh:GCB32}{}\enspace{\ttfamily\nolinkurl{GraphicCritical.properFieldColorings_card_pos_iff_exists}} {\tiny\ttfamily Ssot/\allowbreak GraphicCritical.lean}
\item \textbf{\nolinkurl{GCB33}}\hypertarget{lh:GCB33}{}\enspace{\ttfamily\nolinkurl{GraphicCritical.properIndexedFieldColorings_card_pos_iff_exists}} {\tiny\ttfamily Ssot/\allowbreak GraphicCritical.lean}
\item \textbf{\nolinkurl{GCB34}}\hypertarget{lh:GCB34}{}\enspace{\ttfamily\nolinkurl{GraphicCritical.graphicQuotientSearch_positive_iff_properFieldColorings_card_pos}} {\tiny\ttfamily Ssot/\allowbreak GraphicCritical.lean}
\item \textbf{\nolinkurl{GCB35}}\hypertarget{lh:GCB35}{}\enspace{\ttfamily\nolinkurl{GraphicCritical.graphicQuotientSearch_positive_iff_properFieldColorings_card_pos_of_loopless}} {\tiny\ttfamily Ssot/\allowbreak GraphicCritical.lean}
\item \textbf{\nolinkurl{GCB36}}\hypertarget{lh:GCB36}{}\enspace{\ttfamily\nolinkurl{GraphicCritical.graphicIndexedQuotientSearch_positive_iff_properIndexedFieldColorings_card_pos}} {\tiny\ttfamily Ssot/\allowbreak GraphicCritical.lean}
\item \textbf{\nolinkurl{GCB37}}\hypertarget{lh:GCB37}{}\enspace{\ttfamily\nolinkurl{GraphicCritical.indexedColoringConflictSet}} {\tiny\ttfamily Ssot/\allowbreak GraphicCritical.lean}
\item \textbf{\nolinkurl{GCB38}}\hypertarget{lh:GCB38}{}\enspace{\ttfamily\nolinkurl{GraphicCritical.indexedColoringViolationProfile}} {\tiny\ttfamily Ssot/\allowbreak GraphicCritical.lean}
\item \textbf{\nolinkurl{GCB39}}\hypertarget{lh:GCB39}{}\enspace{\ttfamily\nolinkurl{GraphicCritical.properIndexedFieldColoring_iff_conflictSet_card_eq_zero}} {\tiny\ttfamily Ssot/\allowbreak GraphicCritical.lean}
\item \textbf{\nolinkurl{GCB40}}\hypertarget{lh:GCB40}{}\enspace{\ttfamily\nolinkurl{GraphicCritical.properIndexedFieldColorings_card_eq_indexedColoringViolationProfile_zero}} {\tiny\ttfamily Ssot/\allowbreak GraphicCritical.lean}
\item \textbf{\nolinkurl{GCB41}}\hypertarget{lh:GCB41}{}\enspace{\ttfamily\nolinkurl{GraphicCritical.properIndexedFieldColorings_card_eq_of_same_indexedColoringViolationProfile}} {\tiny\ttfamily Ssot/\allowbreak GraphicCritical.lean}
\item \textbf{\nolinkurl{GCB42}}\hypertarget{lh:GCB42}{}\enspace{\ttfamily\nolinkurl{GraphicCritical.properIndexedFieldColorings_card_pos_iff_of_same_indexedColoringViolationProfile}} {\tiny\ttfamily Ssot/\allowbreak GraphicCritical.lean}
\item \textbf{\nolinkurl{GCR1}}\hypertarget{lh:GCR1}{}\enspace{\ttfamily\nolinkurl{GraphicCriticalReduction.fixedShellReduction_of_graphicCommonCoreLift}} {\tiny\ttfamily Ssot/\allowbreak GraphicCriticalReduction.lean}
\item \textbf{\nolinkurl{GCR2}}\hypertarget{lh:GCR2}{}\enspace{\ttfamily\nolinkurl{GraphicCriticalReduction.fixedShellReduction_of_looplessGraphicCommonCoreLift}} {\tiny\ttfamily Ssot/\allowbreak GraphicCriticalReduction.lean}
\item \textbf{\nolinkurl{GCR3}}\hypertarget{lh:GCR3}{}\enspace{\ttfamily\nolinkurl{GraphicCriticalReduction.productLift_edgeLines_multiIntersection_eq_productCore}} {\tiny\ttfamily Ssot/\allowbreak GraphicCriticalReduction.lean}
\item \textbf{\nolinkurl{GCR4}}\hypertarget{lh:GCR4}{}\enspace{\ttfamily\nolinkurl{GraphicCriticalReduction.productLift_edgeLines_multiIntersection_eq_between}} {\tiny\ttfamily Ssot/\allowbreak GraphicCriticalReduction.lean}
\item \textbf{\nolinkurl{GCR5}}\hypertarget{lh:GCR5}{}\enspace{\ttfamily\nolinkurl{GraphicCriticalReduction.verticalOrEdgeLift}} {\tiny\ttfamily Ssot/\allowbreak GraphicCriticalReduction.lean}
\item \textbf{\nolinkurl{GCR6}}\hypertarget{lh:GCR6}{}\enspace{\ttfamily\nolinkurl{GraphicCriticalReduction.verticalOrEdgeLift_multiIntersection_eq_productCore}} {\tiny\ttfamily Ssot/\allowbreak GraphicCriticalReduction.lean}
\item \textbf{\nolinkurl{GCR7}}\hypertarget{lh:GCR7}{}\enspace{\ttfamily\nolinkurl{GraphicCriticalReduction.productLift_simpleEdges_multiIntersection_eq_productCore}} {\tiny\ttfamily Ssot/\allowbreak GraphicCriticalReduction.lean}
\item \textbf{\nolinkurl{GCR8}}\hypertarget{lh:GCR8}{}\enspace{\ttfamily\nolinkurl{GraphicCriticalReduction.productLift_simpleEdges_multiIntersection_eq_between}} {\tiny\ttfamily Ssot/\allowbreak GraphicCriticalReduction.lean}
\item \textbf{\nolinkurl{GCR9}}\hypertarget{lh:GCR9}{}\enspace{\ttfamily\nolinkurl{GraphicCriticalReduction.verticalOrEdgeLift_simpleEdges_multiIntersection_eq_productCore}} {\tiny\ttfamily Ssot/\allowbreak GraphicCriticalReduction.lean}
\item \textbf{\nolinkurl{GCR10}}\hypertarget{lh:GCR10}{}\enspace{\ttfamily\nolinkurl{GraphicCriticalReduction.fixedShellReduction_of_indexedLooplessGraphicCommonCoreLift}} {\tiny\ttfamily Ssot/\allowbreak GraphicCriticalReduction.lean}
\item \textbf{\nolinkurl{GCR11}}\hypertarget{lh:GCR11}{}\enspace{\ttfamily\nolinkurl{GraphicCriticalReduction.verticalOrEdgeLift_simpleEdges_multiIntersection_eq_between}} {\tiny\ttfamily Ssot/\allowbreak GraphicCriticalReduction.lean}
\item \textbf{\nolinkurl{GCR12}}\hypertarget{lh:GCR12}{}\enspace{\ttfamily\nolinkurl{GraphicCriticalReduction.liftedVerticalOrKernel}} {\tiny\ttfamily Ssot/\allowbreak GraphicCriticalReduction.lean}
\item \textbf{\nolinkurl{GCR13}}\hypertarget{lh:GCR13}{}\enspace{\ttfamily\nolinkurl{GraphicCriticalReduction.liftedMultiwayShell}} {\tiny\ttfamily Ssot/\allowbreak GraphicCriticalReduction.lean}
\item \textbf{\nolinkurl{GCR14}}\hypertarget{lh:GCR14}{}\enspace{\ttfamily\nolinkurl{GraphicCriticalReduction.liftedVerticalOrKernel_graphicIndexedCommonCoreLift_eq}} {\tiny\ttfamily Ssot/\allowbreak GraphicCriticalReduction.lean}
\item \textbf{\nolinkurl{GCR15}}\hypertarget{lh:GCR15}{}\enspace{\ttfamily\nolinkurl{GraphicCriticalReduction.fixedShellReduction_of_indexedSimpleGraphicCommonCoreLift_multiwayShell}}
\item \textbf{\nolinkurl{GCR16}}\hypertarget{lh:GCR16}{}\enspace{\ttfamily\nolinkurl{GraphicCriticalReduction.not_semanticComplete_of_indexedSimpleGraphicCommonCoreLift_multiwayShell_separates}}
\item \textbf{\nolinkurl{GCR17}}\hypertarget{lh:GCR17}{}\enspace{\ttfamily\nolinkurl{GraphicCriticalReduction.fixedShellReduction_of_indexedSimpleGraphicCommonCoreLift_pointedCore}}
\item \textbf{\nolinkurl{GCR18}}\hypertarget{lh:GCR18}{}\enspace{\ttfamily\nolinkurl{GraphicCriticalReduction.fixedShellReduction_of_indexedSimpleGraphicCommonCoreLift_nonemptyCore}}
\item \textbf{\nolinkurl{GCR19}}\hypertarget{lh:GCR19}{}\enspace{\ttfamily\nolinkurl{GraphicCriticalReduction.not_semanticComplete_of_indexedSimpleGraphicCommonCoreLift_pointedCore_separates}}
\item \textbf{\nolinkurl{GCR20}}\hypertarget{lh:GCR20}{}\enspace{\ttfamily\nolinkurl{GraphicCriticalReduction.not_semanticComplete_of_indexedSimpleGraphicCommonCoreLift_nonemptyCore_separates}}
\item \textbf{\nolinkurl{GCR21}}\hypertarget{lh:GCR21}{}\enspace{\ttfamily\nolinkurl{GraphicCriticalReduction.liftedMultiwayShell_graphicIndexedCommonCoreLift_eq_productCore}} {\tiny\ttfamily Ssot/\allowbreak GraphicCriticalReduction.lean}
\item \textbf{\nolinkurl{GCR22}}\hypertarget{lh:GCR22}{}\enspace{\ttfamily\nolinkurl{GraphicCriticalReduction.liftedMultiwayShell_graphicIndexedCommonCoreLift_eq_between}} {\tiny\ttfamily Ssot/\allowbreak GraphicCriticalReduction.lean}
\item \textbf{\nolinkurl{GCR23}}\hypertarget{lh:GCR23}{}\enspace{\ttfamily\nolinkurl{GraphicCriticalReduction.liftedCandidatePositive}} {\tiny\ttfamily Ssot/\allowbreak GraphicCriticalReduction.lean}
\item \textbf{\nolinkurl{GCR24}}\hypertarget{lh:GCR24}{}\enspace{\ttfamily\nolinkurl{GraphicCriticalReduction.fixedShellWitnessReduction_of_indexedSimpleGraphicCommonCoreLift_pointedCore}}
\item \textbf{\nolinkurl{GCR25}}\hypertarget{lh:GCR25}{}\enspace{\ttfamily\nolinkurl{GraphicCriticalReduction.liftedSearch_positive_iff_exists_liftedCandidatePositive}} {\tiny\ttfamily Ssot/\allowbreak GraphicCriticalReduction.lean}
\item \textbf{\nolinkurl{GCR26}}\hypertarget{lh:GCR26}{}\enspace{\ttfamily\nolinkurl{GraphicCriticalReduction.fixedShellWitnessReduction_of_indexedSimpleGraphicCommonCoreLift_nonemptyCore}}
\item \textbf{\nolinkurl{GCR27}}\hypertarget{lh:GCR27}{}\enspace{\ttfamily\nolinkurl{GraphicCriticalReduction.fixedShellReduction_of_indexedSimpleGraphicCommonCoreLift_nonemptyCore_viaWitness}}
\item \textbf{\nolinkurl{GCR28}}\hypertarget{lh:GCR28}{}\enspace{\ttfamily\nolinkurl{GraphicCriticalReduction.criticalProblemCommonCoreLift_fixedShell_and_positive_iff_pointedCore}} {\tiny\ttfamily Ssot/\allowbreak GraphicCriticalReduction.lean}
\item \textbf{\nolinkurl{GCR29}}\hypertarget{lh:GCR29}{}\enspace{\ttfamily\nolinkurl{GraphicCriticalReduction.criticalProblemCommonCoreLift_fixedShell_and_positive_iff_nonemptyCore}} {\tiny\ttfamily Ssot/\allowbreak GraphicCriticalReduction.lean}
\item \textbf{\nolinkurl{GCR30}}\hypertarget{lh:GCR30}{}\enspace{\ttfamily\nolinkurl{GraphicCriticalReduction.liftedViewVerticalOrKernel}} {\tiny\ttfamily Ssot/\allowbreak GraphicCriticalReduction.lean}
\item \textbf{\nolinkurl{GCR31}}\hypertarget{lh:GCR31}{}\enspace{\ttfamily\nolinkurl{GraphicCriticalReduction.liftedViewMultiwayShell}} {\tiny\ttfamily Ssot/\allowbreak GraphicCriticalReduction.lean}
\item \textbf{\nolinkurl{GCR32}}\hypertarget{lh:GCR32}{}\enspace{\ttfamily\nolinkurl{GraphicCriticalReduction.liftedVerticalOrKernelOn}} {\tiny\ttfamily Ssot/\allowbreak GraphicCriticalReduction.lean}
\item \textbf{\nolinkurl{GCR33}}\hypertarget{lh:GCR33}{}\enspace{\ttfamily\nolinkurl{GraphicCriticalReduction.liftedMultiwayShellOn}} {\tiny\ttfamily Ssot/\allowbreak GraphicCriticalReduction.lean}
\item \textbf{\nolinkurl{GCR34}}\hypertarget{lh:GCR34}{}\enspace{\ttfamily\nolinkurl{GraphicCriticalReduction.liftedViewVerticalOrKernel_graphicIndexedCommonCoreLift_eq}} {\tiny\ttfamily Ssot/\allowbreak GraphicCriticalReduction.lean}
\item \textbf{\nolinkurl{GCR35}}\hypertarget{lh:GCR35}{}\enspace{\ttfamily\nolinkurl{GraphicCriticalReduction.liftedVerticalOrKernelOn_graphicIndexedCommonCoreLift_eq}} {\tiny\ttfamily Ssot/\allowbreak GraphicCriticalReduction.lean}
\item \textbf{\nolinkurl{GCR36}}\hypertarget{lh:GCR36}{}\enspace{\ttfamily\nolinkurl{GraphicCriticalReduction.liftedViewMultiwayShell_graphicIndexedCommonCoreLift_eq_productCore}} {\tiny\ttfamily Ssot/\allowbreak GraphicCriticalReduction.lean}
\item \textbf{\nolinkurl{GCR37}}\hypertarget{lh:GCR37}{}\enspace{\ttfamily\nolinkurl{GraphicCriticalReduction.liftedMultiwayShellOn_graphicIndexedCommonCoreLift_eq_productCore}} {\tiny\ttfamily Ssot/\allowbreak GraphicCriticalReduction.lean}
\item \textbf{\nolinkurl{GCR38}}\hypertarget{lh:GCR38}{}\enspace{\ttfamily\nolinkurl{GraphicCriticalReduction.fixedShellReductionOn_of_indexedSimpleGraphicCommonCoreLift_nonemptyCore}}
\item \textbf{\nolinkurl{GCR39}}\hypertarget{lh:GCR39}{}\enspace{\ttfamily\nolinkurl{GraphicCriticalReduction.criticalProblemCommonCoreLiftOn_fixedShell_and_positive_iff_nonemptyCore}} {\tiny\ttfamily Ssot/\allowbreak GraphicCriticalReduction.lean}
\item \textbf{\nolinkurl{GCR40}}\hypertarget{lh:GCR40}{}\enspace{\ttfamily\nolinkurl{GraphicCriticalReduction.fixedShellWitnessReductionOn_of_indexedSimpleGraphicCommonCoreLift_nonemptyCore}}
\item \textbf{\nolinkurl{GCR41}}\hypertarget{lh:GCR41}{}\enspace{\ttfamily\nolinkurl{GraphicCriticalReduction.fixedShellReductionOn_of_indexedSimpleGraphicCommonCoreLift_nonemptyCore_viaWitness}}
\item \textbf{\nolinkurl{GCR42}}\hypertarget{lh:GCR42}{}\enspace{\ttfamily\nolinkurl{GraphicCriticalReduction.not_semanticCompleteOn_of_indexedSimpleGraphicCommonCoreLift_nonemptyCore_separates}}
\item \textbf{\nolinkurl{GCR43}}\hypertarget{lh:GCR43}{}\enspace{\ttfamily\nolinkurl{GraphicCriticalReduction.not_semanticCompleteOn_existsCandidate_of_indexedSimpleGraphicCommonCoreLift_nonemptyCore_separates}}
\item \textbf{\nolinkurl{GCR44}}\hypertarget{lh:GCR44}{}\enspace{\ttfamily\nolinkurl{GraphicCriticalReduction.exists_sameActiveShell_diffPositive_of_indexedSimpleGraphicCommonCoreLift_nonemptyCore}}
\item \textbf{\nolinkurl{GCR45}}\hypertarget{lh:GCR45}{}\enspace{\ttfamily\nolinkurl{GraphicCriticalReduction.exists_sameActiveShell_diffExistsCandidate_of_indexedSimpleGraphicCommonCoreLift_nonemptyCore}}
\item \textbf{\nolinkurl{GCR46}}\hypertarget{lh:GCR46}{}\enspace{\ttfamily\nolinkurl{GraphicCriticalReduction.fixedShellReductionOn_of_sourceToIndexedColoring_nonemptyCore}}
\item \textbf{\nolinkurl{GCR47}}\hypertarget{lh:GCR47}{}\enspace{\ttfamily\nolinkurl{GraphicCriticalReduction.fixedShellWitnessReductionOn_of_sourceToIndexedColoring_nonemptyCore}}
\item \textbf{\nolinkurl{GCR48}}\hypertarget{lh:GCR48}{}\enspace{\ttfamily\nolinkurl{GraphicCriticalReduction.fixedShellReductionOn_of_sourceToIndexedColoring_nonemptyCore_viaWitness}}
\item \textbf{\nolinkurl{GCR49}}\hypertarget{lh:GCR49}{}\enspace{\ttfamily\nolinkurl{GraphicCriticalReduction.criticalProblemCommonCoreLiftOn_fixedShell_and_positive_iff_sourcePred}} {\tiny\ttfamily Ssot/\allowbreak GraphicCriticalReduction.lean}
\item \textbf{\nolinkurl{GCR50}}\hypertarget{lh:GCR50}{}\enspace{\ttfamily\nolinkurl{GraphicCriticalReduction.not_semanticCompleteOn_of_sourceToIndexedColoring_nonemptyCore_separates}}
\item \textbf{\nolinkurl{GCR51}}\hypertarget{lh:GCR51}{}\enspace{\ttfamily\nolinkurl{GraphicCriticalReduction.not_semanticCompleteOn_existsCandidate_of_sourceToIndexedColoring_nonemptyCore_separates}}
\item \textbf{\nolinkurl{GCR52}}\hypertarget{lh:GCR52}{}\enspace{\ttfamily\nolinkurl{GraphicCriticalReduction.exists_sameActiveShell_diffPositive_of_sourceToIndexedColoring_nonemptyCore}}
\item \textbf{\nolinkurl{GCR53}}\hypertarget{lh:GCR53}{}\enspace{\ttfamily\nolinkurl{GraphicCriticalReduction.exists_sameActiveShell_diffExistsCandidate_of_sourceToIndexedColoring_nonemptyCore}}
\item \textbf{\nolinkurl{GCR54}}\hypertarget{lh:GCR54}{}\enspace{\ttfamily\nolinkurl{GraphicCriticalReduction.IndexedColoringSourceReduction}} {\tiny\ttfamily Ssot/\allowbreak GraphicCriticalReduction.lean}
\item \textbf{\nolinkurl{GCR55}}\hypertarget{lh:GCR55}{}\enspace{\ttfamily\nolinkurl{GraphicCriticalReduction.IndexedColoringWitnessReduction}} {\tiny\ttfamily Ssot/\allowbreak GraphicCriticalReduction.lean}
\item \textbf{\nolinkurl{GCR56}}\hypertarget{lh:GCR56}{}\enspace{\ttfamily\nolinkurl{GraphicCriticalReduction.fixedShellReductionOn_of_indexedColoringSourceReduction_nonemptyCore}}
\item \textbf{\nolinkurl{GCR57}}\hypertarget{lh:GCR57}{}\enspace{\ttfamily\nolinkurl{GraphicCriticalReduction.fixedShellWitnessReductionOn_of_indexedColoringWitnessReduction_nonemptyCore}}
\item \textbf{\nolinkurl{GCR58}}\hypertarget{lh:GCR58}{}\enspace{\ttfamily\nolinkurl{GraphicCriticalReduction.fixedShellReductionOn_of_indexedColoringWitnessReduction_nonemptyCore_viaWitness}}
\item \textbf{\nolinkurl{GCR59}}\hypertarget{lh:GCR59}{}\enspace{\ttfamily\nolinkurl{GraphicCriticalReduction.criticalProblemCommonCoreLiftOn_fixedShell_and_positive_iff_sourceReduction}} {\tiny\ttfamily Ssot/\allowbreak GraphicCriticalReduction.lean}
\item \textbf{\nolinkurl{GCR60}}\hypertarget{lh:GCR60}{}\enspace{\ttfamily\nolinkurl{GraphicCriticalReduction.not_semanticCompleteOn_of_indexedColoringSourceReduction_nonemptyCore_separates}}
\item \textbf{\nolinkurl{GCR61}}\hypertarget{lh:GCR61}{}\enspace{\ttfamily\nolinkurl{GraphicCriticalReduction.not_semanticCompleteOn_existsCandidate_of_indexedColoringWitnessReduction_nonemptyCore_separates}}
\item \textbf{\nolinkurl{GCR62}}\hypertarget{lh:GCR62}{}\enspace{\ttfamily\nolinkurl{GraphicCriticalReduction.exists_sameActiveShell_diffPositive_of_indexedColoringSourceReduction_nonemptyCore}}
\item \textbf{\nolinkurl{GCR63}}\hypertarget{lh:GCR63}{}\enspace{\ttfamily\nolinkurl{GraphicCriticalReduction.exists_sameActiveShell_diffExistsCandidate_of_indexedColoringWitnessReduction_nonemptyCore}}
\item \textbf{\nolinkurl{GCR64}}\hypertarget{lh:GCR64}{}\enspace{\ttfamily\nolinkurl{GraphicCriticalReduction.IndexedColoringSourceReduction.precompose}} {\tiny\ttfamily Ssot/\allowbreak GraphicCriticalReduction.lean}
\item \textbf{\nolinkurl{GCR65}}\hypertarget{lh:GCR65}{}\enspace{\ttfamily\nolinkurl{GraphicCriticalReduction.IndexedColoringWitnessReduction.precompose}} {\tiny\ttfamily Ssot/\allowbreak GraphicCriticalReduction.lean}
\item \textbf{\nolinkurl{GCR66}}\hypertarget{lh:GCR66}{}\enspace{\ttfamily\nolinkurl{GraphicCriticalReduction.IndexedColoringWitnessReduction.toSourceReduction}} {\tiny\ttfamily Ssot/\allowbreak GraphicCriticalReduction.lean}
\item \textbf{\nolinkurl{GCR67}}\hypertarget{lh:GCR67}{}\enspace{\ttfamily\nolinkurl{GraphicCriticalReduction.IndexedColoringSourceCriticalBridgeCertificate}} {\tiny\ttfamily Ssot/\allowbreak GraphicCriticalReduction.lean}
\item \textbf{\nolinkurl{GCR68}}\hypertarget{lh:GCR68}{}\enspace{\ttfamily\nolinkurl{GraphicCriticalReduction.indexedColoringSourceCriticalBridgeCertificate}} {\tiny\ttfamily Ssot/\allowbreak GraphicCriticalReduction.lean}
\item \textbf{\nolinkurl{GCR69}}\hypertarget{lh:GCR69}{}\enspace{\ttfamily\nolinkurl{GraphicCriticalReduction.IndexedColoringWitnessCriticalBridgeCertificate}} {\tiny\ttfamily Ssot/\allowbreak GraphicCriticalReduction.lean}
\item \textbf{\nolinkurl{GCR70}}\hypertarget{lh:GCR70}{}\enspace{\ttfamily\nolinkurl{GraphicCriticalReduction.indexedColoringWitnessCriticalBridgeCertificate}} {\tiny\ttfamily Ssot/\allowbreak GraphicCriticalReduction.lean}
\item \textbf{\nolinkurl{GCR71}}\hypertarget{lh:GCR71}{}\enspace{\ttfamily\nolinkurl{GraphicCriticalReduction.not_semanticCompleteOn_of_indexedColoringSourceCriticalBridgeCertificate}} {\tiny\ttfamily Ssot/\allowbreak GraphicCriticalReduction.lean}
\item \textbf{\nolinkurl{GCR72}}\hypertarget{lh:GCR72}{}\enspace{\ttfamily\nolinkurl{GraphicCriticalReduction.exists_sameActiveShell_diffPositive_of_indexedColoringSourceCriticalBridgeCertificate}} {\tiny\ttfamily Ssot/\allowbreak GraphicCriticalReduction.lean}
\item \textbf{\nolinkurl{GCR73}}\hypertarget{lh:GCR73}{}\enspace{\ttfamily\nolinkurl{GraphicCriticalReduction.not_semanticCompleteOn_existsCandidate_of_indexedColoringWitnessCriticalBridgeCertificate}} {\tiny\ttfamily Ssot/\allowbreak GraphicCriticalReduction.lean}
\item \textbf{\nolinkurl{GCR74}}\hypertarget{lh:GCR74}{}\enspace{\ttfamily\nolinkurl{GraphicCriticalReduction.not_semanticCompleteOn_positive_of_indexedColoringWitnessCriticalBridgeCertificate}} {\tiny\ttfamily Ssot/\allowbreak GraphicCriticalReduction.lean}
\item \textbf{\nolinkurl{GCR75}}\hypertarget{lh:GCR75}{}\enspace{\ttfamily\nolinkurl{GraphicCriticalReduction.exists_sameActiveShell_diffExistsCandidate_of_indexedColoringWitnessCriticalBridgeCertificate}} {\tiny\ttfamily Ssot/\allowbreak GraphicCriticalReduction.lean}
\item \textbf{\nolinkurl{GCR76}}\hypertarget{lh:GCR76}{}\enspace{\ttfamily\nolinkurl{GraphicCriticalReduction.IndexedColoringSourceCriticalBridgeCertificate.precompose}} {\tiny\ttfamily Ssot/\allowbreak GraphicCriticalReduction.lean}
\item \textbf{\nolinkurl{GCR77}}\hypertarget{lh:GCR77}{}\enspace{\ttfamily\nolinkurl{GraphicCriticalReduction.IndexedColoringWitnessCriticalBridgeCertificate.precompose}} {\tiny\ttfamily Ssot/\allowbreak GraphicCriticalReduction.lean}
\item \textbf{\nolinkurl{GCR78}}\hypertarget{lh:GCR78}{}\enspace{\ttfamily\nolinkurl{GraphicCriticalReduction.criticalProblemCommonCoreLiftOn_fixedShell_and_positive_iff_coloringCountPos}} {\tiny\ttfamily Ssot/\allowbreak GraphicCriticalReduction.lean}
\item \textbf{\nolinkurl{GCR79}}\hypertarget{lh:GCR79}{}\enspace{\ttfamily\nolinkurl{GraphicCriticalReduction.criticalProblemCommonCoreLiftOn_candidate_iff_properIndexedFieldColoring}} {\tiny\ttfamily Ssot/\allowbreak GraphicCriticalReduction.lean}
\item \textbf{\nolinkurl{GCR80}}\hypertarget{lh:GCR80}{}\enspace{\ttfamily\nolinkurl{GraphicCriticalReduction.liftedPositiveCandidates}} {\tiny\ttfamily Ssot/\allowbreak GraphicCriticalReduction.lean}
\item \textbf{\nolinkurl{GCR81}}\hypertarget{lh:GCR81}{}\enspace{\ttfamily\nolinkurl{GraphicCriticalReduction.mem_liftedPositiveCandidates}} {\tiny\ttfamily Ssot/\allowbreak GraphicCriticalReduction.lean}
\item \textbf{\nolinkurl{GCR82}}\hypertarget{lh:GCR82}{}\enspace{\ttfamily\nolinkurl{GraphicCriticalReduction.liftedPositiveCandidates_graphicIndexedCommonCoreLift_eq_properIndexedFieldColorings}} {\tiny\ttfamily Ssot/\allowbreak GraphicCriticalReduction.lean}
\item \textbf{\nolinkurl{GCR83}}\hypertarget{lh:GCR83}{}\enspace{\ttfamily\nolinkurl{GraphicCriticalReduction.liftedPositiveCandidates_graphicIndexedCommonCoreLift_card_eq_properIndexedFieldColorings}} {\tiny\ttfamily Ssot/\allowbreak GraphicCriticalReduction.lean}
\item \textbf{\nolinkurl{GCR84}}\hypertarget{lh:GCR84}{}\enspace{\ttfamily\nolinkurl{GraphicCriticalReduction.finiteSourceWitnesses}} {\tiny\ttfamily Ssot/\allowbreak GraphicCriticalReduction.lean}
\item \textbf{\nolinkurl{GCR85}}\hypertarget{lh:GCR85}{}\enspace{\ttfamily\nolinkurl{GraphicCriticalReduction.mem_finiteSourceWitnesses}} {\tiny\ttfamily Ssot/\allowbreak GraphicCriticalReduction.lean}
\item \textbf{\nolinkurl{GCR86}}\hypertarget{lh:GCR86}{}\enspace{\ttfamily\nolinkurl{GraphicCriticalReduction.finiteSourceWitnesses_eq_properIndexedFieldColorings_of_indexedColoringWitnessReduction}} {\tiny\ttfamily Ssot/\allowbreak GraphicCriticalReduction.lean}
\item \textbf{\nolinkurl{GCR87}}\hypertarget{lh:GCR87}{}\enspace{\ttfamily\nolinkurl{GraphicCriticalReduction.liftedPositiveCandidates_graphicIndexedCommonCoreLift_card_eq_finiteSourceWitnesses}} {\tiny\ttfamily Ssot/\allowbreak GraphicCriticalReduction.lean}
\item \textbf{\nolinkurl{GCR88}}\hypertarget{lh:GCR88}{}\enspace{\ttfamily\nolinkurl{GraphicCriticalReduction.fixedShellTargetReductionOn_count_of_indexedColoringWitnessReduction_nonemptyCore}}
\item \textbf{\nolinkurl{GCR89}}\hypertarget{lh:GCR89}{}\enspace{\ttfamily\nolinkurl{GraphicCriticalReduction.IndexedColoringCountCriticalBridgeCertificate}} {\tiny\ttfamily Ssot/\allowbreak GraphicCriticalReduction.lean}
\item \textbf{\nolinkurl{GCR90}}\hypertarget{lh:GCR90}{}\enspace{\ttfamily\nolinkurl{GraphicCriticalReduction.indexedColoringCountCriticalBridgeCertificate}} {\tiny\ttfamily Ssot/\allowbreak GraphicCriticalReduction.lean}
\item \textbf{\nolinkurl{GCR91}}\hypertarget{lh:GCR91}{}\enspace{\ttfamily\nolinkurl{GraphicCriticalReduction.not_semanticFinerOn_count_of_indexedColoringCountCriticalBridgeCertificate}} {\tiny\ttfamily Ssot/\allowbreak GraphicCriticalReduction.lean}
\item \textbf{\nolinkurl{GCR92}}\hypertarget{lh:GCR92}{}\enspace{\ttfamily\nolinkurl{GraphicCriticalReduction.exists_sameActiveShell_diffCount_of_indexedColoringCountCriticalBridgeCertificate}} {\tiny\ttfamily Ssot/\allowbreak GraphicCriticalReduction.lean}
\item \textbf{\nolinkurl{GCR93}}\hypertarget{lh:GCR93}{}\enspace{\ttfamily\nolinkurl{GraphicCriticalReduction.IndexedColoringCountCriticalBridgeCertificate.precompose}} {\tiny\ttfamily Ssot/\allowbreak GraphicCriticalReduction.lean}
\item \textbf{\nolinkurl{GCR94}}\hypertarget{lh:GCR94}{}\enspace{\ttfamily\nolinkurl{GraphicCriticalReduction.not_semanticCompleteOn_countValue_of_indexedColoringCountCriticalBridgeCertificate}} {\tiny\ttfamily Ssot/\allowbreak GraphicCriticalReduction.lean}
\item \textbf{\nolinkurl{GCR95}}\hypertarget{lh:GCR95}{}\enspace{\ttfamily\nolinkurl{GraphicCriticalReduction.not_forall_countValue_semanticComplete_of_indexedColoringCountCriticalBridgeCertificate}} {\tiny\ttfamily Ssot/\allowbreak GraphicCriticalReduction.lean}
\item \textbf{\nolinkurl{GCR96}}\hypertarget{lh:GCR96}{}\enspace{\ttfamily\nolinkurl{GraphicCriticalReduction.sourceWitnessCount_mem_fixedActiveShellCountFiber_of_indexedColoringCountCriticalBridgeCertificate}} {\tiny\ttfamily Ssot/\allowbreak GraphicCriticalReduction.lean}
\item \textbf{\nolinkurl{GCR97}}\hypertarget{lh:GCR97}{}\enspace{\ttfamily\nolinkurl{GraphicCriticalReduction.sourceWitnessCount_range_subset_fixedActiveShellCountFiber_of_indexedColoringCountCriticalBridgeCertificate}} {\tiny\ttfamily Ssot/\allowbreak GraphicCriticalReduction.lean}
\item \textbf{\nolinkurl{GCR98}}\hypertarget{lh:GCR98}{}\enspace{\ttfamily\nolinkurl{GraphicCriticalReduction.exists_liftedSearch_with_fixedActiveShell_and_count_of_sourceWitnessCount}} {\tiny\ttfamily Ssot/\allowbreak GraphicCriticalReduction.lean}
\item \textbf{\nolinkurl{GCR99}}\hypertarget{lh:GCR99}{}\enspace{\ttfamily\nolinkurl{GraphicCriticalReduction.finiteSourceWitnessCountValues}} {\tiny\ttfamily Ssot/\allowbreak GraphicCriticalReduction.lean}
\item \textbf{\nolinkurl{GCR100}}\hypertarget{lh:GCR100}{}\enspace{\ttfamily\nolinkurl{GraphicCriticalReduction.mem_finiteSourceWitnessCountValues}} {\tiny\ttfamily Ssot/\allowbreak GraphicCriticalReduction.lean}
\item \textbf{\nolinkurl{GCR101}}\hypertarget{lh:GCR101}{}\enspace{\ttfamily\nolinkurl{GraphicCriticalReduction.finiteSourceWitnessCountValues_subset_fixedActiveShellCountFiber_of_indexedColoringCountCriticalBridgeCertificate}} {\tiny\ttfamily Ssot/\allowbreak GraphicCriticalReduction.lean}
\item \textbf{\nolinkurl{GCR102}}\hypertarget{lh:GCR102}{}\enspace{\ttfamily\nolinkurl{GraphicCriticalReduction.finiteSourceWitnessCountValues_card_eq_of_count_injOn}} {\tiny\ttfamily Ssot/\allowbreak GraphicCriticalReduction.lean}
\item \textbf{\nolinkurl{GCR103}}\hypertarget{lh:GCR103}{}\enspace{\ttfamily\nolinkurl{GraphicCriticalReduction.finiteSourceWitnessCountValues_card_le_one_of_activeShell_refines_count}} {\tiny\ttfamily Ssot/\allowbreak GraphicCriticalReduction.lean}
\item \textbf{\nolinkurl{GCR104}}\hypertarget{lh:GCR104}{}\enspace{\ttfamily\nolinkurl{GraphicCriticalReduction.not_activeShell_refines_count_of_finiteSourceWitnessCountValues_card_gt_one}} {\tiny\ttfamily Ssot/\allowbreak GraphicCriticalReduction.lean}
\item \textbf{\nolinkurl{GCR105}}\hypertarget{lh:GCR105}{}\enspace{\ttfamily\nolinkurl{GraphicCriticalReduction.not_activeShell_refines_count_of_count_injOn_source_card_gt_one}} {\tiny\ttfamily Ssot/\allowbreak GraphicCriticalReduction.lean}
\item \textbf{\nolinkurl{GCR106}}\hypertarget{lh:GCR106}{}\enspace{\ttfamily\nolinkurl{GraphicCriticalReduction.exists_two_counts_in_fixedActiveShellCountFiber_of_finiteSourceWitnessCountValues_card_gt_one}} {\tiny\ttfamily Ssot/\allowbreak GraphicCriticalReduction.lean}
\item \textbf{\nolinkurl{GCR107}}\hypertarget{lh:GCR107}{}\enspace{\ttfamily\nolinkurl{GraphicCriticalReduction.not_fixedActiveShellCountFiber_subsingleton_of_finiteSourceWitnessCountValues_card_gt_one}} {\tiny\ttfamily Ssot/\allowbreak GraphicCriticalReduction.lean}
\item \textbf{\nolinkurl{GCR108}}\hypertarget{lh:GCR108}{}\enspace{\ttfamily\nolinkurl{GraphicCriticalReduction.not_fixedActiveShellCountFiber_subsingleton_of_count_injOn_source_card_gt_one}} {\tiny\ttfamily Ssot/\allowbreak GraphicCriticalReduction.lean}
\item \textbf{\nolinkurl{GCR109}}\hypertarget{lh:GCR109}{}\enspace{\ttfamily\nolinkurl{GraphicCriticalReduction.not_semanticCompleteOn_positive_of_indexedColoringSourceReduction_nonemptyCore}} {\tiny\ttfamily Ssot/\allowbreak GraphicCriticalReduction.lean}
\item \textbf{\nolinkurl{GCR110}}\hypertarget{lh:GCR110}{}\enspace{\ttfamily\nolinkurl{GraphicCriticalReduction.not_semanticCompleteOn_positive_of_indexedColoringWitnessReduction_nonemptyCore}} {\tiny\ttfamily Ssot/\allowbreak GraphicCriticalReduction.lean}
\item \textbf{\nolinkurl{GCR111}}\hypertarget{lh:GCR111}{}\enspace{\ttfamily\nolinkurl{GraphicCriticalReduction.not_fixedActiveShellCountFiber_subsingleton_of_indexedColoringWitnessReduction_count_injOn_source_card_gt_one}} {\tiny\ttfamily Ssot/\allowbreak GraphicCriticalReduction.lean}
\item \textbf{\nolinkurl{GCR112}}\hypertarget{lh:GCR112}{}\enspace{\ttfamily\nolinkurl{GraphicCriticalReduction.not_activeShell_refines_count_of_indexedColoringWitnessReduction_count_injOn_source_card_gt_one}} {\tiny\ttfamily Ssot/\allowbreak GraphicCriticalReduction.lean}
\item \textbf{\nolinkurl{GCR113}}\hypertarget{lh:GCR113}{}\enspace{\ttfamily\nolinkurl{GraphicCriticalReduction.exists_countValues_subset_fixedActiveShellCountFiber_card_eq_of_indexedColoringCountCriticalBridgeCertificate}} {\tiny\ttfamily Ssot/\allowbreak GraphicCriticalReduction.lean}
\item \textbf{\nolinkurl{GCR114}}\hypertarget{lh:GCR114}{}\enspace{\ttfamily\nolinkurl{GraphicCriticalReduction.exists_countValues_subset_fixedActiveShellCountFiber_card_eq_of_indexedColoringWitnessReduction}} {\tiny\ttfamily Ssot/\allowbreak GraphicCriticalReduction.lean}
\item \textbf{\nolinkurl{GCR115}}\hypertarget{lh:GCR115}{}\enspace{\ttfamily\nolinkurl{GraphicCriticalReduction.activeShell_refines_count_or_countFiberCollision_of_indexedColoringCountCriticalBridgeCertificate}} {\tiny\ttfamily Ssot/\allowbreak GraphicCriticalReduction.lean}
\item \textbf{\nolinkurl{GCR116}}\hypertarget{lh:GCR116}{}\enspace{\ttfamily\nolinkurl{GraphicCriticalReduction.liftedPositiveCandidates_card_pos_iff_liftedSearch_positive}} {\tiny\ttfamily Ssot/\allowbreak GraphicCriticalReduction.lean}
\item \textbf{\nolinkurl{GCR117}}\hypertarget{lh:GCR117}{}\enspace{\ttfamily\nolinkurl{GraphicCriticalReduction.finiteSourceWitnesses_card_pos_iff_exists}} {\tiny\ttfamily Ssot/\allowbreak GraphicCriticalReduction.lean}
\item \textbf{\nolinkurl{GCR118}}\hypertarget{lh:GCR118}{}\enspace{\ttfamily\nolinkurl{GraphicCriticalReduction.fixedShellReductionOn_countPositive_of_indexedColoringWitnessReduction_nonemptyCore}}
\item \textbf{\nolinkurl{GCR119}}\hypertarget{lh:GCR119}{}\enspace{\ttfamily\nolinkurl{GraphicCriticalReduction.fixedShellReductionOn_countPositiveExists_of_indexedColoringWitnessReduction_nonemptyCore}}
\item \textbf{\nolinkurl{GCR120}}\hypertarget{lh:GCR120}{}\enspace{\ttfamily\nolinkurl{GraphicCriticalReduction.liftedPositiveCandidates_graphicIndexedCommonCoreLift_card_pos_iff_exists_sourceWitness}} {\tiny\ttfamily Ssot/\allowbreak GraphicCriticalReduction.lean}
\item \textbf{\nolinkurl{GCR121}}\hypertarget{lh:GCR121}{}\enspace{\ttfamily\nolinkurl{GraphicCriticalReduction.criticalProblemCommonCoreLiftOn_fixedShell_and_countPositive_iff_sourceReduction}} {\tiny\ttfamily Ssot/\allowbreak GraphicCriticalReduction.lean}
\item \textbf{\nolinkurl{GCR122}}\hypertarget{lh:GCR122}{}\enspace{\ttfamily\nolinkurl{GraphicCriticalReduction.not_semanticCompleteOn_countPositive_of_indexedColoringWitnessReduction_nonemptyCore}} {\tiny\ttfamily Ssot/\allowbreak GraphicCriticalReduction.lean}
\item \textbf{\nolinkurl{GCR123}}\hypertarget{lh:GCR123}{}\enspace{\ttfamily\nolinkurl{GraphicCriticalReduction.exists_sameActiveShell_diffCountPositive_of_indexedColoringWitnessReduction_nonemptyCore}} {\tiny\ttfamily Ssot/\allowbreak GraphicCriticalReduction.lean}
\item \textbf{\nolinkurl{GCR124}}\hypertarget{lh:GCR124}{}\enspace{\ttfamily\nolinkurl{GraphicCriticalReduction.criticalProblemCommonCoreLiftOn_fixedShell_and_candidateCount_eq_coloringCount}} {\tiny\ttfamily Ssot/\allowbreak GraphicCriticalReduction.lean}
\item \textbf{\nolinkurl{GCR125}}\hypertarget{lh:GCR125}{}\enspace{\ttfamily\nolinkurl{GraphicCriticalReduction.criticalProblemCommonCoreLiftOn_fixedShell_and_candidateCountPos_iff_coloringCountPos}} {\tiny\ttfamily Ssot/\allowbreak GraphicCriticalReduction.lean}
\item \textbf{\nolinkurl{GCR126}}\hypertarget{lh:GCR126}{}\enspace{\ttfamily\nolinkurl{GraphicCriticalReduction.fixedShellTargetReductionOn_coloringCount_of_indexedSimpleGraphicCommonCoreLift_nonemptyCore}}
\item \textbf{\nolinkurl{GCR127}}\hypertarget{lh:GCR127}{}\enspace{\ttfamily\nolinkurl{GraphicCriticalReduction.exists_sameActiveShell_diffColoringCount_of_indexedSimpleGraphicCommonCoreLift}} {\tiny\ttfamily Ssot/\allowbreak GraphicCriticalReduction.lean}
\item \textbf{\nolinkurl{GCR128}}\hypertarget{lh:GCR128}{}\enspace{\ttfamily\nolinkurl{GraphicCriticalReduction.exists_countValues_subset_fixedActiveShellColoringCountFiber_of_indexedSimpleGraphicCommonCoreLift}} {\tiny\ttfamily Ssot/\allowbreak GraphicCriticalReduction.lean}
\item \textbf{\nolinkurl{GCR129}}\hypertarget{lh:GCR129}{}\enspace{\ttfamily\nolinkurl{GraphicCriticalReduction.not_activeShell_refines_coloringCount_of_indexedSimpleGraphicCommonCoreLift_count_injOn_source_card_gt_one}} {\tiny\ttfamily Ssot/\allowbreak GraphicCriticalReduction.lean}
\item \textbf{\nolinkurl{GCR130}}\hypertarget{lh:GCR130}{}\enspace{\ttfamily\nolinkurl{GraphicCriticalReduction.finiteSourceWitnessCountValues_subset_fixedActiveShellCountFiber_of_indexedColoringWitnessReduction}} {\tiny\ttfamily Ssot/\allowbreak GraphicCriticalReduction.lean}
\item \textbf{\nolinkurl{GCR131}}\hypertarget{lh:GCR131}{}\enspace{\ttfamily\nolinkurl{GraphicCriticalReduction.not_fixedActiveShellCountFiber_subsingleton_of_indexedColoringWitnessReduction_finiteSourceWitnessCountValues_card_gt_one}} {\tiny\ttfamily Ssot/\allowbreak GraphicCriticalReduction.lean}
\item \textbf{\nolinkurl{GCR132}}\hypertarget{lh:GCR132}{}\enspace{\ttfamily\nolinkurl{GraphicCriticalReduction.not_activeShell_refines_count_of_indexedColoringWitnessReduction_finiteSourceWitnessCountValues_card_gt_one}} {\tiny\ttfamily Ssot/\allowbreak GraphicCriticalReduction.lean}
\item \textbf{\nolinkurl{GCR133}}\hypertarget{lh:GCR133}{}\enspace{\ttfamily\nolinkurl{GraphicCriticalReduction.activeShell_refines_count_or_countFiberCollision_of_indexedColoringWitnessReduction_nonemptyCore}} {\tiny\ttfamily Ssot/\allowbreak GraphicCriticalReduction.lean}
\item \textbf{\nolinkurl{GCR134}}\hypertarget{lh:GCR134}{}\enspace{\ttfamily\nolinkurl{GraphicCriticalReduction.finiteSourceColoringCountValues}} {\tiny\ttfamily Ssot/\allowbreak GraphicCriticalReduction.lean}
\item \textbf{\nolinkurl{GCR135}}\hypertarget{lh:GCR135}{}\enspace{\ttfamily\nolinkurl{GraphicCriticalReduction.mem_finiteSourceColoringCountValues}} {\tiny\ttfamily Ssot/\allowbreak GraphicCriticalReduction.lean}
\item \textbf{\nolinkurl{GCR136}}\hypertarget{lh:GCR136}{}\enspace{\ttfamily\nolinkurl{GraphicCriticalReduction.finiteSourceColoringCountValues_subset_fixedActiveShellColoringCountFiber_of_indexedSimpleGraphicCommonCoreLift}} {\tiny\ttfamily Ssot/\allowbreak GraphicCriticalReduction.lean}
\item \textbf{\nolinkurl{GCR137}}\hypertarget{lh:GCR137}{}\enspace{\ttfamily\nolinkurl{GraphicCriticalReduction.not_fixedActiveShellColoringCountFiber_subsingleton_of_indexedSimpleGraphicCommonCoreLift_finiteSourceColoringCountValues_card_gt_one}} {\tiny\ttfamily Ssot/\allowbreak GraphicCriticalReduction.lean}
\item \textbf{\nolinkurl{GCR138}}\hypertarget{lh:GCR138}{}\enspace{\ttfamily\nolinkurl{GraphicCriticalReduction.not_activeShell_refines_coloringCount_of_indexedSimpleGraphicCommonCoreLift_finiteSourceColoringCountValues_card_gt_one}} {\tiny\ttfamily Ssot/\allowbreak GraphicCriticalReduction.lean}
\item \textbf{\nolinkurl{GCR139}}\hypertarget{lh:GCR139}{}\enspace{\ttfamily\nolinkurl{GraphicCriticalReduction.activeShell_refines_coloringCount_or_countFiberCollision_of_indexedSimpleGraphicCommonCoreLift}} {\tiny\ttfamily Ssot/\allowbreak GraphicCriticalReduction.lean}
\item \textbf{\nolinkurl{GCR140}}\hypertarget{lh:GCR140}{}\enspace{\ttfamily\nolinkurl{GraphicCriticalReduction.mem_z3K5SevenEdgeColoringsExec}} {\tiny\ttfamily Ssot/\allowbreak GraphicCriticalReduction.lean}
\item \textbf{\nolinkurl{GCR141}}\hypertarget{lh:GCR141}{}\enspace{\ttfamily\nolinkurl{GraphicCriticalReduction.z3K5SevenEdgeColoringsExec_eq_properIndexedFieldColorings}} {\tiny\ttfamily Ssot/\allowbreak GraphicCriticalReduction.lean}
\item \textbf{\nolinkurl{GCR142}}\hypertarget{lh:GCR142}{}\enspace{\ttfamily\nolinkurl{GraphicCriticalReduction.k5SevenEdgeColorable_coloringCount}} {\tiny\ttfamily Ssot/\allowbreak GraphicCriticalReduction.lean}
\item \textbf{\nolinkurl{GCR143}}\hypertarget{lh:GCR143}{}\enspace{\ttfamily\nolinkurl{GraphicCriticalReduction.k5SevenEdgeNoncolorable_coloringCount}} {\tiny\ttfamily Ssot/\allowbreak GraphicCriticalReduction.lean}
\item \textbf{\nolinkurl{GCR144}}\hypertarget{lh:GCR144}{}\enspace{\ttfamily\nolinkurl{GraphicCriticalReduction.k5SevenEdge_loopless}} {\tiny\ttfamily Ssot/\allowbreak GraphicCriticalReduction.lean}
\item \textbf{\nolinkurl{GCR145}}\hypertarget{lh:GCR145}{}\enspace{\ttfamily\nolinkurl{GraphicCriticalReduction.k5SevenEdge_simple}} {\tiny\ttfamily Ssot/\allowbreak GraphicCriticalReduction.lean}
\item \textbf{\nolinkurl{GCR146}}\hypertarget{lh:GCR146}{}\enspace{\ttfamily\nolinkurl{GraphicCriticalReduction.k5SevenEdgeColoringCountValues_card_gt_one}} {\tiny\ttfamily Ssot/\allowbreak GraphicCriticalReduction.lean}
\item \textbf{\nolinkurl{GCR147}}\hypertarget{lh:GCR147}{}\enspace{\ttfamily\nolinkurl{GraphicCriticalReduction.k5SevenEdge_fixedActiveShellColoringCountFiber_not_subsingleton}} {\tiny\ttfamily Ssot/\allowbreak GraphicCriticalReduction.lean}
\item \textbf{\nolinkurl{GCR148}}\hypertarget{lh:GCR148}{}\enspace{\ttfamily\nolinkurl{GraphicCriticalReduction.k5SevenEdge_activeShell_does_not_refine_coloringCount}} {\tiny\ttfamily Ssot/\allowbreak GraphicCriticalReduction.lean}
\item \textbf{\nolinkurl{GCR149}}\hypertarget{lh:GCR149}{}\enspace{\ttfamily\nolinkurl{GraphicCriticalReduction.mem_z3K5EdgeSetColoringsExec}} {\tiny\ttfamily Ssot/\allowbreak GraphicCriticalReduction.lean}
\item \textbf{\nolinkurl{GCR150}}\hypertarget{lh:GCR150}{}\enspace{\ttfamily\nolinkurl{GraphicCriticalReduction.z3K5EdgeSetColoringsExec_eq_properFieldColorings}} {\tiny\ttfamily Ssot/\allowbreak GraphicCriticalReduction.lean}
\item \textbf{\nolinkurl{GCR151}}\hypertarget{lh:GCR151}{}\enspace{\ttfamily\nolinkurl{GraphicCriticalReduction.k5SevenEdgeSubgraphs_card}} {\tiny\ttfamily Ssot/\allowbreak GraphicCriticalReduction.lean}
\item \textbf{\nolinkurl{GCR152}}\hypertarget{lh:GCR152}{}\enspace{\ttfamily\nolinkurl{GraphicCriticalReduction.k5SevenEdgeSubgraphs_colorCountValues}} {\tiny\ttfamily Ssot/\allowbreak GraphicCriticalReduction.lean}
\item \textbf{\nolinkurl{GCR153}}\hypertarget{lh:GCR153}{}\enspace{\ttfamily\nolinkurl{GraphicCriticalReduction.k5SevenEdgeSubgraphs_colorCountFiber_zero_card}} {\tiny\ttfamily Ssot/\allowbreak GraphicCriticalReduction.lean}
\item \textbf{\nolinkurl{GCR154}}\hypertarget{lh:GCR154}{}\enspace{\ttfamily\nolinkurl{GraphicCriticalReduction.k5SevenEdgeSubgraphs_colorCountFiber_six_card}} {\tiny\ttfamily Ssot/\allowbreak GraphicCriticalReduction.lean}
\item \textbf{\nolinkurl{GCR155}}\hypertarget{lh:GCR155}{}\enspace{\ttfamily\nolinkurl{GraphicCriticalReduction.k5SevenEdgeSubgraphs_colorCountFiber_twelve_card}} {\tiny\ttfamily Ssot/\allowbreak GraphicCriticalReduction.lean}
\item \textbf{\nolinkurl{GCR156}}\hypertarget{lh:GCR156}{}\enspace{\ttfamily\nolinkurl{GraphicCriticalReduction.k5SevenEdgeSixColorings_coloringCount}} {\tiny\ttfamily Ssot/\allowbreak GraphicCriticalReduction.lean}
\item \textbf{\nolinkurl{GCR157}}\hypertarget{lh:GCR157}{}\enspace{\ttfamily\nolinkurl{GraphicCriticalReduction.k5SevenEdgeColoringCountValues_eq}} {\tiny\ttfamily Ssot/\allowbreak GraphicCriticalReduction.lean}
\item \textbf{\nolinkurl{GCR158}}\hypertarget{lh:GCR158}{}\enspace{\ttfamily\nolinkurl{GraphicCriticalReduction.k5SevenEdge_threeColoringCounts_subset_fixedActiveShellFiber}} {\tiny\ttfamily Ssot/\allowbreak GraphicCriticalReduction.lean}
\item \textbf{\nolinkurl{GCR159}}\hypertarget{lh:GCR159}{}\enspace{\ttfamily\nolinkurl{GraphicCriticalReduction.k5SevenEdgeNormalizedColoringCount}} {\tiny\ttfamily Ssot/\allowbreak GraphicCriticalReduction.lean}
\item \textbf{\nolinkurl{GCR160}}\hypertarget{lh:GCR160}{}\enspace{\ttfamily\nolinkurl{GraphicCriticalReduction.k5SevenEdgeNormalizedColorable_count}} {\tiny\ttfamily Ssot/\allowbreak GraphicCriticalReduction.lean}
\item \textbf{\nolinkurl{GCR161}}\hypertarget{lh:GCR161}{}\enspace{\ttfamily\nolinkurl{GraphicCriticalReduction.k5SevenEdgeNormalizedSixColorings_count}} {\tiny\ttfamily Ssot/\allowbreak GraphicCriticalReduction.lean}
\item \textbf{\nolinkurl{GCR162}}\hypertarget{lh:GCR162}{}\enspace{\ttfamily\nolinkurl{GraphicCriticalReduction.k5SevenEdgeNormalizedNoncolorable_count}} {\tiny\ttfamily Ssot/\allowbreak GraphicCriticalReduction.lean}
\item \textbf{\nolinkurl{GCR163}}\hypertarget{lh:GCR163}{}\enspace{\ttfamily\nolinkurl{GraphicCriticalReduction.k5SevenEdgeNormalizedColoringCountValues_eq}} {\tiny\ttfamily Ssot/\allowbreak GraphicCriticalReduction.lean}
\item \textbf{\nolinkurl{GCR164}}\hypertarget{lh:GCR164}{}\enspace{\ttfamily\nolinkurl{GraphicCriticalReduction.k5SevenEdgeNormalizedColoringCountValues_card_gt_one}} {\tiny\ttfamily Ssot/\allowbreak GraphicCriticalReduction.lean}
\item \textbf{\nolinkurl{GCR165}}\hypertarget{lh:GCR165}{}\enspace{\ttfamily\nolinkurl{GraphicCriticalReduction.k5SevenEdge_threeNormalizedCounts_subset_fixedActiveShellFiber}} {\tiny\ttfamily Ssot/\allowbreak GraphicCriticalReduction.lean}
\item \textbf{\nolinkurl{GCR166}}\hypertarget{lh:GCR166}{}\enspace{\ttfamily\nolinkurl{GraphicCriticalReduction.k5SevenEdge_fixedActiveShellNormalizedCountFiber_not_subsingleton}} {\tiny\ttfamily Ssot/\allowbreak GraphicCriticalReduction.lean}
\item \textbf{\nolinkurl{GCR167}}\hypertarget{lh:GCR167}{}\enspace{\ttfamily\nolinkurl{GraphicCriticalReduction.k5SevenEdge_activeShell_does_not_refine_normalizedColoringCount}} {\tiny\ttfamily Ssot/\allowbreak GraphicCriticalReduction.lean}
\item \textbf{\nolinkurl{GCR168}}\hypertarget{lh:GCR168}{}\enspace{\ttfamily\nolinkurl{GraphicCriticalReduction.z3K5EdgeSetNormalizedColoringCount}} {\tiny\ttfamily Ssot/\allowbreak GraphicCriticalReduction.lean}
\item \textbf{\nolinkurl{GCR169}}\hypertarget{lh:GCR169}{}\enspace{\ttfamily\nolinkurl{GraphicCriticalReduction.k5SevenEdgeSubgraphNormalizedColoringCountFiber}} {\tiny\ttfamily Ssot/\allowbreak GraphicCriticalReduction.lean}
\item \textbf{\nolinkurl{GCR170}}\hypertarget{lh:GCR170}{}\enspace{\ttfamily\nolinkurl{GraphicCriticalReduction.k5SevenEdgeSubgraphs_normalizedColorCountValues}} {\tiny\ttfamily Ssot/\allowbreak GraphicCriticalReduction.lean}
\item \textbf{\nolinkurl{GCR171}}\hypertarget{lh:GCR171}{}\enspace{\ttfamily\nolinkurl{GraphicCriticalReduction.k5SevenEdgeSubgraphs_normalizedColorCountFiber_zero_card}} {\tiny\ttfamily Ssot/\allowbreak GraphicCriticalReduction.lean}
\item \textbf{\nolinkurl{GCR172}}\hypertarget{lh:GCR172}{}\enspace{\ttfamily\nolinkurl{GraphicCriticalReduction.k5SevenEdgeSubgraphs_normalizedColorCountFiber_one_card}} {\tiny\ttfamily Ssot/\allowbreak GraphicCriticalReduction.lean}
\item \textbf{\nolinkurl{GCR173}}\hypertarget{lh:GCR173}{}\enspace{\ttfamily\nolinkurl{GraphicCriticalReduction.k5SevenEdgeSubgraphs_normalizedColorCountFiber_two_card}} {\tiny\ttfamily Ssot/\allowbreak GraphicCriticalReduction.lean}
\item \textbf{\nolinkurl{GCR174}}\hypertarget{lh:GCR174}{}\enspace{\ttfamily\nolinkurl{GraphicCriticalReduction.indexedColoringViolationProfileShell}} {\tiny\ttfamily Ssot/\allowbreak GraphicCriticalReduction.lean}
\item \textbf{\nolinkurl{GCR175}}\hypertarget{lh:GCR175}{}\enspace{\ttfamily\nolinkurl{GraphicCriticalReduction.properIndexedFieldColorings_card_eq_indexedColoringViolationProfileShell_zero}} {\tiny\ttfamily Ssot/\allowbreak GraphicCriticalReduction.lean}
\item \textbf{\nolinkurl{GCR176}}\hypertarget{lh:GCR176}{}\enspace{\ttfamily\nolinkurl{GraphicCriticalReduction.properIndexedFieldColorings_card_eq_of_same_indexedColoringViolationProfileShell}} {\tiny\ttfamily Ssot/\allowbreak GraphicCriticalReduction.lean}
\item \textbf{\nolinkurl{GCR177}}\hypertarget{lh:GCR177}{}\enspace{\ttfamily\nolinkurl{GraphicCriticalReduction.properIndexedFieldColorings_card_pos_iff_of_same_indexedColoringViolationProfileShell}} {\tiny\ttfamily Ssot/\allowbreak GraphicCriticalReduction.lean}
\item \textbf{\nolinkurl{GCR178}}\hypertarget{lh:GCR178}{}\enspace{\ttfamily\nolinkurl{GraphicCriticalReduction.coloringCountValue_semanticComplete_indexedColoringViolationProfileShell}} {\tiny\ttfamily Ssot/\allowbreak GraphicCriticalReduction.lean}
\item \textbf{\nolinkurl{GCR179}}\hypertarget{lh:GCR179}{}\enspace{\ttfamily\nolinkurl{GraphicCriticalReduction.coloringCountPositive_semanticComplete_indexedColoringViolationProfileShell}} {\tiny\ttfamily Ssot/\allowbreak GraphicCriticalReduction.lean}
\item \textbf{\nolinkurl{GCR180}}\hypertarget{lh:GCR180}{}\enspace{\ttfamily\nolinkurl{GraphicCriticalReduction.coloringCountValue_control_or_violationProfileFiberCollision}} {\tiny\ttfamily Ssot/\allowbreak GraphicCriticalReduction.lean}
\item \textbf{\nolinkurl{GCR181}}\hypertarget{lh:GCR181}{}\enspace{\ttfamily\nolinkurl{GraphicCriticalReduction.coloringCountPositive_control_or_violationProfileFiberCollision}} {\tiny\ttfamily Ssot/\allowbreak GraphicCriticalReduction.lean}
\item \textbf{\nolinkurl{GCR182}}\hypertarget{lh:GCR182}{}\enspace{\ttfamily\nolinkurl{GraphicCriticalReduction.liftedCandidateCountValue_semanticComplete_indexedColoringViolationProfileShell}} {\tiny\ttfamily Ssot/\allowbreak GraphicCriticalReduction.lean}
\item \textbf{\nolinkurl{GCR183}}\hypertarget{lh:GCR183}{}\enspace{\ttfamily\nolinkurl{GraphicCriticalReduction.liftedCandidateCountPositive_semanticComplete_indexedColoringViolationProfileShell}} {\tiny\ttfamily Ssot/\allowbreak GraphicCriticalReduction.lean}
\item \textbf{\nolinkurl{GCR184}}\hypertarget{lh:GCR184}{}\enspace{\ttfamily\nolinkurl{GraphicCriticalReduction.liftedCandidateCountValue_control_or_violationProfileFiberCollision}} {\tiny\ttfamily Ssot/\allowbreak GraphicCriticalReduction.lean}
\item \textbf{\nolinkurl{GCR185}}\hypertarget{lh:GCR185}{}\enspace{\ttfamily\nolinkurl{GraphicCriticalReduction.liftedCandidateCountPositive_control_or_violationProfileFiberCollision}} {\tiny\ttfamily Ssot/\allowbreak GraphicCriticalReduction.lean}
\item \textbf{\nolinkurl{GCR186}}\hypertarget{lh:GCR186}{}\enspace{\ttfamily\nolinkurl{GraphicCriticalReduction.activeShell_liftedCandidateCountValue_control_or_violationProfileFiberCollision_of_indexedSimpleGraphicCommonCoreLift}} {\tiny\ttfamily Ssot/\allowbreak GraphicCriticalReduction.lean}
\item \textbf{\nolinkurl{GCR187}}\hypertarget{lh:GCR187}{}\enspace{\ttfamily\nolinkurl{GraphicCriticalReduction.activeShell_liftedCandidateCountPositive_control_or_violationProfileFiberCollision_of_indexedSimpleGraphicCommonCoreLift}} {\tiny\ttfamily Ssot/\allowbreak GraphicCriticalReduction.lean}
\item \textbf{\nolinkurl{GCR188}}\hypertarget{lh:GCR188}{}\enspace{\ttfamily\nolinkurl{GraphicCriticalReduction.liftedCandidatePositive_singleton_iff_not_indexedColoringConflict}} {\tiny\ttfamily Ssot/\allowbreak GraphicCriticalReduction.lean}
\item \textbf{\nolinkurl{GCR189}}\hypertarget{lh:GCR189}{}\enspace{\ttfamily\nolinkurl{GraphicCriticalReduction.liftedIndexedCandidateConflictSet}} {\tiny\ttfamily Ssot/\allowbreak GraphicCriticalReduction.lean}
\item \textbf{\nolinkurl{GCR190}}\hypertarget{lh:GCR190}{}\enspace{\ttfamily\nolinkurl{GraphicCriticalReduction.liftedIndexedCandidateConflictSet_eq_indexedColoringConflictSet}} {\tiny\ttfamily Ssot/\allowbreak GraphicCriticalReduction.lean}
\item \textbf{\nolinkurl{GCR191}}\hypertarget{lh:GCR191}{}\enspace{\ttfamily\nolinkurl{GraphicCriticalReduction.liftedIndexedCandidateViolationProfileShell}} {\tiny\ttfamily Ssot/\allowbreak GraphicCriticalReduction.lean}
\item \textbf{\nolinkurl{GCR192}}\hypertarget{lh:GCR192}{}\enspace{\ttfamily\nolinkurl{GraphicCriticalReduction.liftedIndexedCandidateViolationProfileShell_eq_indexedColoringViolationProfileShell}} {\tiny\ttfamily Ssot/\allowbreak GraphicCriticalReduction.lean}
\item \textbf{\nolinkurl{GCR193}}\hypertarget{lh:GCR193}{}\enspace{\ttfamily\nolinkurl{GraphicCriticalReduction.liftedCandidatePositiveOn}} {\tiny\ttfamily Ssot/\allowbreak GraphicCriticalReduction.lean}
\item \textbf{\nolinkurl{GCR194}}\hypertarget{lh:GCR194}{}\enspace{\ttfamily\nolinkurl{GraphicCriticalReduction.liftedCandidatePositive_iff_liftedCandidatePositiveOn_views}} {\tiny\ttfamily Ssot/\allowbreak GraphicCriticalReduction.lean}
\item \textbf{\nolinkurl{GCR195}}\hypertarget{lh:GCR195}{}\enspace{\ttfamily\nolinkurl{GraphicCriticalReduction.liftedCandidateViewConflictSet}} {\tiny\ttfamily Ssot/\allowbreak GraphicCriticalReduction.lean}
\item \textbf{\nolinkurl{GCR196}}\hypertarget{lh:GCR196}{}\enspace{\ttfamily\nolinkurl{GraphicCriticalReduction.liftedCandidateViewViolationProfile}} {\tiny\ttfamily Ssot/\allowbreak GraphicCriticalReduction.lean}
\item \textbf{\nolinkurl{GCR197}}\hypertarget{lh:GCR197}{}\enspace{\ttfamily\nolinkurl{GraphicCriticalReduction.liftedCandidatePositiveOn_graphicIndexedCommonCoreLift_singleton_iff_not_indexedColoringConflict}} {\tiny\ttfamily Ssot/\allowbreak GraphicCriticalReduction.lean}
\item \textbf{\nolinkurl{GCR198}}\hypertarget{lh:GCR198}{}\enspace{\ttfamily\nolinkurl{GraphicCriticalReduction.liftedCandidateViewConflictSet_graphicIndexedCommonCoreLift_eq_indexedColoringConflictSet}} {\tiny\ttfamily Ssot/\allowbreak GraphicCriticalReduction.lean}
\item \textbf{\nolinkurl{GCR199}}\hypertarget{lh:GCR199}{}\enspace{\ttfamily\nolinkurl{GraphicCriticalReduction.liftedCandidateViewViolationProfile_graphicIndexedCommonCoreLift_eq_indexedColoringViolationProfileShell}} {\tiny\ttfamily Ssot/\allowbreak GraphicCriticalReduction.lean}
\item \textbf{\nolinkurl{GCR200}}\hypertarget{lh:GCR200}{}\enspace{\ttfamily\nolinkurl{GraphicCriticalReduction.criticalProblemCommonCoreLiftOn_fixedShell_and_viewViolationProfile_eq_sourceProfile}} {\tiny\ttfamily Ssot/\allowbreak GraphicCriticalReduction.lean}
\item \textbf{\nolinkurl{GCR201}}\hypertarget{lh:GCR201}{}\enspace{\ttfamily\nolinkurl{GraphicCriticalReduction.fixedShellTargetReductionOn_viewViolationProfile_of_indexedSimpleGraphicCommonCoreLift_nonemptyCore}}
\item \textbf{\nolinkurl{GCR202}}\hypertarget{lh:GCR202}{}\enspace{\ttfamily\nolinkurl{GraphicCriticalReduction.not_activeShell_refines_viewViolationProfile_of_indexedSimpleGraphicCommonCoreLift_profiles_ne}} {\tiny\ttfamily Ssot/\allowbreak GraphicCriticalReduction.lean}
\item \textbf{\nolinkurl{GCR203}}\hypertarget{lh:GCR203}{}\enspace{\ttfamily\nolinkurl{GraphicCriticalReduction.exists_sameActiveShell_diffViewViolationProfile_of_indexedSimpleGraphicCommonCoreLift_profiles_ne}} {\tiny\ttfamily Ssot/\allowbreak GraphicCriticalReduction.lean}
\item \textbf{\nolinkurl{GCR204}}\hypertarget{lh:GCR204}{}\enspace{\ttfamily\nolinkurl{GraphicCriticalReduction.not_forall_viewViolationProfileValue_semanticComplete_of_indexedSimpleGraphicCommonCoreLift_profiles_ne}} {\tiny\ttfamily Ssot/\allowbreak GraphicCriticalReduction.lean}
\item \textbf{\nolinkurl{GCR205}}\hypertarget{lh:GCR205}{}\enspace{\ttfamily\nolinkurl{GraphicCriticalReduction.boundedIndexedColoringViolationProfileShell}} {\tiny\ttfamily Ssot/\allowbreak GraphicCriticalReduction.lean}
\item \textbf{\nolinkurl{GCR206}}\hypertarget{lh:GCR206}{}\enspace{\ttfamily\nolinkurl{GraphicCriticalReduction.boundedLiftedCandidateViewViolationProfile}} {\tiny\ttfamily Ssot/\allowbreak GraphicCriticalReduction.lean}
\item \textbf{\nolinkurl{GCR207}}\hypertarget{lh:GCR207}{}\enspace{\ttfamily\nolinkurl{GraphicCriticalReduction.boundedLiftedCandidateViewViolationProfile_graphicIndexedCommonCoreLift_eq_boundedIndexedColoringViolationProfileShell}} {\tiny\ttfamily Ssot/\allowbreak GraphicCriticalReduction.lean}
\item \textbf{\nolinkurl{GCR208}}\hypertarget{lh:GCR208}{}\enspace{\ttfamily\nolinkurl{GraphicCriticalReduction.fixedShellTargetReductionOn_boundedViewViolationProfile_of_indexedSimpleGraphicCommonCoreLift_nonemptyCore}}
\item \textbf{\nolinkurl{GCR209}}\hypertarget{lh:GCR209}{}\enspace{\ttfamily\nolinkurl{GraphicCriticalReduction.finiteSourceBoundedViolationProfileValues}} {\tiny\ttfamily Ssot/\allowbreak GraphicCriticalReduction.lean}
\item \textbf{\nolinkurl{GCR210}}\hypertarget{lh:GCR210}{}\enspace{\ttfamily\nolinkurl{GraphicCriticalReduction.mem_finiteSourceBoundedViolationProfileValues}} {\tiny\ttfamily Ssot/\allowbreak GraphicCriticalReduction.lean}
\item \textbf{\nolinkurl{GCR211}}\hypertarget{lh:GCR211}{}\enspace{\ttfamily\nolinkurl{GraphicCriticalReduction.finiteSourceBoundedViolationProfileValues_subset_fixedActiveShellBoundedProfileFiber}} {\tiny\ttfamily Ssot/\allowbreak GraphicCriticalReduction.lean}
\item \textbf{\nolinkurl{GCR212}}\hypertarget{lh:GCR212}{}\enspace{\ttfamily\nolinkurl{GraphicCriticalReduction.exists_boundedProfileValues_subset_fixedActiveShellBoundedProfileFiber_card_eq_of_injOn}} {\tiny\ttfamily Ssot/\allowbreak GraphicCriticalReduction.lean}
\item \textbf{\nolinkurl{GCR213}}\hypertarget{lh:GCR213}{}\enspace{\ttfamily\nolinkurl{GraphicCriticalReduction.not_activeShell_refines_boundedViewViolationProfile_of_finiteSourceBoundedViolationProfileValues_card_gt_one}} {\tiny\ttfamily Ssot/\allowbreak GraphicCriticalReduction.lean}
\item \textbf{\nolinkurl{GCR214}}\hypertarget{lh:GCR214}{}\enspace{\ttfamily\nolinkurl{GraphicCriticalReduction.properIndexedFieldColorings_card_eq_boundedIndexedColoringViolationProfileShell_zero}} {\tiny\ttfamily Ssot/\allowbreak GraphicCriticalReduction.lean}
\item \textbf{\nolinkurl{GCR215}}\hypertarget{lh:GCR215}{}\enspace{\ttfamily\nolinkurl{GraphicCriticalReduction.properIndexedFieldColorings_card_eq_of_same_boundedIndexedColoringViolationProfileShell}} {\tiny\ttfamily Ssot/\allowbreak GraphicCriticalReduction.lean}
\item \textbf{\nolinkurl{GCR216}}\hypertarget{lh:GCR216}{}\enspace{\ttfamily\nolinkurl{GraphicCriticalReduction.boundedIndexedColoringViolationProfileShell_injOn_of_coloringCount_injOn}} {\tiny\ttfamily Ssot/\allowbreak GraphicCriticalReduction.lean}
\item \textbf{\nolinkurl{GCR217}}\hypertarget{lh:GCR217}{}\enspace{\ttfamily\nolinkurl{GraphicCriticalReduction.k5SevenEdgeColoringCount_injOn_univ}} {\tiny\ttfamily Ssot/\allowbreak GraphicCriticalReduction.lean}
\item \textbf{\nolinkurl{GCR218}}\hypertarget{lh:GCR218}{}\enspace{\ttfamily\nolinkurl{GraphicCriticalReduction.k5SevenEdgeBoundedViolationProfile_injOn_univ}} {\tiny\ttfamily Ssot/\allowbreak GraphicCriticalReduction.lean}
\item \textbf{\nolinkurl{GCR219}}\hypertarget{lh:GCR219}{}\enspace{\ttfamily\nolinkurl{GraphicCriticalReduction.k5SevenEdgeBoundedViolationProfileValues_card_eq}} {\tiny\ttfamily Ssot/\allowbreak GraphicCriticalReduction.lean}
\item \textbf{\nolinkurl{GCR220}}\hypertarget{lh:GCR220}{}\enspace{\ttfamily\nolinkurl{GraphicCriticalReduction.k5SevenEdgeBoundedViolationProfileValues_card_gt_one}} {\tiny\ttfamily Ssot/\allowbreak GraphicCriticalReduction.lean}
\item \textbf{\nolinkurl{GCR221}}\hypertarget{lh:GCR221}{}\enspace{\ttfamily\nolinkurl{GraphicCriticalReduction.k5SevenEdge_threeBoundedProfiles_subset_fixedActiveShellFiber}} {\tiny\ttfamily Ssot/\allowbreak GraphicCriticalReduction.lean}
\item \textbf{\nolinkurl{GCR222}}\hypertarget{lh:GCR222}{}\enspace{\ttfamily\nolinkurl{GraphicCriticalReduction.k5SevenEdge_activeShell_does_not_refine_boundedViewViolationProfile}} {\tiny\ttfamily Ssot/\allowbreak GraphicCriticalReduction.lean}
\item \textbf{\nolinkurl{GCR223}}\hypertarget{lh:GCR223}{}\enspace{\ttfamily\nolinkurl{GraphicCriticalReduction.graphicIndexedQuotientSearch_positiveCandidates_eq_properIndexedFieldColorings}} {\tiny\ttfamily Ssot/\allowbreak GraphicCriticalReduction.lean}
\item \textbf{\nolinkurl{GCR224}}\hypertarget{lh:GCR224}{}\enspace{\ttfamily\nolinkurl{GraphicCriticalReduction.commonCoreLift_witnesses_graphicIndexedCommonCoreLift_card_eq_properIndexedFieldColorings}} {\tiny\ttfamily Ssot/\allowbreak GraphicCriticalReduction.lean}
\item \textbf{\nolinkurl{GCR225}}\hypertarget{lh:GCR225}{}\enspace{\ttfamily\nolinkurl{GraphicCriticalReduction.fixedShellTargetReductionOn_witnessPairCount_of_indexedSimpleGraphicCommonCoreLift_nonemptyCore}}
\item \textbf{\nolinkurl{GCR226}}\hypertarget{lh:GCR226}{}\enspace{\ttfamily\nolinkurl{GraphicCriticalReduction.commonCoreLift_witnesses_graphicIndexedCommonCoreLift_card_eq_finiteSourceWitnesses}} {\tiny\ttfamily Ssot/\allowbreak GraphicCriticalReduction.lean}
\item \textbf{\nolinkurl{GCR227}}\hypertarget{lh:GCR227}{}\enspace{\ttfamily\nolinkurl{GraphicCriticalReduction.fixedShellTargetReductionOn_witnessPairCount_of_indexedColoringWitnessReduction_nonemptyCore}}
\item \textbf{\nolinkurl{GCR228}}\hypertarget{lh:GCR228}{}\enspace{\ttfamily\nolinkurl{GraphicCriticalReduction.finiteSourceWitnessCountValues_subset_fixedActiveShellWitnessPairCountFiber_of_indexedColoringWitnessReduction}} {\tiny\ttfamily Ssot/\allowbreak GraphicCriticalReduction.lean}
\item \textbf{\nolinkurl{GCR229}}\hypertarget{lh:GCR229}{}\enspace{\ttfamily\nolinkurl{GraphicCriticalReduction.exists_sameActiveShell_diffWitnessPairCount_of_indexedColoringWitnessReduction}} {\tiny\ttfamily Ssot/\allowbreak GraphicCriticalReduction.lean}
\item \textbf{\nolinkurl{GCR230}}\hypertarget{lh:GCR230}{}\enspace{\ttfamily\nolinkurl{GraphicCriticalReduction.not_activeShell_refines_witnessPairCount_of_indexedColoringWitnessReduction}} {\tiny\ttfamily Ssot/\allowbreak GraphicCriticalReduction.lean}
\item \textbf{\nolinkurl{GCR231}}\hypertarget{lh:GCR231}{}\enspace{\ttfamily\nolinkurl{GraphicCriticalReduction.indexedColoringWitnessReductionSelf}} {\tiny\ttfamily Ssot/\allowbreak GraphicCriticalReduction.lean}
\item \textbf{\nolinkurl{GCR232}}\hypertarget{lh:GCR232}{}\enspace{\ttfamily\nolinkurl{GraphicCriticalReduction.k5SevenEdgeWitnessReduction}} {\tiny\ttfamily Ssot/\allowbreak GraphicCriticalReduction.lean}
\item \textbf{\nolinkurl{GCR233}}\hypertarget{lh:GCR233}{}\enspace{\ttfamily\nolinkurl{GraphicCriticalReduction.k5SevenEdgeWitnessCountValues_eq}} {\tiny\ttfamily Ssot/\allowbreak GraphicCriticalReduction.lean}
\item \textbf{\nolinkurl{GCR234}}\hypertarget{lh:GCR234}{}\enspace{\ttfamily\nolinkurl{GraphicCriticalReduction.k5SevenEdge_threeWitnessPairCounts_subset_fixedActiveShellFiber}} {\tiny\ttfamily Ssot/\allowbreak GraphicCriticalReduction.lean}
\item \textbf{\nolinkurl{GCR235}}\hypertarget{lh:GCR235}{}\enspace{\ttfamily\nolinkurl{GraphicCriticalReduction.k5SevenEdge_activeShell_does_not_refine_witnessPairCount}} {\tiny\ttfamily Ssot/\allowbreak GraphicCriticalReduction.lean}
\item \textbf{\nolinkurl{KSH1}}\hypertarget{lh:KSH1}{}\enspace{\ttfamily\nolinkurl{KernelSectionHardness.Literal.homCoord_anchor_true}} {\tiny\ttfamily Ssot/\allowbreak KernelSectionHardness.lean}
\item \textbf{\nolinkurl{KSH2}}\hypertarget{lh:KSH2}{}\enspace{\ttfamily\nolinkurl{KernelSectionHardness.Clause3.homViewNonzero_anchor_true}} {\tiny\ttfamily Ssot/\allowbreak KernelSectionHardness.lean}
\item \textbf{\nolinkurl{KSH3}}\hypertarget{lh:KSH3}{}\enspace{\ttfamily\nolinkurl{KernelSectionHardness.SAT3Instance.satisfiable_implies_hasKernelSectionPoint}} {\tiny\ttfamily Ssot/\allowbreak KernelSectionHardness.lean}
\item \textbf{\nolinkurl{KSH4}}\hypertarget{lh:KSH4}{}\enspace{\ttfamily\nolinkurl{KernelSectionHardness.SAT3Instance.hasKernelSectionPoint_implies_satisfiable}} {\tiny\ttfamily Ssot/\allowbreak KernelSectionHardness.lean}
\item \textbf{\nolinkurl{KSH5}}\hypertarget{lh:KSH5}{}\enspace{\ttfamily\nolinkurl{KernelSectionHardness.SAT3Instance.satisfiable_iff_hasKernelSectionPoint}} {\tiny\ttfamily Ssot/\allowbreak KernelSectionHardness.lean}
\item \textbf{\nolinkurl{KSH6}}\hypertarget{lh:KSH6}{}\enspace{\ttfamily\nolinkurl{KernelSectionHardness.SAT3Instance.hasKernelSectionPoint_implies_hasNonlinearSectionPair}} {\tiny\ttfamily Ssot/\allowbreak KernelSectionHardness.lean}
\item \textbf{\nolinkurl{KSH7}}\hypertarget{lh:KSH7}{}\enspace{\ttfamily\nolinkurl{KernelSectionHardness.SAT3Instance.hasNonlinearSectionPair_implies_hasKernelSectionPoint}} {\tiny\ttfamily Ssot/\allowbreak KernelSectionHardness.lean}
\item \textbf{\nolinkurl{KSH8}}\hypertarget{lh:KSH8}{}\enspace{\ttfamily\nolinkurl{KernelSectionHardness.SAT3Instance.satisfiable_iff_hasNonlinearSectionPair}} {\tiny\ttfamily Ssot/\allowbreak KernelSectionHardness.lean}
\item \textbf{\nolinkurl{KSH9}}\hypertarget{lh:KSH9}{}\enspace{\ttfamily\nolinkurl{KernelSectionHardness.SAT3Instance.clausesAllNonzero_eq_true_iff}} {\tiny\ttfamily Ssot/\allowbreak KernelSectionHardness.lean}
\item \textbf{\nolinkurl{KSH10}}\hypertarget{lh:KSH10}{}\enspace{\ttfamily\nolinkurl{KernelSectionHardness.SAT3Instance.linearCertificateVerifier_eq_true_iff}} {\tiny\ttfamily Ssot/\allowbreak KernelSectionHardness.lean}
\item \textbf{\nolinkurl{KSH11}}\hypertarget{lh:KSH11}{}\enspace{\ttfamily\nolinkurl{KernelSectionHardness.SAT3Instance.exists_linearCertificateVerifier_iff_hasKernelSectionPoint}} {\tiny\ttfamily Ssot/\allowbreak KernelSectionHardness.lean}
\item \textbf{\nolinkurl{KSH12}}\hypertarget{lh:KSH12}{}\enspace{\ttfamily\nolinkurl{KernelSectionHardness.SAT3Instance.nonlinearPairCertificateVerifier_eq_true_iff}} {\tiny\ttfamily Ssot/\allowbreak KernelSectionHardness.lean}
\item \textbf{\nolinkurl{KSH13}}\hypertarget{lh:KSH13}{}\enspace{\ttfamily\nolinkurl{KernelSectionHardness.SAT3Instance.exists_nonlinearPairCertificateVerifier_iff_hasNonlinearSectionPair}} {\tiny\ttfamily Ssot/\allowbreak KernelSectionHardness.lean}
\item \textbf{\nolinkurl{KSH14}}\hypertarget{lh:KSH14}{}\enspace{\ttfamily\nolinkurl{KernelSectionHardness.SAT3Instance.linearCertificateEquivSatisfyingAssignment}} {\tiny\ttfamily Ssot/\allowbreak KernelSectionHardness.lean}
\item \textbf{\nolinkurl{KSH15}}\hypertarget{lh:KSH15}{}\enspace{\ttfamily\nolinkurl{KernelSectionHardness.SAT3Instance.linearCertificate_card_eq_satisfyingAssignment_card}} {\tiny\ttfamily Ssot/\allowbreak KernelSectionHardness.lean}
\item \textbf{\nolinkurl{MFT3}}\hypertarget{lh:MFT3}{}\enspace{\ttfamily\nolinkurl{MultiFact.exactRecovery_iff_properColoring}} {\tiny\ttfamily Ssot/\allowbreak MultiFact.lean}
\item \textbf{\nolinkurl{MFT4}}\hypertarget{lh:MFT4}{}\enspace{\ttfamily\nolinkurl{MultiFact.exists_exactRecovery_iff_colorable}} {\tiny\ttfamily Ssot/\allowbreak MultiFact.lean}
\item \textbf{\nolinkurl{MFT9}}\hypertarget{lh:MFT9}{}\enspace{\ttfamily\nolinkurl{MultiFact.exists_exactOn_iff_colorableOn}} {\tiny\ttfamily Ssot/\allowbreak MultiFact.lean}
\item \textbf{\nolinkurl{MFT14}}\hypertarget{lh:MFT14}{}\enspace{\ttfamily\nolinkurl{MultiFact.exists_exactOn_card_eq_maxColorableCard}} {\tiny\ttfamily Ssot/\allowbreak MultiFact.lean}
\item \textbf{\nolinkurl{MFT15}}\hypertarget{lh:MFT15}{}\enspace{\ttfamily\nolinkurl{MultiFact.binaryViews_maxColorableCard_one_eq_two}} {\tiny\ttfamily Ssot/\allowbreak MultiFact.lean}
\item \textbf{\nolinkurl{MFT16}}\hypertarget{lh:MFT16}{}\enspace{\ttfamily\nolinkurl{MultiFact.pair_product_success_card_lower_bound}} {\tiny\ttfamily Ssot/\allowbreak MultiFact.lean}
\item \textbf{\nolinkurl{MFT20}}\hypertarget{lh:MFT20}{}\enspace{\ttfamily\nolinkurl{MultiFact.colorable_iff_graphColorable}} {\tiny\ttfamily Ssot/\allowbreak MultiFact.lean}
\item \textbf{\nolinkurl{MFT25}}\hypertarget{lh:MFT25}{}\enspace{\ttfamily\nolinkurl{MultiFact.pairConfusable_iff_strongProdAdj}} {\tiny\ttfamily Ssot/\allowbreak MultiFact.lean}
\item \textbf{\nolinkurl{MFT26}}\hypertarget{lh:MFT26}{}\enspace{\ttfamily\nolinkurl{MultiFact.pairConfusabilityGraph_eq_strongProd}} {\tiny\ttfamily Ssot/\allowbreak MultiFact.lean}
\item \textbf{\nolinkurl{MFT43}}\hypertarget{lh:MFT43}{}\enspace{\ttfamily\nolinkurl{MultiFact.graphNegLogSeq_subadditive}} {\tiny\ttfamily Ssot/\allowbreak MultiFact.lean}
\item \textbf{\nolinkurl{MFT44}}\hypertarget{lh:MFT44}{}\enspace{\ttfamily\nolinkurl{MultiFact.tendsto_graphPowerRateNat_graphShannonCapacityReal}} {\tiny\ttfamily Ssot/\allowbreak MultiFact.lean}
\item \textbf{\nolinkurl{MFT45}}\hypertarget{lh:MFT45}{}\enspace{\ttfamily\nolinkurl{MultiFact.graphPowerRate_le_graphShannonCapacityReal}} {\tiny\ttfamily Ssot/\allowbreak MultiFact.lean}
\item \textbf{\nolinkurl{MFT46}}\hypertarget{lh:MFT46}{}\enspace{\ttfamily\nolinkurl{MultiFact.graphShannonCapacityReal_nonneg}} {\tiny\ttfamily Ssot/\allowbreak MultiFact.lean}
\item \textbf{\nolinkurl{MFT47}}\hypertarget{lh:MFT47}{}\enspace{\ttfamily\nolinkurl{MultiFact.iSup_graphPowerRate_eq_graphShannonCapacityReal}} {\tiny\ttfamily Ssot/\allowbreak MultiFact.lean}
\item \textbf{\nolinkurl{MFT48}}\hypertarget{lh:MFT48}{}\enspace{\ttfamily\nolinkurl{MultiFact.shannonCapacityReal_eq_iSup_blockRate}} {\tiny\ttfamily Ssot/\allowbreak MultiFact.lean}
\item \textbf{\nolinkurl{MFT49}}\hypertarget{lh:MFT49}{}\enspace{\ttfamily\nolinkurl{MultiFact.blockRate_le_shannonCapacityReal}} {\tiny\ttfamily Ssot/\allowbreak MultiFact.lean}
\item \textbf{\nolinkurl{MFT56}}\hypertarget{lh:MFT56}{}\enspace{\ttfamily\nolinkurl{MultiFact.graphShannonLowerCapacity_eq_ofReal_graphShannonCapacityReal}} {\tiny\ttfamily Ssot/\allowbreak MultiFact.lean}
\item \textbf{\nolinkurl{MFT86}}\hypertarget{lh:MFT86}{}\enspace{\ttfamily\nolinkurl{MultiFact.graphShannonCapacityReal_clusterGraph_eq_rank_log}} {\tiny\ttfamily Ssot/\allowbreak MultiFact.lean}
\item \textbf{\nolinkurl{MFT89}}\hypertarget{lh:MFT89}{}\enspace{\ttfamily\nolinkurl{MultiFact.shannonCapacityReal_eq_shannonLovaszThetaAsymptoticUpper_of_fiberCoherent}} {\tiny\ttfamily Ssot/\allowbreak MultiFact.lean}
\item \textbf{\nolinkurl{MFT90}}\hypertarget{lh:MFT90}{}\enspace{\ttfamily\nolinkurl{MultiFact.shannonCapacityReal_eq_log_transcriptFiberCard_of_fiberCoherent}} {\tiny\ttfamily Ssot/\allowbreak MultiFact.lean}
\item \textbf{\nolinkurl{MFT91}}\hypertarget{lh:MFT91}{}\enspace{\ttfamily\nolinkurl{MultiFact.shannonLovaszThetaAsymptoticUpper_eq_log_transcriptFiberCard_of_fiberCoherent}} {\tiny\ttfamily Ssot/\allowbreak MultiFact.lean}
\item \textbf{\nolinkurl{MFT92}}\hypertarget{lh:MFT92}{}\enspace{\ttfamily\nolinkurl{MultiFact.EdgeSubgraphUpperCertificate}} {\tiny\ttfamily Ssot/\allowbreak MultiFact.lean}
\item \textbf{\nolinkurl{MFT93}}\hypertarget{lh:MFT93}{}\enspace{\ttfamily\nolinkurl{MultiFact.graphShannonCapacityReal_le_of_edgeSubgraphUpperCertificate}} {\tiny\ttfamily Ssot/\allowbreak MultiFact.lean}
\item \textbf{\nolinkurl{MFT94}}\hypertarget{lh:MFT94}{}\enspace{\ttfamily\nolinkurl{MultiFact.graphShannonCapacityReal_le_chain_of_edge_supersets}} {\tiny\ttfamily Ssot/\allowbreak MultiFact.lean}
\item \textbf{\nolinkurl{MFT95}}\hypertarget{lh:MFT95}{}\enspace{\ttfamily\nolinkurl{MultiFact.nat_eq_of_log_le_log_and_le}} {\tiny\ttfamily Ssot/\allowbreak CapacityLower.lean}
\item \textbf{\nolinkurl{MFT96}}\hypertarget{lh:MFT96}{}\enspace{\ttfamily\nolinkurl{MultiFact.graphUpperCapacityGap}} {\tiny\ttfamily Ssot/\allowbreak CapacityLower.lean}
\item \textbf{\nolinkurl{MFT97}}\hypertarget{lh:MFT97}{}\enspace{\ttfamily\nolinkurl{MultiFact.graphUpperPowerGapRate}} {\tiny\ttfamily Ssot/\allowbreak CapacityLower.lean}
\item \textbf{\nolinkurl{MFT98}}\hypertarget{lh:MFT98}{}\enspace{\ttfamily\nolinkurl{MultiFact.graphUpperCapacityGap_nonneg_of_upper}} {\tiny\ttfamily Ssot/\allowbreak CapacityLower.lean}
\item \textbf{\nolinkurl{MFT99}}\hypertarget{lh:MFT99}{}\enspace{\ttfamily\nolinkurl{MultiFact.graphUpperCapacityGap_eq_zero_iff_of_upper}} {\tiny\ttfamily Ssot/\allowbreak CapacityLower.lean}
\item \textbf{\nolinkurl{MFT100}}\hypertarget{lh:MFT100}{}\enspace{\ttfamily\nolinkurl{MultiFact.tendsto_graphUpperPowerGapRate_graphUpperCapacityGap}} {\tiny\ttfamily Ssot/\allowbreak CapacityLower.lean}
\item \textbf{\nolinkurl{MFT101}}\hypertarget{lh:MFT101}{}\enspace{\ttfamily\nolinkurl{MultiFact.graphUpperGap_add_of_capacity_add}} {\tiny\ttfamily Ssot/\allowbreak CapacityLower.lean}
\item \textbf{\nolinkurl{MFT102}}\hypertarget{lh:MFT102}{}\enspace{\ttfamily\nolinkurl{MultiFact.confusable_iff_agreeSet_card_ge_of_all_card_views}} {\tiny\ttfamily Ssot/\allowbreak MultiFact.lean}
\item \textbf{\nolinkurl{MFT103}}\hypertarget{lh:MFT103}{}\enspace{\ttfamily\nolinkurl{MultiFact.independent_iff_pair_agreeSet_card_lt_of_all_card_views}} {\tiny\ttfamily Ssot/\allowbreak MultiFact.lean}
\item \textbf{\nolinkurl{MFT104}}\hypertarget{lh:MFT104}{}\enspace{\ttfamily\nolinkurl{MultiFact.independent_card_le_pow_of_all_card_views}} {\tiny\ttfamily Ssot/\allowbreak MultiFact.lean}
\item \textbf{\nolinkurl{MFT105}}\hypertarget{lh:MFT105}{}\enspace{\ttfamily\nolinkurl{MultiFact.maxIndependentCard_le_pow_of_all_card_views}} {\tiny\ttfamily Ssot/\allowbreak MultiFact.lean}
\item \textbf{\nolinkurl{MFT106}}\hypertarget{lh:MFT106}{}\enspace{\ttfamily\nolinkurl{MultiFact.exists_independent_card_eq_pow_iff_maxIndependentCard_eq_pow_of_all_card_views}} {\tiny\ttfamily Ssot/\allowbreak MultiFact.lean}
\item \textbf{\nolinkurl{MFT107}}\hypertarget{lh:MFT107}{}\enspace{\ttfamily\nolinkurl{MultiFact.confusable_one_coord_iff_exists_empty_view}} {\tiny\ttfamily Ssot/\allowbreak MultiFact.lean}
\item \textbf{\nolinkurl{MFT108}}\hypertarget{lh:MFT108}{}\enspace{\ttfamily\nolinkurl{MultiFact.graphUpperGap_le_add_of_capacity_ge}} {\tiny\ttfamily Ssot/\allowbreak CapacityLower.lean}
\item \textbf{\nolinkurl{MFT139}}\hypertarget{lh:MFT139}{}\enspace{\ttfamily\nolinkurl{MultiFact.IsIntersectionClosed}} {\tiny\ttfamily Ssot/\allowbreak MultiFact.lean}
\item \textbf{\nolinkurl{MFT140}}\hypertarget{lh:MFT140}{}\enspace{\ttfamily\nolinkurl{MultiFact.agreementUpwardFamily_interClosed_of_meetWitnessed}} {\tiny\ttfamily Ssot/\allowbreak MultiFact.lean}
\item \textbf{\nolinkurl{MFT141}}\hypertarget{lh:MFT141}{}\enspace{\ttfamily\nolinkurl{MultiFact.interClosed_agreementUpwardFamily_confusableTransitive}} {\tiny\ttfamily Ssot/\allowbreak MultiFact.lean}
\item \textbf{\nolinkurl{MFT142}}\hypertarget{lh:MFT142}{}\enspace{\ttfamily\nolinkurl{MultiFact.exists_three_states_with_agreeSets_eq_inter}} {\tiny\ttfamily Ssot/\allowbreak MultiFact.lean}
\item \textbf{\nolinkurl{MFT143}}\hypertarget{lh:MFT143}{}\enspace{\ttfamily\nolinkurl{MultiFact.confusableTransitive_agreementUpwardFamily_interClosed}} {\tiny\ttfamily Ssot/\allowbreak MultiFact.lean}
\item \textbf{\nolinkurl{MFT144}}\hypertarget{lh:MFT144}{}\enspace{\ttfamily\nolinkurl{MultiFact.confusableTransitive_iff_agreementUpwardFamily_interClosed}} {\tiny\ttfamily Ssot/\allowbreak MultiFact.lean}
\item \textbf{\nolinkurl{MFT145}}\hypertarget{lh:MFT145}{}\enspace{\ttfamily\nolinkurl{MultiFact.shannonCapacityReal_eq_shannonLovaszThetaAsymptoticUpper_of_agreementUpwardFamily_interClosed}} {\tiny\ttfamily Ssot/\allowbreak MultiFact.lean}
\item \textbf{\nolinkurl{MFT146}}\hypertarget{lh:MFT146}{}\enspace{\ttfamily\nolinkurl{MultiFact.shannonCapacityReal_eq_log_connectedComponents_of_agreementUpwardFamily_interClosed}} {\tiny\ttfamily Ssot/\allowbreak MultiFact.lean}
\item \textbf{\nolinkurl{MFT147}}\hypertarget{lh:MFT147}{}\enspace{\ttfamily\nolinkurl{MultiFact.shannonLovaszThetaAsymptoticUpper_eq_log_connectedComponents_of_agreementUpwardFamily_interClosed}} {\tiny\ttfamily Ssot/\allowbreak MultiFact.lean}
\item \textbf{\nolinkurl{MFT148}}\hypertarget{lh:MFT148}{}\enspace{\ttfamily\nolinkurl{MultiFact.exists_three_states_with_agreeSets_eq_inter_of_union_eq_univ}} {\tiny\ttfamily Ssot/\allowbreak MultiFact.lean}
\item \textbf{\nolinkurl{MFT149}}\hypertarget{lh:MFT149}{}\enspace{\ttfamily\nolinkurl{MultiFact.agreementUpwardFamily_interClosed_iff_meetWitnessed}} {\tiny\ttfamily Ssot/\allowbreak MultiFact.lean}
\item \textbf{\nolinkurl{MFT150}}\hypertarget{lh:MFT150}{}\enspace{\ttfamily\nolinkurl{MultiFact.confusableTransitive_iff_meetWitnessed}} {\tiny\ttfamily Ssot/\allowbreak MultiFact.lean}
\item \textbf{\nolinkurl{MFT202}}\hypertarget{lh:MFT202}{}\enspace{\ttfamily\nolinkurl{MultiFact.confusabilityGraph_eq_iff_agreementUpwardFamily_eq}} {\tiny\ttfamily Ssot/\allowbreak MultiFact.lean}
\item \textbf{\nolinkurl{OBS1}}\hypertarget{lh:OBS1}{}\enspace{\ttfamily\nolinkurl{FiniteConverse.exact_recovery_implies_pair_injective}} {\tiny\ttfamily Ssot/\allowbreak FiniteConverse.lean}
\item \textbf{\nolinkurl{OBS2}}\hypertarget{lh:OBS2}{}\enspace{\ttfamily\nolinkurl{FiniteConverse.pair_injective_implies_exact_recovery}} {\tiny\ttfamily Ssot/\allowbreak FiniteConverse.lean}
\item \textbf{\nolinkurl{OBS5}}\hypertarget{lh:OBS5}{}\enspace{\ttfamily\nolinkurl{FiniteConverse.clique_card_le_tag_alphabet}} {\tiny\ttfamily Ssot/\allowbreak FiniteConverse.lean}
\item \textbf{\nolinkurl{SCB979}}\hypertarget{lh:SCB979}{}\enspace{\ttfamily\nolinkurl{KernelSectionHardness.SATCriticalBridge.ShapedSATCompleteSpine}} {\tiny\ttfamily Ssot/\allowbreak SATCriticalBridge.lean}
\item \textbf{\nolinkurl{SCB980}}\hypertarget{lh:SCB980}{}\enspace{\ttfamily\nolinkurl{KernelSectionHardness.SATCriticalBridge.satListExternalShell_fixedSATCompleteSpine}} {\tiny\ttfamily Ssot/\allowbreak SATCriticalBridge.lean}
\item \textbf{\nolinkurl{SCB981}}\hypertarget{lh:SCB981}{}\enspace{\ttfamily\nolinkurl{KernelSectionHardness.SATCriticalBridge.satListExternalKernelShell_fixedSATCompleteSpine}} {\tiny\ttfamily Ssot/\allowbreak SATCriticalBridge.lean}
\item \textbf{\nolinkurl{SCB982}}\hypertarget{lh:SCB982}{}\enspace{\ttfamily\nolinkurl{KernelSectionHardness.SATCriticalBridge.satListGatedKernelShell_fixedSATCompleteSpine}} {\tiny\ttfamily Ssot/\allowbreak SATCriticalBridge.lean}
\item \textbf{\nolinkurl{SCB983}}\hypertarget{lh:SCB983}{}\enspace{\ttfamily\nolinkurl{KernelSectionHardness.SATCriticalBridge.satListGatedLatticeShell_fixedSATCompleteSpine}} {\tiny\ttfamily Ssot/\allowbreak SATCriticalBridge.lean}
\item \textbf{\nolinkurl{SCB984}}\hypertarget{lh:SCB984}{}\enspace{\ttfamily\nolinkurl{KernelSectionHardness.SATCriticalBridge.satListSourceGatedPhysical_sameKernelShell_diffTruthCount_of_sameShape}} {\tiny\ttfamily Ssot/\allowbreak SATCriticalBridge.lean}
\item \textbf{\nolinkurl{SCB985}}\hypertarget{lh:SCB985}{}\enspace{\ttfamily\nolinkurl{KernelSectionHardness.SATCriticalBridge.satListSourceGatedPhysical_sameLatticeShell_diffTruthCount_of_sameShape}} {\tiny\ttfamily Ssot/\allowbreak SATCriticalBridge.lean}
\item \textbf{\nolinkurl{SCB986}}\hypertarget{lh:SCB986}{}\enspace{\ttfamily\nolinkurl{KernelSectionHardness.SATCriticalBridge.satListSourceExternalPhysical_sameShell_diffTruthCount_of_sameShape}} {\tiny\ttfamily Ssot/\allowbreak SATCriticalBridge.lean}
\item \textbf{\nolinkurl{SCB987}}\hypertarget{lh:SCB987}{}\enspace{\ttfamily\nolinkurl{KernelSectionHardness.SATCriticalBridge.satListSourceExternalPhysical_sameKernelShell_diffTruthCount_of_sameShape}} {\tiny\ttfamily Ssot/\allowbreak SATCriticalBridge.lean}
\item \textbf{\nolinkurl{SCB988}}\hypertarget{lh:SCB988}{}\enspace{\ttfamily\nolinkurl{KernelSectionHardness.SATConjunctiveBridge.posLitAt}} {\tiny\ttfamily Ssot/\allowbreak SATConjunctiveBridge.lean}
\item \textbf{\nolinkurl{SCB989}}\hypertarget{lh:SCB989}{}\enspace{\ttfamily\nolinkurl{KernelSectionHardness.SATConjunctiveBridge.negLitAt}} {\tiny\ttfamily Ssot/\allowbreak SATConjunctiveBridge.lean}
\item \textbf{\nolinkurl{SCB990}}\hypertarget{lh:SCB990}{}\enspace{\ttfamily\nolinkurl{KernelSectionHardness.SATConjunctiveBridge.posClauseAt}} {\tiny\ttfamily Ssot/\allowbreak SATConjunctiveBridge.lean}
\item \textbf{\nolinkurl{SCB991}}\hypertarget{lh:SCB991}{}\enspace{\ttfamily\nolinkurl{KernelSectionHardness.SATConjunctiveBridge.negClauseAt}} {\tiny\ttfamily Ssot/\allowbreak SATConjunctiveBridge.lean}
\item \textbf{\nolinkurl{SCB992}}\hypertarget{lh:SCB992}{}\enspace{\ttfamily\nolinkurl{KernelSectionHardness.SATConjunctiveBridge.satListYes}} {\tiny\ttfamily Ssot/\allowbreak SATConjunctiveBridge.lean}
\item \textbf{\nolinkurl{SCB993}}\hypertarget{lh:SCB993}{}\enspace{\ttfamily\nolinkurl{KernelSectionHardness.SATConjunctiveBridge.satListContradictoryAt}} {\tiny\ttfamily Ssot/\allowbreak SATConjunctiveBridge.lean}
\item \textbf{\nolinkurl{SCB994}}\hypertarget{lh:SCB994}{}\enspace{\ttfamily\nolinkurl{KernelSectionHardness.SATConjunctiveBridge.satListYes_satisfiable}} {\tiny\ttfamily Ssot/\allowbreak SATConjunctiveBridge.lean}
\item \textbf{\nolinkurl{SCB995}}\hypertarget{lh:SCB995}{}\enspace{\ttfamily\nolinkurl{KernelSectionHardness.SATConjunctiveBridge.satListContradictoryAt_not_satisfiable}} {\tiny\ttfamily Ssot/\allowbreak SATConjunctiveBridge.lean}
\item \textbf{\nolinkurl{SCB996}}\hypertarget{lh:SCB996}{}\enspace{\ttfamily\nolinkurl{KernelSectionHardness.SATCriticalBridge.satListSourceGatedPhysical_sameKernelShell_diffTruthCount_atVar}} {\tiny\ttfamily Ssot/\allowbreak SATCriticalBridge.lean}
\item \textbf{\nolinkurl{SCB997}}\hypertarget{lh:SCB997}{}\enspace{\ttfamily\nolinkurl{KernelSectionHardness.SATCriticalBridge.satListSourceGatedPhysical_sameLatticeShell_diffTruthCount_atVar}} {\tiny\ttfamily Ssot/\allowbreak SATCriticalBridge.lean}
\item \textbf{\nolinkurl{SCB998}}\hypertarget{lh:SCB998}{}\enspace{\ttfamily\nolinkurl{KernelSectionHardness.SATCriticalBridge.satListSourceExternalPhysical_sameShell_diffTruthCount_atVar}} {\tiny\ttfamily Ssot/\allowbreak SATCriticalBridge.lean}
\item \textbf{\nolinkurl{SCB999}}\hypertarget{lh:SCB999}{}\enspace{\ttfamily\nolinkurl{KernelSectionHardness.SATCriticalBridge.satListSourceExternalPhysical_sameKernelShell_diffTruthCount_atVar}} {\tiny\ttfamily Ssot/\allowbreak SATCriticalBridge.lean}
\item \textbf{\nolinkurl{SCB1000}}\hypertarget{lh:SCB1000}{}\enspace{\ttfamily\nolinkurl{KernelSectionHardness.SATCriticalBridge.satListGatedPhysical_kernelSemantic_not_refines_truthCount_atVar}} {\tiny\ttfamily Ssot/\allowbreak SATCriticalBridge.lean}
\item \textbf{\nolinkurl{SCB1001}}\hypertarget{lh:SCB1001}{}\enspace{\ttfamily\nolinkurl{KernelSectionHardness.SATCriticalBridge.satListGatedPhysical_latticeSemantic_not_refines_truthCount_atVar}} {\tiny\ttfamily Ssot/\allowbreak SATCriticalBridge.lean}
\item \textbf{\nolinkurl{SCB1002}}\hypertarget{lh:SCB1002}{}\enspace{\ttfamily\nolinkurl{KernelSectionHardness.SATCriticalBridge.satListExternalPhysical_semantic_not_refines_truthCount_atVar}} {\tiny\ttfamily Ssot/\allowbreak SATCriticalBridge.lean}
\item \textbf{\nolinkurl{SCB1003}}\hypertarget{lh:SCB1003}{}\enspace{\ttfamily\nolinkurl{KernelSectionHardness.SATCriticalBridge.satListExternalPhysical_kernelSemantic_not_refines_truthCount_atVar}} {\tiny\ttfamily Ssot/\allowbreak SATCriticalBridge.lean}
\item \textbf{\nolinkurl{SCB1004}}\hypertarget{lh:SCB1004}{}\enspace{\ttfamily\nolinkurl{SemanticPhysicalBridge.FixedShellTargetReduction.mapShell}} {\tiny\ttfamily Ssot/\allowbreak SemanticPhysicalBridge.lean}
\item \textbf{\nolinkurl{SCB1005}}\hypertarget{lh:SCB1005}{}\enspace{\ttfamily\nolinkurl{SemanticPhysicalBridge.ShapedFixedShellTargetReduction.mapShell}} {\tiny\ttfamily Ssot/\allowbreak SemanticPhysicalBridge.lean}
\item \textbf{\nolinkurl{SCB1006}}\hypertarget{lh:SCB1006}{}\enspace{\ttfamily\nolinkurl{SemanticPhysicalBridge.not_semanticFiner_mapShell_of_not_semanticFiner}} {\tiny\ttfamily Ssot/\allowbreak SemanticPhysicalBridge.lean}
\item \textbf{\nolinkurl{SCB1007}}\hypertarget{lh:SCB1007}{}\enspace{\ttfamily\nolinkurl{KernelSectionHardness.SATCriticalBridge.shapedSATCompleteSpine_mapShell}} {\tiny\ttfamily Ssot/\allowbreak SATCriticalBridge.lean}
\item \textbf{\nolinkurl{SCB1008}}\hypertarget{lh:SCB1008}{}\enspace{\ttfamily\nolinkurl{KernelSectionHardness.SATCriticalBridge.satListExternalShell_fixedSATCompleteSpine_mapShell}} {\tiny\ttfamily Ssot/\allowbreak SATCriticalBridge.lean}
\item \textbf{\nolinkurl{SCB1009}}\hypertarget{lh:SCB1009}{}\enspace{\ttfamily\nolinkurl{KernelSectionHardness.SATCriticalBridge.satListExternalKernelShell_fixedSATCompleteSpine_mapShell}} {\tiny\ttfamily Ssot/\allowbreak SATCriticalBridge.lean}
\item \textbf{\nolinkurl{SCB1010}}\hypertarget{lh:SCB1010}{}\enspace{\ttfamily\nolinkurl{KernelSectionHardness.SATCriticalBridge.satListGatedKernelShell_fixedSATCompleteSpine_mapShell}} {\tiny\ttfamily Ssot/\allowbreak SATCriticalBridge.lean}
\item \textbf{\nolinkurl{SCB1011}}\hypertarget{lh:SCB1011}{}\enspace{\ttfamily\nolinkurl{KernelSectionHardness.SATCriticalBridge.satListGatedLatticeShell_fixedSATCompleteSpine_mapShell}} {\tiny\ttfamily Ssot/\allowbreak SATCriticalBridge.lean}
\item \textbf{\nolinkurl{SCB1012}}\hypertarget{lh:SCB1012}{}\enspace{\ttfamily\nolinkurl{KernelSectionHardness.SATCriticalBridge.satListFullFixedExternalLatticeResult}} {\tiny\ttfamily Ssot/\allowbreak SATCriticalBridge.lean}
\item \textbf{\nolinkurl{SCB1013}}\hypertarget{lh:SCB1013}{}\enspace{\ttfamily\nolinkurl{KernelSectionHardness.SATCriticalBridge.satListFullFixedGatedLatticeResult}} {\tiny\ttfamily Ssot/\allowbreak SATCriticalBridge.lean}
\item \textbf{\nolinkurl{SCR67}}\hypertarget{lh:SCR67}{}\enspace{\ttfamily\nolinkurl{KernelSectionHardness.SATCriticalBridge.SATListSource}} {\tiny\ttfamily Ssot/\allowbreak SATCriticalBridge.lean}
\item \textbf{\nolinkurl{SCR68}}\hypertarget{lh:SCR68}{}\enspace{\ttfamily\nolinkurl{KernelSectionHardness.SATCriticalBridge.SATListSource.shape}} {\tiny\ttfamily Ssot/\allowbreak SATCriticalBridge.lean}
\item \textbf{\nolinkurl{SCR69}}\hypertarget{lh:SCR69}{}\enspace{\ttfamily\nolinkurl{KernelSectionHardness.SATCriticalBridge.SATListSource.satisfiable}} {\tiny\ttfamily Ssot/\allowbreak SATCriticalBridge.lean}
\item \textbf{\nolinkurl{SCR70}}\hypertarget{lh:SCR70}{}\enspace{\ttfamily\nolinkurl{KernelSectionHardness.SATCriticalBridge.SATListSource.satisfyingCount}} {\tiny\ttfamily Ssot/\allowbreak SATCriticalBridge.lean}
\item \textbf{\nolinkurl{SCR71}}\hypertarget{lh:SCR71}{}\enspace{\ttfamily\nolinkurl{KernelSectionHardness.SATCriticalBridge.SATExternalPhysical}} {\tiny\ttfamily Ssot/\allowbreak SATCriticalBridge.lean}
\item \textbf{\nolinkurl{SCR72}}\hypertarget{lh:SCR72}{}\enspace{\ttfamily\nolinkurl{KernelSectionHardness.SATCriticalBridge.SATExternalPhysical.positive}} {\tiny\ttfamily Ssot/\allowbreak SATCriticalBridge.lean}
\item \textbf{\nolinkurl{SCR73}}\hypertarget{lh:SCR73}{}\enspace{\ttfamily\nolinkurl{KernelSectionHardness.SATCriticalBridge.SATExternalPhysical.witnessCount}} {\tiny\ttfamily Ssot/\allowbreak SATCriticalBridge.lean}
\item \textbf{\nolinkurl{SCR74}}\hypertarget{lh:SCR74}{}\enspace{\ttfamily\nolinkurl{KernelSectionHardness.SATCriticalBridge.SATExternalShell}} {\tiny\ttfamily Ssot/\allowbreak SATCriticalBridge.lean}
\item \textbf{\nolinkurl{SCR75}}\hypertarget{lh:SCR75}{}\enspace{\ttfamily\nolinkurl{KernelSectionHardness.SATCriticalBridge.SATExternalPhysical.semantic}} {\tiny\ttfamily Ssot/\allowbreak SATCriticalBridge.lean}
\item \textbf{\nolinkurl{SCR76}}\hypertarget{lh:SCR76}{}\enspace{\ttfamily\nolinkurl{KernelSectionHardness.SATCriticalBridge.satListCompleteExternalIndex}} {\tiny\ttfamily Ssot/\allowbreak SATCriticalBridge.lean}
\item \textbf{\nolinkurl{SCR77}}\hypertarget{lh:SCR77}{}\enspace{\ttfamily\nolinkurl{KernelSectionHardness.SATCriticalBridge.satListSourceExternalPhysical}} {\tiny\ttfamily Ssot/\allowbreak SATCriticalBridge.lean}
\item \textbf{\nolinkurl{SCR78}}\hypertarget{lh:SCR78}{}\enspace{\ttfamily\nolinkurl{KernelSectionHardness.SATCriticalBridge.satListSourceExternalShellOf}} {\tiny\ttfamily Ssot/\allowbreak SATCriticalBridge.lean}
\item \textbf{\nolinkurl{SCR79}}\hypertarget{lh:SCR79}{}\enspace{\ttfamily\nolinkurl{KernelSectionHardness.SATCriticalBridge.satListShapedFixedExternalMultiwayShellTargetReduction_witnessCount}} {\tiny\ttfamily Ssot/\allowbreak SATCriticalBridge.lean}
\item \textbf{\nolinkurl{SCR80}}\hypertarget{lh:SCR80}{}\enspace{\ttfamily\nolinkurl{KernelSectionHardness.SATCriticalBridge.satListShapedFixedExternalMultiwayShellTargetReduction_truthValue}} {\tiny\ttfamily Ssot/\allowbreak SATCriticalBridge.lean}
\item \textbf{\nolinkurl{SCR81}}\hypertarget{lh:SCR81}{}\enspace{\ttfamily\nolinkurl{KernelSectionHardness.SATCriticalBridge.satListSourceExternalPhysical_semantic_eq_shellOf}} {\tiny\ttfamily Ssot/\allowbreak SATCriticalBridge.lean}
\item \textbf{\nolinkurl{SCR82}}\hypertarget{lh:SCR82}{}\enspace{\ttfamily\nolinkurl{KernelSectionHardness.SATCriticalBridge.satListSourceExternalPhysical_witnessCount_eq_satisfyingCount}} {\tiny\ttfamily Ssot/\allowbreak SATCriticalBridge.lean}
\item \textbf{\nolinkurl{SCR83}}\hypertarget{lh:SCR83}{}\enspace{\ttfamily\nolinkurl{KernelSectionHardness.SATCriticalBridge.satListSourceExternalPhysical_truthValue_eq_satisfiable}} {\tiny\ttfamily Ssot/\allowbreak SATCriticalBridge.lean}
\item \textbf{\nolinkurl{SCR84}}\hypertarget{lh:SCR84}{}\enspace{\ttfamily\nolinkurl{KernelSectionHardness.SATCriticalBridge.ShapedSharpSATCountHard}} {\tiny\ttfamily Ssot/\allowbreak SATCriticalBridge.lean}
\item \textbf{\nolinkurl{SCR85}}\hypertarget{lh:SCR85}{}\enspace{\ttfamily\nolinkurl{KernelSectionHardness.SATCriticalBridge.ShapedSATDecisionHard}} {\tiny\ttfamily Ssot/\allowbreak SATCriticalBridge.lean}
\item \textbf{\nolinkurl{SCR86}}\hypertarget{lh:SCR86}{}\enspace{\ttfamily\nolinkurl{KernelSectionHardness.SATCriticalBridge.satListExternalWitnessCount_shapedSharpSATCountHard}} {\tiny\ttfamily Ssot/\allowbreak SATCriticalBridge.lean}
\item \textbf{\nolinkurl{SCR87}}\hypertarget{lh:SCR87}{}\enspace{\ttfamily\nolinkurl{KernelSectionHardness.SATCriticalBridge.satListExternalPositive_shapedSATDecisionHard}} {\tiny\ttfamily Ssot/\allowbreak SATCriticalBridge.lean}
\item \textbf{\nolinkurl{SPB1}}\hypertarget{lh:SPB1}{}\enspace{\ttfamily\nolinkurl{SemanticPhysicalBridge.SemanticComplete}} {\tiny\ttfamily Ssot/\allowbreak SemanticPhysicalBridge.lean}
\item \textbf{\nolinkurl{SPB2}}\hypertarget{lh:SPB2}{}\enspace{\ttfamily\nolinkurl{SemanticPhysicalBridge.not_semanticComplete_of_same_semantic_diff_property}} {\tiny\ttfamily Ssot/\allowbreak SemanticPhysicalBridge.lean}
\item \textbf{\nolinkurl{SPB3}}\hypertarget{lh:SPB3}{}\enspace{\ttfamily\nolinkurl{SemanticPhysicalBridge.FixedShellReduction}} {\tiny\ttfamily Ssot/\allowbreak SemanticPhysicalBridge.lean}
\item \textbf{\nolinkurl{SPB4}}\hypertarget{lh:SPB4}{}\enspace{\ttfamily\nolinkurl{SemanticPhysicalBridge.fixedShellReduction_same_semantic}} {\tiny\ttfamily Ssot/\allowbreak SemanticPhysicalBridge.lean}
\item \textbf{\nolinkurl{SPB5}}\hypertarget{lh:SPB5}{}\enspace{\ttfamily\nolinkurl{SemanticPhysicalBridge.not_semanticComplete_of_fixedShellReduction_separates}} {\tiny\ttfamily Ssot/\allowbreak SemanticPhysicalBridge.lean}
\item \textbf{\nolinkurl{SPB6}}\hypertarget{lh:SPB6}{}\enspace{\ttfamily\nolinkurl{SemanticPhysicalBridge.fourLine_same_semantic_shell}} {\tiny\ttfamily Ssot/\allowbreak SemanticPhysicalBridge.lean}
\item \textbf{\nolinkurl{SPB7}}\hypertarget{lh:SPB7}{}\enspace{\ttfamily\nolinkurl{SemanticPhysicalBridge.fourLine_collinear_no_avoiding_plane}} {\tiny\ttfamily Ssot/\allowbreak SemanticPhysicalBridge.lean}
\item \textbf{\nolinkurl{SPB8}}\hypertarget{lh:SPB8}{}\enspace{\ttfamily\nolinkurl{SemanticPhysicalBridge.fourLine_general_has_avoiding_plane}} {\tiny\ttfamily Ssot/\allowbreak SemanticPhysicalBridge.lean}
\item \textbf{\nolinkurl{SPB9}}\hypertarget{lh:SPB9}{}\enspace{\ttfamily\nolinkurl{SemanticPhysicalBridge.fourLine_semanticShell_not_complete}} {\tiny\ttfamily Ssot/\allowbreak SemanticPhysicalBridge.lean}
\item \textbf{\nolinkurl{SPB10}}\hypertarget{lh:SPB10}{}\enspace{\ttfamily\nolinkurl{SemanticPhysicalBridge.graphSlot_same_semantic_shell}} {\tiny\ttfamily Ssot/\allowbreak SemanticPhysicalBridge.lean}
\item \textbf{\nolinkurl{SPB11}}\hypertarget{lh:SPB11}{}\enspace{\ttfamily\nolinkurl{SemanticPhysicalBridge.graphSlot_blocked_no_avoiding_plane}} {\tiny\ttfamily Ssot/\allowbreak SemanticPhysicalBridge.lean}
\item \textbf{\nolinkurl{SPB12}}\hypertarget{lh:SPB12}{}\enspace{\ttfamily\nolinkurl{SemanticPhysicalBridge.graphSlot_open_has_avoiding_plane}} {\tiny\ttfamily Ssot/\allowbreak SemanticPhysicalBridge.lean}
\item \textbf{\nolinkurl{SPB13}}\hypertarget{lh:SPB13}{}\enspace{\ttfamily\nolinkurl{SemanticPhysicalBridge.graphSlot_semanticShell_not_complete}} {\tiny\ttfamily Ssot/\allowbreak SemanticPhysicalBridge.lean}
\item \textbf{\nolinkurl{SPB14}}\hypertarget{lh:SPB14}{}\enspace{\ttfamily\nolinkurl{SemanticPhysicalBridge.fullLatticeGraphSlot_same_shell}} {\tiny\ttfamily Ssot/\allowbreak SemanticPhysicalBridge.lean}
\item \textbf{\nolinkurl{SPB15}}\hypertarget{lh:SPB15}{}\enspace{\ttfamily\nolinkurl{SemanticPhysicalBridge.fullLatticeGraphSlot_open_has_avoiding_plane}} {\tiny\ttfamily Ssot/\allowbreak SemanticPhysicalBridge.lean}
\item \textbf{\nolinkurl{SPB16}}\hypertarget{lh:SPB16}{}\enspace{\ttfamily\nolinkurl{SemanticPhysicalBridge.fullLatticeGraphSlot_blocked_no_avoiding_plane}} {\tiny\ttfamily Ssot/\allowbreak SemanticPhysicalBridge.lean}
\item \textbf{\nolinkurl{SPB17}}\hypertarget{lh:SPB17}{}\enspace{\ttfamily\nolinkurl{SemanticPhysicalBridge.fullLatticeGraphSlot_shell_not_complete}} {\tiny\ttfamily Ssot/\allowbreak SemanticPhysicalBridge.lean}
\item \textbf{\nolinkurl{SPB18}}\hypertarget{lh:SPB18}{}\enspace{\ttfamily\nolinkurl{SemanticPhysicalBridge.FixedShellWitnessReduction}} {\tiny\ttfamily Ssot/\allowbreak SemanticPhysicalBridge.lean}
\item \textbf{\nolinkurl{SPB19}}\hypertarget{lh:SPB19}{}\enspace{\ttfamily\nolinkurl{SemanticPhysicalBridge.FixedShellWitnessReduction.toFixedShellReduction}} {\tiny\ttfamily Ssot/\allowbreak SemanticPhysicalBridge.lean}
\item \textbf{\nolinkurl{SPB20}}\hypertarget{lh:SPB20}{}\enspace{\ttfamily\nolinkurl{SemanticPhysicalBridge.fixedShellWitnessReduction_exists_correct}} {\tiny\ttfamily Ssot/\allowbreak SemanticPhysicalBridge.lean}
\item \textbf{\nolinkurl{SPB21}}\hypertarget{lh:SPB21}{}\enspace{\ttfamily\nolinkurl{SemanticPhysicalBridge.SatisfiesAll}} {\tiny\ttfamily Ssot/\allowbreak SemanticPhysicalBridge.lean}
\item \textbf{\nolinkurl{SPB22}}\hypertarget{lh:SPB22}{}\enspace{\ttfamily\nolinkurl{SemanticPhysicalBridge.satisfiesAll_congr}} {\tiny\ttfamily Ssot/\allowbreak SemanticPhysicalBridge.lean}
\item \textbf{\nolinkurl{SPB23}}\hypertarget{lh:SPB23}{}\enspace{\ttfamily\nolinkurl{SemanticPhysicalBridge.ConjunctiveFixedShellReduction}} {\tiny\ttfamily Ssot/\allowbreak SemanticPhysicalBridge.lean}
\item \textbf{\nolinkurl{SPB24}}\hypertarget{lh:SPB24}{}\enspace{\ttfamily\nolinkurl{SemanticPhysicalBridge.conjunctiveFixedShellReduction_exists_correct}} {\tiny\ttfamily Ssot/\allowbreak SemanticPhysicalBridge.lean}
\item \textbf{\nolinkurl{SPB25}}\hypertarget{lh:SPB25}{}\enspace{\ttfamily\nolinkurl{SemanticPhysicalBridge.conjunctiveFixedShellReduction_same_semantic}} {\tiny\ttfamily Ssot/\allowbreak SemanticPhysicalBridge.lean}
\item \textbf{\nolinkurl{SPB26}}\hypertarget{lh:SPB26}{}\enspace{\ttfamily\nolinkurl{SemanticPhysicalBridge.fixedShellReduction_of_commonCoreLift}} {\tiny\ttfamily Ssot/\allowbreak SemanticPhysicalBridge.lean}
\item \textbf{\nolinkurl{SPB144}}\hypertarget{lh:SPB144}{}\enspace{\ttfamily\nolinkurl{SemanticPhysicalBridge.commonCoreFourPointScan_fixed_semanticShell}} {\tiny\ttfamily Ssot/\allowbreak SemanticPhysicalBridge.lean}
\item \textbf{\nolinkurl{SPB149}}\hypertarget{lh:SPB149}{}\enspace{\ttfamily\nolinkurl{SemanticPhysicalBridge.commonCoreFourPointScan_shell_not_finer_than_forbiddenPointMatroidShell}} {\tiny\ttfamily Ssot/\allowbreak SemanticPhysicalBridge.lean}
\item \textbf{\nolinkurl{SPB153}}\hypertarget{lh:SPB153}{}\enspace{\ttfamily\nolinkurl{SemanticPhysicalBridge.commonCoreFourPointScanCountTargetReduction}} {\tiny\ttfamily Ssot/\allowbreak SemanticPhysicalBridge.lean}
\item \textbf{\nolinkurl{SPB154}}\hypertarget{lh:SPB154}{}\enspace{\ttfamily\nolinkurl{SemanticPhysicalBridge.commonCoreFourPointScanForbiddenPointMatroidTargetReduction}} {\tiny\ttfamily Ssot/\allowbreak SemanticPhysicalBridge.lean}
\item \textbf{\nolinkurl{SPB155}}\hypertarget{lh:SPB155}{}\enspace{\ttfamily\nolinkurl{SemanticPhysicalBridge.commonCoreFourPointScanHasAvoidingPlane}} {\tiny\ttfamily Ssot/\allowbreak SemanticPhysicalBridge.lean}
\item \textbf{\nolinkurl{SPB157}}\hypertarget{lh:SPB157}{}\enspace{\ttfamily\nolinkurl{SemanticPhysicalBridge.commonCoreFourPointScan_shell_not_complete_for_avoidingPlanePositive}} {\tiny\ttfamily Ssot/\allowbreak SemanticPhysicalBridge.lean}
\item \textbf{\nolinkurl{SPB159}}\hypertarget{lh:SPB159}{}\enspace{\ttfamily\nolinkurl{SemanticPhysicalBridge.commonCoreFourPointScan_forbiddenPointMatroidShell_controls_avoidingPlaneCount}} {\tiny\ttfamily Ssot/\allowbreak SemanticPhysicalBridge.lean}
\item \textbf{\nolinkurl{SPB160}}\hypertarget{lh:SPB160}{}\enspace{\ttfamily\nolinkurl{SemanticPhysicalBridge.ShapedFixedShellTargetReduction}} {\tiny\ttfamily Ssot/\allowbreak SemanticPhysicalBridge.lean}
\item \textbf{\nolinkurl{SPB161}}\hypertarget{lh:SPB161}{}\enspace{\ttfamily\nolinkurl{SemanticPhysicalBridge.shapedFixedShellTargetReduction_same_semantic_of_same_shape}} {\tiny\ttfamily Ssot/\allowbreak SemanticPhysicalBridge.lean}
\item \textbf{\nolinkurl{SPB162}}\hypertarget{lh:SPB162}{}\enspace{\ttfamily\nolinkurl{SemanticPhysicalBridge.ShapedFixedShellTargetReduction.onShape}} {\tiny\ttfamily Ssot/\allowbreak SemanticPhysicalBridge.lean}
\item \textbf{\nolinkurl{SPB163}}\hypertarget{lh:SPB163}{}\enspace{\ttfamily\nolinkurl{SemanticPhysicalBridge.exists_sameSemantic_diffTarget_of_shapedFixedShellTargetReduction_separates}} {\tiny\ttfamily Ssot/\allowbreak SemanticPhysicalBridge.lean}
\item \textbf{\nolinkurl{SPB164}}\hypertarget{lh:SPB164}{}\enspace{\ttfamily\nolinkurl{SemanticPhysicalBridge.not_semanticFiner_of_shapedFixedShellTargetReduction_separates}} {\tiny\ttfamily Ssot/\allowbreak SemanticPhysicalBridge.lean}
\item \textbf{\nolinkurl{SPB165}}\hypertarget{lh:SPB165}{}\enspace{\ttfamily\nolinkurl{SemanticPhysicalBridge.shapedFixedShellTargetReduction_finiteSourceTargetValues_subset_targetValuesInShapeFiber}} {\tiny\ttfamily Ssot/\allowbreak SemanticPhysicalBridge.lean}
\item \textbf{\nolinkurl{SPB166}}\hypertarget{lh:SPB166}{}\enspace{\ttfamily\nolinkurl{SemanticPhysicalBridge.shapedFixedShellTargetReduction_finiteSourceTargetValues_subset_finiteTargetValuesInShapeFiber}} {\tiny\ttfamily Ssot/\allowbreak SemanticPhysicalBridge.lean}
\item \textbf{\nolinkurl{SPB167}}\hypertarget{lh:SPB167}{}\enspace{\ttfamily\nolinkurl{SemanticPhysicalBridge.shapedFixedShellTargetReduction_finiteSourceTargetValues_card_le_finiteTargetValuesInShapeFiber_card}} {\tiny\ttfamily Ssot/\allowbreak SemanticPhysicalBridge.lean}
\item \textbf{\nolinkurl{SPB168}}\hypertarget{lh:SPB168}{}\enspace{\ttfamily\nolinkurl{SemanticPhysicalBridge.shapedFixedShellTargetReduction_exists_targetValuesSubset_card_eq_of_injOn_same_shape}} {\tiny\ttfamily Ssot/\allowbreak SemanticPhysicalBridge.lean}
\item \textbf{\nolinkurl{SPB169}}\hypertarget{lh:SPB169}{}\enspace{\ttfamily\nolinkurl{SemanticPhysicalBridge.not_semanticFiner_of_shapedFixedShellTargetReduction_sourceTarget_injOn_same_shape_card_gt_one}} {\tiny\ttfamily Ssot/\allowbreak SemanticPhysicalBridge.lean}
\item \textbf{\nolinkurl{SPB170}}\hypertarget{lh:SPB170}{}\enspace{\ttfamily\nolinkurl{SemanticPhysicalBridge.FixedShellTargetReduction.toShaped}} {\tiny\ttfamily Ssot/\allowbreak SemanticPhysicalBridge.lean}
\item \textbf{\nolinkurl{SPB171}}\hypertarget{lh:SPB171}{}\enspace{\ttfamily\nolinkurl{SemanticPhysicalBridge.ShapedFixedShellTargetReduction.precompose}} {\tiny\ttfamily Ssot/\allowbreak SemanticPhysicalBridge.lean}
\item \textbf{\nolinkurl{SPB172}}\hypertarget{lh:SPB172}{}\enspace{\ttfamily\nolinkurl{SemanticPhysicalBridge.shapedFixedShellTargetReduction_exists_two_targetValuesInShapeFiber_of_finiteSourceTargetValues_card_gt_one}} {\tiny\ttfamily Ssot/\allowbreak SemanticPhysicalBridge.lean}
\item \textbf{\nolinkurl{SPB173}}\hypertarget{lh:SPB173}{}\enspace{\ttfamily\nolinkurl{SemanticPhysicalBridge.not_targetValuesInShapeFiber_subsingleton_of_shapedFixedShellTargetReduction_finiteSourceTargetValues_card_gt_one}} {\tiny\ttfamily Ssot/\allowbreak SemanticPhysicalBridge.lean}
\item \textbf{\nolinkurl{SPB174}}\hypertarget{lh:SPB174}{}\enspace{\ttfamily\nolinkurl{SemanticPhysicalBridge.not_semanticFiner_of_shapedFixedShellTargetReduction_finiteSourceTargetValues_card_gt_one}} {\tiny\ttfamily Ssot/\allowbreak SemanticPhysicalBridge.lean}
\item \textbf{\nolinkurl{SPB175}}\hypertarget{lh:SPB175}{}\enspace{\ttfamily\nolinkurl{SemanticPhysicalBridge.FixedShellTargetReduction.mapTarget}} {\tiny\ttfamily Ssot/\allowbreak SemanticPhysicalBridge.lean}
\item \textbf{\nolinkurl{SPB176}}\hypertarget{lh:SPB176}{}\enspace{\ttfamily\nolinkurl{SemanticPhysicalBridge.satisfyingWitnesses}} {\tiny\ttfamily Ssot/\allowbreak SemanticPhysicalBridge.lean}
\item \textbf{\nolinkurl{SPB177}}\hypertarget{lh:SPB177}{}\enspace{\ttfamily\nolinkurl{SemanticPhysicalBridge.satisfyingWitnesses_eq_of_local_iff}} {\tiny\ttfamily Ssot/\allowbreak SemanticPhysicalBridge.lean}
\item \textbf{\nolinkurl{SPB178}}\hypertarget{lh:SPB178}{}\enspace{\ttfamily\nolinkurl{SemanticPhysicalBridge.fixedShellTargetReduction_of_conjunctiveFixedShellReduction_count}} {\tiny\ttfamily Ssot/\allowbreak SemanticPhysicalBridge.lean}
\item \textbf{\nolinkurl{SPB179}}\hypertarget{lh:SPB179}{}\enspace{\ttfamily\nolinkurl{SemanticPhysicalBridge.incidenceCountProfileShell}} {\tiny\ttfamily Ssot/\allowbreak SemanticPhysicalBridge.lean}
\item \textbf{\nolinkurl{SPB180}}\hypertarget{lh:SPB180}{}\enspace{\ttfamily\nolinkurl{SemanticPhysicalBridge.avoidanceCount_eq_incidenceCountProfileShell_zero}} {\tiny\ttfamily Ssot/\allowbreak SemanticPhysicalBridge.lean}
\item \textbf{\nolinkurl{SPB181}}\hypertarget{lh:SPB181}{}\enspace{\ttfamily\nolinkurl{SemanticPhysicalBridge.avoidanceCount_eq_of_same_incidenceCountProfileShell}} {\tiny\ttfamily Ssot/\allowbreak SemanticPhysicalBridge.lean}
\item \textbf{\nolinkurl{SPB182}}\hypertarget{lh:SPB182}{}\enspace{\ttfamily\nolinkurl{SemanticPhysicalBridge.avoidanceCount_positive_iff_of_same_incidenceCountProfileShell}} {\tiny\ttfamily Ssot/\allowbreak SemanticPhysicalBridge.lean}
\item \textbf{\nolinkurl{SPB183}}\hypertarget{lh:SPB183}{}\enspace{\ttfamily\nolinkurl{SemanticPhysicalBridge.avoidanceCountValue_semanticComplete_incidenceCountProfileShell}} {\tiny\ttfamily Ssot/\allowbreak SemanticPhysicalBridge.lean}
\item \textbf{\nolinkurl{SPB184}}\hypertarget{lh:SPB184}{}\enspace{\ttfamily\nolinkurl{SemanticPhysicalBridge.avoidanceCountPositive_semanticComplete_incidenceCountProfileShell}} {\tiny\ttfamily Ssot/\allowbreak SemanticPhysicalBridge.lean}
\item \textbf{\nolinkurl{SPB185}}\hypertarget{lh:SPB185}{}\enspace{\ttfamily\nolinkurl{SemanticPhysicalBridge.semanticFiner_incidenceCountProfileShell_or_fiberCollision}} {\tiny\ttfamily Ssot/\allowbreak SemanticPhysicalBridge.lean}
\item \textbf{\nolinkurl{SPB186}}\hypertarget{lh:SPB186}{}\enspace{\ttfamily\nolinkurl{SemanticPhysicalBridge.avoidanceCountValue_control_or_incidenceCountProfileFiberCollision}} {\tiny\ttfamily Ssot/\allowbreak SemanticPhysicalBridge.lean}
\item \textbf{\nolinkurl{SPB187}}\hypertarget{lh:SPB187}{}\enspace{\ttfamily\nolinkurl{SemanticPhysicalBridge.avoidanceCountPositive_control_or_incidenceCountProfileFiberCollision}} {\tiny\ttfamily Ssot/\allowbreak SemanticPhysicalBridge.lean}
\item \textbf{\nolinkurl{SPB188}}\hypertarget{lh:SPB188}{}\enspace{\ttfamily\nolinkurl{SemanticPhysicalBridge.rankSieve_eq_avoidanceCountOf_of_rank_count}} {\tiny\ttfamily Ssot/\allowbreak SemanticPhysicalBridge.lean}
\item \textbf{\nolinkurl{SPB189}}\hypertarget{lh:SPB189}{}\enspace{\ttfamily\nolinkurl{SemanticPhysicalBridge.rankSieve_eq_incidenceCountProfileShell_zero_of_rank_count}} {\tiny\ttfamily Ssot/\allowbreak SemanticPhysicalBridge.lean}
\item \textbf{\nolinkurl{SPB190}}\hypertarget{lh:SPB190}{}\enspace{\ttfamily\nolinkurl{SemanticPhysicalBridge.rankSieve_eq_of_same_incidenceCountProfileShell_of_rank_count}} {\tiny\ttfamily Ssot/\allowbreak SemanticPhysicalBridge.lean}
\item \textbf{\nolinkurl{SPB191}}\hypertarget{lh:SPB191}{}\enspace{\ttfamily\nolinkurl{SemanticPhysicalBridge.rankSieve_positive_iff_of_same_incidenceCountProfileShell_of_rank_count}} {\tiny\ttfamily Ssot/\allowbreak SemanticPhysicalBridge.lean}
\item \textbf{\nolinkurl{SPB192}}\hypertarget{lh:SPB192}{}\enspace{\ttfamily\nolinkurl{SemanticPhysicalBridge.rankSieveValue_semanticComplete_incidenceCountProfileShell_of_rank_count}} {\tiny\ttfamily Ssot/\allowbreak SemanticPhysicalBridge.lean}
\item \textbf{\nolinkurl{SPB193}}\hypertarget{lh:SPB193}{}\enspace{\ttfamily\nolinkurl{SemanticPhysicalBridge.rankSievePositive_semanticComplete_incidenceCountProfileShell_of_rank_count}} {\tiny\ttfamily Ssot/\allowbreak SemanticPhysicalBridge.lean}
\item \textbf{\nolinkurl{SPB194}}\hypertarget{lh:SPB194}{}\enspace{\ttfamily\nolinkurl{SemanticPhysicalBridge.rankSieveValue_control_or_incidenceCountProfileFiberCollision_of_rank_count}} {\tiny\ttfamily Ssot/\allowbreak SemanticPhysicalBridge.lean}
\item \textbf{\nolinkurl{SPB195}}\hypertarget{lh:SPB195}{}\enspace{\ttfamily\nolinkurl{SemanticPhysicalBridge.rankSievePositive_control_or_incidenceCountProfileFiberCollision_of_rank_count}} {\tiny\ttfamily Ssot/\allowbreak SemanticPhysicalBridge.lean}
\item \textbf{\nolinkurl{SPB196}}\hypertarget{lh:SPB196}{}\enspace{\ttfamily\nolinkurl{SemanticPhysicalBridge.incidenceCountProfileFiberCollision_of_sameSemantic_rankSieve_ne_of_rank_count}} {\tiny\ttfamily Ssot/\allowbreak SemanticPhysicalBridge.lean}
\item \textbf{\nolinkurl{SPB197}}\hypertarget{lh:SPB197}{}\enspace{\ttfamily\nolinkurl{SemanticPhysicalBridge.not_semanticFiner_incidenceCountProfileShell_of_sameSemantic_rankSieve_ne_of_rank_count}} {\tiny\ttfamily Ssot/\allowbreak SemanticPhysicalBridge.lean}
\item \textbf{\nolinkurl{SPB198}}\hypertarget{lh:SPB198}{}\enspace{\ttfamily\nolinkurl{SemanticPhysicalBridge.incidenceCountProfileFiberCollision_of_sameSemantic_rankSievePositive_diff_of_rank_count}} {\tiny\ttfamily Ssot/\allowbreak SemanticPhysicalBridge.lean}
\item \textbf{\nolinkurl{SPB199}}\hypertarget{lh:SPB199}{}\enspace{\ttfamily\nolinkurl{SemanticPhysicalBridge.not_semanticFiner_incidenceCountProfileShell_of_sameSemantic_rankSievePositive_diff_of_rank_count}} {\tiny\ttfamily Ssot/\allowbreak SemanticPhysicalBridge.lean}
\item \textbf{\nolinkurl{SPB200}}\hypertarget{lh:SPB200}{}\enspace{\ttfamily\nolinkurl{SemanticPhysicalBridge.finiteConjunctiveSatisfyingWitnessCountValues}} {\tiny\ttfamily Ssot/\allowbreak SemanticPhysicalBridge.lean}
\item \textbf{\nolinkurl{SPB201}}\hypertarget{lh:SPB201}{}\enspace{\ttfamily\nolinkurl{SemanticPhysicalBridge.mem_finiteConjunctiveSatisfyingWitnessCountValues}} {\tiny\ttfamily Ssot/\allowbreak SemanticPhysicalBridge.lean}
\item \textbf{\nolinkurl{SPB202}}\hypertarget{lh:SPB202}{}\enspace{\ttfamily\nolinkurl{SemanticPhysicalBridge.finiteConjunctiveSatisfyingWitnessCountValues_subset_fixedShellPhysicalCountFiber}} {\tiny\ttfamily Ssot/\allowbreak SemanticPhysicalBridge.lean}
\item \textbf{\nolinkurl{SPB203}}\hypertarget{lh:SPB203}{}\enspace{\ttfamily\nolinkurl{SemanticPhysicalBridge.exists_countValues_subset_fixedShellPhysicalCountFiber_of_conjunctive_count_injOn}} {\tiny\ttfamily Ssot/\allowbreak SemanticPhysicalBridge.lean}
\item \textbf{\nolinkurl{SPB204}}\hypertarget{lh:SPB204}{}\enspace{\ttfamily\nolinkurl{SemanticPhysicalBridge.not_semanticFiner_physicalCount_of_conjunctive_countValues_card_gt_one}} {\tiny\ttfamily Ssot/\allowbreak SemanticPhysicalBridge.lean}
\item \textbf{\nolinkurl{SPB205}}\hypertarget{lh:SPB205}{}\enspace{\ttfamily\nolinkurl{SemanticPhysicalBridge.not_semanticFiner_physicalCount_of_conjunctive_count_injOn_card_gt_one}} {\tiny\ttfamily Ssot/\allowbreak SemanticPhysicalBridge.lean}
\item \textbf{\nolinkurl{SPB240}}\hypertarget{lh:SPB240}{}\enspace{\ttfamily\nolinkurl{SemanticPhysicalBridge.signedRankProfileShell}} {\tiny\ttfamily Ssot/\allowbreak SemanticPhysicalBridge.lean}
\item \textbf{\nolinkurl{SPB241}}\hypertarget{lh:SPB241}{}\enspace{\ttfamily\nolinkurl{SemanticPhysicalBridge.semanticFiner_forbiddenPointMatroidShell_signedRankProfileShell}} {\tiny\ttfamily Ssot/\allowbreak SemanticPhysicalBridge.lean}
\item \textbf{\nolinkurl{SPB242}}\hypertarget{lh:SPB242}{}\enspace{\ttfamily\nolinkurl{SemanticPhysicalBridge.rankSieve_eq_of_same_signedRankProfileShell}} {\tiny\ttfamily Ssot/\allowbreak SemanticPhysicalBridge.lean}
\item \textbf{\nolinkurl{SPB243}}\hypertarget{lh:SPB243}{}\enspace{\ttfamily\nolinkurl{SemanticPhysicalBridge.rankSieve_positive_iff_of_same_signedRankProfileShell}} {\tiny\ttfamily Ssot/\allowbreak SemanticPhysicalBridge.lean}
\item \textbf{\nolinkurl{SPB244}}\hypertarget{lh:SPB244}{}\enspace{\ttfamily\nolinkurl{SemanticPhysicalBridge.rankSieveValue_semanticComplete_signedRankProfileShell}} {\tiny\ttfamily Ssot/\allowbreak SemanticPhysicalBridge.lean}
\item \textbf{\nolinkurl{SPB245}}\hypertarget{lh:SPB245}{}\enspace{\ttfamily\nolinkurl{SemanticPhysicalBridge.rankSievePositive_semanticComplete_signedRankProfileShell}} {\tiny\ttfamily Ssot/\allowbreak SemanticPhysicalBridge.lean}
\item \textbf{\nolinkurl{SPB288}}\hypertarget{lh:SPB288}{}\enspace{\ttfamily\nolinkurl{SemanticPhysicalBridge.semanticFiner_signedRankProfileShell_rankSieve}} {\tiny\ttfamily Ssot/\allowbreak SemanticPhysicalBridge.lean}
\item \textbf{\nolinkurl{SPB289}}\hypertarget{lh:SPB289}{}\enspace{\ttfamily\nolinkurl{SemanticPhysicalBridge.semanticFiner_signedRankProfileShell_avoidanceCount}} {\tiny\ttfamily Ssot/\allowbreak SemanticPhysicalBridge.lean}
\item \textbf{\nolinkurl{SPB379}}\hypertarget{lh:SPB379}{}\enspace{\ttfamily\nolinkurl{SemanticPhysicalBridge.rankSieve_eq_decode_signedRankProfileShell_of_rank_le_card}} {\tiny\ttfamily Ssot/\allowbreak SemanticPhysicalBridge.lean}
\item \textbf{\nolinkurl{SPB380}}\hypertarget{lh:SPB380}{}\enspace{\ttfamily\nolinkurl{SemanticPhysicalBridge.semanticFiner_signedRankProfileShell_decodedRankSieve}} {\tiny\ttfamily Ssot/\allowbreak SemanticPhysicalBridge.lean}
\item \textbf{\nolinkurl{SPB381}}\hypertarget{lh:SPB381}{}\enspace{\ttfamily\nolinkurl{SemanticPhysicalBridge.rankSieveTargetValuesInSemanticFiber_compression_or_forbiddenPointMatroidCollision}} {\tiny\ttfamily Ssot/\allowbreak SemanticPhysicalBridge.lean}
\item \textbf{\nolinkurl{SPB382}}\hypertarget{lh:SPB382}{}\enspace{\ttfamily\nolinkurl{SemanticPhysicalBridge.rankSieveTargetValuesInSemanticFiber_compression_or_signedRankProfileCollision}} {\tiny\ttfamily Ssot/\allowbreak SemanticPhysicalBridge.lean}
\item \textbf{\nolinkurl{SPB383}}\hypertarget{lh:SPB383}{}\enspace{\ttfamily\nolinkurl{SemanticPhysicalBridge.rankSieve_fixedSemanticShell_compressionDichotomy_with_signedProfileDecoder}} {\tiny\ttfamily Ssot/\allowbreak SemanticPhysicalBridge.lean}
\end{list}
\else

\fi
\makeatother
\endgroup

}{%
  \textbf{Error:} \texttt{content/lean\_handle\_ids\_auto.tex} not found.
}

\vfill
\newpage

\section{Claim Mapping to Lean Handles}

The table maps paper claims to the mechanized proof handles in the Lean 4 artifact.

\IfFileExists{content/claim_mapping_auto.tex}{%
\begingroup
\scriptsize
\setlength{\tabcolsep}{3pt}
\renewcommand{\arraystretch}{1.0}
\setlength{\LTpre}{0pt}
\setlength{\LTpost}{0pt}
\begin{longtable}{@{}>{\raggedright\arraybackslash}m{0.65\linewidth}>{\raggedleft\arraybackslash}m{0.30\linewidth}@{}}
\toprule
\textbf{Manuscript claim} & \textbf{Lean handle} \\
\midrule
\endfirsthead
\toprule
\textbf{Manuscript claim} & \textbf{Lean handle} \\
\midrule
\endhead
\endfoot
\bottomrule
\endlastfoot
Theorem 3.1: Pair-injectivity characterization & \LH{OBS1}, \LH{OBS2} \\
\midrule
Theorem 4.6: Canonical presentation & \LH{MFT202} \\
\midrule
Proposition 4.8: Block composition gives strong product & \LH{MFT25}, \LH{MFT26} \\
\midrule
Theorem 5.1: Matroidal specialization of affine coordinate views & \LH{AFM1}, \LH{AFM2}, \LH{AFM3}, \LH{AFM4}, \LH{AFM5}, \LH{AFM20}, \LH{AFM21}, \LH{AFM22} \\
\midrule
Proposition 5.2: View-fiber clique sizes & \LH{AFM6}, \LH{AFM7}, \LH{AFM8}, \LH{AFM9}, \LH{AFM10}, \LH{AFM11}, \LH{AFM12}, \LH{MFT86} \\
\midrule
Theorem 5.3: Polynomial-time matroid capacity certificates & \LH{AFM6}, \LH{AFM7}, \LH{AFM8}, \LH{AFM9}, \LH{AFM10}, \LH{AFM11}, \LH{AFM12} \\
\midrule
Theorem 5.4: Kernel-section exactness & \LH{AKS1}, \LH{AKS2}, \LH{AKS3}, \LH{AKS4}, \LH{AKS5} \\
\midrule
Corollary 5.5: No-Shannon-gap kernel-section instances & \LH{AKS32} \\
\midrule
Proposition 5.6: Support-hitting characterization & \LH{AKS20}, \LH{AKS21}, \LH{AKS36}, \LH{AKS37}, \LH{AKS38} \\
\midrule
Corollary 5.7: Arrangement-relative support profile & \LH{AKS36}, \LH{AKS37}, \LH{AKS38} \\
\midrule
Corollary 5.8: Subcode-distance boundary for all-cardinality views & \LH{AKS34} \\
\midrule
Theorem 5.9: Signed-profile compression for Grassmannian avoidance & \LH{ABD21}, \LH{ABD22}, \LH{ABD23}, \LH{ABD79}, \LH{ABD80}, \LH{ABD81}, \LH{ABD82}, \LH{ABD83}, \LH{ABD84}, \LH{ABD85}, \LH{ABD86}, \LH{ABD87}, \LH{ABD88}, \LH{ABD89}, \LH{ABD90}, \LH{ABD91}, \LH{SPB240}, \LH{SPB241}, \LH{SPB242}, \LH{SPB243}, \LH{SPB244}, \LH{SPB245}, \LH{SPB288}, \LH{SPB289}, \LH{SPB379}, \LH{SPB380}, \LH{SPB381}, \LH{SPB382}, \LH{SPB383} \\
\midrule
Lemma 5.10: Finite-union subspace avoidance & \LH{AKS6}, \LH{AKS7}, \LH{AKS17}, \LH{AKS18} \\
\midrule
Theorem 5.11: Field-size exactness & \LH{AFM24}, \LH{AKS6}, \LH{AKS7}, \LH{AKS8}, \LH{AKS9}, \LH{AKS17}, \LH{AKS18} \\
\midrule
Theorem 5.12: Rank-one linear kernel-section existence is NP-complete & \LH{KSH1}, \LH{KSH2}, \LH{KSH3}, \LH{KSH4}, \LH{KSH5}, \LH{KSH9}, \LH{KSH10}, \LH{KSH11} \\
\midrule
Corollary 5.13: Rank-one nonlinear section existence is NP-complete & \LH{KSH6}, \LH{KSH7}, \LH{KSH8}, \LH{KSH12}, \LH{KSH13} \\
\midrule
Corollary 5.14: Fixed-rank linear kernel-section existence is NP-complete & \LH{AKS35}, \LH{KSH1}, \LH{KSH2}, \LH{KSH3}, \LH{KSH4}, \LH{KSH5}, \LH{KSH9}, \LH{KSH10}, \LH{KSH11} \\
\midrule
Corollary 5.15: Grassmannian avoidance-count positivity is NP-complete & \LH{ABD10}, \LH{AKS35}, \LH{KSH1}, \LH{KSH2}, \LH{KSH3}, \LH{KSH4}, \LH{KSH5}, \LH{KSH9}, \LH{KSH10}, \LH{KSH11} \\
\midrule
Corollary 5.16: Rank-one avoidance counting is parsimoniously \(\#\textsc{P}\)-hard & \LH{KSH14}, \LH{KSH15} \\
\midrule
Lemma 5.17: Common-core lift & \LH{ABD28}, \LH{ABD29}, \LH{ABD30}, \LH{ABD31}, \LH{ABD32}, \LH{ABD33}, \LH{ABD34}, \LH{ABD35}, \LH{ABD42}, \LH{ABD48}, \LH{ABD49}, \LH{ABD50}, \LH{ABD51}, \LH{ABD52}, \LH{ABD53}, \LH{ABD54}, \LH{ABD55}, \LH{ABD56} \\
\midrule
Corollary 5.18: Graphic-matroid rank-sieve counting hardness & \LH{GCB3}, \LH{GCB4}, \LH{GCB5}, \LH{GCB6}, \LH{GCB7}, \LH{GCB8}, \LH{GCB9}, \LH{GCB10}, \LH{GCB11}, \LH{GCB12}, \LH{GCB13}, \LH{GCB14}, \LH{GCB15}, \LH{GCB16}, \LH{GCB17}, \LH{GCB18}, \LH{GCB19}, \LH{GCB20}, \LH{GCB21}, \LH{GCB22}, \LH{GCB23}, \LH{GCB24}, \LH{GCB25}, \LH{GCB26}, \LH{GCB27}, \LH{GCB28}, \LH{GCB29}, \LH{GCB30}, \LH{GCB31}, \LH{GCB32}, \LH{GCB33}, \LH{GCB34}, \LH{GCB35}, \LH{GCB36}, \LH{GCB37}, \LH{GCB38}, \LH{GCB39}, \LH{GCB40}, \LH{GCB41}, \LH{GCB42}, \LH{GCR121}, \LH{GCR122}, \LH{GCR123}, \LH{GCR124}, \LH{GCR125} \\
\midrule
Theorem 5.19: Size-indexed fixed-lattice SAT and counting spine & \LH{SCB979}, \LH{SCB980}, \LH{SCB981}, \LH{SCB982}, \LH{SCB983}, \LH{SCB984}, \LH{SCB985}, \LH{SCB986}, \LH{SCB987}, \LH{SCB988}, \LH{SCB989}, \LH{SCB990}, \LH{SCB991}, \LH{SCB992}, \LH{SCB993}, \LH{SCB994}, \LH{SCB995}, \LH{SCB996}, \LH{SCB997}, \LH{SCB998}, \LH{SCB999}, \LH{SCB1000}, \LH{SCB1001}, \LH{SCB1002}, \LH{SCB1003}, \LH{SCB1004}, \LH{SCB1005}, \LH{SCB1006}, \LH{SCB1007}, \LH{SCB1008}, \LH{SCB1009}, \LH{SCB1010}, \LH{SCB1011}, \LH{SCB1012}, \LH{SCB1013} \\
\midrule
Proposition 5.21: Hamming-endpoint family & \LH{AKS19} \\
\midrule
Proposition 5.22: All-\(k\)-view Hamming-threshold specialization & \LH{MFT102}, \LH{MFT103}, \LH{MFT104}, \LH{MFT105}, \LH{MFT106} \\
\midrule
Corollary 5.23: Haemers optimality on the kernel-section class & \LH{AKS1}, \LH{AKS2}, \LH{AKS3}, \LH{AKS4}, \LH{AKS5}, \LH{MFT95} \\
\midrule
Proposition 5.24: Subfamily certificate hierarchy & \LH{MFT92}, \LH{MFT93}, \LH{MFT94} \\
\midrule
Proposition 5.26: Matroid direct sum under block composition & \LH{AFM18}, \LH{AFM19} \\
\midrule
Corollary 5.27: Rank additivity & \LH{AFM17} \\
\midrule
Corollary 5.28: Direct-sum closure of kernel-section exactness & \LH{AFM17}, \LH{AFM18}, \LH{AFM19}, \LH{AKS1}, \LH{AKS2}, \LH{AKS3}, \LH{AKS4}, \LH{AKS5}, \LH{MFT25}, \LH{MFT26} \\
\midrule
Theorem 6.2: Cluster/non-transitive dichotomy & \LH{MFT139}, \LH{MFT140}, \LH{MFT141}, \LH{MFT142}, \LH{MFT143}, \LH{MFT144}, \LH{MFT145}, \LH{MFT146}, \LH{MFT147}, \LH{MFT148}, \LH{MFT149}, \LH{MFT150} \\
\midrule
Proposition 6.3: Affine kernel-sum transitivity criterion & \LH{AKS22}, \LH{AKS23}, \LH{AKS24}, \LH{AKS25}, \LH{AKS26}, \LH{AKS27}, \LH{AKS28}, \LH{AKS29}, \LH{AKS30}, \LH{AKS31} \\
\end{longtable}
\endgroup

}{%
  \textbf{Error:} \texttt{content/claim\_mapping\_auto.tex} not found.
}

\newpage

\section{Finite Numeric Check for the Four-Coordinate Example}

The binary four-coordinate Hamming-threshold example in the main text uses one finite graph computation:
\[
\alpha(G)=2,\qquad \alpha(G^{\boxtimes 2})=5.
\]
The artifact includes the checker
\[
\texttt{\seqsplit{proofs/checks/theta\_two\_coordinate\_example.py}}.
\]
Run it from the artifact root with
\begin{verbatim}
python3 proofs/checks/theta_two_coordinate_example.py
\end{verbatim}
The script builds the graph on $\mathbb F_2^4$, verifies the Walsh-character spectrum giving the spectral theta bound $8/3$, and computes independence in $G$ and $G^{\boxtimes 2}$ as maximum clique in the complement graph using an exact bitset branch-and-bound search. The expected output includes:
\begin{verbatim}
theta_spectral_bound=8/3
alpha_G=2
alpha_G_strong_square=5
\end{verbatim}

\newpage

\section{Finite Subfamily Check for the Hamming-Threshold Hierarchy}

The subfamily-certificate discussion in the main text uses the one-shot
independence boundary for the six two-coordinate views on $\mathbb F_2^4$.
The artifact includes the checker
\[
\texttt{\seqsplit{proofs/checks/subfamily\_threshold\_levels.py}}.
\]
Run it from the artifact root with
\begin{verbatim}
python3 proofs/checks/subfamily_threshold_levels.py
\end{verbatim}
The script enumerates every subfamily of the six two-coordinate views and
computes the independence number of the corresponding union graph by exact
bitset branch-and-bound. The expected output is:
\begin{verbatim}
subfamily_size=1: alpha=4: 6
subfamily_size=2: alpha=4: 15
subfamily_size=3: alpha=4: 20
subfamily_size=4: alpha=4: 15
subfamily_size=5: alpha=4: 6
subfamily_size=6: alpha=2: 1
\end{verbatim}

\newpage

\section{Finite Numeric Check for the Odd-Cycle Boundary}

The odd-cycle boundary discussion in the main text uses exact strong-square computations:
\[
\alpha(C_5^{\boxtimes 2})=5,\qquad
\alpha(C_7^{\boxtimes 2})=10.
\]
The artifact includes the checker
\[
\texttt{\seqsplit{proofs/checks/odd\_cycle\_boundary.py}}.
\]
Run it from the artifact root with
\begin{verbatim}
python3 proofs/checks/odd_cycle_boundary.py
\end{verbatim}
The script computes independence in the strong square as maximum clique in the complement graph using an exact bitset branch-and-bound search. The expected output includes:
\begin{verbatim}
C_5_alpha=2
C_5_alpha_strong_square=5
C_5_has_square_gap=true
C_7_alpha=3
C_7_alpha_strong_square=10
C_7_has_square_gap=true
\end{verbatim}

\newpage

\section{Finite Numeric Check for the Nonlinear Section Example}

The nonlinear section example in the main text uses a finite computation over $\mathbb F_3^6$:
\[
|I|=9,\qquad |(I-I)\setminus\{0\}|=72,
\]
with $36$ projective difference directions, a rank-minimizing kernel disjoint from the difference set, and no complete projective line inside those difference directions. The artifact includes the checker
\[
\texttt{\seqsplit{proofs/checks/nonlinear\_section\_witness.py}}.
\]
Run it from the artifact root with
\begin{verbatim}
python3 proofs/checks/nonlinear_section_witness.py
\end{verbatim}
The script verifies the finite set cardinalities, the projective-line obstruction to a linear two-dimensional section, and the reported graph invariants for the constructed example. The expected output includes:
\begin{verbatim}
witness_size=9
diff_set_size=72
projective_directions=36
rank_min_kernel_disjoint_diff_set=true
complete_projective_line_in_diff_directions=false
alpha=9
omega=81
chromatic_number=81
\end{verbatim}

\newpage

\section{Finite Scan for Fixed-Lattice Matroid Fibers}

The fixed-lattice boundary discussion uses a common-core lift in which the
lifted kernel-intersection shell is constant while the quotient represented
forbidden-point matroid varies. The artifact includes the checker
\[
\texttt{\seqsplit{proofs/checks/fixed\_lattice\_matroid\_fiber\_scan.py}}.
\]
Run it from the artifact root with
\begin{verbatim}
python3 proofs/checks/fixed_lattice_matroid_fiber_scan.py
\end{verbatim}
The script enumerates all four-point subsets of $\operatorname{PG}(2,3)$ and
then sweeps fixed sizes two through six. It lifts each quotient point set by a
one-dimensional common core, verifies that the lifted intersection shell is
fixed inside each size fiber, and records the different quotient rank-sieve
values in that shell. Lean handles \LH{ABD61}--\LH{ABD63} record the
same four-point finite cardinalities and pairwise common-core scan, while
\LH{ABD66}--\LH{ABD68} record the corresponding fixed-size Lean distributions
for sizes three, five, and six. Handles \LH{ABD69}, \LH{ABD70}, \LH{ABD71},
\LH{ABD72}, and \LH{ABD73} record the abstract counted graph-section form:
lifted witness pairs are counted by selected map counts over quotient-avoiding
candidates, with a constant-map-count multiplier as a corollary. Handles
\LH{ABD74}, \LH{ABD75}, \LH{ABD76}, and \LH{ABD77} give the corresponding
quotient-search/common-core-lift wrappers. Handles
\LH{ABD64} and \LH{ABD65} give explicit fixed-full-shell witnesses for the
three avoidance counts and the three quotient forbidden-point matroid profiles. The semantic
bridge handles \LH{SPB144}--\LH{SPB149} package those witnesses as a single
fixed-shell fiber with three target count values and three represented
forbidden-point-matroid shells; \LH{SPB153} and \LH{SPB154} expose the same
scan as target-valued fixed-shell reductions. Handles \LH{SPB155}--\LH{SPB157}
record the induced yes/no positivity flip, and \LH{SPB159} records that the
represented forbidden-point matroid shell controls the count within the scan.
The executable checker extends the audit through sizes two to six. In the
compact output, \texttt{matroid\_profile\_classes} lists the pair
\texttt{(N2, multiplicity)} for each represented forbidden-point matroid
rank-distribution class. The expected output includes:
\begin{verbatim}
fixed_lattice_matroid_fiber_scan_F3:
  quotient_points=13 four_point_subsets=715
  lifted_lattice_shells=1
  forbidden_point_matroid_rank_distributions=3
  same_lattice_shell_controls_N2=False
  same_forbidden_point_matroid_controls_N2=True
  N2_values_per_forbidden_point_matroid=[0, 2, 3]
  N2_distribution={0: 13, 2: 468, 3: 234}
  fixed_size_sweep:
    size=2: subsets=78 lattice_shells=1 matroid_rank_distributions=1
      same_lattice_controls_N2=True same_matroid_controls_N2=True
      N2_distribution={6: 78}
      matroid_profile_classes=[(6, 78)]
    size=3: subsets=286 lattice_shells=1 matroid_rank_distributions=2
      same_lattice_controls_N2=False same_matroid_controls_N2=True
      N2_distribution={3: 52, 4: 234}
      matroid_profile_classes=[(3, 52), (4, 234)]
    size=4: subsets=715 lattice_shells=1 matroid_rank_distributions=3
      same_lattice_controls_N2=False same_matroid_controls_N2=True
      N2_distribution={0: 13, 2: 468, 3: 234}
      matroid_profile_classes=[(0, 13), (2, 468), (3, 234)]
    size=5: subsets=1287 lattice_shells=1 matroid_rank_distributions=3
      same_lattice_controls_N2=False same_matroid_controls_N2=True
      N2_distribution={0: 117, 1: 702, 2: 468}
      matroid_profile_classes=[(0, 117), (1, 702), (2, 468)]
    size=6: subsets=1716 lattice_shells=1 matroid_rank_distributions=4
      same_lattice_controls_N2=False same_matroid_controls_N2=True
      N2_distribution={0: 702, 1: 936, 2: 78}
      matroid_profile_classes=[(0, 234), (0, 468), (1, 936), (2, 78)]
\end{verbatim}

The rank-two quotient-plane case admits an additional cross-field audit:
\[
\texttt{\seqsplit{proofs/checks/projective\_line\_rank2\_cross\_field\_scan.py}}.
\]
Run it from the artifact root with
\begin{verbatim}
python3 proofs/checks/projective_line_rank2_cross_field_scan.py
\end{verbatim}
For quotient dimension three, the rank-two avoidance count is the number of
projective lines disjoint from the forbidden point set. The checker verifies
the same line-profile control over $\operatorname{PG}(2,3)$ and
$\operatorname{PG}(2,5)$; over $\operatorname{PG}(2,5)$, the size-five sweep
has $169911$ subsets and normalized avoidance distribution
$\{5:186,8:11625,9:46500,10:93000,11:18600\}$. Handles
\LH{SPB179}--\LH{SPB184} give the generic compression theorem behind this
audit: for any finite incidence structure, the avoidance count is the
zero-incidence bucket of the incidence-count profile, so that profile controls
both exact avoidance values and positivity. Handles \LH{SPB185}--\LH{SPB187}
give the corresponding control-or-collision fork: any proposed semantic shell
either refines this incidence-count profile, or one semantic fiber already
contains two different incidence profiles. Handles \LH{SPB188}--\LH{SPB199}
connect the rank-sieve form to the same profile under the rank-only
containment-count hypothesis: the rank-sieve value is the zero-incidence
bucket, and value or positivity control factors through the incidence profile
unless a semantic fiber already contains distinct profiles. The last four
handles give the direct witness form: a rank-sieve value or positivity flip
inside one semantic fiber forces an incidence-profile collision in that fiber.

\newpage

\section{Composed Check for the Fixed-Shell Critical Bridge}

The fixed-shell critical-bridge discussion also uses one executable composed
audit from graph coloring to quotient rank-sieve positivity and then through
the common-core lift. The artifact includes the checker
\[
\texttt{\seqsplit{proofs/checks/fixed\_shell\_graphic\_critical\_composition.py}}.
\]
Run it from the artifact root with
\begin{verbatim}
python3 proofs/checks/fixed_shell_graphic_critical_composition.py
\end{verbatim}
Lean handles \LH{GCR126}--\LH{GCR129} give the direct counted fixed-shell
bridge that this checker audits. Handles \LH{GCR223} and \LH{GCR224} add the
quotient positive-candidate count and singleton-map lifted witness-pair count
for the same graphic critical instance; \LH{GCR225} packages the witness-pair
count as a target-valued fixed-shell reduction. Handles \LH{GCR226} and
\LH{GCR227} give the witness-level source-count version of the same
witness-pair bridge. Handles \LH{GCR228}, \LH{GCR229}, and \LH{GCR230} give
the direct witness-pair count-fiber and refinement consequences, while
\LH{GCR130}--\LH{GCR132} give the corresponding positive-candidate-count
consequences. Handle \LH{GCR133} packages the direct control-or-collision
fork for indexed coloring witness reductions. Handles \LH{GCR134}--\LH{GCR139}
give the direct finite coloring-count fiber and control-or-collision forms for
simple indexed graphic common-core lifts. Handles \LH{GCR140}--\LH{GCR148}
record the Lean-level finite five-vertex, seven-edge $\mathbb F_3$ witness:
the executable coloring enumerator, the checked counts $0$, $6$, and $12$,
loopless and simple-edge checks, and the resulting fixed-shell non-refinement
theorem. Handles \LH{GCB37}--\LH{GCB42} identify proper coloring as the
zero-violation bucket of the indexed edge-conflict profile. Handles
\LH{GCR174}--\LH{GCR187} lift that profile to source-level, physical
common-core, and active-shell control and collision statements for count value
and positivity. Handles \LH{GCR188}--\LH{GCR192} give the full profile
transfer: the singleton lifted-kernel conflict set equals the indexed graphic
conflict set, so the whole violation profile is preserved by the common-core
lift. Handles \LH{GCR193}--\LH{GCR199} define the corresponding generic
physical view-conflict profile for lifted searches and identify it with the
source indexed coloring violation profile on graphic common-core lifts.
Handles \LH{GCR200}--\LH{GCR201} bundle this equality as a target-valued
fixed-shell reduction. Handles \LH{GCR202} and \LH{GCR203} turn source-side
profile separation into active-shell non-refinement and same-shell physical
profile separation; \LH{GCR204} gives the target-value form.
Handles \LH{GCR205}, \LH{GCR206}, \LH{GCR207}, and \LH{GCR208} define and
transport the bounded finite-domain profile. Handles \LH{GCR209},
\LH{GCR210}, \LH{GCR211}, \LH{GCR212}, and \LH{GCR213} give the finite
source-family range inclusion, cardinality transfer, and non-refinement
consequences for that bounded profile target.
Handles \LH{GCR214}, \LH{GCR215}, and \LH{GCR216} prove that the bounded
profile controls the zero-bucket coloring count and inherits count
injectivity. Handles \LH{GCR217}, \LH{GCR218}, \LH{GCR219}, \LH{GCR220},
\LH{GCR221}, and \LH{GCR222} instantiate this for the finite K5 witness
family, giving three bounded physical profile values in one fixed active
shell and non-refinement of that bounded profile target.
Handles \LH{GCR231}, \LH{GCR232}, \LH{GCR233}, \LH{GCR234}, and \LH{GCR235}
instantiate the witness-pair count target on the same K5 family: the source
witness-count values are \(\{0,6,12\}\), those three values occur as accepted
lifted witness-pair counts in one fixed active shell, and the active shell
does not refine the witness-pair count.
Handles \LH{GCR156} and \LH{GCR157} record the middle count and the
exact three-value count set for that witness family; \LH{GCR158} pushes those
three count values into one fixed active-shell fiber.
Handle \LH{SPB175} records the generic postcomposition rule for target-valued
fixed-shell reductions. Handles \LH{GCR159}--\LH{GCR167} apply it to the
quotient-normalized K5 witness counts: the arithmetic normalization sends
$0,6,12$ to $0,1,2$, and those normalized target values still occur inside the
same fixed active-shell fiber; the fixed shell therefore does not refine the
normalized target.
Handles \LH{GCR149}--\LH{GCR155} mechanize the full seven-edge subgraph sweep
of $K_5$: there are $120$ such subgraphs, their $\mathbb F_3$ proper-coloring
counts have value set $\{0,6,12\}$, and the corresponding fibers have sizes
$20$, $70$, and $30$. Handles \LH{GCR168}--\LH{GCR173} record the executable
Lean normalization of the same sweep: after division by $6$, the value set is
$\{0,1,2\}$ with the same fiber sizes $20$, $70$, and $30$.
Handles \LH{SPB165}, \LH{SPB166},
\LH{SPB167}, \LH{SPB168}, \LH{SPB169}, \LH{SPB172}, \LH{SPB173}, and
\LH{SPB174} give the generic same-shape finite-fiber form used when the fixed
shell is allowed to depend on the source shape, while \LH{SPB171} gives the
corresponding source-precomposition step. Handles \LH{SPB176}--\LH{SPB178}
give the counted conjunctive fixed-shell theorem: once a supplied global
physical count is identified with the finite set of witnesses satisfying all
encoded local obligations, witness counts transport through the same fixed
semantic shell. Handles \LH{SPB200}, \LH{SPB201}, \LH{SPB202}, \LH{SPB203},
\LH{SPB204}, and \LH{SPB205} add the finite-family count-fiber consequences
for global conjunctions: source satisfying-witness counts embed into one
physical count fiber, and direct or injective count variation refutes
refinement by the fixed semantic shell.
The script compares two five-vertex, seven-edge graph instances over
$\mathbb F_3$. Both have the same graphic rank and the same common-core lifted
kernel-lattice profile. One graph is properly $3$-colorable and has positive
quotient rank-sieve count; the other contains a $K_4$ and has count zero. It
also enumerates all seven-edge subgraphs of $K_5$ and verifies that the whole
fixed-size family has one lifted shell but three quotient count values. The
expected output is:
\begin{verbatim}
fixed_shell_graphic_critical_composition_F3:
  yes_3_colorable: rank=4 target=3 proper_colorings=12 quotient_N=2
  no_3_colorable: rank=4 target=3 proper_colorings=0 quotient_N=0
  lifted_kernel_lattice_profile_equal=True
  common_core_lift_preserves_rank_sieve_positivity=True
  all_5_vertex_7_edge_graphs=120
  fixed_size_family_lifted_shells=1
  fixed_size_family_quotient_N_distribution={0: 20, 1: 70, 2: 30}
  fixed_size_family_proper_coloring_distribution={0: 20, 6: 70, 12: 30}
\end{verbatim}

\newpage

\section{Lean Audit for the Size-Indexed Fixed-Lattice SAT Spine}

Theorem~\texttt{fixed-lattice-sat-spine} in the main manuscript is supported by
the SAT-list bridge handles in the \texttt{SCB} family. Handles
\LH{SCB979}--\LH{SCB983} define and instantiate the complete fixed-shell SAT
spine: parsimonious \(\#\textsc{SAT}\) count transport, SAT decision
transport, paired decision/count transport, and the three corresponding
non-refinement statements. Handles \LH{SCB984}--\LH{SCB1003} give the
same-shape collision witnesses used to refute determination by the fixed
shell. Handles \LH{SCB1004}--\LH{SCB1011} prove that the complete spine
survives postcomposition with any decoded or coarser lattice/profile
invariant. The endpoint handles \LH{SCB1012} and \LH{SCB1013} are the full
fixed-lattice results for the external common-core lift and the fixed-active
gated common-core lift, respectively.

To verify these handles, run
\begin{verbatim}
lake build
\end{verbatim}
from the supplementary Lean project root and inspect the declarations listed
in the handle ledger under \texttt{Ssot/SATCriticalBridge.lean}.